\definecolor{verylight}{gray}{0.90}
\definecolor{lesslight}{gray}{0.85}
\definecolor{lightPeach}{RGB}{255,186,127}
\definecolor{lightCorn}{RGB}{118,200,245}
\DeclareMathOperator*{\E}{\mathbb{E}}
\let\Pr\relax
\DeclareMathOperator*{\Pr}{\mathbb{P}}
\newcommand{\R}{\mathbb{R}}
\newcommand{\Z}{\mathbb{Z}}
\newcommand{\wh}{\widehat}
\newcommand{\eqdef}{\mathbin{\stackrel{\rm def}{=}}}
\newcommand{\norm}[1]{\|#1\|}
\newcommand{\Netg}{\mathcal{T}_{n,\gamma}}
\newcommand{\Netc}{\mathcal{L}_{n,\gamma}}
\newcommand{\Net}{\mathcal{N}}
\newcommand{\Emo}{\mathcal{E}_{mono}}
\newcommand{\ah}{\alpha_{hard,\mathcal{N}}}
\newcommand{\Xh}{X_{hard,\mathcal{N}}}
\newcommand{\Ef}{\mathcal{E}_{fail}}
\newcommand{\Ifo}{\mathcal{I}_{act}}
\newcommand{\Ess}{\mathcal{E}_{stab}}
\newcommand{\Ec}{\mathcal{E}_{term}}
\newcommand{\Ea}{\mathcal{E}_{act}}
\newcommand{\Ean}[1]{\mathcal{E}_{act,#1}}
\newcommand{\Ev}{\mathcal{E}_{val}}
\newcommand{\bEc}{\mathcal{\wh E}_{term}}
\newcommand{\term}{\mathrm{term}}
\newcommand{\Input}{X}
\newcommand{\Inh}{A}
\newcommand{\Output}{Y}
\newcommand{\All}{N}
\newcommand{\Ex}{E}
\newcommand{\Ih}{I}
\DeclareMathOperator{\pot}{pot}
\DeclareMathOperator{\tro}{output}
\DeclareMathOperator{\wta}{WTA}
\DeclareMathOperator{\ewta}{WTA-EXP}
\newcommand{\spara}[1]{\medskip\noindent{\bf #1}}
\newcommand{\kbound}{12(\log_2 n+2)}
\newcommand{\lognc}{\lceil \log_2 n \rceil}
\newtheorem*{rep@theorem}{\rep@title}
\newcommand{\newreptheorem}[2]{%
\newenvironment{rep#1}[1]{%
 \def\rep@title{#2 \ref{##1}}%
 \begin{rep@theorem}}%
 {\end{rep@theorem}}}
\newtheorem{theorem}{Theorem}
\newtheorem{corollary}[theorem]{Corollary}
\newtheorem{lemma}[theorem]{Lemma}
\newtheorem*{lemma*}{Lemma}
\newtheorem{claim}[theorem]{Claim}
\newtheorem{definition}[theorem]{Definition}
\newif\ifpodc
\numberwithin{theorem}{section}
\title{Winner-Take-All Computation in Spiking Neural Networks}
\author{Nancy Lynch \\ MIT \\ \texttt{lynch@csail.mit.edu} \and Cameron Musco \\ Microsoft Research\\ \texttt{camusco@microsoft.com} \and Merav Parter \\ Weizmann Institute of Science\\ \texttt{merav.parter@weizmann.ac.il}}
\begin{document}

\maketitle

\begin{abstract}
In this work 
we study biological neural networks from an algorithmic perspective, focusing on understanding 
tradeoffs between computation time and network complexity. 
Our goal is to abstract real neural networks in a way that, while not capturing all interesting features, preserves high-level behavior and allows us to make biologically relevant conclusions.
Towards this goal, we consider the implementation of algorithmic primitives in a simple yet biologically plausible model of \emph{stochastic spiking neural networks}.
In particular, we show how the stochastic behavior of neurons in this model can be leveraged to solve a basic \emph{symmetry-breaking task} in which we are given neurons with identical firing rates and want to select a distinguished one. In computational neuroscience, this is known as the winner-take-all (WTA) problem, and it is believed to serve as a basic building block in many tasks, e.g., learning, pattern recognition, and clustering. We provide efficient constructions of WTA circuits in our stochastic spiking neural network model, as well as lower bounds in terms of the number of auxiliary  neurons required to drive convergence to WTA in a given number of steps. These lower bounds demonstrate that our constructions are near-optimal in some cases.

This work covers and gives more in-depth proofs of a subset of results originally published in \cite{lynch2016computational}. It is adapted from the last chapter of C. Musco's Ph.D. thesis \cite{cam18}.
\end{abstract}

\section{Background and Introduction to Results}\label{sec:background}

Neural networks are studied in a number of academic communities from a wide range of perspectives. Significant work in computational neuroscience focuses on developing somewhat realistic mathematical models for these networks and generally studying their capacity to process information \cite{izhikevich2004model,trappenberg2009fundamentals}. 
On the more theoretical side, a variety of artificial network models such as perceptron and sigmoidal networks, Hopfield networks, and Boltzmann machines have been developed. These models are tractable to theoretical analysis and studied in the context of their computational power, and applications to general function approximation, classification, and memory storage \cite{hornik1989multilayer,maass1991computational,siegelmann1992computational,maass1997networks}. 
In practical machine learning, biological fidelity and often theoretical tractability are put aside, and researchers study how neural-like networks and learning rules can be used to efficiently represent and learn complex concepts \cite{haykin2009neural,lecun2015deep}.

In contrast to the common approach in computational neuroscience and machine learning, in our work we focus not on general computation ability or broad learning tasks, but on specific algorithmic implementation and analysis. 
We define a model of neural computation along with algorithmic problems that seem to be an important building blocks for higher level processing and learning tasks. We then design neural networks in our model that solve these problems, rigorously analyzing the complexity of our solutions in terms of asymptotic runtime and network size bounds. 
We hope that this new paradigm will provide new insights about computational tradeoffs, the power of randomness, and the role of noise in biological systems.

While focusing on somewhat different questions, our line of work is inspired by (1) work on the computational power of spiking neural networks, most notably by Maass et al. \cite{maass1997networks,maass1999neural,maass2000computational} and (2) the work of Les Valiant \cite{valiant2000circuits,valiant2000neuroidal,valiant2005memorization}, who defined the
\emph{neuroidal model} of computation and investigated implementations of basic learning modules within this model.

\subsection{Spiking Neural Networks}
We consider a model of \emph{spiking neural networks} (SNNs) \cite{maass1996computational,maass1997networks,gerstner2002spiking,izhikevich2004model,habenschuss2013stochastic}, defined formally in Section \ref{sec:modelW}, in which neurons fire in discrete pulses, in response to a sufficiently high membrane potential. This potential is induced by spikes from neighboring neurons, which can have either an excitatory or inhibitory effect (increasing or decreasing the potential). Our model is \emph{stochastic} -- each neuron functions as a probabilistic threshold unit, spiking with probability given by applying a sigmoid function to the membrane potential. In this respect, our networks are similar to the popular Boltzmann machine \cite{ackley1985learning}, with the important distinction that synaptic weights are not required to be symmetric and, as observed in nature, neurons are either strictly inhibitory (all outgoing edge weights are negative) or excitatory. Additionally, in this work, we focus on networks with fixed edge weights. The literature on Boltzmann machines tends to focus on learning, in which edge weights are adjusted iteratively until the network converges to some desired distribution on firing patterns. 
While a rich literature focuses on deterministic threshold circuits \cite{minsky1969perceptrons,hopfield1986computing}
we employ a stochastic model as 
it is widely accepted that neural computation is inherently stochastic \cite{allen1994evaluation,shadlen1994noise,faisal2008noise}, and that while this can lead to a number of challenges, it also affords significant computational advantages \cite{maass2014noise}. 

\subsection{The Winner-Take-All Problem} 

In this work, we focus on the Winner-Take-All (WTA) problem, which is one of the most studied problems in computational neuroscience. 
A WTA network has $n$ input neurons, $n$ corresponding outputs, and a set of auxiliary neurons that facilitate computation. 
The goal is to pick a `winning' input -- that is, the network  should produce a single firing output which corresponds to a firing input. Often the winning input is  the one with the highest firing rate, in which case WTA serves as a neural max function. We focus on the case when all inputs have the same or similar firing rates, in which case WTA serves as a leader election unit. A formal definition of the WTA problem is given in Section \ref{sec:wtaDef}.

WTA is widely applicable, including in circuits that implement visual attention via WTA competition between groups of neurons that process different input classes \cite{koch1987shifts,lee1999attention,itti2001computational}. It is also the foundation of competitive learning \cite{nowlan1989maximum,kaski1994winner,gupta2009hebbian}, in which classifiers compete to respond to specific input types. More broadly, WTA is known to be a powerful computational primitive \cite{maass1999neural,maass2000computational} -- a network equipped with WTA units can perform some tasks significantly more efficiently than with just  linear threshold neurons (McCulloch-Pitts neurons or perceptrons). 

Due to its importance, there has been significant work on WTA, including in biologically plausible spiking networks \cite{lazzaro1988winner,yuille1989winner,thorpe1990spike,coultrip1992cortical,wang2003k,oster2006spiking,oster2009computation,al2015inherently}.
This work 
is extremely diverse -- while mathematical analysis is typically given, different papers show different guarantees and apply varying levels of rigor. To the best of our knowledge, prior to our work, no asymptotic time bounds (e.g., as a function of the number of inputs $n$) for solving WTA in spiking neural networks have been established.\footnote{Aside from immediate bounds for deterministic circuits using many ($\Omega(n)$) auxiliary neurons \cite{lazzaro1988winner,maass2000computational}.}
Additionally, previous analysis often requires a specific initial network state to show convergence and does not show that the network is self-stabilizing and converges from an arbitrary starting state, as is necessary in a biological system.

\subsection{Our Contributions}
We explore the tradeoff between the number of auxiliary neurons used in a WTA network (i.e., the complexity of the network) 
and the time required to select a winning output (to converge to a WTA state). 

\subsubsection{Network Constructions and Runtime Bounds}
One the upper bound side, in Section \ref{sec:wta2} we describe, for any  input size $n$ and failure probability $\delta  > 0$, a family  of networks using just two auxiliary  inhibitory neurons which solve the WTA problem in $O(\log n)$ steps in expectation, and  with probability  $\ge  1-\delta$ in $O(\log n \cdot  \log(1/\delta))$ steps (see Theorems \ref{thm:high} and \ref{thm:exp}). 

Our two-inhibitor construction is based on a simple random competition idea.
Outputs that fire in response to stimulation from their firing inputs excite two inhibitors, which, in turn, inhibit all the outputs. When more than one output fires, both inhibitors are excited. This leads to high levels of inhibition, causing firing outputs to stop firing and `drop out' of the WTA competition. When exactly one output fires, just one of the inhibitors, known as the \emph{stability inhibitor}, is excited. This inhibitor is responsible for maintaining a WTA steady-state: once a single output fires at a time step it becomes the winner of the network. It has a positive feedback self-loop that allows it to keep firing at subsequent times, while all other outputs do not fire due to inhibition from the stability inhibitor.

The basic network construction described above employs biologically plausible structures. In particular, convergence is driven using reciprocal excitatory-inhibitory connections, and stability is maintained via excitatory self-loops. Both these structures are used in many biological models of WTA computation \cite{yuille1989winner,coultrip1992cortical,roux2015tasks}.
It is widely believed that inhibition is crucial for solving WTA -- outputs compete for activation via \emph{lateral inhibition} or \emph{recurrent inhibition} \cite{coultrip1992cortical,roux2015tasks}. In our network, this inhibition is achieved through the two auxiliary inhibitors. Previous work has conjectured that widespread use of simple WTA implementations in the brain may explain how complex computation is possible even when inhibition is relatively limited and localized \cite{maass2000computational}. Our work shows that WTA can be achieved and maintained efficiently using very few inhibitors and with a very simple connectivity structure.

We also demonstrate that, with a larger number of auxiliary neurons, it is possible to obtain faster convergence. In particular, in Section \ref{sec:multi}, we describe, for any input size $n$ and failure probability $\delta  > 0$, a family  of networks using  $O(\log n)$ auxiliary inhibitory neurons which solve the WTA problem in $O(1)$ steps in expectation, and  with probability  $\ge  1-\delta$ in $O(\log(1/\delta))$ steps (see Theorems \ref{thm:logn} and \ref{thm:lognExp}). At a  high level, more auxiliary inhibitors allow for more fine-tuned levels of inhibition which drive faster convergence. In Section \ref{sec:tradeoffs} we sketch two constructions that allow for more general runtime-inhibitor tradeoffs, interpolating between our two-inhibitor and $O(\log n)$-inhibitor  constructions. See \cite{lynch2016computational} for more details.
\subsubsection{Lower Bounds}

Aside from the above network constructions and runtime analysis, we also prove lower bounds, showing that these constructions are optimal or near optimal. In Section \ref{sec:lb} we prove that no network can solve WTA (in a reasonable parameter regime) using just a single auxiliary  neuron (see Theorem \ref{thm:lb1}). Roughly, it is not possible for a single neuron to both drive fast convergence and maintain stability  of a valid WTA configuration once one has been reached. The dual role that inhibition plays in two-inhibitor construction of driving convergence and maintaining stability requires at least two inhibitors.
We also show that, considering a slightly restricted class of networks, our two-inhibitor construction is near-optimal. No network with just two-auxiliary neurons can solve WTA with constant probability in $o\left (\frac{\log n}{\log \log n} \right )$ steps (see Theorem \ref{thm:lb2}). This matches the runtime of our network up to a $O(\log \log n)$ factor.  See \cite{lynch2016computational} for additional lower bounds on the convergence time for networks using $\alpha > 2$ inhibitors, in a similar model to the one presented here.
\subsection{Road Map} 

In Section \ref{sec:modelW} we describe our spiking neural network model and specify the WTA problem. In Section \ref{sec:wta2} we describe and analyze the convergence of our simple family of two-inhibitor WTA networks. In Section \ref{sec:lb} we prove lower bounds that show the near optimality of our two-inhibitor construction. In Section \ref{sec:multi} we show how to obtain faster convergence using a network construction with $O(\log n)$ auxiliary inhibitors. This construction also requires generalizing our model to allow for a \emph{history period}, over which the firing of a neuron's neighbors can affect its membrane potential. Finally, in Section \ref{sec:wtadiscuss} we conclude by discussing open questions arising from our work and possible directions for future work.

\section{Spiking Neural Network Model}\label{sec:modelW}

In this section we describe our basic neural network model, which consists of a set of neurons connected by weighted synaptic connections. Each neuron fires (spikes) stochastically at each time step, with probability dependent on the firing of its neighbors in the previous time step. These neighbors may have either an excitatory  (inducing more firing) or inhibitory (suppressing firing) effect. In Section \ref{sec:multi} we describe a variation on this model in which the probability  that a neuron spikes depends not just on the spiking of its neighbors in the previous time step, but on the spikes during some history period preceding the current time.

\subsection{Network Structure}\label{sec:structure}

We first  describe the basic network structure and parameters.
A \emph{Spiking Neural Network} (SNN) $\Net =\langle \All, w,b,f\rangle$ consists of:
\begin{itemize}
\item $\All$, a set of neurons, partitioned into a set of input neurons $\Input$, a set of output neurons $\Output$, and a set of auxiliary  neurons $\Inh$. $\All$ is also partitioned into a set of \emph{excitatory} and \emph{inhibitory} neurons $\Ex$ and $\Ih$. All input and output neurons are excitatory. That is, $X \cup Y \subseteq E$.
\item $w: \All \times \All \rightarrow \R$, a weight function describing the weighted synaptic connections between the neurons in the network. $w$ is restricted in a few notable ways: 
\begin{itemize}
\item $w(u,x) = 0$ for all $u \in \All$, $x \in \Input$. 
\item Each excitatory neuron $v \in \Ex$ has $w(v,u)\ge 0$ for every $u$. Each inhibitory  neuron $v \in \Ih$ has $w(v,u)\le 0$ for every $u$.
\end{itemize}
\item $b: \All \rightarrow \mathbb{R}$, a bias function, assigning an activation bias to each neuron.
\item $f: \R \rightarrow [0,1]$, a spike probability function, satisfying a few restrictions:
\begin{itemize}
\item $f$ is continuous and monotonically increasing.
\item $\lim_{x\rightarrow \infty} f(x) = 1$ and $\lim_{x \rightarrow -\infty} f(x) = 0$.
\end{itemize}
\end{itemize}

\subsubsection{Remarks on Network Structure}
Before describing the dynamics of our neural network, we give a few remarks on, and explanations of, the above parameters determining the network structure.

\spara{Weight Function ($w$)}: The weight function $w$ describes the strength of the synaptic connections between neurons in $\All$.
The restriction that $w(u,x) = 0$ for every input neuron $x \in \Input$ 
 is motivated by the desire for networks to be composable. The input neurons in $\Input$ may be output neurons of another network,
and so incoming connections are avoided to simplify definitions and analysis when networks are composed in higher level modular designs.

The restriction that each neuron $v$ is either inhibitory or excitatory is motivated by the observation, known as \emph{Dale's principle}, that neurons typically employ the same neurotransmitter at each outgoing synapse, regardless of its target \cite{osborne2013dale}. Thus, all outgoing connections are either inhibitory or excitatory, depending on the transmitter  used.
For example,
inhibitory connections predominantly stem from inhibitory \emph{GABAergic neurons}, which employ the neurotransmitter \emph{gamma}-Aminobutyric acid (GABA) \cite{watanabe2002gaba,rudy2011three}. 

 We often view the weight function as defining the edge weights  of a directed graph, whose edges are the synaptic connections. Formally we can define:
\begin{definition}[Synaptic Connection Graph]  Given spiking neural network $\Net =\langle \All, w,b,f \rangle$, let $G(\Net)$ be the weighted directed graph with vertex set $N$ and a directed edge $(u,v)$ with weight $w(u,v)$ for  all $u,v$ with $w(u,v) \neq 0$.
\end{definition}
Note that the weight function $w(u,v)$ need not be symmetric, and typically will not be. Additionally, we allow $u \in \All$ with $w(u,u) \neq 0$. That is, $G(\Net)$ may have self-loops.

\spara{Bias Function ($b$)}:
The bias function, along with the spike probability  function, determines how large a neuron's membrane potential must be for the neuron to spike with good probability. The larger the bias, the more excited the neuron must be before it fires. We will see in Section \ref{sec:dynamics} exactly how the bias affects the spiking probability.

\spara{Spike Probability Function ($f$)}:
Common choices for the spike probability  function $f$ are symmetric functions with $f(0) = \frac{1}{2}$. For example, we will typically set $f$ to the sigmoid function $f(x) = \frac{1}{1+e^{-x/\lambda}}$ for some \emph{temperature parameter} $\lambda > 0$.

\subsection{Network Dynamics}\label{sec:dynamics}

We now describe in detail the dynamics of our neural network model.

A  \emph{configuration} $C: \All \rightarrow \{0,1\}$ of an SNN $\Net =\langle \All, w,b,f \rangle$ is a mapping from each neuron in the network to a firing state. $C(u) = 1$ indicates that $u$  fires (i.e., generates a spike). $C(u) = 0$ indicates that it does not fire. We similarly  define an \emph{input configuration} $C_\Input: \Input\rightarrow \{0,1\}$ to be a mapping from each input neuron to a firing state and an \emph{output configuration} $C_\Output: \Output\rightarrow \{0,1\}$ to be a mapping from each output neuron to a firing state. For any configuration $C$ and set of neurons $M \subseteq \All$, we let $C(M)$ be the restriction of $C$ to the domain $M$.

An SNN evolves in a sequence of discrete, synchronous times, which we label with integers $t = 0, 1,...$. We denote the configuration at time $t$ by  $\All^t$. Similarly, we denote the input and output configurations at time $t$ by $\Input^t \eqdef \All^t(\Input)$ and $\Output^t \eqdef \All^t(\Output)$ respectively. 

Formally, an \emph{execution} is a finite or infinite sequence of configurations. The \emph{length} of a finite execution $\All^0\All^1...\All^t$ is defined to be $t+1$. The length of an infinite execution $\All^0\All^1...$ is defined to be $\infty$. We analogously define an \emph{input execution} and an \emph{output execution} as a sequence of input configurations $\Input^0\Input^1...$ and output configurations $\Output^0\Output^1...$ respectively. 

For each neuron $u \in \All$, we use the notation $u^t \eqdef \All^t(u)$ to denote the firing state of neuron $u$ in the configuration $\All^t$. More generally, for any ordered set of neurons $M = \{m_1,...,m_n\}$ we let $M^t \in \{0,1\}^n$ denote the binary  vector with $m_j^t$ as its $j^{th}$ entry. For any configuration $C$, we let $\norm{C}_1 = | \{u \in N: C(u) = 1\}|$ denote the number of firing neurons in $C$.

We will typically use $\alpha$ to denote an execution, and $\alpha_\Input$, $\alpha_\Output$  to denote an input or output execution respectively. We will use a superscript to denote the length of a finite  execution.
For any execution $\alpha$ let $\tro(\alpha)$ be the output execution of the same length obtained by restricting each configuration in $\alpha$ to the output neurons $\Output$.

The behavior of an SNN is determined as follows:
\begin{itemize}
\item \textbf{Input Neurons}: For each problem we consider, we will specify how the infinite input execution $X^0X^1....$ is determined. In this work, we will typically fix the input so that for each $u \in \Input$, $u^t$ is constant for all $t \ge 0$. However, we may also specify a distribution from which $X^0X^1....$ is drawn. For example, this sequence may be generated by setting $u^t = 1$ with some probability  $p_u$ and $u^t = 0$ with probability $1-p_u$, independently at random for each $u \in \Input$ and each time $t$.
\item \textbf{Initial Firing States}: For each non-input $u \in \All \setminus \Input$, the firing state $u^0$ is arbitrary. In this work, we typically show convergence results that hold for all possible settings of these initial states, giving our networks a self-stabilizing property, since they will converge from any  arbitrary perturbation of the state (see e.g., Theorem \ref{thm:self}).

\item \textbf{Firing Dynamics}:
For each non-input neuron $u \in \All \setminus \Input $ and every time $t \ge 1$, let $\pot(u,t)$ denote the membrane potential at time $t$ and $p(u,t)$ denote
the corresponding firing probability. These values are calculated as:
\begin{align}
\label{eq:potentialOut}
\pot(u,t) = \left (\sum_{v \in \All}w(v,u) \cdot v^{t-1} \right) -b(u) 
\text{ and } p(u,t)= f(\pot(u,t))
\end{align}
where $f$  is the spike probability  function.
At time $t$, each non-input neuron $u$ fires independently with probability $p(u,t)$. 
\end{itemize}


Any SSN $\Net =\langle \All, w,b,f \rangle$, initial configuration $ \All^0$, and infinite input execution $\alpha_\Input$ define a probability distribution over infinite executions, $\mathcal{D}(\Net,\All^0,\alpha_\Input)$. This distribution is the natural distribution that follows from 
applying the stochastic firing dynamics of \eqref{eq:potentialOut}. Formally, for any  finite execution $\alpha$, we define the \emph{cone} of executions extending $\alpha$,  $A(\alpha)$, to be the set of all infinite executions that start with $\alpha$.
$\mathcal{D}(\Net,\All^0,\alpha_\Input): 
\mathcal{F} \rightarrow [0,1]$ is a probability measure where the $\sigma$-algebra $\mathcal{F}$ consists of all such cones, closed under complement, countable unions, and countable intersections.

Given $\mathcal{D}(\Net,\All^0,\alpha_\Input)$ we can also define a distribution $\mathcal{D}_\Output(\Net,\All^0,\alpha_\Input)$ on infinite output executions.
Given any finite output execution $\alpha_\Output$, we define the cone $A(\alpha_\Output)$ to be the set of all infinite output executions extending $\alpha_\Output$. We define the $\sigma$-algebra $\mathcal{F}_\Output$ to be the set of all such cones, closed under complement, countable union, and countable intersection.
Finally, for $F_\Output \in \mathcal{F}_\Output$,
 we define $\mathcal{D}_\Output(\Net,\All^0,\alpha_\Input): \mathcal{F}_\Output \rightarrow [0,1]$ by:
$$\mathcal{D}_\Output(\Net,\All^0,\alpha_\Input)[F_\Output ] = \mathcal{D}(\Net,\All^0,\alpha_\Input)[\{\alpha:\tro(\alpha) \in F_\Output\}].$$

\subsection{Problems and Solving Problems}\label{sec:problem}

A problem $P$ is a mapping from 
an infinite input execution $\alpha_\Input$ to a set of output distributions. A network $\Net$ is said to \emph{solve problem $P$ on input $\alpha_\Input$} if, for any initial configuration $\All^0$, the output distribution $\mathcal{D}_\Output(\Net,\All^0,\alpha_\Input)$  is an element of $P(\alpha_\Input)$. A network $\Net$ is said to \emph{solve problem P} if it solves $P$ on every infinite input execution $\alpha_\Input$. For an example of such a problem definition see Section \ref{sec:wtaDef}, where we formally define the winner-take-all problem.

\subsection{Basic Results and Properties of the Model}\label{sec:basicModel}

In this section we prove some basic properties of the spiking neural network model described in the preceding sections. The first property is a simple Markov independence claim: conditioned on the configuration at time $t-1$, a network's execution from time $t$ on is independent of all times before $t-1$. Formally:

\begin{lemma}[Markov Property]\label{lem:independence}
Let $\Net = \langle \All, w, b, f\rangle$ be an SNN. For any time $t \ge 1$, and finite execution $C^0...C^{t-1}$ of $\Net$, and any configuration $C$ of $\Net$:
\begin{align*}
\Pr[\All^t = C | \All^{t-1}\All^{t-2}...\All^0 = C^{t-1}C^{t-2}...C^0] = \Pr[\All^t = C | \All^{t-1}= C^{t-1}].
\end{align*}
\end{lemma}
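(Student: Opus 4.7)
The plan is to unpack both conditional probabilities directly from the firing rule \eqref{eq:potentialOut} and observe that each reduces to the same expression depending only on $C^{t-1}$. The conceptual point is that the only randomness driving $\All^t$ consists of fresh Bernoulli coin flips whose parameters are functions of $\All^{t-1}$ alone, so extra conditioning on the deeper past is irrelevant.

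First I would separate the neurons into inputs and non-inputs. Under the measure $\mathcal{D}(\Net,\All^0,\alpha_\Input)$, the infinite input execution $\alpha_\Input$ is fixed, so $\Pr[\Input^t = C(\Input) \mid \cdot]$ equals the indicator that $C(\Input)$ matches $\alpha_\Input$ at time $t$, regardless of the $\sigma$-algebra we condition on. Thus the lemma reduces to the analogous claim for the coordinates in $\All \setminus \Input$, provided $C$ is consistent with $\alpha_\Input$ at time $t$ (otherwise both sides are $0$).

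Next I would invoke \eqref{eq:potentialOut}: for each non-input neuron $u$ the variable $u^t$ is, by construction, an independent $\mathrm{Bernoulli}(p(u,t))$ draw with
\begin{equation*}
p(u,t) \;=\; f\!\left(\sum_{v \in \All} w(v,u)\cdot v^{t-1} \;-\; b(u)\right),
\end{equation*}
and these draws are mutually independent across $u \in \All \setminus \Input$ given the previous configuration. Crucially, $p(u,t)$ is a function of $\All^{t-1}$ only. Therefore, under the conditioning $\All^{t-1}=C^{t-1}$, the joint law of $\All^t(\All\setminus\Input)$ factorizes as
\begin{equation*}
\Pr\bigl[\All^t(\All\setminus \Input) = C(\All\setminus \Input) \,\big|\, \All^{t-1}=C^{t-1}\bigr] \;=\; \prod_{u \in \All \setminus \Input} p(u,t)^{C(u)}\bigl(1-p(u,t)\bigr)^{1-C(u)},
\end{equation*}
with each $p(u,t)$ determined entirely by $C^{t-1}$. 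Applying the exact same derivation under the richer conditioning $\All^0\dots\All^{t-1} = C^0\dots C^{t-1}$ yields the identical product, since $C^{t-2},\dots,C^0$ do not enter any $p(u,t)$ and the coin flips used at step $t$ are, by the construction of the probability space from the cone $\sigma$-algebra $\mathcal{F}$, independent of all randomness consumed at earlier steps. Equating the two products completes the proof.

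The only real obstacle is notational bookkeeping: one should verify that the conditioning events have positive probability (so that the elementary conditional probability is well-defined, or else interpret the equality as one of regular conditional distributions on $\mathcal{F}$), and one must articulate the independence of the fresh per-step Bernoullis from the prior history — an independence that is implicit in defining $\mathcal{D}(\Net,\All^0,\alpha_\Input)$ as the law of the process obtained by iterating \eqref{eq:potentialOut} with fresh randomness at each step. Neither point requires genuine mathematical work; the lemma is essentially a restatement of the fact that the dynamics are a time-inhomogeneous Markov chain by definition.
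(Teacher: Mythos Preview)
Your proposal is correct and follows essentially the same approach as the paper: both arguments observe that $p(u,t)$ in \eqref{eq:potentialOut} is a function of $\All^{t-1}$ alone, write the conditional probability as the product $\prod_u p(u,t)^{C(u)}(1-p(u,t))^{1-C(u)}$, and conclude that this expression is unchanged under the richer conditioning. You are slightly more careful than the paper in separating inputs from non-inputs and in flagging the positivity and fresh-randomness caveats, but the core argument is identical.
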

\begin{proof}
The potential of every $u\in \All$ at time $t$ as computed in \eqref{eq:potentialOut} is determined by $\All^{t-1}$. Thus, the spike probability $p(u,t) = f(\pot(u,t))$ is fully determined by $\All^{t-1}$. 

So, conditioned on $\All^{t-1} = C^{t-1}$, $p(u,t)$ is a deterministic function of $C^{t-1}$.
We can compute:
\begin{align*}
\Pr[\All^t = C | \All^{t-1}= C^{t-1}] = \prod_{u\in \All} \left [C(u) \cdot p(u,t) + (1-C(u)) \cdot (1-p(u,t))\right ].
\end{align*}
So $\Pr[\All^t = C |\All^{t-1} = C^{t-1}] $ is a deterministic function of $C$ and all the $p(u,t)$ collectively and thus of $C$ and $C^{t-1}$. So for \emph{any} $C^0...C^{t-2}$,
\begin{align*}
\Pr[\All^t = C | \All^{t-1} = C^{t-1}] = \Pr[\All^t = C |\All^{t-1}\All^{t-2}...\All^{0} = C^{t-1}C^{t-2}...C^0],
\end{align*}
giving the lemma.
\end{proof}

In our proofs, we will often bound the probability of some event $\mathcal{E}_t$ occurring at time $t$, giving  a bound independent of the preceding network configuration $\All^{t-1}$. However, $\mathcal{E}_t$ itself  will depend on $\All^{t-1}$, and there may be correlations between $\mathcal{E}_t$ and $\mathcal{E}_{t'}$ for $t \neq t'$. Below, we give a useful lemma which allows us to bound the number of times that $\mathcal{E}_t$ occurs over a given time period by comparing  to  the number of  times that a coin tossed independently at each time would come up heads in the same time period.

\begin{lemma}\label{lem:coupling}
For every $t  \in \mathbb{Z}^{>0}$ let  $A_t \in \mathcal{A}$  be a  random variable in some domain $\mathcal{A}$, $f:\mathcal{A} \rightarrow \{0,1\}$ be any function, and $B_t = f(A_t)$. Let $Z_t \in \{0,1\}$ be a set of independent  random variables. Suppose:
\begin{enumerate}
\item $\Pr [B_1 =  1] \ge  \Pr[Z_1 = 1].$
\item For every $t \ge 2$, $\Pr[B_t  =1 | A_{t-1},...,A_1] \ge  \Pr[Z_t =1]$.
\end{enumerate}
Then for every $t$ and $d \in \mathbb{Z}^{\ge 0}$, 
\begin{align}\label{eq:coupleInduct}
\Pr \left [\sum_{i=1}^t B_i \ge d \right  ] \ge \Pr \left [\sum_{i=1}^t Z_i \ge d \right  ]. 
\end{align}
\end{lemma}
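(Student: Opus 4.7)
The plan is to prove the stochastic dominance inequality by induction on $t$, using the conditional lower bound on $\Pr[B_t = 1 \mid A_{t-1}, \dots, A_1]$ at each step. Although the $B_i$'s are not independent, the key observation is that the conditioning set $A_1, \dots, A_{t-1}$ \emph{determines} $B_1, \dots, B_{t-1}$ (since $B_i = f(A_i)$), and hence determines the partial sum $S_{t-1} \eqdef \sum_{i=1}^{t-1} B_i$. So when we condition on the history up through time $t-1$, everything about the $B$-sum prior to $t$ is fixed, and only the new variable $B_t$ remains stochastic, with a known lower bound on its success probability.

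The base case $t = 1$ is immediate: for $d = 0$ both sides are $1$; for $d = 1$ the inequality is exactly hypothesis~1; and for $d \ge 2$ both sides are $0$. For the inductive step, I would split the event according to the value of $B_t$ and use the tower property. Writing $q \eqdef \Pr[B_t = 1 \mid A_1, \dots, A_{t-1}]$ and $W_{t-1} \eqdef \sum_{i=1}^{t-1} Z_i$, one has
\begin{align*}
\Pr\!\left[\sum_{i=1}^t B_i \ge d\right]
&= \E\!\left[\mathbf{1}[S_{t-1} \ge d] + q \cdot \mathbf{1}[S_{t-1} = d-1]\right].
\end{align*}
Because the coefficient $\mathbf{1}[S_{t-1} = d-1]$ is nonnegative and $q \ge \Pr[Z_t = 1] \eqdef p_t$ pointwise (by hypothesis~2), replacing $q$ by $p_t$ only decreases the expression. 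This yields
\begin{align*}
\Pr\!\left[\sum_{i=1}^t B_i \ge d\right]
&\ge (1 - p_t)\Pr[S_{t-1} \ge d] + p_t \Pr[S_{t-1} \ge d-1].
\end{align*}
Now the induction hypothesis gives $\Pr[S_{t-1} \ge k] \ge \Pr[W_{t-1} \ge k]$ for every $k$, and by independence of $Z_t$ from $Z_1, \dots, Z_{t-1}$,
\begin{align*}
\Pr\!\left[\sum_{i=1}^t Z_i \ge d\right]
= (1 - p_t)\Pr[W_{t-1} \ge d] + p_t \Pr[W_{t-1} \ge d-1],
\end{align*}
which completes the induction.

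The main subtlety, and the step I would write most carefully, is justifying the identity that isolates $q$ linearly. It relies on the fact that the conditional distribution of $B_t$ given the $\sigma$-algebra generated by $A_1, \dots, A_{t-1}$ is Bernoulli$(q)$ with $q$ measurable with respect to that history, combined with the pointwise almost-sure bound $q \ge p_t$. Once that is in place, the monotonicity trick (replacing $q$ by the smaller $p_t$ only when multiplied by a nonnegative indicator) is the crucial move that turns a conditional bound into an unconditional one and lets the induction hypothesis close the argument.
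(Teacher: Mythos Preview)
Your proposal is correct and follows essentially the same approach as the paper: induction on $t$, decomposing the event $\{\sum_{i=1}^t B_i \ge d\}$ according to whether $S_{t-1} \ge d$ or $S_{t-1} = d-1$, lower-bounding the conditional probability of $B_t = 1$ by $p_t$, and then applying the inductive hypothesis to the resulting convex combination $(1-p_t)\Pr[S_{t-1}\ge d] + p_t\Pr[S_{t-1}\ge d-1]$. The only cosmetic difference is that you condition on the full history $A_1,\dots,A_{t-1}$ and use the tower property explicitly, whereas the paper conditions directly on the event $\{S_{t-1}=d-1\}$; your formulation is arguably cleaner in justifying why hypothesis~2 applies.
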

Lemma \ref{lem:coupling}  and its proof are similar to Lemma 2.2 of \cite{khabbazian2011decomposing}. However, we include a full proof for completeness  and since we are in a slightly different setting, where we condition on the  full past state, rather than just the preceding values of $B_t = f(X_t)$.
\begin{proof}
We prove the  result  via  induction on $t$.  The base  case  with $t=1$  is given by assumption (1). For  any $t  >1$, assuming that \eqref{eq:coupleInduct} holds for all  $t' <  t$, we have:
\small
\begin{align*}
\Pr \left [\sum_{i=1}^t B_i \ge d \right  ] &= \Pr \left [B_t =1 \mid \sum_{i=1}^{t-1} B_i = (d-1) \right  ]  \cdot \Pr \left [\sum_{i=1}^{t-1} B_i = (d-1) \right  ] + \Pr \left [\sum_{i=1}^{t-1} B_i \ge d \right  ]\\
&\ge \Pr \left [Z_t =1\right ] \cdot \Pr \left [\sum_{i=1}^{t-1} B_i = d-1 \right  ]  + \Pr \left [\sum_{i=1}^{t-1} B_i \ge d \right  ]\\
&= \Pr \left [Z_t =1\right ] \cdot \left (\Pr \left [\sum_{i=1}^{t-1} B_i \ge d-1 \right  ] -\Pr \left [\sum_{i=1}^{t-1} B_i \ge d \right  ] \right)  + \Pr \left [\sum_{i=1}^{t-1} B_i \ge d \right  ]\\
&= \Pr \left [Z_t =1\right ] \cdot \Pr \left [\sum_{i=1}^{t-1} B_i \ge d-1 \right  ] +  \Pr \left [Z_t = 0\right ] \cdot \Pr \left [\sum_{i=1}^{t-1} B_i \ge d \right  ].
\end{align*}
\normalsize
By the  inductive assumption we can then bound:
\begin{align*}
\Pr \left [\sum_{i=1}^t B_i \ge d \right  ]  &\ge \Pr \left [Z_t =1\right ] \cdot \Pr \left [\sum_{i=1}^{t-1} Z_i \ge d-1 \right  ] +  \Pr \left [Z_t = 0\right ] \cdot \Pr \left [\sum_{i=1}^{t-1} Z_i \ge d \right  ]\\
&= \Pr \left [\sum_{i=1}^{t} Z_i \ge d \right  ]
\end{align*}
which gives the lemma. 
\end{proof}


We next  prove a related theorem, but in a more specialized setting. We consider a set of neurons $\{u_1,...,u_s\}$ for which we can lower bound the probability of each $u_i$ spiking at time $t+1$ given that it spiked at time $t$ (i.e., given that $u_i^{t} = 1$). We show that, while the behavior of the neurons may be highly correlated, the number of neurons in the set that spike for $t$ consecutive times can be lower bounded by comparing these neurons to a set of independent  random variables with 
comparable spiking probabilities.

\begin{lemma}\label{lem:coupling2}
Let $\Net = \langle \All, w, b, f\rangle$ be an SNN, and let $\{u_1,...,u_s\} \subseteq \All$ be any set of neurons in the network. Let $Z_{i,t} \in \{0,1\}$ be a set of independent random variables. 
Suppose that:
\begin{enumerate}
\item The initial configuration $\All^0$ of $\Net$  has $\All^0(u_i) = 1$  for every $i \in \{1,...,s\}$.
\item For every $i \in \{1,...,s\}$, any configuration $C$ of $\Net$ with $C(u_i) = 1$, and any $t \ge 0$:
$$\Pr [u_i^{t+1} = 1 | \All^{t} = C] \ge  \Pr[Z_{i,t+1} = 1].$$
\end{enumerate}
Let $\mathcal{I}_i(t) \in \{0,1\}$ be an indicator variable for the event that $u_i^1 =... = u_i^t  = 1$. Let $\mathcal{\bar I}_i(t) \in \{0,1\}$ be an indicator variable for the event that $Z_{i,1} = ... = Z_{i,t} = 1$.
Then for every $t$ and $d \in \mathbb{Z}^{\ge 0}$, 
\begin{align}
\Pr \left [\sum_{i=1}^s \mathcal{I}_i(t) \ge d \right  ] \ge \Pr \left [\sum_{i=1}^s \mathcal{\bar I}_i(t)  \ge d \right  ]. 
\end{align}
\end{lemma}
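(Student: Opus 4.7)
The plan is to prove this by an explicit monotone coupling rather than by a direct conditional-probability argument in the spirit of Lemma \ref{lem:coupling}. The obstacle to using Lemma \ref{lem:coupling} here is that the events $\mathcal{I}_i(t)$ for different $i$ are jointly determined by the full network state and are too entangled to allow clean sequential conditioning on just the $u_i$'s; a coupling sidesteps this by producing a \emph{pointwise} domination between $\sum_i \mathcal{I}_i(t)$ and a copy of $\sum_i \mathcal{\bar I}_i(t)$ on an augmented probability space, from which stochastic dominance for every threshold $d$ falls out immediately.

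On the augmented space, for each $i \in \{1,\ldots,s\}$ and each $t \ge 1$ I introduce an independent uniform $U_{i,t} \sim \mathrm{Unif}[0,1]$, together with independent uniforms driving the firings of every other non-input neuron. The true execution of $\Net$ is simulated by firing each neuron $v$ at time $t$ iff its uniform falls at or below $p(v,t)$; the Markov property (Lemma \ref{lem:independence}) and the conditional independence of spikes given $\All^{t-1}$ from \eqref{eq:potentialOut} guarantee that this realization reproduces the correct distribution $\mathcal{D}(\Net, \All^0, \alpha_\Input)$. Writing $q_{i,t} \eqdef \Pr[Z_{i,t} = 1]$, I set $\tilde Z_{i,t} \eqdef \mathbbm{1}\{U_{i,t} \le q_{i,t}\}$. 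Because the $U_{i,t}$ are mutually independent, the $\tilde Z_{i,t}$ are independent Bernoullis with the same marginals as the $Z_{i,t}$, and so the analogues $\tilde{\mathcal{I}}_i(t)$ match the joint distribution of the $\mathcal{\bar I}_i(t)$.

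The heart of the argument is then a deterministic, per-$i$ induction on the time index: on this coupled space, $\tilde Z_{i,1} = \cdots = \tilde Z_{i,s} = 1$ implies $u_i^1 = \cdots = u_i^s = 1$. The base case $u_i^0 = 1$ is assumption (1). For the inductive step, $u_i^{s-1} = 1$ combined with assumption (2) gives $p(u_i,s) \ge q_{i,s}$, and together with $U_{i,s} \le q_{i,s}$ this forces $u_i^s = 1$. Therefore $\mathcal{I}_i(t) \ge \tilde{\mathcal{I}}_i(t)$ pointwise for every $i$, so the sums are ordered pointwise and the desired tail inequality follows.

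The main obstacle is conceptual rather than technical: one must verify that reusing a single uniform $U_{i,t}$ to drive both the highly correlated true firings of $u_i$ and the independent comparison variable $\tilde Z_{i,t}$ does not spoil either marginal. The key observation is that each $\tilde Z_{i,t}$ is a deterministic function of a single unconditioned uniform, so independence across $(i,t)$ is preserved automatically; the correlation among the $u_i^t$'s lives entirely in how the \emph{threshold} $p(u_i,t)$ varies with the random past $\All^{t-1}$, which is orthogonal to the joint distribution of the $\tilde Z$'s.
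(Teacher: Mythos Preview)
Your proof is correct and follows essentially the same coupling approach as the paper: both construct an augmented probability space on which copies of the $Z_{i,t}$ are pointwise dominated by the $u_i^t$, yielding $\mathcal{I}_i(t)\ge \tilde{\mathcal{I}}_i(t)$ deterministically for every $i$ and hence the desired tail inequality for the sums. The only difference is the coupling mechanism---you thread a single uniform $U_{i,t}$ through two thresholds $q_{i,t}\le p(u_i,t)$, whereas the paper sets $u_i^{t+1}=\max(Z_{i,t+1},E_{i,\alpha^t})$ with an explicit ``top-up'' Bernoulli $E_{i,\alpha^t}$; these are standard, equivalent implementations of the same monotone-coupling idea, and your version is arguably the cleaner of the two.
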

\begin{proof}
We prove the lemma via a coupling argument. At a high level, we 
define a set of auxiliary random variables $\mathcal{\hat I}_i(t) $ for $i \in \{1,...,s\}$. We  construct these random variables such that their joint distribution is \emph{identical} to that of the random variables $\mathcal{ I}_i(t) $.  Additionally, we correlate $\mathcal{\hat I}_i(t)$ with the variables $\{Z_{i,t}\}$ in such a way that we always have $\mathcal{\hat I}_i(t) \ge \mathcal{\bar I}_i(t)$. We  thus have:
\begin{align}\label{highlevelD}
\Pr \left [\sum_{i=1}^s \mathcal{I}_i(t) \ge d \right  ] = \Pr \left [\sum_{i=1}^s \mathcal{\hat I}_i(t) \ge d \right  ] \ge \Pr \left [\sum_{i=1}^s \mathcal{\bar I}_i(t) \ge d \right  ],
\end{align}
which gives the lemma.

\medskip
\spara{Definition of Coupled Random Variables $\mathcal{\hat I}_i(t)$.}
\medskip

Given $\Net$, the distribution on executions of $\Net$ with   initial configuration $\All^0$ and input configuration $\alpha_\Input$, induced by the update rules described in Section \ref{sec:dynamics} is given by $\mathcal{D}(\Net, \All^0,\alpha_X)$. We define the distribution $\mathcal{\hat D}(\Net, \All^0,\alpha_X)$, which  is identical to $\mathcal{D}(\Net, \All^0,\alpha_X)$ except coupled to the auxiliary random variables $\{Z_{i,t}\}$ in the following way: 

For any $t \ge 0$, execution $\alpha^{t} = C^0...C^{t}$, and $i \in \{1,...,s\}$ with $C^0(u_i) = .... = C^{t}(u_i) = 1$ let 
\begin{align}\label{epsDef}
\epsilon_{i,\alpha^{t} } = \Pr_{\mathcal{D}(\Net, \All^0,\alpha_X)}[u_i^{t+1} =1| \All^0...\All^{t} = \alpha^{t}] - \Pr[Z_{i,t+1} = 1].
\end{align}
By assumption  (2) in the lemma statement and Lemma \ref{lem:independence} we have $\epsilon_{i,\alpha^{t} }  \ge 0$. Let $E_{i,\alpha^{t}} \in \{0,1\}$ be a random variable which is independently set to $1$ with probability $\frac{\epsilon_{i,\alpha^{t} } }{1-\Pr[Z_{i,t+1} = 1]}$ and $0$ otherwise. The distribution $\mathcal{\hat D}(\Net, \All^0,\alpha_X)$ is given by iteratively drawing a configuration $\All^{t+1}$ in the same way as in $\mathcal{D}(\Net, \All^0,\alpha_X)$, with spiking probabilities given by the potentials induced by $\All^{t}$. However, if $i \in \{1,...,s\}$ and $N^0...\All^{t} = \alpha^t$ with $N^0(u_i) = .... = N^{t}(u_i) = 1$, we set
\begin{align}\label{coupSet}
{u}_i^{t+1} = \max ( Z_{i,t+1}, E_{i,\alpha^{t}}).
\end{align}
Using \eqref{epsDef} and the definition of $E_{i,\alpha^{t}}$ we can see that 
\small
\begin{align*}
 \Pr_{\mathcal{\hat D}(\Net, \All^0,\alpha_X)}[u_i^{t+1} =1 | \All^0...\All^{t} = \alpha^{t}] &= 1-  \Pr_{\mathcal{\hat D}(\Net, \All^0,\alpha_X)}[u_i^{t+1} =0 | \All^0...\All^{t} = \alpha^{t}] \\
  &= 1 - (1-\Pr[Z_{i,t+1} = 1]) \cdot \left (1-\Pr[E_{i,\alpha^t} = 1]\right)\\
 &= 1 - (1-\Pr[Z_{i,t+1} = 1]) \cdot \left (1-\frac{\epsilon_{i,\alpha^{t} } }{1-\Pr[Z_{i,t+1} = 1]}\right)\\
 &= \Pr[Z_{i,t+1} = 1] + \epsilon_{i,\alpha^{t} }\\
 &= \Pr_{\mathcal{D}(\Net, \All^0,\alpha_X)}[u_i^{t+1} | \All^0...\All^{t} = \alpha^{t}].\tag{By  \eqref{epsDef}}
\end{align*}
\normalsize
Thus, the probability that any neuron spikes at time $t$ conditioned on the network configuration at all times before $t$ is identical in executions drawn from $\mathcal{\hat D}(\Net, \All^0,\alpha_X)$ and $\mathcal{ D}(\Net, \All^0,\alpha_X)$. So
 for any $t$, if $\All^0...\All^{t}$ is drawn from $\mathcal{D}(\Net, \All^0,\alpha_X)$ and $\hat{\All}^0...\hat{\All}^t$ from $\mathcal{\hat D}(\Net, \All^0,\alpha_X)$, these two executions are identically distributed. In particular, if $\mathcal{\hat  I}_i(t) \in \{0,1\}$ is an indicator variable for the event that $\hat{\All}^1(u_i) =... = \hat{\All}^t(u_i)  = 1$, then $\mathcal{\hat I}_i(t)$ and $\mathcal{I}_i(t)$ are identically distributed.
 
 \medskip
 \spara{Proof that Coupled Random Variables Upper Bound Independent Variables.}
\medskip

We can see that $\mathcal{\hat I}_i(t) \ge \mathcal{\bar I}_i(t)$ via an inductive argument. In the base case, since 
we assume $\All^0(u_i) = 1$ for all $i \in \{1,...,s\}$, we apply \eqref{coupSet} to generate $\hat{\All}^1(u_i)$. We
set $\hat{\All}^1(u_i) = \max ( Z_{i,1}, E_{i,\alpha^{0}}) \ge Z_{i,1}$ which gives $\mathcal{\hat I}_i(1) \ge \mathcal{\bar I}_i(1)$. For $t \ge 1$, if $\mathcal{\bar I}_i(t) = 0$ the claim holds trivially since $\mathcal{\hat I}_i(t), \mathcal{\bar I}_i(t) \in \{0,1\}$. Otherwise, we have $\mathcal{\bar I}_i(t) = 1$ which implies that $\mathcal{\bar I}_i(t-1) = 1$ and so $\mathcal{\hat I}_i(t-1) = 1$ by the inductive assumption. If $\mathcal{\hat I}_i(t-1) = 1$ then again we apply   \eqref{coupSet} to generate $\hat{\All}^t(u_i)$ and so have  $\hat{\All}^t(u_i) = \max ( Z_{i,t}, E_{i,\alpha^{t-1}}) \ge Z_{i,t}$, giving $\mathcal{\hat I}_i(t) \ge \mathcal{\bar I}_i(t)$.

Since $\mathcal{\hat I}_i(t)$ is identically distributed to $\mathcal{ I}_i(t)$ this completes the lemma by \eqref{highlevelD}.
\end{proof}


Our next lemma pertains specifically to networks with a sigmoid spike probability function, $f(x) = \frac{1}{1+e^{-x/\lambda}}$, which we use throughout this work. We show that given a network with temperature parameter $\lambda > 0$, we  can construct a network with an identical  execution distribution for any $\hat{\lambda} >  0$. Thus, we will always consider the case of $\lambda  = 1$, which implies the existence of networks satisfying all bounds given for  all $\lambda > 0$.

\begin{lemma}[Equivalence of Temperature Parameters]\label{lem:temp}
For $\lambda,\hat{\lambda}> 0$, let $f(x) = \frac{1}{1+e^{-x/\lambda}}$ and $\hat{f}(x) = \frac{1}{1+e^{-x/\hat{\lambda}}}$.
Given $\Net = \langle \All,w,b,f\rangle$, let  $\wh{\Net} = \langle \All,\hat{w},\hat{b},\hat{f}\rangle$ where  for all $u,v \in \All$, $\hat{w}(u,v) = w(u,v)  \cdot \frac{\lambda}{\hat{\lambda}}$ and $\hat{b}(u) = b(u) \cdot \frac{\lambda}{\hat{\lambda}}$. For any length initial configuration $\All^0$ and any infinite input execution $\alpha_\Input$:
$$\mathcal{D}(\Net,\All^0,\alpha_\Input) = \mathcal{D}(\wh{\Net},\All^0,\alpha_\Input).$$   
\end{lemma}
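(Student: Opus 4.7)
The plan is to reduce the claim to showing that at every time step the conditional spike probabilities induced by $\wh{\Net}$ match those of $\Net$, and then invoke the Markov property from Lemma~\ref{lem:independence} to upgrade this pointwise agreement to equality of distributions on infinite executions.

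First I would observe the key algebraic identity for the sigmoid: for the chosen scaling of weights and biases, the potential $\hat{\pot}(u,t)$ computed in $\wh{\Net}$ equals $\frac{\lambda}{\hat{\lambda}} \cdot \pot(u,t)$, since both the sum $\sum_{v \in \All} \hat w(v,u)\cdot v^{t-1}$ and the bias $\hat b(u)$ are uniformly rescaled by $\lambda/\hat{\lambda}$. Then $\hat f(\hat{\pot}(u,t)) = \frac{1}{1+e^{-\hat{\pot}(u,t)/\hat{\lambda}}}$ simplifies so that the exponent becomes $-\pot(u,t)/\lambda$ (up to the consistent rescaling relationship between the sigmoid temperatures), which is exactly $f(\pot(u,t))$. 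Hence, for every non-input neuron $u$ and every time $t \ge 1$, the firing probability at time $t$ given the preceding configuration $\All^{t-1}$ is the same in both networks.

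Next I would combine this pointwise equality with the Markov property. Since input neurons are driven by the same input execution $\alpha_\Input$ in both networks, the conditional distribution of $\All^t$ given $\All^{t-1}$ is determined by the collection $\{p(u,t)\}_{u \in \All \setminus \Input}$, which by the previous step agrees between $\Net$ and $\wh{\Net}$. Writing the probability of any finite prefix $C^0 C^1 \cdots C^t$ as
\begin{equation*}
\Pr[\All^0 \All^1 \cdots \All^t = C^0 C^1 \cdots C^t] = \mathbbm{1}[\All^0 = C^0] \cdot \prod_{s=1}^{t} \Pr[\All^s = C^s \mid \All^{s-1} = C^{s-1}],
\end{equation*}
and noting that each factor on the right is identical in the two networks, I conclude by induction on $t$ that the two measures agree on every cone $A(C^0 C^1 \cdots C^t)$. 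Since such cones generate the $\sigma$-algebra $\mathcal{F}$, Caratheodory's extension (or a standard $\pi$--$\lambda$ argument) yields equality of $\mathcal{D}(\Net,\All^0,\alpha_\Input)$ and $\mathcal{D}(\wh{\Net},\All^0,\alpha_\Input)$.

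No step here is really a serious obstacle: the only place that demands care is the algebraic check that the sigmoid temperature and the uniform weight/bias scaling cancel exactly, which is a one-line computation once one writes $\hat{\pot}(u,t)/\hat{\lambda} = \pot(u,t)/\lambda$. Everything else is a standard propagation of equality of one-step transition kernels to equality of path measures, which the Markov property from Lemma~\ref{lem:independence} makes clean.
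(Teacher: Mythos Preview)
Your proposal is correct and follows essentially the same approach as the paper: both arguments verify that the weight/bias rescaling exactly compensates for the change in temperature so that $\hat p(u,t)=p(u,t)$, and then propagate this equality of one-step transition kernels by induction to equality of finite-prefix probabilities. The only difference is that you make the final measure-theoretic step (extending from cones to the full $\sigma$-algebra via a $\pi$--$\lambda$ argument) explicit, whereas the paper stops at equality on all finite prefixes and leaves that extension implicit.
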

\begin{proof}
For  any $t \ge  1$ and any configuration $C$, we can compute the probability that $\Net$ is in this configuration at time $t$ conditioned on all past configurations as:
\small
\begin{align}\label{fullout}
\Pr_{\mathcal{D}(\Net,\All^0,\alpha_\Input)} [\All^t = C | \All^{t-1}...\All^0] = \prod_{u\in \All} \left [C(u) \cdot p(u,t) + (1-C(u)) \cdot (1-p(u,t))\right ]
\end{align}
\normalsize
where $p(u,t) = f(\pot(u,t))$. Fixing $\All^{t-1}...\All^0$, we can see that the potential computation \eqref{eq:potentialOut} is a linear function of $w(u,v)$ and $b(u)$ for all $u,v \in \All$. Thus, letting $\wh{\pot}(u,t)$ be the potential of $u$ at time $t$ in $\wh{\Net}$ given $\All^{t-1}...\All^0$, since  $\hat{w}(u,v) = w(u,v)  \cdot \frac{\lambda}{\hat{\lambda}}$ and $\hat{b}(u) = b(u) \cdot \frac{\lambda}{\hat{\lambda}}$, 
$$\wh{\pot}(u,t) = \pot(u,t) \cdot \frac{\lambda}{\hat\lambda}.$$
This gives that the probability of $u$ spiking at  time $t$ in $\wh{\Net}$ given $\All^{t-1}...\All^0$ equals :
\begin{align*}
\hat p(u,t) = \hat{f} (\wh \pot(u,t)) = \hat{f}  \left ( \pot(u,t) \cdot \frac{\lambda}{\hat\lambda}\right ) = f( \pot(u,t)) = p(u,t).
\end{align*}
And since $\hat p (u,t) = p(u,t)$ for all $u \in \All$ we have using \eqref{fullout}, for any $C$,
\begin{align}\label{singleStepBlah}
\Pr_{\mathcal{D}(\Net,\All^0,\alpha_\Input)} [\All^t = C | \All^{t-1}...\All^0] = \Pr_{\mathcal{D}(\wh{\Net},\All^0,\alpha_\Input)} [\All^t = C | \All^{t-1}...\All^0].
\end{align}
Finally inducting on $t$ we can show that for any finite execution $C^0...C^t$:
\begin{align}\label{tempMain}
\Pr_{\mathcal{D}(\Net,\All^0,\alpha_\Input)} [\All^0...\All^t = C^0...C^t]  = \Pr_{\mathcal{D}(\wh{\Net},\All^0,\alpha_\Input)} [\All^0...\All^t = C^0...C^t].
\end{align}
This holds trivially  for $t = 0$ since
$$\Pr_{\mathcal{D}(\Net,\All^{0},\alpha_\Input)} [\All^0 = C^0]  = \Pr_{\mathcal{D}(\wh{\Net},\All^{0},\alpha_\Input)} [\All^0 = C^0]  = 1$$
if $\All^0 = C^0$. Both probabilities are zero otherwise. For $t \ge 1$, assume that \eqref{tempMain} holds for all $t' < t$. Combined with \eqref{singleStepBlah} this gives:
\begin{align*}
\Pr_{\mathcal{D}(\Net,\All^{0},\alpha_\Input)} [\All^0...\All^t = C^0...C^t] &= \Pr_{\mathcal{D}(\Net,\All^{0},\alpha_\Input)} [\All^0...\All^{t-1} = C^0...C^{t-1}]\\ &\hspace{3em}\cdot \Pr_{\mathcal{D}(\Net,\All^{0},\alpha_\Input)} [\All^t = C^t | \All^{0}...\All^{t-1} =  C^0...C^{t-1}]\\
&= \Pr_{\mathcal{D}(\wh{\Net},\All^{0},\alpha_\Input)} [\All^0...\All^{t-1} = C^0...C^{t-1}]\\ &\hspace{3em} \cdot \Pr_{\mathcal{D}(\wh{\Net},\All^{0},\alpha_\Input)} [\All^t = C^t | \All^{t-1}...\All^0 =  C^0...C^{t-1}]\\
& = \Pr_{\mathcal{D}(\wh{\Net},\All^{0},\alpha_\Input)} [\All^0...\All^t = C^0...C^t].
\end{align*}
This completes the lemma.
\end{proof}

\subsection{Potential Modifications to the Basic Model}\label{sec:futureDirs}
There are many potential modifications to the basic network model described in Sections \ref{sec:structure} and \ref{sec:dynamics} which may be  interesting to consider in future work. We present some below.
 
\begin{itemize}
\item One interesting extension is to add a history period $h > 1$ to the network, so that the spiking probability of a neuron at time $t$ depends on the configuration of the network at times $t-1,...,t-h$. In Section \ref{sec:multi}, for example, we use a model with history period $h=2$ to design very fast WTA networks.
  \item We could consider a very general history model, defining $$\pot(u,t) = f(u,\All^{t-1},...,\All^{t-h})$$ where $f$ is any function. 
  \item A history period $h$ can be thought of as giving each neuron access to a length $h$ queue of firing patterns on which $\pot(u,t)$ depends. It may be interesting to model such a queue as residing within the neuron's state.  We could also consider neurons with other types of state, capturing various types of observed biological phenomena.  For example,
we could model a refractory period, in which a neuron cannot fire again for a certain number of time steps after firing \cite{andrew2003spiking,izhikevich2004model}.
\item A history period may  be used, for example, to model a universal decay in the influence of spikes over time. We
 may specify a non-increasing weight vector $ = (c_1,c_2,...,c_h) \in \mathbb{R}^{\ge 0}$ and modify the potential computation of \eqref{eq:potentialOut} such that for any $t \ge h$:
 \begin{align*}
 \pot(u,t) = \left (\sum_{i = 1}^h \sum_{v \in \All}c_i \cdot w(v,u) \cdot v^{t-i} \right) -b(u). 
 \end{align*} 

\item It may  be interesting to consider networks with neurons of multiple types, with different firing dynamics. The human brain contains as many as 10,000 distinct neuron types \cite{neuronBasic}. Understanding how important neuron specialization is and for what reasons it arises is a very interesting question.
\item Similarly, we note that there is evidence that Dale's principal can be violated and that some neurons do have both inhibitory and excitatory  outgoing connections \cite{osborne1979dale}. Modeling such neurons to better understand their role and importance is an interesting direction.
\end{itemize}

\subsection{The Winner-Take-All Problem}\label{sec:wtaDef}

We now define the main problem that we consider in this work, the binary winner-take-all (WTA) problem. In this problem, given $n$ input neurons, the goal is to converge to a configuration in which a single output neuron, corresponding to a firing input, fires. This neuron is referred to as the `winner' of the computation. We first define a valid WTA output configuration for a given input configuration:

\begin{definition}[Valid WTA Output Configuration]\label{def:output}
Consider any network $\Net$ with $n$ input neurons, labeled $x_1,...,x_n$, and $n$ output neurons, labeled $y_1,...,y_n$. For any input configuration $C_\Input$ of $\Net$, a \emph{valid WTA output configuration} for $C_\Input$  is any output configuration $C_\Output$ with $C_\Output(y_i) \le C_\Input(x_i)$ for all $i \in \{1,...,n\}$ and $\norm{C_\Output}_1 = \min(1,\norm{C_\Input}_1)$. 
\end{definition}

Interpreting the above definition, the restriction that $\norm{C_\Output}_1  = \min(1,\norm{C_\Input}_1)$ requires that if at least one input fires, exactly  one output fires. The condition $C_\Output(y_i) \le C_\Input(x_i)$ for all $i$ requires that this firing output corresponds to a firing input. If no inputs fire (i.e., if $\norm{C_\Input}_1 = 0$), then no outputs should fire.
With this definition, we can define the WTA problem (see Section \ref{sec:problem} for a description of how problems are defined in our SNN model):

\begin{definition}[Winner-Take-All Problem]\label{high:wta} Given input size $n \in \mathbb{Z}^{>0}$, convergence time $t_c \in \mathbb{Z}^{>0}$, stability time $t_s \in \mathbb{Z}^{>0}$, and failure probability $\delta > 0$, the winner-take-all problem $\wta(n,t_c,t_s,\delta)$ is defined as follows:
\begin{itemize}
\item If $\alpha_\Input$ is an input execution with $X^t$  fixed for all $t$, the output distribution $\mathcal{D}_\Output(\Net,\All^0,\alpha_\Input)$ can be any distribution on executions of $n$ output neurons satisfying:
\begin{itemize}
\item With probability $\ge 1- \delta$, there exists some $t \le t_c$ such that the output configuration is fixed at times $t,t+1,...,t+t_s$ and is a  valid WTA output  configuration for $X^t$ (Def. \ref{def:output}).
\end{itemize}
\item If $\alpha_\Input$ is any other input execution, the output distribution is unconstrained.
\end{itemize}
\end{definition}

Thus, to solve $\wta(n,t_c,t_s,\delta)$, with probability $\ge 1-\delta$, the network must converge to a valid output configuration within $t_c$ time steps and maintain this configuration for $t_s$  time steps. 

Due to the random firing behavior of our neurons, the network will eventually move to a different configuration with some probability. However, if the network solves $\wta(n,t_c,t_s,\delta)$, since convergence is required given any initial configuration $\All^0$, we can show that it must be self-stabilizing. That is, once it leaves a valid output configuration, it will converge again with probability $\ge1-\delta$ within $t_c$ steps, and maintain the new valid configuration again for $t_s$  steps. Formally:

\begin{theorem}[Self Stabilization of Winner-Take-All Networks]\label{thm:self}
If $\Net$ solves \\$\wta(n,t_c,t_s,\delta)$ for input execution $\alpha_\Input$ with $X^t$ fixed for all $t$, given any  finite execution $C^0...C^t$ of $\Net$, conditioned on $\All^0...\All^t = C^0...C^t$, with probability  $\ge 1-\delta$ there is some time $t' \le t + t_c$ such that  the output configuration for $\Net$ is fixed at times $t',t'+1,...,t'+t_s$ and is a  valid WTA output configuration for $X^t$.
\end{theorem}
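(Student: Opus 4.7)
The plan is to reduce the self-stabilization claim to the WTA guarantee itself by (i) using the Markov property of the model to forget all past configurations except the most recent one, and (ii) exploiting the fact that Definition \ref{high:wta} requires the network to solve WTA from \emph{every} initial configuration, including the one induced by $C^t$.

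First, I would iterate the Markov property (Lemma \ref{lem:independence}) to show that for any event $F$ depending only on the suffix $\All^t\All^{t+1}\ldots$,
$$\Pr[F \mid \All^0\ldots\All^t = C^0\ldots C^t] = \Pr[F \mid \All^t = C^t].$$
This is standard: by Lemma \ref{lem:independence}, the one-step transition $\All^{t'} \mid \All^{t'-1}$ is independent of earlier history, so by induction on the length of the suffix the full conditional distribution of $\All^{t+1}\All^{t+2}\ldots$ depends only on $\All^t$ (and on the fixed input execution).

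Next, I would identify the distribution on the continuation with a distribution on a fresh execution. Since $\alpha_\Input$ has $X^{t'}$ equal to the same input configuration for every $t'$, the one-step transition probabilities in \eqref{eq:potentialOut} depend only on the previous configuration. Hence, under the time shift $s \mapsto t+s$, the conditional law of $\All^t\All^{t+1}\ldots$ given $\All^t = C^t$ equals the law $\mathcal{D}(\Net, C^t, \alpha_\Input)$ of an execution started in initial configuration $C^t$ with the same fixed input. Passing to output restrictions gives the analogous equality for $\mathcal{D}_\Output$.

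Finally, I would apply the hypothesis that $\Net$ solves $\wta(n,t_c,t_s,\delta)$ with the specific initial configuration $\All^0 := C^t$. By Definition \ref{high:wta}, with probability at least $1-\delta$ there exists $s \le t_c$ such that the output configuration in this shifted execution is fixed at steps $s, s+1,\ldots, s+t_s$ and is a valid WTA output configuration for $X^0$. Undoing the time shift and using that $X^0 = X^t$ (because the input is constant), the event becomes: there exists $t' = t+s \le t+t_c$ with the output configuration of the original process fixed on $\{t',t'+1,\ldots,t'+t_s\}$ and valid for $X^t$. Combining this with the Markov reduction from the first step yields the stated bound of $\ge 1-\delta$ on this event conditioned on $\All^0\ldots\All^t = C^0\ldots C^t$, proving the theorem. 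The only real subtlety is making the time-shift identification of distributions rigorous, but this is routine given that the input is time-invariant and the dynamics are Markovian.
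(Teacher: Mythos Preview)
Your proposal is correct and follows essentially the same approach as the paper: use the Markov property to reduce the conditional law of the continuation given the full history to the law given only $\All^t = C^t$, identify this with a fresh execution $\mathcal{D}(\Net, C^t, \alpha_\Input)$ via a time shift (valid since the input is constant), and then invoke the WTA guarantee from initial configuration $C^t$. The paper's proof is just a terser version of exactly these steps.
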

\begin{proof}
Consider the distribution on infinite executions $\All^{t+1}\All^{t+2}...$ conditioned on $\All^0...\All^t = C^0...C^t$. Since the configuration at time $t' \ge t+1$ depends only on the configuration at time $t'-1$, this distribution is identical to $\mathcal{D}_\Output(\Net,C^{t},\alpha_\Input)$. 

Thus, if $\Net$ solves $\wta(n,t_c,t_s,\delta)$ on $\alpha_X$, conditioned on $\All^0...\All^t = C^0...C^t$, with probability  $1-\delta$, there is some time $t' \le t+t_c$ in which $\Net$ reaches a valid WTA output configuration for $X^t$ and remains there for $t_s$ steps, giving the lemma. 

\end{proof}
We can also define an expected-time version of the winner-take-all problem as follows:
\begin{definition}[Expected-Time Winner-Take-All Problem]\label{exp:wta} For any infinite input execution $\alpha_\Input = \Input^0\Input^1...$, stability time $t_s \in \mathbb{Z}^{> 0}$, and infinite output execution $\alpha_\Output = \Output^0\Output^1...$ define:
\small
\begin{align*}
 t(\alpha_\Input,t_s,\alpha_\Output) = \min \left \{t : \Output^t\text{ is a valid WTA output configuration and } \Output^t=...=\Output^{t+t_s}\right \}.
 \end{align*}
 \normalsize
  
Given input size $n \in \mathbb{Z}^{>0}$, convergence time $t_c \in \mathbb{Z}^{>0}$, stability time $t_s \in \mathbb{Z}^{>0}$, the expected-time winner-take-all problem $\ewta(n,t_c,t_s)$ is defined as follows:
\begin{itemize}
\item If $\alpha_\Input$ is an input execution with $X^t$  fixed for all $t$, the output distribution $\mathcal{D}_\Output(\Net,\All^0,\alpha_\Input)$ can be any distribution on executions of $n$ output neurons satisfying:
$$\E_{\mathcal{D}_\Output(\Net,\All^0,\alpha_\Input)} t(\alpha_\Input,t_s,\alpha_\Output) \le t_c.$$
\item If $\alpha_\Input$ is any other input execution, the output distribution is unconstrained.
\end{itemize}
\end{definition}

\section{A Two-Inhibitor Solution to the WTA Problem}\label{sec:wta2}
We now present  a simple solution to the WTA problems in Definitions \ref{high:wta} and \ref{exp:wta} in networks with spike probability given by a sigmoid function. 
We begin by defining a family of networks $\Netg$ for any input size $n$ and weight scaling parameter $\gamma \in \R^{+}$ that solve these problems.

\subsection{Network Definition}\label{sec:2inhDef}

We first  give a full definition of our family of two-inhibitor WTA networks, before  describing the intuition behind why these networks solve the WTA  problem (Definitions \ref{high:wta} and \ref{exp:wta}).

\begin{definition}[Two-Inhibitor WTA Network]\label{def:wtaNet} For any positive integer $n$ and $\gamma \in\R^{+}$, let $\Netg= \langle \All,w,b,f\rangle$ where the spike probability, weight, and bias functions are defined as follows:
\begin{itemize}
\item The spike probability  function $f$ is defined to be the basic sigmoid function:
\begin{align}\label{eq:sigmoid}
f(x) \eqdef \frac{1}{1+e^{-x}}.
\end{align}
\item The set  of neurons $\All$ consists of a set of $n$ input neurons $\Input$, labeled $x_1,...,x_n$, a set  of $n$ corresponding outputs $\Output$, labeled $y_1,...,y_n$, and two auxiliary inhibitor neurons labeled $a_s,a_c$.
\item The weight function $w$ is given by:
\begin{itemize}
\item $w(x_i,y_i) = 3\gamma$, for all $i$.
\item $w(y_i,y_i) = 2\gamma$, for all $i$.
\item $w(a_s,y_i) = w(a_c,y_i) = -\gamma$, for all $i$.
\item $w(y_i,a_s) = w(y_i,a_c) = \gamma$, for all $i$.
\item $w(u,v) = 0$ for any $u,v$ whose connection is not specified above.  
\end{itemize}
\item The bias function $b$ is given by:
\begin{itemize}
\item $b(y_i) = 3\gamma$ for all $i$.
\item $b(a_s) = \gamma/2$.
\item $b(a_c) = 3\gamma/2$.
\end{itemize}
\end{itemize}
\end{definition}

A diagram of $\Netg$ is shown in Figure \ref{fig:two}. Note that the two inhibitors $a_s$ and $a_c$ have identical outgoing connections, and differ just  in their bias.

\begin{figure}[h]
\centering
\includegraphics[width=0.6\textwidth]{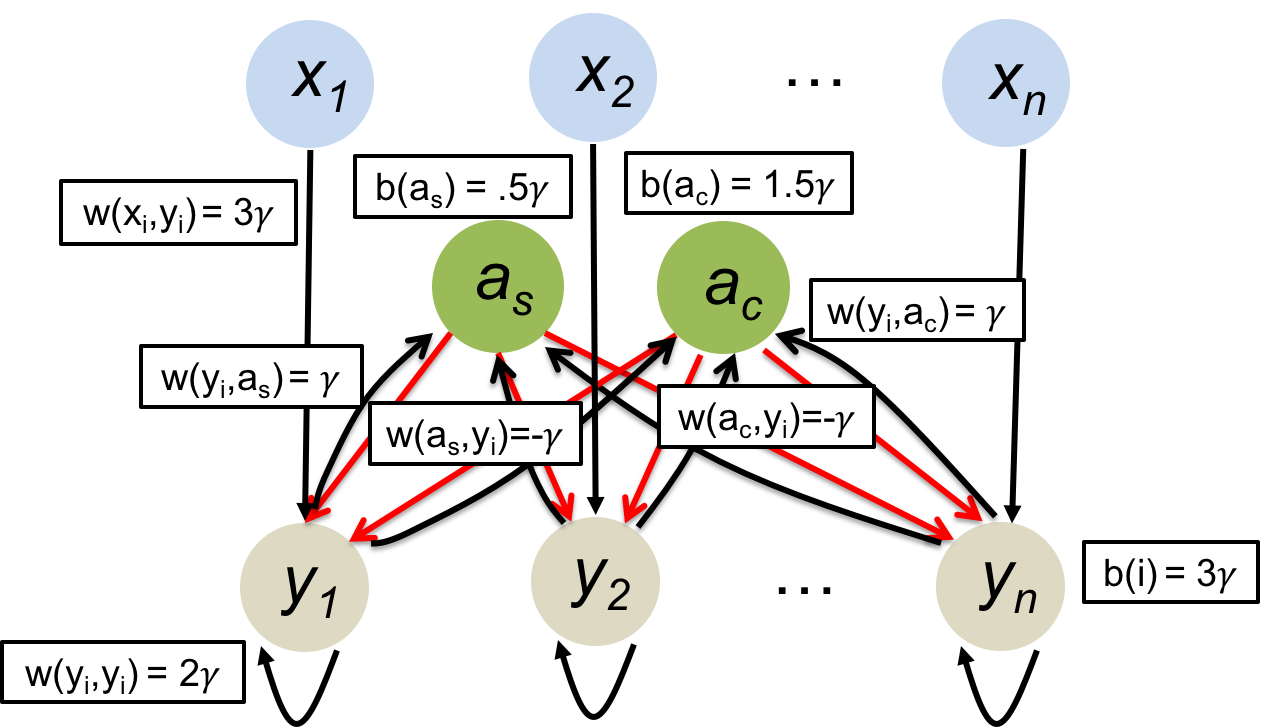}
\caption{Our two-inhibitor WTA network $\Netg$ as described in Definition \ref{def:wtaNet}.}
\label{fig:two}
\end{figure}

\subsubsection{Intuition Behind the Two-Inhibitor Network}
Before giving a formal analysis of the behavior of $\Netg$, we give some  intuition behind why this family of two-inhibitor networks solves the WTA problem.
In the description below, we informally refer to events that occur `with high probability'. We will quantify the meaning of such statements in our full analysis.

In $\Netg$, each input is connected to its corresponding output with a positive weight. Thus, firing inputs will initially  cause their corresponding outputs to fire with high probability. For the network to solve WTA, it must converge to a state in  which just a single one of these outputs fires.

To ensure this convergence,
$\Netg$ has two inhibitors $\Inh  = \{a_s, a_c\}$. The neuron $a_s$ is a \emph{stability} inhibitor that maintains a valid WTA output configuration once it has been reached. It fires with high probability at time $t$ whenever at least one output fires at time $t-1$. The neuron $a_c$ is a \emph{convergence} inhibitor that fires with high probability whenever at least two outputs fire at time $t-1$.

The weights connecting $a_s$ and $a_c$ to the outputs are set such that when both fire at time $t$, any output that fired at time $t$ will fire with probability $1/2$
at time $t+1$. Any output that did not fire at time $t$ will not fire at time $t+1$ with high probability. This distinguished behavior between previously firing and non-firing outputs is due to the self-loops on each output neuron, which allow firing outputs to partially  overcome the strong inhibition from $a_s$ and $a_c$.

In this way, if two or more outputs fire at time $t$, both inhibitors fire with high probability and the high level of inhibition causes outputs to `drop out of contention' for the winning position with probability $1/2$. After $O(\log n)$ time steps, nearly all the outputs stop firing and, with constant probability, there is a time step in which exactly one output fires.
Once this step occurs, with high probability, $a_c$ ceases firing and just $a_s$ fires. This decreased level of inhibition allows the winner to keep firing with high probability, as the inhibition is fully offset by the winner's excitatory self-loop. However, with high probability, the inhibition prevents any other output whose excitatory self-loop is inactive from firing. Thus the network remains in the valid WTA output configuration for a large number of time steps with high probability. 

In the event that a time in which a single output fires does not occur, then the network `resets'. No outputs fire at some time, causing the inhibitors to both cease firing. Thus, all outputs with firing inputs are able to fire, and convergence starts again. Within $O(\log 1/\delta)$  of these resets each reaching a valid WTA output state  with constant probability, the network reaches a valid WTA  output state  with probability  $\ge 1-\delta$ and so solves the WTA problem of Definition \ref{high:wta}. Similarly, the network requires $O(1)$ resets in expectation to reach a valid WTA output state, giving a solution to the expected-time version of the problem in Definition \ref{exp:wta}.
Formally, we will prove the following:

\tcbset{colback=white}
\begin{tcolorbox}
\begin{theorem}[Two-Inhibitor WTA]\label{thm:high}
For $\gamma \ge 4\ln((n+2)t_s/\delta)+10$,
$\Netg$ solves $\wta(n,t_c,t_s,\delta)$ for any  
$
t_c \ge 72(\log_2 n+1)\cdot (\log_2(1/\delta)+1).
$
\end{theorem}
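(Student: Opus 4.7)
The plan is to reduce the analysis to a one-dimensional random walk on $k_t := \|Y^t\|_1$ by showing that the large weights in Definition \ref{def:wtaNet} drive all but two firing decisions to deterministic outcomes. A direct calculation gives $\pot(a_s,t) = \gamma k_{t-1} - \gamma/2$ and $\pot(a_c,t) = \gamma k_{t-1} - 3\gamma/2$, while $\pot(y_i,t) = 3\gamma x_i^{t-1} + 2\gamma y_i^{t-1} - \gamma(a_s^{t-1}+a_c^{t-1}) - 3\gamma$; in particular every $y_i$ with $x_i^{t-1}=0$ has potential at most $-\gamma$. Enumerating the eight relevant configurations one sees that every potential is either $\ge \gamma/2$, $\le -\gamma/2$, or exactly $0$, and the two potential-$0$ cases are exactly (A) $x_i^{t-1}=y_i^{t-1}=a_s^{t-1}=a_c^{t-1}=1$ and (B) $x_i^{t-1}=1,\ y_i^{t-1}=a_s^{t-1}=a_c^{t-1}=0$.

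Let $G$ be the event that, for every $t\in\{1,\dots,t_c+t_s\}$ and every neuron $u$ with $|\pot(u,t)|\ge\gamma/2$, the firing $u^t$ agrees with the sign of $\pot(u,t)$. Using $f(x)\le e^x$ for $x\le 0$ and a union bound over at most $(n+2)(t_c+t_s)$ neuron-time pairs, the hypothesis $\gamma\ge 4\ln((n+2)t_s/\delta)+10$ gives $\Pr[\bar G]\le \delta/2$, with plenty of slack to absorb the extra $t_c$ factor (which is only polylogarithmic in $n$ and $1/\delta$). On $G$, Lemma \ref{lem:independence} guarantees that the only remaining randomness consists of independent fair coins, one per neuron per time step, associated with cases (A) and (B). This yields three clean regimes for $k_t$: \emph{halving} when $k_{t-1}\ge 2$ (both inhibitors fire on $G$, so $k_t \sim \mathrm{Bin}(k_{t-1},1/2)$); \emph{absorption} when $k_{t-1}=1$ (only $a_s$ fires, so $k_t=1$ and no other output fires); and \emph{reset} when $k_{t-1}=0$ (no inhibitor fires, so each $y_i$ with $x_i=1$ fires independently with probability $1/2$).

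On $G$ I split the $t_c$-step budget into $\lceil\log_2(1/\delta)+1\rceil$ phases of length $s_1 := 72(\log_2 n+1)$ and show that each phase hits $k_t=1$ with probability $\ge 1/2$. A Chernoff bound on the binomial halving chain (equivalently, Lemma \ref{lem:coupling2} comparing surviving outputs to a product of $\lceil\log_2 n\rceil$ fair coins) shows that the chain exits $\{2,\dots,n\}$ into $\{0,1\}$ within $s_1/2$ steps with probability $\ge 3/4$; a direct computation of the exit distribution of the Markov chain $k\mapsto \mathrm{Bin}(k,1/2)$ shows that, conditional on exiting, it lands on $1$ with probability at least some constant $c_0>0$ (for instance by observing that the chain passes through $k=2$ with constant probability and then moves $2\to 1$ with probability $1/2$). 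The remaining $s_1/2$ steps of the phase give a second independent attempt in case of a reset, so each phase succeeds with probability $\ge 1/2$ and the failure probability across $\log_2(1/\delta)+1$ phases is at most $\delta/2$. Once $k_{t^*}=1$ is reached, absorption on $G$ keeps $k_{t'}=1$ and all other outputs silent for every $t'\in[t^*,t^*+t_s]$, which is precisely the stability requirement of Definition \ref{high:wta}; a final union bound over $\bar G$ and ``no phase succeeds'' gives total failure at most $\delta$. The main obstacle is the bookkeeping between phases and the constant-probability hitting estimate, for which Lemma \ref{lem:coupling2} is the cleanest way to handle the dependencies between outputs across consecutive halving steps without delicate step-by-step conditioning.
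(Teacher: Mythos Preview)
Your overall strategy---union-bound away the rare ``wrong-sign'' firings via the global event $G$ and then analyse the residual fair-coin dynamics---is in the spirit of the paper's proof, and the global union bound is a legitimate alternative to the paper's step-by-step error accounting. But your reduction to a one-dimensional chain on $k_t=\|Y^t\|_1$ alone loses essential information, and this breaks your absorption claim.

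The issue is the one-step lag between the output count and the inhibitor state: $a_s^{t-1}$ and $a_c^{t-1}$ are determined (on $G$) by $k_{t-2}$, not by $k_{t-1}$, so the regime governing $k_t$ depends on the pair $(k_{t-2},k_{t-1})$. Concretely, your claim ``once $k_{t^*}=1$ is reached, absorption on $G$ keeps $k_{t'}=1$ for every $t'\in[t^*,t^*+t_s]$'' is false. The first time the chain hits $k_{t^*}=1$ we necessarily came from $k_{t^*-1}\ge 2$, so on $G$ both inhibitors still fire at time $t^*$. The surviving output is then exactly in your case~(A) at time $t^*+1$ and drops out with probability $1/2$; with that probability the network falls to $k_{t^*+1}=0$ rather than stabilising. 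Only if the additional coin comes up heads and $k_{t^*+1}=1$ does the inhibitor state catch up ($a_s^{t^*+1}=1$, $a_c^{t^*+1}=0$) and genuine absorption begin. The same lag afflicts your reset regime: after hitting $k=0$ the inhibitors take one or two further steps to clear before case~(B) is entered. The paper handles precisely this by tracking the inhibitor state as part of the configuration, distinguishing a \emph{valid} WTA configuration ($k=1$, $a_s=1$, $a_c=0$) from a \emph{near-valid} one ($k=1$, $a_s=a_c=1$); see Definitions~\ref{def:valid} and~\ref{def:nearvalid}. Lemma~\ref{lem:nearvalid} supplies exactly the missing step: a near-valid configuration transitions to the absorbing valid one with probability $\approx 1/2$, and Corollary~\ref{cor:kWTAconverge} chains this with the halving analysis to obtain a constant per-phase success probability. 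Your argument can be repaired by enlarging the state to $(k_{t-1},k_t)$ and inserting this extra coin flip at the halving/absorption boundary (and analogous bookkeeping at the reset boundary), but as written the stability conclusion does not follow from merely hitting $k=1$.
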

\end{tcolorbox}

\begin{tcolorbox}
\begin{theorem}[Two-Inhibitor Expected-Time WTA]\label{thm:exp}
For $\gamma \ge 4\ln((n+2)t_s)+10$,
$\Netg$ solves $\ewta(n,t_c,t_s)$ for any  $
t_c \ge 108(\log_2 n + 3).$
\end{theorem}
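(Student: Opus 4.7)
The plan is to decompose the execution of $\Netg$ into a sequence of convergence phases and bound the expected total time by a geometric sum, rather than chaining $\Theta(\log(1/\delta))$ attempts as in Theorem~\ref{thm:high}. The core object I would track is a recurrent \emph{reset state}: a configuration in which no outputs and neither inhibitor fires. From any reset state, on the next step the firing inputs drive their corresponding outputs to attempt firing, after which the dynamics proceed as described in Section~\ref{sec:2inhDef}: whenever two or more outputs fire, both $a_c$ and $a_s$ respond, and each firing output persists at the next step with probability close to $1/2$, while non-firing outputs are suppressed. So the count of firing outputs roughly halves each step until either exactly one output remains (a valid WTA configuration) or all outputs drop out (return to a reset state).

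The key quantitative claims I would prove are: \emph{(a)} starting from any initial configuration, within $O(1)$ expected steps the network reaches either a reset state or a valid WTA output configuration, since at every step there is constant probability the inhibitors' firing pattern drives the outputs into one of these absorbing behaviors; \emph{(b)} starting from a reset state, within $O(\log n)$ expected steps the network reaches either a valid WTA output configuration or returns to a reset state, via the halving argument above, formalized by coupling to independent $\mathrm{Bernoulli}(1/2)$ variables using Lemma~\ref{lem:coupling2}; \emph{(c)} each such attempt yields a valid WTA configuration (rather than resetting) with probability at least some absolute constant $p > 0$; and \emph{(d)} once a valid WTA configuration is reached, it persists for $t_s$ consecutive steps with probability $\ge 1 - 1/(n+2)$, using $\gamma \ge 4\ln((n+2)t_s)+10$ and a union bound so that the stability inhibitor $a_s$ together with the winner's self-loop yields per-step failure $O(1/((n+2)t_s))$.

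Combining these via the Markov property (Lemma~\ref{lem:independence}) gives the result: by \emph{(b)} and \emph{(c)} the expected time to reach \emph{some} valid WTA configuration is $O(\log n)$; by \emph{(d)} the expected number of phases before one also survives the $t_s$-step stability window is $O(1)$; and the overall expected time is their product, $O(\log n)$. Tracking the constants explicitly through the sigmoid computations would then give the stated bound $t_c \ge 108(\log_2 n + 3)$.

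The main obstacle is obtaining sharp enough constants to land on the specific value $108$ rather than an asymptotically equivalent but larger one. This reduces to careful evaluation of $f(x) = 1/(1+e^{-x})$ at the handful of potentials arising from Definition~\ref{def:wtaNet}: for example, an output $y_i$ with a firing input $x_i$, a firing self-loop, and both inhibitors firing has potential $3\gamma + 2\gamma - \gamma - \gamma - 3\gamma = 0$, so fires with probability exactly $1/2$; without the self-loop the potential becomes $-2\gamma$, so the firing probability is at most $e^{-2\gamma}$, which is negligible under the assumed lower bound on $\gamma$. A secondary subtlety is that adversarial initial configurations (e.g., a near-WTA state with the wrong winner, or one in which the inhibitors are initially silent while many outputs fire) must be shown to reach a reset or valid WTA state in $O(1)$ expected steps; this follows from the nonzero probability at every step that the inhibitor firing pattern forces the outputs into one of the two absorbing regimes, combined with the self-stabilization guarantee of Theorem~\ref{thm:self}.
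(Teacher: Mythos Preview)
Your overall strategy---a geometric/recurrence argument over convergence attempts, combined with a high-probability stability bound once a valid WTA configuration is reached---is exactly the paper's approach. However, claim (a) as stated is false, and this is a genuine gap in the decomposition.

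You assert that from any initial configuration the network reaches a reset state or a valid WTA configuration in $O(1)$ expected steps, justified by ``constant probability at every step'' of landing in one of these. But consider the configuration where $k = n$ outputs fire and both inhibitors fire (a valid $k$-WTA configuration in the paper's terminology). Under the halving dynamics you yourself describe, the number of firing outputs at the next step is $\mathrm{Bin}(k,1/2)$, so the probability of hitting $0$ or $1$ in a single step is $(k+1)/2^k$, which is exponentially small in $k$. Reaching reset or valid WTA from here takes $\Theta(\log k)$ steps, not $O(1)$. The set $\{\text{reset}, \text{valid WTA}\}$ is not absorbing for the fast transitions; the correct ``reached quickly from anywhere'' set is the paper's \emph{good configurations} (Definition~\ref{def:good}), which includes valid $k$-WTA states for all $k \ge 2$.

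The paper avoids this by not routing through reset states at all. It packages your (a)--(c) into the single statement of Theorem~\ref{thm:mainConvergence}: from \emph{any} configuration, with probability at least $1/18$ the network reaches a valid WTA configuration within $\Delta = 12\log_2 n + 30$ steps. The expected-time proof then sets up a recurrence directly on $t_{\max}(\alpha_X) = \max_{N^0} \E[t(\alpha_X,t_s,\alpha_Y)]$: conditioning on whether in the first $\Delta$ steps the network (i) converges and stabilizes for $t_s$ steps, (ii) converges but fails to stabilize, or (iii) fails to converge, one obtains $t_{\max} \le \Delta + \tfrac{8}{9}\, t_{\max} + O(1)$, hence $t_{\max} \le 9\Delta + 1 \le 108(\log_2 n + 3)$. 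Your (d) corresponds exactly to the paper's use of Corollary~\ref{cor:stability}. To repair your argument, either replace ``reset or valid WTA'' in (a) by ``good configuration'' and extend (b) to start from any good configuration (this is the content of Lemmas~\ref{lem:nearvalid}--\ref{lem:kWTAconverge} and Corollary~\ref{cor:kWTAconverge}), or simply invoke Theorem~\ref{thm:mainConvergence} as a black box.
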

\end{tcolorbox}

\spara{Proof Roadmap.}
We prove Theorems \ref{thm:high} and \ref{thm:exp} in
in Sections  \ref{sec:basic}-\ref{sec:completing}. The analysis is broken down as follows:
\begin{itemize}
\item[] Section \ref{sec:basic}: Prove basic \emph{one-step lemmas} which characterize  single time step transitions of $\Netg$, showing that  the neurons behave as described in the above high-level description.
\item [] Section  \ref{sec:stability}: Prove that, once in a valid WTA configuration, $\Netg$ stays in this configuration with high probability (that is, valid WTA configurations are stable).
\item [] Section  \ref{sec:bad2good}: Show that all configurations of $\Netg$ transition with high probability within two time steps to a small set of \emph{good configurations}, from which we will prove fast convergence.
\item [] Section \ref{sec:good2WTA} Show basic transition lemmas for this set of good configurations, characterizing the network's behavior at the times immediately following a good configuration.
\item [] Section \ref{sec:convergence} Use the above transition lemmas to show that the network converges, with constant probability, from any good configuration (and hence any  configuration by Section \ref{sec:bad2good}) to a valid WTA configuration within $O(\log n)$ time steps.
\item [] Section \ref{sec:completingA} Complete the analysis, demonstrating with what parameters $\Netg$ solves the winner-take-all problem (Definitions \ref{high:wta} and \ref{exp:wta}).
\end{itemize}

\subsection{Basic Results and One-step Lemmas}\label{sec:basic}
We begin with some basic results that will be important throughout our analysis, including a few `one-step' lemmas, which characterize the transition probabilities from a set  of configurations at time $t$ to a set of configurations at time $t+1$.


We first show that, unless a neuron has potential $0$, either it fires with high probability (i.e., except with probability that is inverse exponential in $\gamma$) or it does not fire with high probability.
\begin{lemma}[Characterization of Firing Probabilities]\label{lem:gap}
For any time $t\ge 1$ and any $u \in \All$:
\begin{align*}
&\text{If }\pot(u,t) = 0, \text{ then }  p(u,t) = 1/2.\\
&\text{If }\pot(u,t) < 0, \text{ then } p(u,t) \le e^{-\gamma/2}.\\
&\text{If }\pot(u,t) > 0, \text{ then } p(u,t) \ge 1-e^{-\gamma/2}.
 \end{align*}
\end{lemma}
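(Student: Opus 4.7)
The plan is to leverage the integer/half-integer structure of the weights and biases in $\Netg$ (Definition \ref{def:wtaNet}) to show that the only way a potential can be small in absolute value is to be exactly zero; then the three cases follow from standard sigmoid estimates.

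First I would examine \eqref{eq:potentialOut} together with Definition \ref{def:wtaNet}. Every synaptic weight in $\Netg$ is a multiple of $\gamma$ (namely $3\gamma$, $2\gamma$, $\pm\gamma$, or $0$), while the biases are either integer multiples of $\gamma$ (for outputs, $b(y_i)=3\gamma$) or odd multiples of $\gamma/2$ (for the inhibitors, $b(a_s)=\gamma/2$ and $b(a_c)=3\gamma/2$). Since firing states $v^{t-1}\in\{0,1\}$, the potential $\pot(u,t)$ of any non-input neuron $u$ is therefore an integer multiple of $\gamma/2$. In particular, whenever $\pot(u,t)\neq 0$ we automatically have $|\pot(u,t)|\ge \gamma/2$. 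This quantization is what produces the gap in the lemma.

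Next I would apply the sigmoid $f(x)=1/(1+e^{-x})$ case by case. The case $\pot(u,t)=0$ gives $p(u,t)=f(0)=1/2$ by direct evaluation. If $\pot(u,t)<0$, then by the previous paragraph $\pot(u,t)\le-\gamma/2$, so
\[
p(u,t)=\frac{1}{1+e^{-\pot(u,t)}}\le \frac{1}{1+e^{\gamma/2}}=\frac{e^{-\gamma/2}}{1+e^{-\gamma/2}}\le e^{-\gamma/2}.
\]
Symmetrically, if $\pot(u,t)>0$ then $\pot(u,t)\ge\gamma/2$ and
\[
1-p(u,t)=\frac{e^{-\pot(u,t)}}{1+e^{-\pot(u,t)}}\le e^{-\gamma/2},
\]
so $p(u,t)\ge 1-e^{-\gamma/2}$, completing the three cases.

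There is no real obstacle; the only thing to be careful about is the bookkeeping in the first step, verifying that no combination of the listed weights and biases can produce a nonzero potential whose magnitude is strictly less than $\gamma/2$. For the outputs, all weights and the bias are integer multiples of $\gamma$, so in fact the output potentials are integer multiples of $\gamma$ and the nonzero gap there is the stronger $\gamma$; for $a_s$ and $a_c$, the bias contributes a $\gamma/2$ offset to an integer-multiple-of-$\gamma$ sum, so the potential is always a half-integer multiple of $\gamma$ and cannot even equal zero unless the sum $\sum_i y_i^{t-1}$ matches the bias exactly. Either way, the minimal nonzero magnitude $\gamma/2$ suffices for the sigmoid bound above.
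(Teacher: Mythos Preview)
Your proposal is correct and follows essentially the same approach as the paper: observe that all weights and biases in $\Netg$ are integer multiples of $\gamma/2$, so any nonzero potential has magnitude at least $\gamma/2$, and then apply the obvious sigmoid bounds $f(-\gamma/2)\le e^{-\gamma/2}$ and $f(\gamma/2)\ge 1-e^{-\gamma/2}$. Your additional remarks about the outputs actually having integer-multiple-of-$\gamma$ potentials are true but not needed here.
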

\begin{proof}
If $\pot(u,t) = 0$, then by \eqref{eq:potentialOut} $p(u,t) = f(\pot(u,t)) = \frac{1}{1+e^0} = 1/2$. Otherwise consider the potential calculation of \eqref{eq:potentialOut} in the case when $h=1$:
\begin{align*}
\pot(u,t) = \sum_{v \in \All}w(v,u) \cdot v^{t-1} -b(u).
\end{align*}
By  Definition \ref{def:wtaNet}, for all $u,v$, $w(v,u)$ and $b(u)$ are integer multiples of $\gamma/2$. Thus, since $v^{t-1} \in \{0,1\}$, $\pot(u,t)$ is also an integer multiple of $\gamma/2$. So, if $\pot(u,t) < 0$, then $\pot(u,t) \le -\gamma/2$ and:
\begin{align*}
p(u,t) = f(\pot(u,t)) \le f(-\gamma/2) = \frac{1}{1+e^{\gamma/2}} \le e^{-\gamma/2}.
\end{align*}
Similarly, if $\pot(u,t) > 0$, then $\pot(u,t) \ge \gamma/2$ and so:
\begin{align*}
p(u,t+1) = f(\pot(u,t)) \ge f(\gamma/2) = \frac{1}{1+e^{-\gamma/2}} \ge 1- e^{-\gamma/2}.
\end{align*}
\end{proof}

We next show that if output $y_i$ does not correspond to a firing input (i.e., $x_i^t  = 0$), then starting from any configuration of $\Netg$ at time $t$, with high probability $y_i$ does not fire at time $t+1$.  That  is, with high probability, outputs that are not valid winners of the WTA computation do not fire.

\begin{tcolorbox}
\begin{lemma}[Correct Output Behavior]\label{lem:mono} For any time $t$, any configuration $C$ of $\Netg$, and any $i$ with $C(x_i) = 0$, 
$$\Pr[y_i^{t+1} = 1 | \All^t = C] \le  e^{-\gamma/2}.$$
\end{lemma}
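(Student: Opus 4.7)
The plan is to directly compute the potential of $y_i$ at time $t+1$ from the configuration $C$, show that it is strictly negative, and then invoke Lemma \ref{lem:gap} to get the desired exponential bound. There is no real obstacle here; the whole proof is a short arithmetic check using the weights from Definition \ref{def:wtaNet}.

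First I would write out $\pot(y_i, t+1)$ by plugging the nonzero incoming weights into \eqref{eq:potentialOut}. The only neurons with nonzero outgoing weight into $y_i$ are $x_i$ (weight $3\gamma$), $y_i$ itself (weight $2\gamma$), and the inhibitors $a_s, a_c$ (each of weight $-\gamma$). Combined with the bias $b(y_i) = 3\gamma$, this gives
\begin{align*}
\pot(y_i,t+1) = 3\gamma \cdot C(x_i) + 2\gamma \cdot C(y_i) - \gamma \cdot C(a_s) - \gamma \cdot C(a_c) - 3\gamma.
\end{align*}

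Next, using the assumption $C(x_i) = 0$, the first term vanishes, leaving
\begin{align*}
\pot(y_i,t+1) = 2\gamma \cdot C(y_i) - \gamma \cdot C(a_s) - \gamma \cdot C(a_c) - 3\gamma.
\end{align*}
Since $C(y_i), C(a_s), C(a_c) \in \{0,1\}$ and the inhibitor contributions can only decrease the potential, the maximum possible value of $\pot(y_i,t+1)$ is attained when $C(y_i) = 1$ and $C(a_s) = C(a_c) = 0$, yielding $\pot(y_i,t+1) \le 2\gamma - 3\gamma = -\gamma < 0$.

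Finally, since all weights and the bias are integer multiples of $\gamma/2$, so is $\pot(y_i,t+1)$, and the hypothesis of the negative case of Lemma \ref{lem:gap} applies. Thus $p(y_i,t+1) \le e^{-\gamma/2}$, which by definition of the firing dynamics gives $\Pr[y_i^{t+1} = 1 \mid \All^t = C] \le e^{-\gamma/2}$, completing the proof.
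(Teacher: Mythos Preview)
Your proof is correct and follows essentially the same approach as the paper: compute $\pot(y_i,t+1)$ from the weights in Definition~\ref{def:wtaNet}, observe that with $C(x_i)=0$ the potential is at most $2\gamma - 3\gamma = -\gamma < 0$, and apply Lemma~\ref{lem:gap}. The paper's version is slightly terser (it just writes the inequality directly rather than discussing which assignment of $C(y_i), C(a_s), C(a_c)$ maximizes the potential), but the argument is identical.
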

\end{tcolorbox}
\begin{proof}
If $N^t = C$ then $x_i^t = C(x_i)$.
We can compute $y_i$'s potential at time $t+1$, assuming $x_i^{t} = 0$:
\begin{align*}
\pot(y_i,t+1) &= w(x_i,y_i) x_i^{t} + w(y_i,y_i) y_i^{t} + w(a_s,y_i) a_s^{t}+ w(a_c,y_i) a_c^{t} - b(y_i) \\ &\le 0 + 2\gamma + 0 + 0 - 3\gamma = -\gamma.
\end{align*}
Thus, by Lemma \ref{lem:gap}, since $\pot(y_i,t+1) < 0$, $p(y_i,t+1) \le e^{-\gamma/2}$.
\end{proof}


Applying Lemma \ref{lem:mono} and a simple union bound over all $n$ outputs yields the following corollary:
\begin{tcolorbox}
\begin{corollary}[Correct Output Behavior, All Neurons]\label{cor:mono} For any time $t$ and configuration $C$ of $\Netg$,  
$$\Pr[y_i^{t+1} \le x_i^t\text{ for all }i | \All^t = C ] \ge 1-ne^{-\gamma/2}.$$
\end{corollary}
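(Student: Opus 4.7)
The plan is a direct union-bound argument layered on top of Lemma \ref{lem:mono}. Observe that the complementary event, ``$y_i^{t+1} > x_i^t$ for some $i$,'' can only occur for indices $i$ with $C(x_i) = 0$ (since firing states lie in $\{0,1\}$, the inequality $y_i^{t+1} \le x_i^t$ is automatic whenever $x_i^t = 1$). So the failure event is the union, over at most $n$ indices $i$ with $C(x_i) = 0$, of the event $\{y_i^{t+1} = 1\}$.

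First I would note that by the Markov property (Lemma \ref{lem:independence}), once we condition on $\All^t = C$ the firing states $y_i^{t+1}$ for different $i$ are determined by independent sigmoid trials with probabilities $p(y_i, t+1)$; but we do not even need independence here, only a union bound. Then, for each $i$ with $C(x_i) = 0$, Lemma \ref{lem:mono} gives
\[
\Pr[y_i^{t+1} = 1 \mid \All^t = C] \le e^{-\gamma/2}.
\]
Summing over the (at most $n$) such indices yields
\[
\Pr[\exists i: y_i^{t+1} > x_i^t \mid \All^t = C] \le n\, e^{-\gamma/2},
\]
and taking the complement gives the stated bound.

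There is no real obstacle here: this is a bookkeeping step that packages Lemma \ref{lem:mono} into a form convenient for later union-bounding over neurons (and, eventually, over time steps in the stability analysis of Section \ref{sec:stability}). The only small subtlety worth stating explicitly in the write-up is that indices with $C(x_i) = 1$ contribute nothing to the failure event, so restricting the union bound to indices with $C(x_i) = 0$ is what produces the factor of at most $n$.
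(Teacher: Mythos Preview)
Your proposal is correct and matches the paper's own proof essentially verbatim: split into indices with $C(x_i)=1$ (where $y_i^{t+1}\le x_i^t$ is automatic) and indices with $C(x_i)=0$, apply Lemma~\ref{lem:mono} to the latter, and union bound over at most $n$ such indices. The aside about independence via Lemma~\ref{lem:independence} is unnecessary (as you note) but harmless.
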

\end{tcolorbox}
\begin{proof}
If $C(x_i) = 1$ then conditioned on $\All^t  = C$, $x_i^t = 1$ and so $y_i^{t+1} \le x_i^t$ always. Otherwise, by Lemma \ref{lem:mono}, if $C(x_i) = 0$, then $\Pr[y_i^{t+1} = 0 | \All^t = C] \ge  1-e^{-\gamma/2}$. Union bounding over  all such inputs (of which there are at most $n$) gives the corollary. 
\end{proof}

We next show that the inhibitors $a_s$ and $a_c$ behave as expected with high probability.
\begin{tcolorbox}
\begin{lemma}[Correct Inhibitor Behavior]\label{lem:reset}
For any time $t$ and configuration $C$  of $\Netg$,
\begin{enumerate}
\item If $\norm{C(\Output)}_1 = 0$, then $\Pr[a_s^{t+1} = a_c^{t+1} = 0 | \All^t = C] \ge 1-2e^{-\gamma/2}$.
\item If $\norm{C(\Output)}_1 = 1$, then $\Pr[a_s^{t+1} = 1\text{ and }a_c^{t+1} = 0 | \All^t = C] \ge 1-2e^{-\gamma/2}$.
\item If $\norm{C(\Output)}_1 \ge 2$, then $\Pr[a_s^{t+1} = 1=a_c^{t+1} = 1 | \All^t = C] \ge 1-2e^{-\gamma/2}$.
\end{enumerate}
\end{lemma}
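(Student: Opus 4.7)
The plan is to reduce the lemma to a direct computation of the potentials of $a_s$ and $a_c$ at time $t+1$ as a function of $\norm{C(\Output)}_1$, then invoke Lemma \ref{lem:gap} to bound each inhibitor's firing probability, and finally union bound over the two inhibitors.

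First I would observe that, by the weight function in Definition \ref{def:wtaNet}, the only non-zero incoming edges to $a_s$ and $a_c$ come from the output neurons $y_i$, each with weight $\gamma$ (in particular there are no self-loops on the inhibitors and no connections between them or from the inputs). Hence, conditioning on $\All^t = C$, the potentials evaluate to
\begin{align*}
\pot(a_s,t+1) &= \gamma \cdot \norm{C(\Output)}_1 - \gamma/2, \\
\pot(a_c,t+1) &= \gamma \cdot \norm{C(\Output)}_1 - 3\gamma/2.
\end{align*}

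Next I would split into the three cases and apply Lemma \ref{lem:gap} to each inhibitor. In Case 1, both potentials are strictly negative ($-\gamma/2$ and $-3\gamma/2$ respectively), so each of $p(a_s,t+1)$ and $p(a_c,t+1)$ is at most $e^{-\gamma/2}$. In Case 2, $\pot(a_s,t+1) = \gamma/2 > 0$ gives $p(a_s,t+1) \ge 1 - e^{-\gamma/2}$, while $\pot(a_c,t+1) = -\gamma/2 < 0$ gives $p(a_c,t+1) \le e^{-\gamma/2}$. In Case 3, both potentials are at least $\gamma/2 > 0$, so each of $p(a_s,t+1)$, $p(a_c,t+1)$ is at least $1 - e^{-\gamma/2}$.

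Finally, since the firings of $a_s$ and $a_c$ at time $t+1$ are conditionally independent given $\All^t = C$ (by the firing rule in Section \ref{sec:dynamics}), a union bound over the two inhibitors yields the claimed $1 - 2e^{-\gamma/2}$ bound in each case. There is no real obstacle here: the proof is essentially bookkeeping, and the only subtlety is to verify that the inhibitors have no other incoming edges so that the potentials depend only on $\norm{C(\Output)}_1$, making the three cases exhaustive and clean.
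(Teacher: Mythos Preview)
Your proposal is correct and follows essentially the same approach as the paper: compute $\pot(a_s,t+1)$ and $\pot(a_c,t+1)$ from $\norm{C(\Output)}_1$ using the weights and biases in Definition~\ref{def:wtaNet}, apply Lemma~\ref{lem:gap} in each of the three cases, and union bound over the two inhibitors. The only cosmetic difference is that you write the potentials once as a function of $\norm{C(\Output)}_1$ before specializing, whereas the paper computes them separately within each case.
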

\end{tcolorbox}
\begin{proof}
We prove each case above separately. Note that, conditioned on $\All^t  = C$, $\Output^t = C(Y)$.

\medskip
\spara{Case  1: $\norm{C(\Output)}_1 = 0$.}
\medskip

In this case, the inhibitors receive no excitatory signal from the outputs so, 
$$\pot(a_s,t+1) = -b(a_s) < 0\hspace{1em}\text{ and }\hspace{1em}\pot(a_c,t+1) = -b(a_c) < 0.$$
Thus by Lemma \ref{lem:gap} and a union bound over the two inhibitors, 
$$\Pr[a_s^{t+1} = a_c^{t+1} = 0 | \All^t  = C] \ge 1-2e^{-\gamma/2}.$$

\medskip
\spara{Case  2: $\norm{C(\Output)}_1 = 1$.}
\medskip

In this case we have:
\begin{align*}
\pot(a_s,t+1) &= \sum_{j=1}^n w(y_j,a_s) y_j^t - b(a_s) \\ &= \gamma - \gamma/2 = \gamma/2.\\
\pot(a_c,t+1) &= \sum_{j=1}^n w(y_j,a_c) y_j^t - b(a_c) \\ &= \gamma - 3\gamma/2 = -\gamma/2.
\end{align*}
Again by  Lemma \ref{lem:gap} and a union bound, $\Pr[a_s^{t+1} = 1\text{ and }a_c^{t+1} = 0 | \All^t  = C] \ge 1-2e^{-\gamma/2}$.

\medskip
\spara{Case  3: $\norm{C(\Output)}_1 \ge 2$}
\medskip

Finally, in this case:
\begin{align*}
\pot(a_s,t+1) &= \sum_{j=1}^n w(y_j,a_s) y_j^t - b(a_s) \\ &\ge 2\gamma - \gamma/2 = 3\gamma/2.\\
\pot(a_c,t+1) &= \sum_{j=1}^n w(y_j,a_c) y_j^t - b(a_c) \\ &\ge 2\gamma - 3\gamma/2 = \gamma/2.
\end{align*}
So by  Lemma \ref{lem:gap} and a union bound, $\Pr[a_s^{t+1} = a_c^{t+1} = 1 | \All^t  = C] \ge 1-2e^{-\gamma/2}$, completing the lemma.
\end{proof}

Combined with Corollary  \ref{cor:mono}, Lemma \ref{lem:reset} conclusion (1) gives:
\begin{tcolorbox}
\begin{lemma}[Quiescent Behavior]\label{lem:quiet}
Assume the input execution $\alpha_\Input$ of $\Netg$ has $\Input^t$ fixed for all $t$ and $\norm{\Input^t}_1 = 0$. For any time $t$ and configuration $C$ with $C(X) = X^t$,
\begin{align*}
\Pr[\norm{\All^{t+2}}_1 = 0 | \All^t = C] \ge 1-2(n+1)e^{-\gamma/2}.
\end{align*}
\end{lemma}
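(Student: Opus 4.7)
The plan is to observe that $\|\All^{t+2}\|_1 = 0$ decomposes, under the assumption $\Input^s \equiv 0$, into two events at time $t+2$: all outputs are silent, and both inhibitors are silent. The first event follows directly from Corollary~\ref{cor:mono} applied at the previous step (since the inputs are zero). The second event requires invoking case~(1) of Lemma~\ref{lem:reset}, which in turn needs the outputs to be silent at time $t+1$ — but that itself is another application of Corollary~\ref{cor:mono}, this time starting from $C$. So the entire argument reduces to combining three independent failure modes via a union bound.

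More concretely, first I would define the three events $E_1 = \{y_i^{t+1} = 0 \text{ for all } i\}$, $E_2 = \{y_i^{t+2} = 0 \text{ for all } i\}$, and $E_3 = \{a_s^{t+2} = a_c^{t+2} = 0\}$. Because the inputs are fixed to zero, the set $\{\|\All^{t+2}\|_1 = 0\}$ coincides with $E_2 \cap E_3$, so it suffices to lower-bound $\Pr[E_2 \cap E_3 \mid \All^t = C]$. Corollary~\ref{cor:mono} applied at time $t$ (with $X^t = 0$) gives $\Pr[\neg E_1 \mid \All^t = C] \le n e^{-\gamma/2}$, and applied at time $t+1$ (with $X^{t+1} = 0$, after marginalizing over $\All^{t+1}$) it gives $\Pr[\neg E_2 \mid \All^t = C] \le n e^{-\gamma/2}$, regardless of the configuration at $t+1$.

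For $E_3$, I would condition on $E_1$: whenever the output configuration at time $t+1$ has no firing neurons, Lemma~\ref{lem:reset} case~(1) yields $\Pr[\neg E_3 \mid \All^{t+1} = C', E_1] \le 2 e^{-\gamma/2}$. Marginalizing over those configurations $C'$ consistent with $E_1$ gives $\Pr[\neg E_3 \cap E_1 \mid \All^t = C] \le 2 e^{-\gamma/2}$, and combining with the bound on $\Pr[\neg E_1]$ yields $\Pr[\neg E_3 \mid \All^t = C] \le (n+2) e^{-\gamma/2}$ (really, just $\Pr[\neg E_3]\le \Pr[\neg E_3 \cap E_1] + \Pr[\neg E_1]$). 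A final union bound over $\neg E_2$ and $\neg E_3$ gives the total failure probability $\le 2ne^{-\gamma/2} + 2 e^{-\gamma/2} = 2(n+1) e^{-\gamma/2}$, which matches the claimed bound.

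There is no real obstacle here: the result is a straightforward bookkeeping exercise combining the earlier one-step lemmas. The only mild subtlety is making sure the conditioning is handled correctly when invoking Lemma~\ref{lem:reset} — we apply it at time $t+1$ only on the good event $E_1$, rather than unconditionally, because otherwise the hypothesis $\|C(\Output)\|_1 = 0$ of case~(1) need not hold.
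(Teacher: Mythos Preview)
Your proposal is correct and essentially mirrors the paper's own proof: both apply Corollary~\ref{cor:mono} twice (once to kill the outputs at $t+1$, once at $t+2$) and Lemma~\ref{lem:reset} case~(1) to kill the inhibitors at $t+2$ conditioned on $\|Y^{t+1}\|_1=0$, then combine the three failure probabilities. The only cosmetic difference is that the paper chains $\Pr[\mathcal{E}_{20}\mid\mathcal{E}_{10}]\cdot\Pr[\mathcal{E}_{10}]$ while you use a straight union bound on $\neg E_2$ and $\neg E_3$; both yield $2(n+1)e^{-\gamma/2}$.
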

\end{tcolorbox}
\begin{proof}
Let $\mathcal{E}_{10}$ be the event that $\All^t = C$ and $\norm{Y^{t+1}}_1 = 0$. Let $\mathcal{E}_{20}$ be the event that $\norm{\All^{t+2}}_1 = 0$. That is, that no neurons fire  at time $t+2$. Conditioned on $\mathcal{E}_{10}$, by Lemma \ref{lem:reset} conclusion (1), with probability $\ge 1-2e^{-\gamma/2}$, $a_s^{t+2} = a_c^{t+2} = 0$. Again by Corollary \ref{cor:mono}, conditioned on $\mathcal{E}_{10}$, with probability $\ge 1-n e^{-\gamma/2}$, $\norm{Y^{t+2}}_1 = 0$. So, overall by a  union bound, 
\begin{align*}
\Pr[\mathcal{E}_{20}|\mathcal{E}_{10}] \ge 1-(n+2)e^{-\gamma/2}.
\end{align*}
By Corollary \ref{cor:mono}, since $\norm{X^t}_1 = 0$, $\Pr[\mathcal{E}_{10}|\All^t = C] \ge 1-n e^{-\gamma/2}$. 
We can thus bound:
\begin{align*}
\Pr[\mathcal{E}_{20} | \All^t = C] &\ge \Pr[\mathcal{E}_{10} | \All^t = C] \cdot \Pr[\mathcal{E}_{20}|\mathcal{E}_{10},\All^t = C]\\
&= \Pr[\mathcal{E}_{10} | \All^t = C] \cdot \Pr[\mathcal{E}_{20}|\mathcal{E}_{10}]\tag{Since, by definition, $\mathcal{E}_{10}$ implies $\All^t = C$.}\\
&\ge \left(1-ne^{-\gamma/2}\right) \cdot \left(1-(n+2)e^{-\gamma/2}\right) \\
&\ge 1-2(n+1)e^{-\gamma/2},
\end{align*}
which gives  the lemma.
\end{proof}

We next show that the stability inhibitor, with high probability, induces exactly the outputs that fired at the previous time step to fire in the next step. We show the lemma in fact for any  configuration in which exactly one inhibitor fires. Since $a_s$ and $a_c$ have identical outgoing edges, they have a symmetric effect on the firing probabilities of other neurons. 
\begin{tcolorbox}
\begin{lemma}[Stability Inhibitor Effect]\label{lem:stability1}
For any time $t$ and configuration $C$ of $\Netg$ with ($C(a_s)  = 1$ and $C(a_c)  = 0$) or ($C(a_s) = 0$ and $C(a_c)  = 1$) and $C(y_i) \le C(x_i)$ for all $i$,
$$\Pr[Y^{t+1} = Y^t | \All^t = C] \ge 1-ne^{-\gamma/2}.$$
\end{lemma}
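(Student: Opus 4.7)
The plan is to verify, neuron-by-neuron among the outputs, that conditioned on $\All^t = C$ each $y_i$ has potential bounded away from $0$ at time $t+1$ in the ``correct'' direction, and then to combine the per-neuron bounds by a union bound. The key observation is that since exactly one of $a_s, a_c$ is on at time $t$, both contribute the same total inhibition $-\gamma$ on every output, regardless of which one is firing. So the two hypotheses on $C$ collapse into a single calculation.

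Concretely, I would split on $C(y_i)$. If $C(y_i) = 1$, then by the hypothesis $C(y_i) \le C(x_i)$ we also have $C(x_i) = 1$, so plugging into \eqref{eq:potentialOut} using the weights and bias of Definition \ref{def:wtaNet} gives
\begin{align*}
\pot(y_i, t+1) = 3\gamma \cdot 1 + 2\gamma \cdot 1 + (-\gamma) \cdot 1 + (-\gamma) \cdot 0 - 3\gamma = \gamma > 0,
\end{align*}
(with the $-\gamma$ and $0$ swapped in the other case of the hypothesis, which does not change the sum). If $C(y_i) = 0$, then the self-loop contribution vanishes and, independent of whether $C(x_i) = 0$ or $C(x_i) = 1$, the potential is either $-4\gamma$ or $-\gamma$, in both cases strictly negative.

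By Lemma \ref{lem:gap}, in the first case $p(y_i, t+1) \ge 1 - e^{-\gamma/2}$ and in the second case $p(y_i, t+1) \le e^{-\gamma/2}$, so in either case $\Pr[y_i^{t+1} = C(y_i) \mid \All^t = C] \ge 1 - e^{-\gamma/2}$. Since the spikes of the $n$ outputs at time $t+1$ are conditionally independent given $\All^t = C$, a union bound over $i \in \{1,\dots,n\}$ yields $\Pr[Y^{t+1} = Y^t \mid \All^t = C] \ge 1 - n e^{-\gamma/2}$, as desired.

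There is no real obstacle here beyond bookkeeping; the only step that requires a moment of care is making sure the two cases in the hypothesis ($a_s$ firing alone vs.\ $a_c$ firing alone) do give identical potentials, which they do because $w(a_s,y_i) = w(a_c,y_i) = -\gamma$. The hypothesis $C(y_i) \le C(x_i)$ is what lets us rule out the ``spurious'' subcase $C(y_i) = 1, C(x_i) = 0$ that would otherwise have produced a potential of $-2\gamma$ and contradicted the conclusion.
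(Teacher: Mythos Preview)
Your proposal is correct and is essentially the same argument as the paper's: compute $\pot(y_i,t+1)$ in the two cases $C(y_i)=1$ and $C(y_i)=0$, invoke Lemma~\ref{lem:gap}, and union bound over the $n$ outputs. The only cosmetic difference is that the paper first reduces to the case $C(a_s)=1,\ C(a_c)=0$ by a symmetry argument (swapping the roles of $a_s$ and $a_c$ leaves the transition distribution unchanged), whereas you handle both cases in one line by noting that $w(a_s,y_i)=w(a_c,y_i)=-\gamma$ so the total inhibitory contribution is the same either way.
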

\end{tcolorbox}
\begin{proof}

For any configuration $C$  with $C(a_c) = 1$  and $C(a_s) = 0$, let $\bar{C}$ denote the configuration with $\bar{C}(a_c) = 0$, $\bar{C}(a_s) = 1$, and $\bar{C}(u) = C(u)$ for all other $u \in \All \setminus \{a_c,a_s\}$.
Since $a_c$ and $a_s$ have  no self-loops and have identical outgoing connections, the distribution of $\All^{t+1}$  given $\All^t = C$ is identical to its distribution given $\All^t = \bar{C}$. Thus, we can assume without loss of generality in the proof of this lemma that $C(a_s) = 1$ and $C(a_c) = 0$.

Conditioned on $\All^t = C$, $y_i^{t} \le x_i^t$ by assumption. So for  any output with $y_i^t = 1$, we have $x_i^t= 1$. This gives:
\begin{align*}
\pot(y_i,t+1) &= w(x_i,y_i) x_i^t + w(y_i,y_i) y_i^t + w(a_s,y_i) a_s^t+ w(a_c,y_i) a_c^t - b(y_i) \\&= 3\gamma + 2\gamma - \gamma + 0 - 3\gamma\\&= \gamma.
\end{align*}
In contrast, for any output with $y_i^t = 0$:
\begin{align*}
\pot(y_i,t+1) &= w(x_i,y_i) x_i^t + w(y_i,y_i) y_i^t + w(a_s,y_i) a_s^t+ w(a_c,y_i) a_c^t - b(y_i) \\ &\le 3\gamma + 0 - \gamma + 0 - 3\gamma \\&= -\gamma.
\end{align*}
Thus, by Lemma \ref{lem:gap}, if $y_i^t = 1$, then $y_i^{t+1} = 1$ with probability $\ge 1-e^{-\gamma/2}$. If $y_i^t = 0$, then $y_i^{t+1} = 0$ with probability $\ge 1-e^{-\gamma/2}$. The lemma follows after  union bounding over all $n$  outputs. 
\end{proof}

Finally, we show that when both the stability and convergence inhibitors fire at time $t$, not only do outputs not firing at time $t$ not fire at  time $t+1$  with high probability, but also, all firing outputs at time $t$ stop firing with probability $1/2$  at time $t+1$. Conditioned on the configuration at time $t$, these outputs fire independently, a property which will be useful in our eventual proof of progress towards a valid WTA configuration in Lemma \ref{lem:progress}.
\begin{tcolorbox}
\begin{lemma}[Convergence Inhibitor Effect]\label{lem:convergence}
For any time $t$ and configuration $C$ of $\Netg$ with $C(a_s)  = C(a_c)  = 1$ and $C(y_i) \le C(x_i)$ for all $i$,
\begin{enumerate}
\item $\Pr[y_i^{t+1} \le y_i^t\text{ for  all }i | \All^t = C] \ge 1-ne^{-\gamma/2}$.
\item If $y_i^t = 1$, $\Pr[y_i^{t+1} = 1 | \All^t = C] = 1/2$.
\item For $i \neq j$, $y_i^{t+1}$ and $y_j^{t+1}$ are independent conditioned on $\All^t = C$.
\end{enumerate}
\end{lemma}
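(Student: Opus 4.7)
The plan is to compute the potential $\pot(y_i, t+1)$ for each output $y_i$ in the two cases $y_i^t = 1$ and $y_i^t = 0$, and then invoke Lemma \ref{lem:gap} to extract the firing probabilities. The key observation is that, by the choice of weights and biases in Definition \ref{def:wtaNet}, the contributions from a firing self-loop and a firing input ($3\gamma + 2\gamma = 5\gamma$) exactly offset the combined inhibition from $a_s$ and $a_c$ together with the bias ($\gamma + \gamma + 3\gamma = 5\gamma$), so the potential of a previously-firing output is exactly $0$.

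More precisely, since we are conditioning on $\All^t = C$ with $C(y_i) \le C(x_i)$, whenever $y_i^t = 1$ we must also have $x_i^t = 1$, and so
\[
\pot(y_i, t+1) = 3\gamma \cdot 1 + 2\gamma \cdot 1 - \gamma \cdot 1 - \gamma \cdot 1 - 3\gamma = 0.
\]
By Lemma \ref{lem:gap}, this gives $p(y_i, t+1) = 1/2$, which is exactly part (2). For $y_i^t = 0$, the self-loop contribution vanishes, and irrespective of whether $x_i^t$ is $0$ or $1$, the potential is at most $3\gamma - \gamma - \gamma - 3\gamma = -2\gamma < 0$, so by Lemma \ref{lem:gap} neuron $y_i$ fires with probability at most $e^{-\gamma/2}$. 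Part (1) then follows by observing that $y_i^{t+1} \le y_i^t$ holds automatically whenever $y_i^t = 1$, and applying a union bound over the (at most $n$) outputs with $y_i^t = 0$.

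Part (3) is essentially immediate from the firing dynamics in Section \ref{sec:dynamics}: conditioned on $\All^t = C$, the potentials $\pot(y_i, t+1)$ and $\pot(y_j, t+1)$ are deterministic functions of $C$, and each non-input neuron fires independently with its corresponding probability $p(\cdot, t+1)$. Thus the joint conditional distribution of $(y_i^{t+1}, y_j^{t+1})$ is a product distribution.

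I do not anticipate any real obstacle here; the whole content of the lemma is driven by the arithmetic identity that the weights in Definition \ref{def:wtaNet} are engineered so that a firing output with both inhibitors active has potential exactly $0$, yielding the clean $1/2$ probability that is crucial for the later progress argument in Lemma \ref{lem:progress}. The only mild care needed is in part (1), to handle separately the outputs with $y_i^t = 1$ (trivial) and those with $y_i^t = 0$ (potential computation plus union bound).
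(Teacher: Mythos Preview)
Your proposal is correct and follows essentially the same approach as the paper's proof: compute $\pot(y_i,t+1)$ in the two cases $y_i^t=1$ and $y_i^t=0$, invoke Lemma~\ref{lem:gap} to obtain the firing probabilities $1/2$ and $\le e^{-\gamma/2}$ respectively, union bound for part~(1), and note that conditional independence in part~(3) is immediate from the firing dynamics. The arithmetic and the logical structure match the paper's argument exactly.
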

\end{tcolorbox}
\begin{proof}
Conditioned on $\All^t = C$, if $y_i^t = 1$, by assumption $x_i^t = 1$. We can thus compute:
\begin{align*}
\pot(y_i,t+1) &= w(x_i,y_i) x_i^t + w(y_i,y_i) y_i^t + w(a_s,y_i) a_s^t+ w(a_c,y_i) a_c^t - b(y_i) \\ &= 3\gamma + 2\gamma - \gamma -\gamma - 3\gamma \nonumber\\&= 0.
\end{align*}
We thus have  $\Pr[y_i^{t+1} = 1 | \All^t = C] = 1/2$ by Lemma \ref{lem:gap}. This gives conclusion (2). Conclusion (3) holds trivially since, with $\All^t$ fixed, $u^{t+1}$ is independent of $v^{t+1}$ for all $u \neq v$.

We can also bound if $y_i^t  =  0$: 
\begin{align*}
\pot(y_i,t+1) &= w(x_i,y_i) x_i^t + w(y_i,y_i) y_i^t + w(a_s,y_i) a_s^t+ w(a_c,y_i) a_c^t - b(y_i) \\ &\le 3\gamma + 0 - \gamma -\gamma - 3\gamma \\&= -2\gamma.
\end{align*}
Thus, by  Lemma \ref{lem:gap}, $\Pr[y_i^{t+1} = 1 | \All^t = C] \le e^{-\gamma/2}$. By a union bound over at most $n$ such outputs, we have, with probability $\ge 1-ne^{-\gamma/2}$, $y_i^{t+1}  \le y_i^t$  for all $i$  , completing the lemma.
\end{proof}

\subsection{Stability}\label{sec:stability}

In this section we extend our definition of a valid WTA output configuration (Definition \ref{def:output}), to give a more restrictive notion of a valid WTA configuration, which additionally requires that the auxiliary neurons $a_s,a_c$ are in a good state. We show that once the network is in such a state at time $t$, it remains there with high probability at time $t+1$.

\begin{definition}[Valid WTA Configuration] \label{def:valid}
A \emph{valid WTA configuration} of $\Netg$ is a configuration $C$ with
$C(y_i) \le C(x_i)$ for all $i \in \{1,...,n\}$ and $\norm{C(\Output)}_1 = \min(1,\norm{C(\Input)}_1)$ (i.e., the outputs satisfy Definition \ref{def:output}) and further, $C(a_c) = 0$ and $C(a_s) = \min(1,\norm{C(\Input)}_1)$.  
\end{definition}
In the above we require $C(a_s) = \min(\norm{C(\Input)}_1,1)$. That is, the stability  inhibitor fires in a valid WTA configuration, unless no inputs fire. If no inputs fire, a valid WTA configuration requires that neither $a_s$ nor $a_c$ fire and additionally, that no outputs fire.

\begin{tcolorbox}
\begin{lemma}[Stability of Valid Configurations]\label{lem:stability} Assume the input execution $\alpha_\Input$ of $\Netg$ has $X^t$ fixed for all $t$. For any time $t$ and valid WTA configuration $C$ with $C(X) = X^t$,
$$\Pr[\All^{t+1} = \All^t | \All^t = C] \ge 1-(n+2)e^{-\gamma/2}.$$
\end{lemma}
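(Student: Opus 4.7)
The plan is to split on whether any input fires and in each case combine the appropriate one-step lemmas via a union bound, tracking the error probabilities carefully so they sum to $(n+2)e^{-\gamma/2}$. Since the input execution is fixed, $X^{t+1} = X^t = C(X)$ deterministically, so I only need to control the $n$ output neurons and the $2$ inhibitors.

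\medskip
\noindent\textbf{Case 1: $\norm{C(X)}_1 = 0$.} By Definition \ref{def:valid}, a valid WTA configuration in this regime has $C(Y) = \bv{0}$, $C(a_s) = 0$, and $C(a_c) = 0$, so $C$ is entirely quiescent. I first apply Corollary \ref{cor:mono}: since $X^t = C(X)$ has no firing inputs, with probability $\ge 1 - ne^{-\gamma/2}$ every $y_i^{t+1} \le x_i^t = 0$, i.e.\ all outputs remain off. I then apply Lemma \ref{lem:reset} (1), which gives $a_s^{t+1} = a_c^{t+1} = 0$ with probability $\ge 1 - 2 e^{-\gamma/2}$. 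A union bound over these two events gives $\All^{t+1} = C = \All^t$ with probability $\ge 1 - (n+2) e^{-\gamma/2}$.

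\medskip
\noindent\textbf{Case 2: $\norm{C(X)}_1 \ge 1$.} Now Definition \ref{def:valid} forces $\norm{C(Y)}_1 = 1$, $C(a_s) = 1$, $C(a_c) = 0$, and $C(y_i) \le C(x_i)$ for all $i$. Because exactly one of the two inhibitors fires in $C$ (namely $a_s$) and the hypothesis $C(y_i) \le C(x_i)$ of Lemma \ref{lem:stability1} is met, that lemma yields $Y^{t+1} = Y^t = C(Y)$ with probability $\ge 1 - n e^{-\gamma/2}$. Moreover, $\norm{C(Y)}_1 = 1$ puts us into Lemma \ref{lem:reset} (2), so $a_s^{t+1} = 1$ and $a_c^{t+1} = 0$ with probability $\ge 1 - 2 e^{-\gamma/2}$. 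Another union bound combines these two events to give $\All^{t+1} = C = \All^t$ with probability $\ge 1 - (n+2) e^{-\gamma/2}$.

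\medskip
\noindent Combining both cases yields the claimed bound. There is no real obstacle here: the work is already done in the one-step lemmas of Section \ref{sec:basic}. The only point requiring minor care is that the $n+2$ in the union bound arises precisely from the $n$ output neurons (handled by Lemma \ref{lem:stability1}/Corollary \ref{cor:mono}) and the $2$ inhibitors (handled by Lemma \ref{lem:reset}), which matches the stated probability exactly.
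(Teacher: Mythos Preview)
Your proof is correct and follows essentially the same approach as the paper. The only cosmetic difference is that you split on $\norm{C(X)}_1$ while the paper splits on $\norm{C(Y)}_1$, but since $\norm{C(Y)}_1 = \min(1,\norm{C(X)}_1)$ for a valid WTA configuration these are equivalent case distinctions, and in each case you invoke the same one-step lemmas (Corollary~\ref{cor:mono}/Lemma~\ref{lem:stability1} for the outputs, Lemma~\ref{lem:reset} for the inhibitors) and combine via the same union bound.
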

\end{tcolorbox}
\begin{proof}
By  Definition \ref{def:valid}, since $C$ is a valid WTA configuration, 
we have $$\norm{C(\Output)}_1 = \min(1,\norm{C(\Input)}_1) \in \{0,1\}.$$
We prove the lemma via a case analysis on $\norm{C(Y)}_1$.

\medskip
\spara{Case 1 : $\norm{C(\Output)}_1 = 0$}
\medskip

In this case, since $C$ is a valid WTA configuration, according to Definition \ref{def:valid}, conditioned on $\All^t = C$,  we must have $\norm{X^t}_1 = 0$ and $a_s^t = a_c^t = 0$. 
By  Corollary \ref{cor:mono}, since $\norm{X^t}_1 = 0$, 
$$\Pr[\norm{\Output^{t+1}}_1 = 0 | \All^t  = C] \ge 1-ne^{-\gamma/2}.$$
 By Lemma \ref{lem:reset} conclusion (1), since $\norm{C(\Output)}_1 = 0$,  $\Pr[a_s^{t+1} = a_c^{t+1} = 0 | \All^t  = C] \ge  1-2e^{-\gamma/2}$. By a union bound, recalling that $\Input^t$  is fixed for all $t$ by assumption, $\Pr[\All^{t+1} = \All^{t} | \All^t  = C] \ge  1-(n+2)e^{-\gamma/2}$.

\medskip
\spara{Case 2 : $\norm{C(\Output)}_1 = 1$}
\medskip

In this case by Definition \ref{def:valid}, we have $C(y_i)  \le C(x_i)$ for all $i$, $C(a_s) = 1$, and $C(a_c) = 0$. We can thus apply Lemma \ref{lem:stability1}, giving that $\Pr[Y^{t+1} = Y^t | \All^t = C] \ge  1-ne^{-\gamma/2}$.
Additionally, by Lemma \ref{lem:reset} conclusion (2), since $\norm{C(\Output)}_1 = 1$, $\Pr[a_s^{t+1} = 1\text{ and }a_c^{t+1} = 0 | \All^t  = C] \ge 1-2e^{-\gamma/2}$. So by a  union bound, 
$$\Pr[\All^{t+1} = \All^t | \All^t = C] \ge 1-(n+2)e^{-\gamma/2},$$ 
giving the result in this case and completing the lemma.
\end{proof}

Lemma \ref{lem:stability} immediately implies a bound on the probability that $\Netg$ remains in a valid WTA configuration for $t_s$ consecutive times. 
\begin{tcolorbox}
\begin{corollary}[Stability of Valid WTA Configurations]\label{cor:stability}
Assume  the input execution $\alpha_X$ of $\Netg$ has $X^t$ fixed for all $t$. For any time $t$ and valid WTA configuration $C$ with $C(X) = X^t$, 
\begin{align*}
\Pr[\All^t  = \All^{t+1} = ... = \All^{t+t_s} | \All^t = C] \ge 1-t_s (n+2)e^{-\gamma/2}.
\end{align*}
\end{corollary}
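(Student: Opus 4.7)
The approach is to iterate Lemma~\ref{lem:stability} across the $t_s$ consecutive transitions from $\All^t$ through $\All^{t+t_s}$, using the Markov property (Lemma~\ref{lem:independence}) to convert each step's conditioning on the full history into conditioning only on the immediately preceding configuration. For $i \in \{0,1,\ldots,t_s\}$ I would define the events $\mathcal{E}_i = \{\All^{t+i} = C\}$. Conditioning on $\All^t = C$ is exactly $\mathcal{E}_0$, and the target quantity is $\Pr[\mathcal{E}_1 \cap \cdots \cap \mathcal{E}_{t_s} \mid \mathcal{E}_0]$.

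For each $i \in \{0,1,\ldots,t_s-1\}$, Lemma~\ref{lem:independence} gives
\[
\Pr[\mathcal{E}_{i+1} \mid \mathcal{E}_0 \cap \cdots \cap \mathcal{E}_i] = \Pr[\mathcal{E}_{i+1} \mid \mathcal{E}_i].
\]
Since $X^t$ is fixed for all $t$ in $\alpha_X$, we have $C(X) = X^{t+i}$, and $C$ is still a valid WTA configuration. So Lemma~\ref{lem:stability} applied at time $t+i$ yields $\Pr[\mathcal{E}_{i+1} \mid \mathcal{E}_i] \ge 1 - (n+2)e^{-\gamma/2}$.

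Finally, expanding $\Pr[\mathcal{E}_1 \cap \cdots \cap \mathcal{E}_{t_s} \mid \mathcal{E}_0]$ via the chain rule and applying a union bound over the $t_s$ possible step-wise failures gives the claimed bound of $1 - t_s(n+2)e^{-\gamma/2}$. There is no real obstacle in this argument; the one point that deserves attention is the clean invocation of the Markov property to ensure the per-step success probability does not degrade as we accumulate more conditioning from the past.
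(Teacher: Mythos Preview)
Your proposal is correct and matches the paper's approach exactly: the paper's proof is the single sentence ``Applying Lemma~\ref{lem:stability} for each time $t+1,\ldots,t+t_s$ in succession gives the result,'' and you have simply unpacked that sentence, making explicit the use of the Markov property and the chain-rule/union-bound combination that the paper leaves implicit.
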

\end{tcolorbox}
\begin{proof}
Applying Lemma \ref{lem:stability} for each time $t+1,...,t+t_s$ in succession gives the result.
\end{proof}

\subsection{Convergence to Good Configurations}\label{sec:bad2good}

\begin{table}[ht]
\small
\centering
\begin{tabu}{|c|[1pt]c|c|c|c|} \hline
Configuration Type& $C(y_i) \le C(x_i)$ $\forall i$?  & $\norm{C(\Output)}_1$ & $C(a_s)$ & $C(a_c)$ \\
 \tabucline[1pt]{1-5}
\cellcolor{verylight}  Valid WTA (Def. \ref{def:valid}) &  $\checkmark$ & $\min(1,\norm{C(\Input)}_1)$ & $1$ & $0$ 	 \\
\hline
\cellcolor{verylight}  Near-Valid WTA (Def. \ref{def:nearvalid})&  $\checkmark$ & $\min(1,\norm{C(\Input)}_1)$ & $1$ & $1$ 	 \\
\hline
\cellcolor{verylight}  Valid $k$-WTA (Def. \ref{def:kwta})&  $\checkmark$ & $k \ge 2$ & $1$ & $1$ 	 \\
\hline
Reset (Def. \ref{def:reset})& -- & -- & $0$ & $0$ 	 \\
\hline
\end{tabu}
\caption{\label{tab:configs}  Summary of good configuration types (Definition \ref{def:good}), from which we show rapid convergence to a valid WTA configuration. We refer to the configuration types shaded in gray as \emph{active configurations} (Definition \ref{def:active}).}
\normalsize
\end{table}

With the stability  bound of Corollary \ref{cor:stability} in place, it remains to prove that $\Netg$ converges quickly to a valid WTA configuration. We do this in two main steps:
\begin{enumerate}
\item In this Section, we define three additional good configuration types and show that all other network configurations converge to a good configuration within just two time steps with high probability.
\item In Sections \ref{sec:good2WTA} and \ref{sec:convergence} we show that, in turn, each of these good configurations rapidly converges to a valid WTA configuration with constant probability. 
\end{enumerate}
A high level illustration of our proof is shown in Figure \ref{fig:2inhProof}.
\begin{figure}[h]
\centering
\includegraphics[width=0.6\textwidth]{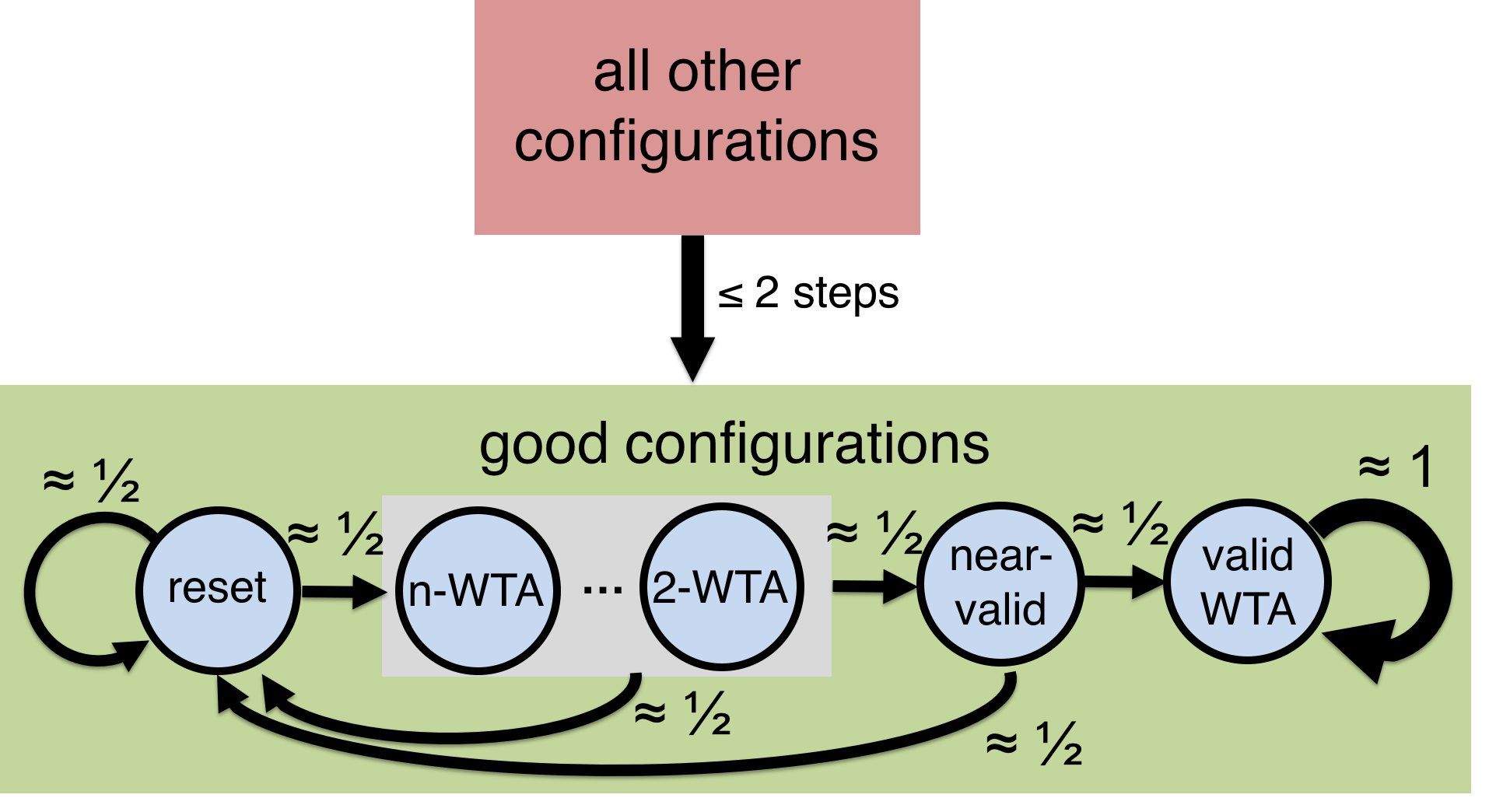}
\caption{A high level illustration of our proof that $\Netg$ solves the WTA problem. We show that all configurations converge to the set of good configurations. Once in a good configuration, $\Netg$ converges to a valid WTA state very rapidly, with constant probability. In the illustration, arrow size corresponds to relative probability.}
\label{fig:2inhProof}
\end{figure}


The first class of good configurations are \emph{valid $k$-WTA configurations}. In such configurations the network behaves as expected before  convergence. Multiple outputs corresponding to firing inputs fire and the inhibitors $a_s$ and $a_c$ fire, driving convergence towards a valid WTA configuration.

\begin{definition}[Valid $k$-WTA Configuration]\label{def:kwta}
For any $k \ge 2$, a \emph{valid $k$-WTA configuration} of $\Netg$ is a configuration $C$ with $C(y_i) \le C(x_i)$ for all $i \in \{1,...,n\}$, $\norm{C(\Output)}_1 = k$  and $C(a_s) = C(a_c) = 1$.
\end{definition}

We next define a class of \emph{near-valid WTA configurations}, each of which is a small perturbation of a valid WTA configuration, with correct output but incorrect inhibitor behavior. We will show in Section \ref{sec:convergence} that the network rapidly converges to a near-valid WTA configuration from any configuration. In turn, it transitions with probability $\approx 1/2$ from a near-valid WTA configuration to a valid WTA configuration.
 
\begin{definition}[Near-Valid WTA Configuration] \label{def:nearvalid}
A \emph{near-valid WTA configuration} of $\Netg$ is a configuration $C$ in which $C(Y)$ is a valid WTA output configuration for $C(X)$ (Definition \ref{def:output}) but $C(a_s) = C(a_c)  = 1$. 
\end{definition}

Finally, we define the class of \emph{reset configurations} with $C(a_s) = C(a_c) = 0$. Since there is no inhibition in such a configuration, each output corresponding to a firing input will fire with probability $\ge 1/2$ at the next time. With probability $\approx 1/2$, the network will transition to either a valid $k$-WTA, a near-valid WTA, or a valid WTA configuration within three steps (Lemma \ref{lem:reset2good}). 

\begin{definition}[Reset Configuration] \label{def:reset}
A \emph{reset configuration} of $\Netg$ is a configuration $C$ with $C(a_s) = C(a_c)  = 0$. 
\end{definition}

\begin{definition}[Good Configuration]\label{def:good} A \emph{good configuration} is any configuration that is either a valid WTA configuration (Definition \ref{def:valid}),  a valid $k$-WTA configuration (Definition \ref{def:kwta}), a reset configuration (Definition \ref{def:reset}), or a near-valid WTA configuration (Definition \ref{def:nearvalid}).
\end{definition}

For conciseness, we also give a name to the good configurations excluding reset  configurations:
\begin{definition}[Active Configuration]\label{def:active} An \emph{active configuration} is any configuration that is either a valid WTA configuration (Definition \ref{def:valid}),  a valid $k$-WTA configuration (Definition \ref{def:kwta}), or a near-valid WTA configuration (Definition \ref{def:nearvalid}).
\end{definition}

We first give a simple lemma building on Lemma \ref{lem:stability1}, which characterizes the network's behavior when a single inhibitor and at least one output corresponding to a firing input fire at time $t$:
\begin{tcolorbox}
\begin{lemma}\label{lem:single} Assume the input execution $\alpha_\Input$ of $\Netg$ has $X^t$ fixed for all $t$.
For any  time $t$ and any  configuration $C$ with $C(X) = X^t$, ($C(a_s) = 1$ and $C(a_c) = 0$) or ($C(a_s) = 0$  and $C(a_c) = 1$), $\norm{C(\Output)}_1 \ge 1$, and $C(y_i) \le C(x_i)$ for all $i$:
$$\Pr[\All^{t+1}\text{ is a valid WTA or valid $k$-WTA configuration } | \All^t = C] \ge 1-(n+2) e^{-\gamma/2}.$$
\end{lemma}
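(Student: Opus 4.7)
The plan is to combine the one-step lemmas already established, namely Lemma \ref{lem:stability1} (for the outputs) and Lemma \ref{lem:reset} (for the inhibitors), and then finish with a union bound. The hypothesis matches exactly the setting of Lemma \ref{lem:stability1}: a single inhibitor is firing at time $t$, and the outputs are consistent with the inputs in the sense that $C(y_i)\le C(x_i)$ for every $i$. So with probability at least $1-ne^{-\gamma/2}$ we have $Y^{t+1}=Y^t$, and hence in particular $y_i^{t+1}\le x_i^{t+1}=x_i^t$ for all $i$ (using that the input execution is fixed).

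Next I would split on the value of $k\eqdef\norm{C(Y)}_1\ge 1$ to apply the right case of Lemma \ref{lem:reset}. If $k=1$, Lemma \ref{lem:reset}(2) gives $a_s^{t+1}=1$ and $a_c^{t+1}=0$ with probability at least $1-2e^{-\gamma/2}$; combined with $Y^{t+1}=Y^t$ this produces a configuration with $\norm{Y^{t+1}}_1=1=\min(1,\norm{X^{t+1}}_1)$ (using $\norm{C(Y)}_1\le\norm{C(X)}_1$), $y_i^{t+1}\le x_i^{t+1}$, $a_s^{t+1}=1$, $a_c^{t+1}=0$, which matches Definition \ref{def:valid} of a valid WTA configuration. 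If instead $k\ge 2$, Lemma \ref{lem:reset}(3) gives $a_s^{t+1}=a_c^{t+1}=1$ with probability at least $1-2e^{-\gamma/2}$, which combined with $Y^{t+1}=Y^t$ matches Definition \ref{def:kwta} of a valid $k$-WTA configuration.

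In either case, the two events (correct output transition and correct inhibitor transition) together imply that $\All^{t+1}$ is a valid WTA or valid $k$-WTA configuration. A union bound over the two failure probabilities $ne^{-\gamma/2}$ and $2e^{-\gamma/2}$ yields the claimed $1-(n+2)e^{-\gamma/2}$ bound.

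There is no real obstacle here; the only mild care required is making sure the resulting configuration literally satisfies the defining clauses of Definitions \ref{def:valid} and \ref{def:kwta}, in particular that the count $\norm{Y^{t+1}}_1$ equals $\min(1,\norm{X^{t+1}}_1)$ in the $k=1$ branch. This uses the assumption that $X^t$ is fixed (so $X^{t+1}=X^t$) together with $C(y_i)\le C(x_i)$, which forces $\norm{X^t}_1\ge 1$ whenever $\norm{C(Y)}_1\ge 1$.
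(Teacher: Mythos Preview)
Your proposal is correct and follows essentially the same approach as the paper: split on $\norm{C(Y)}_1$, apply Lemma~\ref{lem:stability1} for the outputs and the appropriate case of Lemma~\ref{lem:reset} for the inhibitors, then union bound. The only cosmetic difference is that in the $k=1$ branch the paper invokes Lemma~\ref{lem:stability} (which packages exactly the two lemmas you use) rather than applying them directly.
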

\end{tcolorbox}
\begin{proof}

If $\norm{C(\Output)}_1 = 1$, then $C$ is a valid WTA configuration.
Thus, by  Lemma \ref{lem:stability}, conditioned in $\All^t = C$, $\All^{t+1}$ is a valid WTA configuration with probability $\ge 1-(n+2)e^{-\gamma/2}$.

 If $\norm{C(\Output)}_1 \ge 2$, then by  Lemma \ref{lem:reset} conclusion (3),
 $$\Pr[a_s^{t+1} = a_c^{t+1} = 1 | \All^t = C] \ge 1-2e^{-\gamma/2}.$$
 Further, 
 by Lemma \ref{lem:stability1}, $\Pr[Y^{t+1} = Y^t | \All^t  = C]\ge 1-ne^{-\gamma/2}$, which gives that  $\All^{t+1}$ is a valid $k$-WTA configuration since, by conditioned on $\All^t = C$, $\norm{\Output^t}_1  \ge  2$ and $y_i^t \le x_i^t$ for all $i$. Thus, by a union bound, conditioned on $\All^t = C$, $\All^{t+1}$ is a valid $k$-WTA configuration with probability $\ge  1-(n+2) e^{-\gamma/2}$, giving  the lemma.

\end{proof}

\begin{tcolorbox}
\begin{theorem}[Convergence to a Good Configuration]\label{thm:good}
Assume that  the input execution $\alpha_\Input$ of $\Netg$ has $\Input^t$ fixed for all $t$. For any time $t$ and configuration $C$ with $C(X) = X^t$,
$$\Pr[\text{at least one of }\{\All^{t+1},\All^{t+2}\}\text{ is a good configuration } | \All^t = C] \ge 1-2(n+1)e^{-\gamma/2}.$$
\end{theorem}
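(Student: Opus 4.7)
The plan is to case on $\norm{C(Y)}_1$, since Lemma~\ref{lem:reset} shows that this quantity determines the inhibitor behavior at time $t+1$ with high probability. Combining Corollary~\ref{cor:mono} and Lemma~\ref{lem:reset} with a union bound, conditional on $\All^t = C$, with probability at least $1 - (n+2)e^{-\gamma/2}$ we simultaneously have (i) $y_i^{t+1} \le x_i^t$ for all $i$, and (ii) the pair $(a_s^{t+1},a_c^{t+1})$ equals $(0,0)$, $(1,0)$, or $(1,1)$ according as $\norm{C(Y)}_1$ is $0$, $1$, or $\ge 2$.

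If $\norm{C(Y)}_1 = 0$, then $(a_s^{t+1},a_c^{t+1}) = (0,0)$ with probability at least $1 - 2e^{-\gamma/2}$, so $\All^{t+1}$ is immediately a reset configuration (Def.~\ref{def:reset}) and we are done. Otherwise I would condition on the high-probability event above and sub-case on $\norm{\Output^{t+1}}_1$. If $\norm{\Output^{t+1}}_1 = 0$, applying Lemma~\ref{lem:reset}(1) at time $t+2$ gives $a_s^{t+2} = a_c^{t+2} = 0$, so $\All^{t+2}$ is reset. If $\norm{\Output^{t+1}}_1 = 1$, then $\All^{t+1}$ is itself good: a valid WTA configuration when $\norm{C(Y)}_1 = 1$ (so only $a_s$ fires) and a near-valid WTA configuration when $\norm{C(Y)}_1 \ge 2$ (so both inhibitors fire), using that $y_i^{t+1} \le x_i^t$ forces $\norm{X^t}_1 \ge 1$. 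If $\norm{\Output^{t+1}}_1 \ge 2$, then when $\norm{C(Y)}_1 \ge 2$ the configuration $\All^{t+1}$ is already a valid $k$-WTA; when $\norm{C(Y)}_1 = 1$, the hypotheses of Lemma~\ref{lem:single} are satisfied at $\All^{t+1}$ (single inhibitor, outputs bounded by inputs, at least one firing output), and invoking it shows that $\All^{t+2}$ is a valid WTA or valid $k$-WTA configuration with probability at least $1 - (n+2)e^{-\gamma/2}$.

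The main obstacle is obtaining the tight constant $2(n+1)e^{-\gamma/2}$ rather than the naive $2(n+2)e^{-\gamma/2}$ produced by a loose two-step union bound. I would mirror the trick used in Lemma~\ref{lem:quiet}: at the first step, track only the output event ($\Pr \ge 1 - ne^{-\gamma/2}$ via Corollary~\ref{cor:mono}), deferring the $2e^{-\gamma/2}$ inhibitor bound so that it is absorbed into either the direct Lemma~\ref{lem:reset}(1) application or the Lemma~\ref{lem:single} invocation performed at the second step. Multiplying the resulting per-step bounds yields $(1 - ne^{-\gamma/2})(1 - (n+2)e^{-\gamma/2}) \ge 1 - 2(n+1)e^{-\gamma/2}$, matching the theorem statement.
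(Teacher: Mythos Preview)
Your case analysis is correct and delivers the theorem with the slightly weaker constant $2(n+2)e^{-\gamma/2}$, but your decomposition differs from the paper's in a way that matters for the tight constant. You case on $\norm{C(Y)}_1$, which via Lemma~\ref{lem:reset} pins down the \emph{high-probability} inhibitor pattern at time $t+1$. The paper instead conditions only on the output event $\mathcal{E}_1 = \{y_i^{t+1}\le x_i^{t+1}\text{ for all }i\}$ (cost $ne^{-\gamma/2}$) and then cases exhaustively on the \emph{actual} values of $(a_s^{t+1},a_c^{t+1})$, handling all four combinations: reset if $(0,0)$; near-valid/$k$-WTA at $t+1$ or reset at $t+2$ if $(1,1)$; Lemma~\ref{lem:single} or reset at $t+2$ if exactly one fires. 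Because every inhibitor pattern is covered, there is no $2e^{-\gamma/2}$ inhibitor cost at step one, and the product $(1-ne^{-\gamma/2})(1-(n+2)e^{-\gamma/2})\ge 1-2(n+1)e^{-\gamma/2}$ drops out directly.

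Your proposed refinement --- defer the $2e^{-\gamma/2}$ inhibitor bound and absorb it into the second-step lemma --- does not work as stated within your $\norm{C(Y)}_1$ case structure. Several of your sub-cases (for instance, ``$\All^{t+1}$ is a valid WTA configuration when $\norm{C(Y)}_1=1$ and $\norm{Y^{t+1}}_1=1$'') explicitly require knowing that $(a_s^{t+1},a_c^{t+1})=(1,0)$; if you drop the inhibitor event at step one you can no longer draw those conclusions about $\All^{t+1}$ being good. To recover the tight constant you would have to handle the off-probability inhibitor patterns at $t+1$ as well, at which point the $\norm{C(Y)}_1$ split becomes superfluous and you have converged to the paper's argument.
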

\end{tcolorbox}

\begin{proof}
Let $\mathcal{E}$ be the event that at least one of $\{\All^{t+1},\All^{t+2}\}$ is a good configuration.
Let  $\mathcal{E}_1$ be the event that, $\All^t = C$ and for all $i$, $y_i^{t+1} \le x_i^{t+1}$. 
By  Corollary \ref{cor:mono} and the fact that $\Input^t$ is fixed
\begin{align}\label{lem:e111}
\Pr[\mathcal{E}_1 | \All^t = C] \ge  1-ne^{-\gamma/2}.
\end{align}
We now give a simple case analysis, considering all values of $a_c^{t+1}$ and $a_s^{t+1}$.

\medskip
\spara{Case  1: $a_c^{t+1} = a_s^{t+1} = 0$.}
\medskip

\noindent
In this case, $\All^{t+1}$ is a reset configuration (Definition \ref{def:reset}). So we have:
\begin{align}\label{case11}
\Pr \left [\mathcal{E} \big | \mathcal{E}_1,a_c^{t+1} = a_s^{t+1} = 0 \right  ] = 1.
\end{align}

\medskip
\spara{Case  2: $a_c^{t+1} = a_s^{t+1} = 1$.}
\medskip

\noindent
If $\norm{Y^{t+1}}_1 = 0$, then $\All^{t+2}$ is a reset  configuration with probability $\ge 1-2e^{-\gamma/2}$  by Lemma \ref{lem:reset} conclusion (1).
 If $\norm{Y^{t+1} }_1 = 1$ and $\mathcal{E}_1$  occurs, then $\All^{t+1}$ is a near-valid WTA configuration (Definition \ref{def:nearvalid}).
If $\norm{Y^{t+1}}_1 \ge 2$ and $\mathcal{E}_1$  occurs, then $\All^{t+1}$ is a valid $k$-WTA  configuration (Definition \ref{def:kwta}). Thus:
\begin{align}\label{case21}
\Pr \left [\mathcal{E} \big | \mathcal{E}_1,a_c^{t+1} = a_s^{t+1} = 1 \right  ] \ge 1-2e^{-\gamma/2}.
\end{align}

\medskip
\spara{Case  3: ($a_s^{t+1} = 1$ and $a_c^{t+1} = 0$) or ($a_s^{t+1} = 0$ and $a_c^{t+1} = 1$).}
\medskip

\noindent
Let $\mathcal{E}_2$ be the event that ($a_s^{t+1} = 1$ and $a_c^{t+1} = 0$) or ($a_s^{t+1} = 0$ and $a_c^{t+1} = 1$).
Again, if $\norm{Y^{t+1}}_1 = 0$, then $\All^{t+2}$ is a reset  configuration with probability $\ge 1-2e^{-\gamma/2}$  by Lemma \ref{lem:reset} conclusion (1). 

If $\mathcal{E}_2$ and $\mathcal{E}_1$ occur and $\norm{Y^{t+1}}_1 \ge 1$, we can apply Lemma \ref{lem:single}, giving that $\All^{t+2}$ is either a valid WTA or valid $k$-WTA configuration with probability $\ge  1-(n+2)e^{-\gamma/2}$.
We thus have:
\begin{align}\label{case31}
\Pr[\mathcal{E} | \mathcal{E}_1,\mathcal{E}_2] \ge 1-(n+2)e^{-\gamma/2}.
\end{align}

\medskip
\spara{Completing the lemma.}
\medskip

Combining \eqref{case11}, \eqref{case21}, and \eqref{case31}, by the law of total probability, $\Pr[\mathcal{E} | \mathcal{E}_1] \ge 1-(n+2)e^{-\gamma/2}$. We can then use that  $\Pr[\mathcal{E}_1 | \All^t  = C] \ge  1-ne^{-\gamma/2}$ by \eqref{lem:e111} to give:
\begin{align*}
\Pr[\mathcal{E} | \All^t = C] &\ge \Pr[\mathcal{E}_1 | \All^t = C] \cdot \Pr[\mathcal{E} | \mathcal{E}_1, \All^t = C]\\
&= \Pr[\mathcal{E}_1 | \All^t = C] \cdot \Pr[\mathcal{E} | \mathcal{E}_1]\tag{Since, by definition, $\mathcal{E}_1$ implies $\All^t = C$.}\\
&\ge 1-2(n+1)e^{-\gamma/2},
\end{align*}
completing the lemma.
\end{proof}

\subsection{Transition Lemmas for Good Configurations}\label{sec:good2WTA}

We now give a set of lemmas that characterize the transitions of $\Netg$ when starting from a good configuration. 
We first  show that  a near-valid WTA configuration transitions with  probability $\approx 1/2$ to the adjacent valid WTA configuration (i.e., the configuration with the same output behavior, but correct inhibitor behavior).

\begin{tcolorbox}
\begin{lemma}[From Near-Valid to Valid Configurations]\label{lem:nearvalid}
Assume that  the input execution $\alpha_\Input$ of $\Netg$ has $\Input^t$ fixed for all $t$. For any time $t$ and near-valid WTA configuration $C$ with $C(X) = \Input^t$,
\begin{align*}
\Pr [\All^{t+1}\text{ is a valid WTA configuration }|\All^t  = C] \ge 1/2-(n+2)e^{-\gamma/2}.
\end{align*}
\end{lemma}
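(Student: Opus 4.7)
The plan is to split on $\norm{C(\Output)}_1$, which for a near-valid WTA configuration must equal $\min(1,\norm{C(\Input)}_1) \in \{0,1\}$ by Definition~\ref{def:nearvalid}.

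In the easy sub-case $\norm{C(\Output)}_1 = 0$, we also have $\norm{C(\Input)}_1 = 0$. Then no output has a firing input, so Corollary~\ref{cor:mono} gives $\norm{\Output^{t+1}}_1 = 0$ with probability $\ge 1-ne^{-\gamma/2}$, and Lemma~\ref{lem:reset} conclusion (1) gives $a_s^{t+1}=a_c^{t+1}=0$ with probability $\ge 1-2e^{-\gamma/2}$ (since $C(\Output)$ has no firing outputs, despite $a_s,a_c$ currently firing---the inhibitors' next-step potential depends only on $C(\Output)$). A union bound produces a successor that is a valid WTA configuration with probability $\ge 1-(n+2)e^{-\gamma/2}$, comfortably stronger than the claimed bound.

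In the main sub-case $\norm{C(\Output)}_1 = 1$, let $y_i$ be the unique firing output; by Definition~\ref{def:nearvalid} we have $x_i^t = 1$, $y_j^t = 0$ for $j\neq i$, and $a_s^t=a_c^t=1$. The target valid WTA configuration has exactly $y_i$ and $a_s$ firing. I will invoke Lemma~\ref{lem:convergence}: conclusion (2) gives $\Pr[y_i^{t+1}=1 \mid \All^t=C] = 1/2$, and conclusion (1) gives $\Pr[y_j^{t+1}=0\text{ for all }j\neq i \mid \All^t=C] \ge 1-(n-1)e^{-\gamma/2}$ (since all non-winners already satisfy $y_j^t=0$). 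Lemma~\ref{lem:reset} conclusion (2) handles the inhibitors: since $\norm{C(\Output)}_1=1$, we get $\Pr[a_s^{t+1}=1,\,a_c^{t+1}=0 \mid \All^t=C] \ge 1-2e^{-\gamma/2}$. Writing the desired event as the intersection of ``$y_i$ fires'' (probability exactly $1/2$) with ``no other output fires and the two inhibitors take the correct values'' (probability $\ge 1-(n+1)e^{-\gamma/2}$ by union bound), a final application of the inclusion-exclusion bound $\Pr[A\cap B]\ge \Pr[A]-\Pr[\bar B]$ gives
\[
\Pr[\All^{t+1}\text{ is valid WTA} \mid \All^t=C] \;\ge\; \tfrac{1}{2} - (n+1)e^{-\gamma/2} \;\ge\; \tfrac{1}{2} - (n+2)e^{-\gamma/2}.
\]

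The main conceptual step, and the only place where the bound is genuinely near $1/2$ rather than near $1$, is the use of Lemma~\ref{lem:convergence}(2): when both inhibitors and the sole winner fire, the winner's potential is exactly $0$, so it re-fires with probability exactly $1/2$. The other losses (non-winners firing spuriously, inhibitors misbehaving) are all of order $e^{-\gamma/2}$ and are absorbed cleanly via the union bound. No delicate estimate is required; the proof is essentially bookkeeping once the correct one-step lemmas are lined up.
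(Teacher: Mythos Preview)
Your proof is correct and follows essentially the same approach as the paper: split on $\norm{C(\Output)}_1\in\{0,1\}$, handle the zero case via Corollary~\ref{cor:mono} and Lemma~\ref{lem:reset}(1), and in the main case combine Lemma~\ref{lem:convergence}(2) for the winner's $1/2$ re-fire probability with Lemma~\ref{lem:convergence}(1) and Lemma~\ref{lem:reset}(2) for the remaining neurons via a union bound. The only cosmetic difference is that the paper bounds all $n$ outputs together (getting $(n+2)e^{-\gamma/2}$ directly), whereas you separate the winner to get the slightly sharper $(n+1)e^{-\gamma/2}$ before relaxing.
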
 
\end{tcolorbox}
\begin{proof}
We give a proof similar to that of Lemmas \ref{lem:stability1} and \ref{lem:stability}. We consider two cases:

\medskip
\spara{Case 1: $\norm{C(\Output)}_1 = 0$.}
\medskip

In this case, for $C$ to be a near-valid WTA  configuration, according to Definition \ref{def:nearvalid}, we must also have $\norm{C(\Input)}_1 = 0$. So by Corollary \ref{cor:mono}, 
$$\Pr[\norm{\Output^{t+1}}_1 = 0 | \All^t = C] \ge 1-ne^{-\gamma/2}.$$
Additionally, by Lemma \ref{lem:reset} conclusion (1), since $\norm{C(\Output)}_1 = 0$, $\Pr[a_s^{t+1} = a_c^{t+1} = 0|\All^t = C] \ge  1-2e^{-\gamma/2}$. By a union bound,
\begin{align*}
\Pr[\norm{\All^{t+1}}_1 = 0 | \All^t = C] \ge 1-(n+2)e^{-\gamma/2}.
\end{align*}
By Definition \ref{def:valid}, since $\norm{\Input^{t+1}}_1 = 0$, if $\norm{\All^{t+1}}_1 = 0$, then $\All^{t+1}$ is a valid WTA configuration. So, conditioned on $\All^t  = C$,
$\All^{t+1}$ is a valid WTA configuration with probability $\ge 1-(n+2)e^{-\gamma/2}$, giving the lemma in this case.

\medskip
\spara{Case 2: $\norm{C(\Output)}_1 = 1$.}
\medskip

In this case, for some $i$, $C(y_i) =1$ and $C(y_j) = 0$ for all $j \neq i$. Further, we must have $C(x_i) = 1$ and $C(a_s) = C(a_c) =  1$ by the requirements of Definition \ref{def:nearvalid}.
Define the event $\mathcal{E}_{1}$ by:
$$\mathcal{E}_{1} \eqdef \left  (y_i^{t+1} =  1,\ a_s^{t+1} = 1,\ a_c^{t+1} = 0\ \text{ and }y_j^{t+1} = 0\text{ for all }j \neq i  \right ).$$
By Lemma \ref{lem:convergence}, $\Pr[y_i^{t+1}  = 1 | \All^t = C] = 1/2$ and $\Pr[y_i^t \le y_i^{t+1}\text{ for all }i|\All^t = C] \ge  1-ne^{-\gamma/2}$.
By Lemma \ref{lem:reset} conclusion (2), since $\norm{C(\Output)}_1 = 1$, $\Pr[a_s^{t+1} = 1\text{ and }a_c^{t+1} = 0 | \All^t = C] \ge 1-2e^{-\gamma/2}$.
By a union bound this gives, 
\begin{align*}
\Pr[\mathcal{E}_{1} | \All^t = C] \ge 1/2 -(n+2)e^{-\gamma/2}.
\end{align*}
If $\mathcal{E}_{1}$ occurs, then $\All^{t+1}$ is a valid WTA configuration, giving the  lemma in this case.
\end{proof}

We next show that  a reset configuration transitions to some active configuration (Defintion \ref{def:active}) with probability $\approx 1/2$.

\begin{tcolorbox}
\begin{lemma}[From Reset to Active Configurations]\label{lem:reset2good}
Assume  the input execution $\alpha_X$ of $\Netg$ has $X^t$ fixed for all $t$. For any time $t$ and reset configuration $C$ with $C(X) = C^t$,
\begin{align*}
\Pr[\text{at least one of }\{\All^{t+1},\All^{t+2}, \All^{t+3} \}\text{ is active }|\All^t = C] \ge 1/2-3(n+2)e^{-\gamma/2}.
\end{align*}
\end{lemma}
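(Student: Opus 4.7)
The plan is to case split on $\norm{C(\Input)}_1$ and then on $\norm{C(\Output)}_1$, and trace the network's evolution forward by up to three time steps using direct potential computations in combination with Lemma \ref{lem:gap}, Corollary \ref{cor:mono}, and Lemma \ref{lem:reset}.

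If $\norm{C(\Input)}_1 = 0$, then Corollary \ref{cor:mono} forces $\norm{\Output^{t+1}}_1 = 0$ with high probability, and one further application of Corollary \ref{cor:mono} together with Lemma \ref{lem:reset} conclusion (1) gives that $\All^{t+2}$ is the all-zero configuration, which is a valid WTA configuration (hence active) by Definition \ref{def:valid}.

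If $\norm{C(\Input)}_1 \ge 1$, the crucial observation is that because $C(a_s) = C(a_c) = 0$, every $y_i$ with $x_i^t = 1$ has $\pot(y_i,t+1) = 2\gamma y_i^t \ge 0$ and so fires at $t+1$ with probability $\ge 1/2$ by Lemma \ref{lem:gap}. These events are conditionally independent given $\All^t = C$, so the probability that $\norm{\Output^{t+1}}_1 \ge 1$ is at least $1 - (1/2)^{\norm{C(\Input)}_1} \ge 1/2$. Combining this, via union bound, with the good events of Corollary \ref{cor:mono} at $t+1$ and Lemma \ref{lem:reset} (giving inhibitor behavior determined by $\norm{C(\Output)}_1$), with probability at least $1/2 - (n+2)e^{-\gamma/2}$ the configuration $\All^{t+1}$ satisfies $\norm{\Output^{t+1}}_1 \ge 1$, $y_i^{t+1} \le x_i^{t+1}$ for all $i$, and the expected inhibitor firings dictated by $\norm{C(\Output)}_1$.

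Conditioning on this event, the remaining work is a secondary case split on $\norm{C(\Output)}_1$. If $\norm{C(\Output)}_1 \ge 2$ then both inhibitors fire at $t+1$ and $\All^{t+1}$ is either a valid $k$-WTA or near-valid WTA, hence active. If $\norm{C(\Output)}_1 = 1$ then only $a_s$ fires at $t+1$; the sub-case $\norm{\Output^{t+1}}_1 = 1$ is directly a valid WTA, while the sub-case $\norm{\Output^{t+1}}_1 \ge 2$ needs one more step, where a potential calculation at $t+2$ (with both inhibitors now firing by Lemma \ref{lem:reset}) shows $\Output^{t+2} = \Output^{t+1}$ with high probability, yielding a valid $k$-WTA. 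If $\norm{C(\Output)}_1 = 0$ then $\All^{t+1}$ is itself a reset; a similar chain of potential computations at $t+2$ and, in the worst sub-sub-case $\norm{\Output^{t+1}}_1 = 1$ with $\norm{\Output^{t+2}}_1 \ge 2$, also at $t+3$, drives the network to a valid WTA or valid $k$-WTA. The main obstacle is this last sub-case, which requires the full three-step window: one must verify that although the output count may drift upward between $t+1$ and $t+2$ (with both inhibitors still inactive at $t+1$), once $a_s$ and $a_c$ both fire at $t+3$ the combined inhibition is strong enough to pin the firing pattern in place. A final union bound over the error probabilities across the three time steps accumulates the $3(n+2)e^{-\gamma/2}$ error factor in the statement.
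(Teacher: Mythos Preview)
Your proposal is correct and follows essentially the same approach as the paper: split on whether $\norm{X^t}_1 = 0$, and in the nontrivial case show that with probability $\ge 1/2$ some output fires at $t+1$, then trace forward using Corollary~\ref{cor:mono}, Lemma~\ref{lem:reset}, and Lemma~\ref{lem:stability1} (the paper packages your $\norm{C(\Output)}_1 = 1$ sub-case as Lemma~\ref{lem:single}). The only organizational difference is that you case-split on $\norm{C(\Output)}_1$ and condition on the expected inhibitor state at $t+1$, whereas the paper instead splits on the actual values of $(a_s^{t+1},a_c^{t+1})$ without conditioning; both routes land on the same deepest sub-case (your $\norm{C(\Output)}_1=0$ branch, the paper's $a_s^{t+1}=a_c^{t+1}=0$ sub-case) that needs the full three-step window to reach a valid $k$-WTA at $t+3$.
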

\end{tcolorbox}
\begin{proof}
Let $\mathcal{E}$ be the event that at least one of $\{\All^{t+1},\All^{t+2}, \All^{t+3} \}$ is an active configuration.
We consider two cases:

\medskip
\spara{Case 1: $\norm{X^t}_1 = 0.$}
\medskip

Let $\mathcal{E}_{20}$ be the event that $\norm{\All^{t+2}}_1 = 0$. That is, that no neurons fire  at time $t+2$.
By  Lemma \ref{lem:quiet}, since $\norm{X^t}_1 = 0$, $\Pr[\mathcal{E}_{20} | \All^t = C] \ge 1-2(n+1)e^{-\gamma/2}$.

$\mathcal{E}_{20}$ requires that no neurons fire in $\All^{t+2}$, which makes this a valid WTA configuration since $\norm{X^t}_ 1 = 0$ and the input is fixed for all $t$. Thus, $\Pr[\mathcal{E}|\All^t = C] \ge \Pr[\mathcal{E}_{20} | \All^t = C] \ge  1-2(n+1)e^{-\gamma/2}$, completing the lemma in this case.

\medskip
\spara{Case  2: $\norm{X^t}_1 \ge 1.$}
\medskip

Let $\mathcal{E}_{11}$ be the event that $\All^t  = C$ and $\norm{Y^{t+1}} \ge 1$ (i.e., at least  one output fires at time $t+1$) and $y_i^{t+1} \le x_i^{t+1}$ for all $i$.
For any $y_i$ with $x_i^t = 1$, we have:
\begin{align*}
\pot(y_i,t+1) &= w(x_i,y_i) x_i^t + w(y_i,y_i) y_i^t + w(a_s,y_i) a_s^t+ w(a_c,y_i) a_c^t - b(y_i)\\
&\ge 3\gamma + 0 + 0 + 0 - 3\gamma = 0.
\end{align*}
So, by Lemma  \ref{lem:gap}, each output $y_i$ with $x_i^t = 1$ fires with probability at least $1/2$ at time $t+1$. Each with $x_i^t = 0$ does not fire with probability $\ge 1-e^{-\gamma/2}$ by Lemma \ref{lem:mono}. Since, by assumption, $\norm{X^t}_1 \ge 1$, by a union bound, with probability  $\ge 1/2-ne^{-\gamma/2}$, at least one output with $x_i^t = 1$ fires in $\All^{t+1}$, and no outputs with $x_i^t = 0$ fire. That is, 
\begin{align}\label{eq:beforeSubE11}
\Pr[\mathcal{E}_{11} | \All^t = C] \ge 1/2-ne^{-\gamma/2}. 
\end{align}
We now proceed with a case analysis on the inhibitor behavior at  time $t+1$:

\medskip
\spara{Sub-case  1: $a_s^{t+1} = a_c^{t+1} = 1$.}
\medskip

In this case, assuming $\mathcal{E}_{11}$ occurs, $\All^{t+1}$ is either a valid $k$-WTA configuration for some $k$ or a near-valid WTA configuration. We thus have:
\begin{align}\label{eq:sub11}
\Pr\left [\mathcal{E} \big| \mathcal{E}_{11},a_s^{t+1} = a_c^{t+1} = 1\right ] = 1.
\end{align}

\medskip
\spara{Sub-case 2: $a_s^{t+1} = a_c^{t+1} = 0$.}
\medskip

Let $\mathcal{E}_{21}$ be the event that $\norm{Y^{t+2}}_1 \ge 1$, $y_i^{t+2} \le x_i^{t+2}$  for all  $i$, and $a_s^{t+2} = 1$.
Assuming $\mathcal{E}_{11}$ occurs and $a_s^{t+1} = a_c^{t+1} = 0$, any output with $y_i^{t+1} = 1$ has
\begin{align*}
\pot(y_i,t+2) &= w(x_i,y_i) x_i^{t+1} + w(y_i,y_i) y_j^{t+1} + w(a_s,y_i) a_s^{t+1}+ w(a_c,y_i) a_c^{t+1} - b(y_i) \\
&= 3\gamma + 2\gamma + 0 + 0 - 3\gamma > 0.
\end{align*}
Thus, any such output has $y_i^{t+2} = 1$  with probability  $\ge 1-e^{-\gamma/2}$  by  Lemma \ref{lem:gap}.  Combined with Corollary \ref{lem:mono}, with probability $\ge 1-ne^{-\gamma/2}$, at least one output fires in $\All^{t+2}$, and no outputs with $x_i^{t+1} = 0$ fire. Further by  Lemma \ref{lem:reset} conclusions (2) and (3), since $\mathcal{E}_{11}$ requires that $\norm{Y^{t+1}} \ge 1$, with probability $\ge 1-2e^{-\gamma/2}$, $a_s^{t+2} = 1$. Thus,
\begin{align}\label{eq:e21givee11}
\Pr \left [\mathcal{E}_{21} \big| \mathcal{E}_{11}, a_s^{t+1} = a_c^{t+1} = 0\right] \ge 1-(n+2)e^{-\gamma/2}.
\end{align}

Assume that both $\mathcal{E}_{11}$ and $\mathcal{E}_{21}$ occur. $\mathcal{E}_{21}$ requires that $a_s^{t+2} = 1$ and $\norm{Y^{t+2}}_1 \ge  1$.
If we also have $a_c^{t+2}  = 1$, then $\All^{t+2}$ is a near-valid WTA or valid $k$-WTA configuration. If $a_c^{t+2} = 0$ and $\norm{Y^{t+2}}_1 = 1$, then $\All^{t+2}$ is a valid WTA configuration. If $a_c^{t+2} = 0$ and $\norm{Y^{t+2}}_1 \ge 2$, then by Lemma \ref{lem:reset} conclusion (3), with probability  $\ge 1-2e^{-\gamma/2}$, $a_s^{t+3} = a_c^{t+3} =  1$. Further, we can apply Lemma \ref{lem:stability1}, giving that $Y^{t+3}  = Y^{t+2}$ with probability $\ge 1-ne^{-\gamma/2}$. This ensures that  $\All^{t+3}$ is a valid $k$-WTA  configuration. So we have by a  union  bound:
\begin{align*}
\Pr  \left  [\mathcal{E} \big  | \mathcal{E}_{11}, \mathcal{E}_{21}, a_s^{t+1} = a_c^{t+1} = 0 \right  ] \ge 1-(n+2)e^{-\gamma/2}.
\end{align*}
Combined with \eqref{eq:e21givee11} the above gives:
\begin{align}\label{eq:sub12}
\Pr  \left  [\mathcal{E} \big  | \mathcal{E}_{11},  a_s^{t+1} = a_c^{t+1} = 0 \right  ] &\ge \Pr\left[\mathcal{E}_{21} \big  | \mathcal{E}_{11},  a_s^{t+1} = a_c^{t+1} = 0\right] \cdot \Pr  \left  [\mathcal{E} \big  | \mathcal{E}_{11}, \mathcal{E}_{21}, a_s^{t+1} = a_c^{t+1} = 0 \right  ] \nonumber\\
&\ge \left (1-(n+2)e^{-\gamma/2} \right ) \cdot \left (1-(n+2)e^{-\gamma/2} \right  )\nonumber\\
&\ge 1-2(n+2)e^{-\gamma/2}.
\end{align}

\medskip
\spara{Sub-case 3: ($a_s^{t+1} = 1$ and $a_c^{t+1} = 0$) or ($a_s^{t+1} = 0$ and $a_c^{t+1} = 1$).}
\medskip

Let $\mathcal{E}_{13}$ denote the event that  ($a_s^{t+1} = 1$ and $a_c^{t+1} = 0$) or ($a_s^{t+1} = 0$ and $a_c^{t+1} = 1$). Assuming $\mathcal{E}_{13}$ and $\mathcal{E}_{11}$, we can apply Lemma \ref{lem:single}, which gives that $\All^{t+2}$ is either a valid WTA or valid $k$-WTA configuration with probability $\ge  1 - (n+2)e^{-\gamma/2}$. I.e., 
\begin{align}\label{eq:sub13}
\Pr  \left  [\mathcal{E} \big  | \mathcal{E}_{11},  \mathcal{E}_{13} \right  ] \ge 1 - (n+2)e^{-\gamma/2}.
\end{align}

\medskip
\spara{Completing Case 2.}
\medskip

Overall,  combining \eqref{eq:sub11}, \eqref{eq:sub12}, and \eqref{eq:sub13}, by the law of total probability $\Pr[\mathcal{E} | \mathcal{E}_{11}] \ge 1-2(n+2)e^{-\gamma/2}$. Recalling that by \eqref{eq:beforeSubE11} $\Pr[\mathcal{E}_{11} | \All^t  = C] \ge 1/2-ne^{-\gamma/2}$:
\begin{align*}
\Pr  [\mathcal{E} | \All^t = C] &\ge \Pr[\mathcal{E}_{11} | \All^t = C] \cdot \Pr[\mathcal{E} | \mathcal{E}_{11}, \All^t = C]\\
&=\Pr[\mathcal{E}_{11} | \All^t = C] \cdot \Pr[\mathcal{E} | \mathcal{E}_{11}]\tag{Since, by assumption, $\mathcal{E}_{11}$ implies $\All^t  = C$.}\\
&\ge 1/2-3(n+2)e^{-\gamma/2}
\end{align*}
which completes the lemma in this case.
\end{proof}

We next show that, in each time step, with high probability, the number of firing outputs $k$ does not increase. Further, with probability $\approx 1/2$, $k$ is reduced by a factor of $1/2$. This ensures rapid convergence towards having just a single firing output (i.e., a near-valid WTA state). While there is some chance that the convergence will `overshoot' the target and zero outputs will fire at some time step, we show that the probability of this event is upper bounded by the probability of  the desired event -- i.e., reaching a near-valid WTA configuration.
\begin{tcolorbox}
\begin{lemma}[Progress from $k$-WTA Configurations]\label{lem:progress}
Assume the input execution $\alpha_X$ of $\Netg$ has $X^t$ fixed for all $t$. For any time $t$ and any valid $k$-WTA configuration $C$ with $C(X) = X^t$,
\begin{enumerate}
\item  Letting $\mathcal{E}$ be the event that $\All^{t+1}$ is either a near-valid WTA configuration, a valid $k$-WTA configuration with $\norm{\Output^{t+1}}_1 \le \norm{\Output^t}_1$, or has $\norm{Y^{t+1}}_1 = 0$. $$\Pr[\mathcal{E} | \All^t  = C] \ge 1-(n+2)e^{-\gamma/2}.$$
\item $\Pr \left[\norm{\Output^{t+1}}_1 \le \left\lceil \frac{\norm{\Output^t}_1}{2}\right \rceil \big| \All^t = C\right] \ge 1/2-(n+2)e^{-\gamma/2}$.
\item $\Pr[\norm{Y^{t+1}}_1 = 0 | \All^t = C] - (n+2)e^{-\gamma/2} \le \Pr[\All^{t+1} \text{ is a near-valid } | \All^t = C].$
\end{enumerate}
\end{lemma}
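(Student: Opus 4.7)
The overall plan is to leverage Lemma~\ref{lem:convergence} and Lemma~\ref{lem:reset} conclusion (3), both of which apply since, conditioned on $\All^t = C$ for a valid $k$-WTA configuration, we have $C(a_s) = C(a_c) = 1$, $\norm{C(\Output)}_1 = k \ge 2$, and $C(y_i) \le C(x_i)$ for all $i$. Define two high-probability events that will be reused throughout: let $\mathcal{A}$ be the event $\{y_i^{t+1} \le y_i^t \text{ for all } i\}$ and let $\mathcal{B}$ be the event $\{a_s^{t+1} = a_c^{t+1} = 1\}$. By Lemma~\ref{lem:convergence} conclusion (1), $\Pr[\mathcal{A} \mid \All^t = C] \ge 1 - n e^{-\gamma/2}$; by Lemma~\ref{lem:reset} conclusion (3), $\Pr[\mathcal{B} \mid \All^t = C] \ge 1 - 2 e^{-\gamma/2}$. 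Note also that $\mathcal{A}$ combined with $C(y_i) \le C(x_i)$ and the fact that the input is fixed implies $y_i^{t+1} \le x_i^{t+1}$ for all $i$.

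For conclusion (1), I would union bound to get $\Pr[\mathcal{A} \cap \mathcal{B} \mid \All^t = C] \ge 1 - (n+2)e^{-\gamma/2}$ and then do a short case analysis on $\norm{\Output^{t+1}}_1$ under this event. If $\norm{\Output^{t+1}}_1 = 0$, we land in the third case of the lemma. If $\norm{\Output^{t+1}}_1 = 1$, then since both inhibitors fire at $t+1$ and $y_i^{t+1} \le x_i^{t+1}$, the configuration is near-valid WTA. If $\norm{\Output^{t+1}}_1 \ge 2$, it is a valid $k'$-WTA with $k' \le k$ by $\mathcal{A}$.

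For conclusion (2), the main tool is the Binomial structure that Lemma~\ref{lem:convergence} conclusions (2) and (3) provide: the $k$ outputs that fire at time $t$ independently re-fire at time $t+1$ each with probability exactly $1/2$. Letting $B$ be the count of such re-firings, $B \sim \mathrm{Bin}(k, 1/2)$, and a standard fact about the median of a symmetric binomial gives $\Pr[B \le \lceil k/2 \rceil] \ge 1/2$. On the event $\mathcal{A}$, $\norm{\Output^{t+1}}_1 = B$, so the desired bound follows from a union bound against $\mathcal{A}^c$, giving probability at least $1/2 - n e^{-\gamma/2} \ge 1/2 - (n+2) e^{-\gamma/2}$.

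For conclusion (3), the cleanest route uses the elementary identity $\Pr[B = 1] = k \cdot \Pr[B = 0]$ for $B \sim \mathrm{Bin}(k, 1/2)$, together with $k \ge 2$, to get $\Pr[B = 1] \ge \Pr[B = 0]$. A near-valid WTA configuration arises whenever $B = 1$ and both $\mathcal{A}$ and $\mathcal{B}$ hold, so
\begin{align*}
\Pr[\All^{t+1} \text{ is near-valid} \mid \All^t = C] \ge \Pr[B = 1] - (n+2) e^{-\gamma/2}.
\end{align*}
Meanwhile, $\norm{\Output^{t+1}}_1 = 0$ forces $B = 0$, so $\Pr[\norm{\Output^{t+1}}_1 = 0 \mid \All^t = C] \le \Pr[B = 0] \le \Pr[B = 1]$, which chains together to yield conclusion (3). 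The main obstacle is keeping the bookkeeping correct in conclusion (3): one must be careful to bound $\Pr[\norm{\Output^{t+1}}_1 = 0]$ from above by $\Pr[B = 0]$ (not subtract extra slack) while bounding $\Pr[\text{near-valid}]$ from below by $\Pr[B = 1]$ minus the slack from $\mathcal{A}^c \cup \mathcal{B}^c$, so that the binomial ratio $k \ge 2$ absorbs the gap and leaves only the single $(n+2)e^{-\gamma/2}$ error term.
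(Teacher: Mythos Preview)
Your proposal is correct and follows essentially the same approach as the paper: both arguments combine Lemma~\ref{lem:reset}(3) and Lemma~\ref{lem:convergence} via a union bound for conclusion~(1), use the binomial median fact for conclusion~(2), and exploit the elementary inequality $\Pr[B=1]=k\cdot 2^{-k}\ge 2^{-k}=\Pr[B=0]$ together with the observation $\{\norm{Y^{t+1}}_1=0\}\subseteq\{B=0\}$ for conclusion~(3). The only cosmetic difference is that the paper bundles your events $\mathcal{A}$ and $\mathcal{B}$ into a single event $\mathcal{E}$ and invokes conditional independence of $B$ and $\mathcal{E}$ explicitly, whereas you use a direct union-bound subtraction; the resulting error terms and logic are identical.
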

\end{tcolorbox}
\begin{proof}
Since $C$  is a valid $k$-WTA configuration, conditioned on $\All^t = C$, we have $\norm{Y^t}_1 \ge 2$ and $a_s^t  = a_c^t  =  1$. Event $\mathcal{E}$ in (1) above occurs if and only if $a_s^{t+1} = a_c^{t+1} = 1$, and $y_i^{t+1} \le y_i^t$ for all $i$.

By Lemma \ref{lem:reset} conclusion (3),  since $\norm{Y^t}_1 \ge 2$, both inhibitors  remain firing at time $t+1$ with high probability. That is, $$\Pr[a_s^{t+1} = a_c^{t+1} = 1|\All^t=C] \ge 1-2e^{-\gamma/2}.$$
Further, by Lemma \ref{lem:convergence},  $\Pr[y_i^{t+1}\le y_i^t\text{ for all } i|\All^t=C] \ge 1-ne^{-\gamma/2}$. By a union bound, this gives that  $\Pr[\mathcal{E}|\All^t=C] \ge 1-(n+2)e^{-\gamma/2}$, giving (1).

Let $\bar{Y} \subseteq  \Output$ denote the set of all output neurons with $y_i^t = 1$.
Let $k = \norm{Y^{t}}_1 = \norm{\bar{Y}^{t}}_1$ and $k' =  \norm{\bar{Y}^{t+1}}_1$. By Lemma \ref{lem:convergence} properties (2) and (3), conditioned on $\All^t=C$, $k'$ is distributed according to the binomial distribution $B(k,1/2)$. That is, it is the number of successes in $k$ independent trials each with success probability $1/2$. Since $B(k,1/2)$ is symmetric with mean $k/2$, its median is upper bounded by $\lceil k/2 \rceil$. Thus, $\Pr\left[k' \le \lceil k/2 \rceil | \All^t=C \right] \ge 1/2$. This gives by a union bound, 
$$\Pr\left[k' \le \lceil k/2 \rceil\text{ and } \mathcal{E} | \All^t=C \right] \ge 1/2-(n+2)e^{-\gamma/2}.$$ Note that if $\mathcal{E}$ holds, then $k' =  \norm{\bar{Y}^{t+1}}_1 =  \norm{{Y}^{t+1}}_1$, which thus gives conclusion (2).

Finally, we have $\Pr[k' = 1 | \All^t=C] = k \cdot \frac{1}{2^k}$ and $\Pr[k' = 0 | \All^t=C] = \frac{1}{2^k}$ and so 
$$\Pr[k' = 0 | \All^t=C]  \le \Pr[k' = 1 | \All^t=C].$$
Let $\mathcal{E}_1$ be the  event that $\All^{t+1}$ is a near-valid WTA configuration.
Assuming $\mathcal{E}$ occurs, $\All^{t+1}$ is a near-valid WTA configuration if and only if $k' = 1$. That is, 
$$\Pr[\mathcal{E}_1 | \mathcal{E},\All^t=C] = \Pr[k' = 1 | \mathcal{E},\All^t=C] = \Pr[k' = 1 | \All^t=C].$$ The second equality follows since $\mathcal{E}$ and $k'$ are independent conditioned on $\All^t$. $k'$ only depends on the firing of $y \in \bar{Y}$ while $\mathcal{E}$ only depends on the firing of $u \in \{a_s,a_c\} \cup (Y \setminus \bar{Y})$. Using the above:
\begin{align*}
\Pr[\mathcal{E}_1 | \All^t=C] &\ge  \Pr[\mathcal{E} | \All^t=C]\cdot \Pr[\mathcal{E}_1 | \mathcal{E},\All^t=C]\\
&\ge \left  ( 1- (n+2)e^{-\gamma/2} \right) \cdot \Pr[k'=1 | \All^t=C]\\
&\ge \Pr[k'=0 | \All^t=C] - (n+2)e^{-\gamma/2}\\
&\ge \Pr[\norm{Y^{t+1}}_1=0 | \All^t=C] - (n+2)e^{-\gamma/2}
\end{align*}
where the last bound follows since $\norm{Y^{t+1}}_1 \ge \norm{\bar{Y}_{t+1}}_1 = k'$ so $\norm{Y^{t+1}}_1=0$ at least requires $k'=0$. This gives conclusion (3), completing the lemma.
\end{proof}

\subsection{Convergence to WTA}\label{sec:convergence}

We now use the good configuration transition probabilities given in Section \ref{sec:good2WTA}, along with the results of Section \ref{sec:bad2good}, to show that, if sufficiently large $\gamma$, starting from any configuration $\Netg$ converges with probability $\ge 1/18$ to a valid WTA configuration within $O(\log n)$ steps (see Lemma \ref{thm:mainConvergence}).
 The proof is in four main parts, which we outline here. We first  define:
 \begin{definition}[Terminal Configuration]\label{def:term} For $\Netg$, a \emph{terminal configuration} is any configuration $C$  which is either a near-valid WTA configuration or has $\norm{C(\Output)}_1  = 0$ (i.e., no outputs fire).
 \end{definition}
 With this definition we can describe the general proof outline: 
  
\begin{enumerate}
\item \textbf{Monotonicity} (Lemma \ref{lem:mono1}). We prove that, starting from a $k$-WTA configuration, with high probability, $\Netg$ remains in a  $k$-WTA configuration, with the number of firing outputs consistently decreasing until it reaches a terminal configuration. 
\item \textbf{Convergence} (Lemma \ref{lem:kWTAconverge1}). We prove that the number of firing  outputs decreases rapidly. That is, starting from a $k$-WTA configuration, with high probability, a terminal configuration is reached within $O(\log n)$ steps.
\item \textbf{Probability of valid WTA} (Lemma \ref{lem:kWTAconverge}, Corollary \ref{cor:kWTAconverge}).  We show that, starting from a valid $k$-WTA configuration, with constant probability, the terminal configuration reached is in fact a  near-valid WTA configuration. By Lemma \ref{lem:nearvalid}, with constant probability, this configuration transitions to a valid WTA configuration.
\item \textbf{Convergence from any starting configuration} (Theorem \ref{thm:mainConvergence}). We show that, starting in any configuration, with constant probability, the network reaches either a valid WTA configuration or a $k$-WTA configuration in few steps. Combined with our convergence results  for $k$-WTA configurations, this proves fast convergence to a valid WTA  state from any  starting configuration.
\end{enumerate}

We begin with a few definitions which we use to formalize the high level description above.

\begin{definition}[Termination Step]\label{def:termt}
Given any infinite execution $\alpha = C^0C^1....$ let $\term(\alpha,t,\Delta)$ be the minimum value in $\{t+1,...,t+\Delta\}$ for which $C^{\term(\alpha,t,\Delta)}$ is a terminal configuration (Definition \ref{def:term}). If no such time exists let $\term(\alpha,t,\Delta) = t+\Delta$.
\end{definition}

\begin{definition}[Monotonicity Until Termination]\label{def:eterm}
Let $\Emo(t,\Delta)$ be the event that the execution of $\Netg$ is in set of executions $\alpha = C^0C^1...$ satisfying:
\begin{align*}\{\alpha |\text{ for all }t' \in \{t+1,...,\term(\alpha,t,\Delta)\},\ C^{t'}\text{ is a valid}&\text{ $k$-WTA  configuration}\\ &\text{ with }\norm{Y^{t'}}_1 \le \norm{Y^{t'-1}}_1\}.
\end{align*}
\end{definition}

We begin by showing that, starting from any $k$-WTA configuration, with high probability  $\Netg$ behaves monotonically as described above. 
\begin{tcolorbox}
\begin{lemma}[Monotonicity]\label{lem:mono1}
Assume the input execution $\alpha_\Input$ of $\Netg$ has $\Input^t$ fixed for all $t$. For any time $t$, any
valid $k$-WTA configuration $C$ with $C(X) = X^t$, and any $\Delta  \ge 1$,
$$ \Pr[\Emo(t,\Delta) | \All^t = C] \ge 1-\Delta(n+2)e^{-\gamma/2}.$$
\end{lemma}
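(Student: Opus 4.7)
The plan is to iterate the one-step progress result, Lemma~\ref{lem:progress} conclusion (1), along the execution and apply a union bound, using the Markov property (Lemma~\ref{lem:independence}) to reduce each step's analysis to a conditioning on the immediately preceding configuration. Concretely, for each $s \in \{1,\dots,\Delta\}$, I would define $\mathcal{F}_s$ to be the ``good step'' event: either some $C^{t'}$ with $t' \le t+s-1$ is already a terminal configuration (so by Definition~\ref{def:termt} the step is vacuously accounted for), or $C^{t+s-1}$ is a valid $k$-WTA configuration and $C^{t+s}$ is one of the three outcomes promised by Lemma~\ref{lem:progress}(1): a near-valid WTA configuration, a valid $k$-WTA configuration with $\norm{Y^{t+s}}_1 \le \norm{Y^{t+s-1}}_1$, or a configuration with $\norm{Y^{t+s}}_1 = 0$.

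Next I would verify that $\bigcap_{s=1}^{\Delta} \mathcal{F}_s$ implies $\Emo(t,\Delta)$. Consider the first time $t^\star \in \{t+1,\dots,t+\Delta\}$ at which $C^{t^\star}$ is terminal (or $t^\star = t+\Delta$ if no such time exists); by Definition~\ref{def:termt} this is $\term(\alpha,t,\Delta)$. For every $t' \in \{t+1,\dots,t^\star-1\}$, the event $\mathcal{F}_{t'-t}$ (combined with the fact that no earlier configuration was terminal) forces $C^{t'}$ to be a valid $k$-WTA configuration with $\norm{Y^{t'}}_1 \le \norm{Y^{t'-1}}_1$, which is exactly the requirement of $\Emo(t,\Delta)$.

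To bound the probability of $\bigcap_s \mathcal{F}_s$, I would estimate each conditional $\Pr[\mathcal{F}_s^c \mid \mathcal{F}_1,\dots,\mathcal{F}_{s-1}, \All^t = C]$. If some earlier $C^{t'}$ is already terminal, then $\mathcal{F}_s$ holds with probability $1$ by the first branch of its definition. Otherwise, the conditioning guarantees that $\All^{t+s-1}$ is a valid $k$-WTA configuration with input $X^{t+s-1}=X^t$; by the Markov property (Lemma~\ref{lem:independence}) the distribution of $\All^{t+s}$ depends only on $\All^{t+s-1}$, so Lemma~\ref{lem:progress}(1) applies and yields $\Pr[\mathcal{F}_s^c \mid \cdots] \le (n+2)e^{-\gamma/2}$. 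A union bound over the at most $\Delta$ steps then gives
\[
\Pr\!\left[\Emo(t,\Delta) \mid \All^t = C\right] \ge 1 - \Delta(n+2)e^{-\gamma/2},
\]
as required.

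The main subtlety, rather than any deep obstacle, is bookkeeping at the termination step: I need to ensure that once a terminal configuration appears, subsequent steps do not contribute failure probability, which is handled cleanly by the ``vacuous branch'' in the definition of $\mathcal{F}_s$. The Markov property is essential because Lemma~\ref{lem:progress}(1) is stated conditional on a specific preceding configuration, not on the entire history; without it, correlations between the events $\mathcal{F}_s$ could in principle compound. No new estimates are needed beyond Lemmas~\ref{lem:independence} and~\ref{lem:progress}.
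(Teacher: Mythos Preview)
Your proposal is correct and takes essentially the same approach as the paper: both arguments iterate Lemma~\ref{lem:progress}(1) step-by-step and accumulate the failure probability $(n+2)e^{-\gamma/2}$ over $\Delta$ steps, with the terminal case handled vacuously. The paper organizes this as an induction on $\Delta$ via the nesting $\Emo(t,\Delta) \subseteq \Emo(t,\Delta-1)$, whereas you unpack it into per-step events $\mathcal{F}_s$ and a union bound; your explicit invocation of the Markov property (Lemma~\ref{lem:independence}) makes transparent a step the paper leaves implicit.
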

\end{tcolorbox}
\begin{proof}
Consider any $\Delta \ge 2$.
If $\Emo(t,\Delta-1)$ occurs, then either $\All^{t+(\Delta-1)}$ is  a valid $k$-WTA configuration, or, for  some $t' \in \{t+1,...,t+\Delta-1\}$, $\All^{t'}$ is a terminal configuration. Thus,
by conclusion (1) of Lemma \ref{lem:progress} we have:
\begin{align}\label{eq:induct1}
\Pr[\Emo(t,\Delta)| \Emo(t,\Delta-1), \All^t = C] \ge 1-(n +2)e^{-\gamma/2}.
\end{align}
Using \eqref{eq:induct1} we can show by induction that for any $\Delta \ge 1$, 
\begin{align}\label{eq:inductConclusion}
\Pr[\Emo(t,\Delta) | \All^t  = C] \ge 1-\Delta(n +2)e^{-\gamma/2}.
\end{align}
For any  $\Delta \ge 2$, assume by way  of induction that \eqref{eq:inductConclusion} holds for all $\Delta' < \Delta$. The assumption holds in the base case when $\Delta=1$ again by conclusion (1) of Lemma \ref{lem:progress}, since $C$ is a valid $k$-WTA configuration so $\Pr[\Emo(t,1)| \All^t = C] \ge  1-(n +2)e^{-\gamma/2}$.
Applying \eqref{eq:induct1} and the inductive assumption:
\small
\begin{align*}
\Pr[\Emo(t,\Delta) | \All^t = C ] &\ge \Pr[\Emo(t,\Delta-1) | \All^t = C] \cdot \Pr[\Emo(t,\Delta)| \Emo(t,\Delta-1),\All^t = C]\\
&\ge \left (1-(\Delta-1)(n +2)e^{-\gamma/2} \right ) \cdot \left (1-(n +2)e^{-\gamma/2}\right)\\
&\ge 1-\Delta(n+2)e^{-\gamma/2}.
\end{align*}
\normalsize
which gives \eqref{eq:inductConclusion} for all $\Delta \ge 1$, and so the lemma.
\end{proof}

We next show that, starting from a $k$-WTA configuration, with high probability, $\Netg$ reaches a terminal configuration within $O(\log n)$ steps. This requires showing that for $\Delta = O(\log n)$ with high probability, $\All^{\term(\alpha,t,\Delta)}$ (where $\term(\alpha,t,\Delta)$ is defined in Definition \ref{def:termt})  is actually a terminal configuration. We note that if the network does not reach a terminal configuration within $\Delta$ steps after time $t$, then, by  definition, $\All^{\term(\alpha,t,\Delta)} = \All^{t+\Delta}$, which is some non-terminal configuration.

We first  define a termination event:
\begin{definition}[Termination by $\Delta$]\label{def:econv}
Let $\Ec(t,\Delta)$ be the intersection of $\Emo(t,\Delta)$ (Definition \ref{def:eterm}) and the event that $\All^{\term(\alpha,t,\Delta)}$  is a terminal configuration. 
\end{definition}

\begin{tcolorbox}
\begin{lemma}[Convergence from $k$-WTA Configurations]\label{lem:kWTAconverge1}
Assume the input execution $\alpha_\Input$ of $\Netg$ has $\Input^t$ fixed for all $t$ and that $\gamma \ge 4\ln(n+2)+10$. 
Let $\Delta = \kbound$. For any time $t$ and valid $k$-WTA configuration $C$  with $C(X) = X^t$, 
$$\Pr[\Ec(t,\Delta) | \All^t = C] \ge  1-\Delta(n+2)e^{-\gamma/2}-\frac{1}{7n} $$
\end{lemma}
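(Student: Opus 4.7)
The plan is to combine Lemma \ref{lem:mono1} with a Chernoff-style bound applied to the ``halving'' events supplied by Lemma \ref{lem:progress}(2). The first term $\Delta(n+2)e^{-\gamma/2}$ in the claimed bound will come directly from Lemma \ref{lem:mono1}, which controls the probability that $\Emo(t,\Delta)$ fails. It then remains to show that, conditional on $\Emo(t,\Delta)$, the probability that $\All^{\term(\alpha,t,\Delta)}$ is \emph{not} a terminal configuration is at most $1/(7n)$.

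The key observation is that under $\Emo(t,\Delta)$, so long as $\norm{Y^{t'-1}}_1 \ge 2$ (i.e., termination has not yet occurred), $\All^{t'-1}$ is a valid $k$-WTA configuration, so Lemma \ref{lem:progress}(2) guarantees $\norm{Y^{t'}}_1 \le \lceil \norm{Y^{t'-1}}_1/2 \rceil$ with probability at least $p \eqdef 1/2 - (n+2)e^{-\gamma/2}$, conditional on the full past. Since at most $\lceil \log_2 n \rceil$ such halvings suffice to bring $\norm{Y}_1$ from at most $n$ down to at most $1$ (which under $\Emo$ forces a terminal configuration), it suffices to lower bound the probability of achieving at least $\lceil \log_2 n \rceil$ halvings during the $\Delta = 12(\log_2 n+2)$ steps after time $t$.

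To deal with the correlations between halving events across steps, I will introduce indicators $B_1,\ldots,B_\Delta$ defined so that $B_i=1$ whenever either the halving event occurs at step $t+i$ or a terminal configuration has already been reached by then. Under $\Emo$ these satisfy $\Pr[B_i=1 \mid \All^0,\ldots,\All^{t+i-1}] \ge p$, which is exactly the hypothesis of Lemma \ref{lem:coupling}. Applying that lemma with i.i.d.\ $Z_i \sim \mathrm{Bernoulli}(p)$ shows that $\sum_i B_i$ stochastically dominates $\sum_i Z_i$; and on the event $\sum_i B_i \ge \lceil \log_2 n \rceil$ a terminal configuration is reached within $\Delta$ steps.

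The last step is a routine Chernoff computation. The assumption $\gamma \ge 4\ln(n+2)+10$ forces $(n+2)e^{-\gamma/2} \le e^{-5}$, hence $p > 0.49$, giving $\E[\sum_i Z_i] > 5.8(\log_2 n+2)$, comfortably above the required threshold $\lceil \log_2 n \rceil$. A multiplicative Chernoff bound then yields $\Pr[\sum_i Z_i < \lceil \log_2 n\rceil] \le 1/(7n)$, which when combined with the failure probability of $\Emo(t,\Delta)$ gives the stated bound. The main subtlety lies in defining the $B_i$'s cleanly and verifying the conditional lower bound needed to invoke Lemma \ref{lem:coupling}---in particular, being careful about what happens once termination occurs so that the coupling remains valid throughout all $\Delta$ steps; the rest is bookkeeping.
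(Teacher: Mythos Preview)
Your proposal is correct and follows essentially the same approach as the paper: define per-step ``halving'' indicators, use Lemma~\ref{lem:progress}(2) to lower bound their conditional probabilities, invoke Lemma~\ref{lem:coupling} to compare with i.i.d.\ Bernoulli$(p)$ variables, apply a Chernoff bound, and combine with Lemma~\ref{lem:mono1}.

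One small point worth noting on the ``subtlety'' you flag: as you have defined them, your $B_i$'s satisfy the conditional lower bound $\Pr[B_i=1\mid\text{past}]\ge p$ only for pasts that are consistent with $\Emo$, whereas Lemma~\ref{lem:coupling} needs the bound for \emph{all} pasts. The paper resolves this exactly as you anticipate, by setting the indicator to $1$ whenever $\All^{t'-1}$ is \emph{not} a valid $k$-WTA configuration (not merely when a terminal configuration has been reached). This makes the conditional bound hold unconditionally, and under $\Emo$ the two definitions coincide, so the counting argument (at most $\log_2 n+1$ halvings suffice) still goes through. With that adjustment your write-up matches the paper's proof.
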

\end{tcolorbox}
\begin{proof}
Let $\bEc(t,\Delta)$ be the event that $\Emo(t,\Delta)$ occurs but $\Ec(t,\Delta)$ does not.
 For any  $t' \in \{t+1,..., t +\Delta \}$, define the indicator $I_{t'} \in \{0,1\}$ with $I_{t'}= 1$ if and only  if either:
 \begin{itemize}
 \item $\All^{t'-1}$ is a valid $k$-WTA configuration and $\norm{\Output^{t'}}_1 \le \lceil k/2 \rceil$. 
 \item $\All^{t'-1}$ is not a valid $k$-WTA configuration for any $k \ge  2$.
  \end{itemize}  
  $\bEc(t,\Delta)$ requires that each of $\All^t,...,\All^{t+\Delta}$ is a valid $k$-WTA configuration and that 
  $$\norm{Y^t}_1 \ge \norm{Y^{t+1}}_1 \ge  ...\ge \norm{Y^{t+\Delta}}_1 \ge 2.$$
  Otherwise, a terminal configuration with $\norm{Y^{t'}}_1 = 0$  would be reached and $\Ec(t,\Delta)$ would occur.
  
Initially $\norm{\Output^t}_1 \le n$. Since each time ${I}_{t'} = 1$, either $\norm{\Output^t}_1$ is cut in half or a configuration other than a valid $k$-WTA configuration occurs, $\bEc(t)$
   can only occur if $\sum_{t'=t+1}^{t+\Delta} {I}_{t'} < \log_2 n + 1$. Thus we can bound:
\begin{align}\label{ecBound}
\Pr[\bEc(t,\Delta) | \All^t = C] \le \Pr \left [\sum_{t'=t+1}^{t+\Delta} {I}_{t'} < (\log_2 n +1) \big | \All^t = C \right ].
\end{align}

We will show that  this probability  is low since ${I}_{t'} = 1$ with good probability. 
Specifically, if $\All^{t'-1}$ is not a valid $k$-WTA configuration, then ${I}_{t'}  = 1$ deterministically. If $\All^{t'-1}$ is a valid $k$-WTA configuration, then by conclusion (2) of Lemma \ref{lem:progress}, ${I}_{t'} = 1$ with probability  $ \ge 1/2-(n+2)e^{-\gamma/2}$. Overall, we have: $\Pr[{I}_{t'} = 1 | \All^{t'-1} ]\ge 1/2-(n+2)e^{-\gamma/2}$.
In fact, by Lemma \ref{lem:independence}, we can also condition on all past configurations and have: 
\begin{align*}
\Pr[{I}_{t'} = 1 | \All^{t'-1}\All^{t'-2}...\All^t, \All^t = C ] \ge 1/2-(n+2)e^{-\gamma/2}.
\end{align*}

The above bound lets us use Lemma \ref{lem:coupling} to upper bound the probability that $\sum_{t'=t+1}^{t+\Delta} I_{t'}$ is below any value $d$ by the probability that a sum of $\Delta$ independent coin flips, each with success probability  $1/2-(n+2)e^{-\gamma/2}$, is below $d$. 
Specifically,
let  $Z_{t+1},...,Z_{t+\Delta}$ be i.i.d. random variables with $Z_{t'} = 1$ with probability  $1/2-(n+2)e^{-\gamma/2}$ and $Z_{t'} = 0$ otherwise.  Invoking \eqref{ecBound} and Lemma \ref{lem:coupling}, 
\begin{align}\label{ecBound2}
\Pr[\bEc(t,\Delta) | \All^t = C] &\le \Pr \left [\sum_{t'=t+1}^{t+\Delta} {I}_{t'} < (\log_2 n +1) \big | \All^t = C  \right ]\\
&\le \Pr \left [\sum_{t'=t+1}^{t+\Delta} Z_{t'} < (\log_2 n +1)  \right ].
\end{align}
By our assumption that $\gamma \ge 4\ln(n+2) + 10$ and our setting of $\Delta = \kbound \le 14n$:
\begin{align*}
\E \left  [\sum_{t'=t+1}^{t+\Delta} Z_{t'} \right ] = \Delta/2 - \Delta(n+2)e^{-\gamma/2} \ge \Delta/3 = 4(\log_2 n+2).
\end{align*}
By a standard Chernoff bound \cite{mitzenmacher2005probability}, 
\begin{align*}
\Pr \left [ \sum_{t'=t+1}^{t+\Delta} Z_{t'} \le (\log_2 n + 1) \right  ] \le e^{-\frac{(3/4)^2 \cdot 4(\log_2+2)}{2}} \le e^{-(\log_2n + 2)} \le \frac{1}{7n}.
\end{align*}
We thus have, by \eqref{ecBound2}, $\Pr[\bEc(t,\Delta) | \All^t = C] \le \frac{1}{7n}$. Combined with Lemma \ref{lem:mono1} this gives:
\begin{align*}
\Pr[\Ec(t,\Delta) | \All^t = C] &=  \Pr[\Emo(t,\Delta) | \All^t = C] - \Pr[\bEc(t,\Delta) | \All^t = C]\\
&\ge 1- \Delta(n+2)e^{-\gamma/2} - \frac{1}{7n}.
\end{align*}

\end{proof}

We next combine Lemma \ref{lem:kWTAconverge1} with conclusion (3) of Lemma \ref{lem:progress} and Lemma \ref{lem:nearvalid} to show that, starting from a valid $k$-WTA configuration, not only does $\Netg$ reach a terminal configuration quickly, but also, if $\gamma$ is large enough, this terminal configuration is a near-valid WTA configuration with probability $\approx 1/2$.
\begin{tcolorbox}
\begin{lemma}[Constant Probability of Near-Valid WTA, from $k$-WTA Configurations]\label{lem:kWTAconverge}
Assume the input execution $\alpha_\Input$ of $\Netg$ has $\Input^t$ fixed for all $t$ and that $\gamma \ge 4\ln(n+2)+10$. Let $\Delta = \kbound$ and $\mathcal{E}_1(t)$ be the event that there is some $t' \in\{t+1 ,..., t + \Delta\}$, such that $\All^{t'}$ is a near-valid WTA configuration. For any $t$ and valid $k$-WTA configuration $C$  with $C(X) = X^t$,
$$\Pr[\mathcal{E}_1(t)|\All^t = C] \ge \frac{1}{2} - \frac{\Delta+1}{2}(n+2)e^{-\gamma/2} - \frac{1}{14n}.$$
\end{lemma}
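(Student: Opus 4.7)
The plan is to combine two ingredients already in place: Lemma \ref{lem:kWTAconverge1}, which guarantees that $\Netg$ reaches a terminal configuration (Definition \ref{def:term}) within $\Delta$ steps with probability at least $1 - \Delta(n+2)e^{-\gamma/2} - \tfrac{1}{7n}$, and conclusion (3) of Lemma \ref{lem:progress}, which says that a single-step transition from a valid $k$-WTA configuration to a near-valid WTA configuration is (up to an additive slack of $(n+2)e^{-\gamma/2}$) at least as likely as a transition to a zero-output configuration. Since a terminal configuration is by definition either a near-valid WTA or has zero firing outputs, these two termination outcomes occur with approximately equal probability, so $\mathcal{E}_1(t)$ will hold with probability close to $\tfrac{1}{2}\Pr[\Ec(t,\Delta) \mid \All^t = C]$.

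Concretely, let $T = \term(\alpha,t,\Delta)$ denote the random termination step. On the event $\Ec(t,\Delta)$, the monotonicity condition $\Emo(t,\Delta)$ forces $\All^{T-1}$ to be a valid $k$-WTA configuration, so Lemma \ref{lem:progress}(3) can be applied conditioned on $\All^{T-1}$. For each $t' \in \{t+1,\dots,t+\Delta\}$, the event $\{T \ge t'\}$ is determined by $\All^{t+1},\dots,\All^{t'-1}$, so by the Markov property (Lemma \ref{lem:independence}) I can condition on the entire past up to time $t'-1$, partition on the value of $\All^{t'-1}$, and apply the single-step comparison; after summing over $t'$ the resulting inequality reads
\begin{align*}
\Pr[\Ec(t,\Delta),\ \norm{\Output^T}_1 = 0 \mid \All^t = C] \le \Pr[\mathcal{E}_1(t),\ \Ec(t,\Delta) \mid \All^t = C] + \tfrac{\Delta+1}{2}(n+2)e^{-\gamma/2},
\end{align*}
where the accumulated error tracks the $(n+2)e^{-\gamma/2}$ slack from Lemma \ref{lem:progress}(3) summed against the (telescoping) survival probabilities $\Pr[T \ge t' \mid \All^t = C]$. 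Since on $\Ec(t,\Delta)$ the terminal configuration is either near-valid or has zero outputs, the two left- and right-hand probabilities above partition $\Ec(t,\Delta)$, giving $2\Pr[\mathcal{E}_1(t) \mid \All^t = C] \ge \Pr[\Ec(t,\Delta) \mid \All^t = C] - \tfrac{\Delta+1}{2}(n+2)e^{-\gamma/2}$. Plugging in the lower bound from Lemma \ref{lem:kWTAconverge1} and dividing by $2$ yields exactly the claimed bound.

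The main obstacle is handling the random termination time $T$ cleanly when invoking Lemma \ref{lem:progress}(3), since that lemma's one-step comparison is stated conditionally on $\All^{t'-1}$ for a \emph{fixed} $t'$. The cleanest workaround is to sum over candidate values $t' \in \{t+1,\dots,t+\Delta\}$ of $T$ together with candidate configurations $C'$ for $\All^{t'-1}$, and use that both the no-prior-termination event $\{T \ge t'\}$ and the conditioning event $\{\All^{t'-1} = C'\}$ are functions of the past so that the single-step inequality commutes through as in the proof of Lemma \ref{lem:kWTAconverge1}. Once this is in place, the remaining manipulations are straightforward algebra on the probabilities.
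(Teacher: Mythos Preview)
Your approach is essentially identical to the paper's: define the two disjoint termination outcomes (near-valid vs.\ zero outputs) whose union is $\Ec(t,\Delta)$, compare them at the random termination step via Lemma~\ref{lem:progress}(3), and then combine with the lower bound on $\Pr[\Ec(t,\Delta)\mid\All^t=C]$ from Lemma~\ref{lem:kWTAconverge1}. The decomposition by $t'=T$ and conditioning on the past through $\All^{t'-1}$ is exactly what the paper does (it introduces the events $\mathcal{E}_0(t,t')$, $\mathcal{E}_1(t,t')$, $\mathcal{E}_2(t,t')$ for this purpose).

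The only gap is in your error accounting, which does not reproduce the stated bound. You claim the comparison inequality carries slack $\tfrac{\Delta+1}{2}(n+2)e^{-\gamma/2}$, but plugging this into $2\Pr[\mathcal{E}_1]\ge\Pr[\Ec]-\text{(slack)}$ together with $\Pr[\Ec]\ge 1-\Delta(n+2)e^{-\gamma/2}-\tfrac{1}{7n}$ and dividing by $2$ yields $\tfrac{1}{2}-\tfrac{3\Delta+1}{4}(n+2)e^{-\gamma/2}-\tfrac{1}{14n}$, strictly weaker than the lemma. Moreover, the survival probabilities $\Pr[T\ge t'\mid\All^t=C]$ do not ``telescope'' in any way that would give $\tfrac{\Delta+1}{2}$; each is at most $1$, and a naive sum over $t'$ only gives $\Delta$. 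The paper instead argues that the total comparison error is a \emph{single} $(n+2)e^{-\gamma/2}$: it shows $\Pr[\mathcal{E}_1(t)\mid\All^t=C]\ge\Pr[\mathcal{E}_0(t)\mid\All^t=C]-(n+2)e^{-\gamma/2}$ directly, and then adds this to the partition identity $\Pr[\mathcal{E}_0]+\Pr[\mathcal{E}_1]=\Pr[\Ec(t,\Delta)]$ to obtain exactly $\tfrac{1}{2}-\tfrac{\Delta+1}{2}(n+2)e^{-\gamma/2}-\tfrac{1}{14n}$. To align with the paper you need the single-error comparison, not the summed one you wrote.
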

\end{tcolorbox}
\begin{proof}
$\mathcal{E}_1(t)$ is equivalent to the event that $\Ec(t,\Delta)$ occurs and $\All^{\term(\alpha,t,\Delta)}$ is a near-valid WTA configuration. Let $\mathcal{E}_{0}(t)$ be the event that $\Ec(t,\Delta)$ occurs and $\norm{\Output^{\term(\alpha,t,\Delta)}}_1 = 0$. $\mathcal{E}_{0}(t)$ and $\mathcal{E}_{1}(t)$ are disjoint with $\mathcal{E}_{0}(t) \cup \mathcal{E}_{1}(t) = \Ec(t,\Delta)$. So by Lemma \ref{lem:kWTAconverge1},
\begin{align}\label{eq:sum2one}
\Pr[\mathcal{E}_{0}(t)|\All^t  = C] + \Pr[\mathcal{E}_{1}(t) | \All^t  = C]  &= \Pr[\Ec(t,\Delta) | \All^t  = C]\nonumber\\
 &\ge 1- \Delta(n+2)e^{-\gamma/2} - \frac{1}{7n}.
\end{align}
We will use conclusion (3) of Lemma \ref{lem:progress} to show that 
\begin{align}\label{eq:half1}
\Pr[\mathcal{E}_{1}(t)|\All^t  = C] \ge \Pr[\mathcal{E}_{0}(t) |\All^t  = C] - (n+2)e^{-\gamma/2},
\end{align}
 which combined with \eqref{eq:sum2one} gives the conclusion of the lemma, that
 \begin{align}\label{eq:half}
\Pr[\mathcal{E}_{1}(t) |\All^t  = C] \ge \frac{1}{2} - \frac{\Delta+1}{2}(n+2)e^{-\gamma/2} - \frac{1}{14n}.
\end{align}

For each $t' \in \{t+1 ,..., t +\Delta\}$, let $\Ec(t,t',\Delta)$ be the event that $\Ec(t,\Delta)$ occurs and $\term(\alpha,t,\Delta) = t'$. Define $\mathcal{E}_{0}(t,t')$ and $\mathcal{E}_{1}(t,t')$ analogously. Let $\mathcal{E}_{2}(t,t')$ be the event that $\All^{t},...,\All^{t'-1}$ are all valid $k$-WTA configurations with $\norm{\Output^t}_1 \ge  ...  \ge \norm{\Output^{t'-1}}_1 \ge 2$.  Let $\mathcal{\bar  {E}}_{2}(t,t')$ be its complement.
\begin{align}\label{0case}
\Pr[\mathcal{E}_{1}(t,t') | \mathcal{\bar  {E}}_{2}(t,t')] = \Pr[\mathcal{E}_{0}(t,t') | \mathcal{\bar  {E}}_{2}(t,t')] =0
\end{align}
since both $\mathcal{E}_{1}(t,t')$ and $\mathcal{E}_{0}(t,t')$ require $\Emo(t,\Delta)$  to hold, which requires $\mathcal{E}_{2}(t,t')$ to hold if $\term(\alpha,t,\Delta) = t'$. Further, by conclusion (3) of Lemma \ref{lem:progress}, since $\mathcal{E}_{2}(t,t')$  requires that $\All^{t'-1}$ is a valid $k$-WTA  configuration,
\begin{align}\label{nextcase}
\Pr[\mathcal{E}_{1}(t,t') | \mathcal{  {E}}_{2}(t,t'),\All^t=C] \ge \Pr[\mathcal{E}_{0}(t,t') | \mathcal{  {E}}_{2}(t,t'),\All^t=C] - (n+2)e^{-\gamma/2}
\end{align}
By the law of total probability,  \eqref{0case} and \eqref{nextcase} give 
$$\Pr[\mathcal{E}_{1}(t,t')|\All^t = C] \ge \Pr[\mathcal{E}_{0}(t,t')|\All^t = C] - (n+2)e^{-\gamma/2}.$$
Again by the law of total probability, this gives 
$$\Pr[\mathcal{E}_{1}(t)|\All^t = C] \ge \Pr[\mathcal{E}_{0}(t)|\All^t = C] - (n+2)e^{-\gamma/2},$$
 yielding \eqref{eq:half1} and thus \eqref{eq:half} and the lemma.
 \end{proof}

We next combine Lemma \ref{lem:kWTAconverge} with Lemma \ref{lem:nearvalid}, which shows that any near-valid WTA configuration transitions with probability $\approx 1/2$ to a valid WTA configuration. This gives fast convergence to a valid WTA configuration starting from any  valid $k$-WTA configuration, with probability  $\ge 1/8$.
\begin{tcolorbox}
\begin{corollary}[Constant Probability of Success, from $k$-WTA Configurations]\label{cor:kWTAconverge}
Assume the input execution $\alpha_\Input$ of $\Netg$ has $\Input^t$ fixed for all $t$ and that $\gamma \ge 4\ln(n+2)+10$. Let $\mathcal{E}(t)$ be the event that there is some $t' \in\{t+1 ,..., t + 12\log_2 n + 25\}$, such that $\All^{t'}$ is a valid WTA configuration. For any $t$ and valid $k$-WTA configuration $C$  with $C(X) = X^t$,
$$\Pr[\mathcal{E}(t)|\All^t = C] \ge 1/8.$$
\end{corollary}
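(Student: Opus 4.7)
The plan is to chain together Lemma \ref{lem:kWTAconverge} (which guarantees that a near-valid WTA configuration is visited within $\Delta = 12(\log_2 n + 2)$ steps with probability close to $1/2$) with Lemma \ref{lem:nearvalid} (which says a near-valid WTA configuration transitions to a valid WTA configuration in one additional step with probability close to $1/2$). Since $\Delta + 1 = 12\log_2 n + 25$ is exactly the window allowed in the corollary, if both steps succeed then $\mathcal{E}(t)$ occurs, so multiplying the two probabilities gives a bound $\approx 1/4$, of which $1/8$ will be a comfortable weakening once the finite-$\gamma$ error terms are absorbed.

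To make this rigorous, I would introduce the hitting time $t^\ast = \min\{t' \in \{t+1,\ldots,t+\Delta\} : \All^{t'}\text{ is a near-valid WTA configuration}\}$, leaving $t^\ast$ undefined if no such time exists. The event that $t^\ast$ is defined is precisely the event $\mathcal{E}_1(t)$ of Lemma \ref{lem:kWTAconverge}. By the Markov property (Lemma \ref{lem:independence}), for each candidate $t'$ and each near-valid WTA configuration $C'$, the joint event $\{t^\ast = t', \All^{t'} = C'\}$ is determined by $\All^{t+1},\ldots,\All^{t'}$, so conditioned on it the distribution of $\All^{t'+1}$ depends only on $C'$. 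Lemma \ref{lem:nearvalid} then yields
$$\Pr[\All^{t'+1}\text{ is a valid WTA configuration}\mid t^\ast = t', \All^{t'} = C'] \ge \tfrac{1}{2} - (n+2)e^{-\gamma/2}.$$
Since $t' + 1 \le t + \Delta + 1 = t + 12\log_2 n + 25$, summing this bound over the disjoint $(t',C')$ pairs via the law of total probability gives
$$\Pr[\mathcal{E}(t) \mid \All^t = C] \ge \left(\tfrac{1}{2} - (n+2)e^{-\gamma/2}\right)\cdot \Pr[\mathcal{E}_1(t)\mid \All^t = C].$$

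All that remains is the numerical bookkeeping. Substituting the bound on $\Pr[\mathcal{E}_1(t) \mid \All^t = C]$ from Lemma \ref{lem:kWTAconverge} and using $\gamma \ge 4\ln(n+2) + 10$ to bound $(n+2)e^{-\gamma/2} \le e^{-5}/(n+2)$, each of the three error terms $(n+2)e^{-\gamma/2}$, $\tfrac{\Delta+1}{2}(n+2)e^{-\gamma/2}$, and $\tfrac{1}{14n}$ is small enough (e.g.\ at most $1/14$ even at $n = 1$, and much smaller for larger $n$) that the product of the two factors comfortably exceeds $1/8$ for every $n \ge 1$. I do not anticipate any serious obstacle; the only subtle step is invoking the Markov property at the random hitting time $t^\ast$, which is handled cleanly by decomposing on the disjoint events $\{t^\ast = t', \All^{t'} = C'\}$ before applying the one-step Lemma \ref{lem:nearvalid}.
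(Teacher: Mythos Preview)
Your proposal is correct and follows essentially the same approach as the paper: chain Lemma~\ref{lem:kWTAconverge} with Lemma~\ref{lem:nearvalid}, multiply the two probability bounds, and absorb the error terms using $\gamma \ge 4\ln(n+2)+10$. If anything, your treatment of the random hitting time $t^\ast$ via decomposition over disjoint events $\{t^\ast = t', \All^{t'} = C'\}$ is more careful than the paper's own proof, which simply writes $\Pr[\Ev(t) \mid \mathcal{E}_1(t), \All^t = C]$ without spelling out the Markov-at-a-stopping-time argument.
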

\end{tcolorbox}
\begin{proof}
As in Lemma \ref{lem:kWTAconverge}, let $\Delta = \kbound$ and $\mathcal{E}_{1}(t)$ be the event that there is some $t' \in\{t+1 ,..., t + \Delta\}$, such that $\All^{t'}$ is a near-valid WTA configuration. Let $\Ev(t)$ be the event that $\All^{t'+1}$ is a valid WTA  configuration. We have $\mathcal{E}(t) \subseteq \Ev(t)$ (since $t'+1 \in \{t+2,...,t+\Delta  +1\}$ where $t+\Delta  +1 = t+(12\log_2 n + 25)$). 
Thus it suffices to show that $\Pr [\Ev(t)|\All^t = C] \ge 1/8$.

By  Lemmas \ref{lem:kWTAconverge} and  \ref{lem:nearvalid}, for any configuration $C$:
\begin{align*}
\Pr [\Ev(t)|\All^t = C] &\ge \Pr [\mathcal{E}_{1}(t)|\All^t = C]  \cdot \Pr [\Ev(t) | \mathcal{E}_{1}(t),\All^t = C]\\
&\ge \left (\frac{1}{2} - \frac{\Delta+1}{2}(n+2)e^{-\gamma/2} - \frac{1}{14n}\right) \cdot \left (1/2-(n+2)e^{-\gamma/2} \right  )\\
&\ge \frac{1}{4} - \frac{\Delta + 3}{4}(n+2)e^{-\gamma/2} - \frac{1}{28n}.
\end{align*}

We can loosely  bound $\frac{\Delta+3}{4} = \frac{\kbound +3}{4} \le \frac{12(n+2n) + 3n}{4} \le 10n$. Further, by our assumption that $\gamma \ge 4\ln(n+2) + 10$ we have:
\begin{align*}
\Pr [\Ev(t)|\All^t = C] \ge \frac{1}{4} - \frac{10 n(n+2)}{(n+2)^2 \cdot e^5}- \frac{1}{28} \ge  \frac{1}{8}
\end{align*}
which gives the corollary.
\end{proof}

Finally, we show that, starting from \emph{any  configuration}, with constant probability, $\Netg$ converges  to a valid WTA configuration in $O(\log n)$ steps. Our proof combines Theorem \ref{thm:good} and Lemma \ref{lem:reset2good} which show that any configuration transitions to an active configuration (Definition \ref{def:active}) in few steps with constant probability. We then apply  Corollary \ref{cor:kWTAconverge} to show convergence from such a configuration.
\begin{tcolorbox}
\begin{theorem}\label{thm:mainConvergence}
Assume the input execution $\alpha_\Input$ of $\Netg$ has $\Input^t$ fixed for all $t$  and that $\gamma \ge 4\ln(n+2)+10$. Let $\mathcal{E}(t)$ be the event that there is some  $t' \in \{t+1,...,t + (12 \log_2  n+30)\}$ such that $\All^{t'}$ is a valid WTA configuration. For any time $t$ and configuration $C$ with $C(X) = X^t$,
$$\Pr[\mathcal{E}(t) | \All^t = C]\ge  1/18.$$
\end{theorem}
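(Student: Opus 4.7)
The plan is to chain the results of Sections \ref{sec:bad2good}--\ref{sec:convergence} together, budgeting time across three phases: (i) reaching a good configuration, (ii) reaching an active configuration if that good configuration happens to be a reset, and (iii) converging from an active configuration to a valid WTA configuration. Starting from the arbitrary $C$ at time $t$, I first invoke Theorem \ref{thm:good}, which guarantees that one of $\All^{t+1},\All^{t+2}$ is a good configuration with probability at least $1-2(n+1)e^{-\gamma/2}$. Let $t_1 \in \{t+1,t+2\}$ be the first such time. I then split on the \emph{type} of good configuration appearing at $t_1$.

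If $\All^{t_1}$ is a valid WTA configuration, $\mathcal{E}(t)$ already holds. If it is a valid $k$-WTA configuration, Corollary \ref{cor:kWTAconverge} applied at time $t_1$ delivers a valid WTA configuration within a further $12\log_2 n + 25$ steps with probability at least $1/8$. If $\All^{t_1}$ is a near-valid WTA configuration, Lemma \ref{lem:nearvalid} yields a valid WTA configuration at time $t_1+1$ with probability at least $1/2-(n+2)e^{-\gamma/2}$. The remaining and worst case is when $\All^{t_1}$ is a reset configuration: Lemma \ref{lem:reset2good} then produces an active configuration at some $t_2 \in \{t_1+1,t_1+2,t_1+3\}$ with probability at least $1/2-3(n+2)e^{-\gamma/2}$, and I recurse on $\All^{t_2}$ through the three previous sub-cases. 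Summing the time slack gives an overall window of at most $2+3+(12\log_2 n+25) = 12\log_2 n+30$ steps, matching the deadline in the statement.

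The probability accounting reduces to the worst branch, reset $\to$ active $\to$ $k$-WTA $\to$ valid WTA, whose success probability lower bounds to
\begin{equation*}
\left(1-2(n+1)e^{-\gamma/2}\right)\cdot\left(\tfrac{1}{2}-3(n+2)e^{-\gamma/2}\right)\cdot\tfrac{1}{8}.
\end{equation*}
Under the hypothesis $\gamma \ge 4\ln(n+2)+10$, each $e^{-\gamma/2}$ is bounded by $e^{-5}/(n+2)^2$, so all the additive perturbations are negligible and a short computation rounds the product down to $1/18$. The other branches (near-valid or $k$-WTA reached directly at $t_1$, with no reset step) give strictly better probabilities of the same form.

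The main bookkeeping obstacle will be applying the Markov property (Lemma \ref{lem:independence}) cleanly when I ``recurse'' at the random intermediate times $t_1$ and $t_2$: I will handle this by conditioning on $\All^{t_1}$ (and, inside the reset branch, on $\All^{t_2}$) via the law of total probability, using that each downstream convergence lemma depends only on the configuration at its own starting time. A secondary subtlety is that Corollary \ref{cor:kWTAconverge} and Lemma \ref{lem:nearvalid} are stated only for their respective starting types, so I must verify the case split exhausts the possibilities for active configurations (Definition \ref{def:active}) and that the $2$-step slack from Theorem \ref{thm:good} plus the $3$-step slack from Lemma \ref{lem:reset2good} still leaves the full convergence window of Corollary \ref{cor:kWTAconverge} within the $12\log_2 n+30$ deadline---which, as noted above, it does with equality.
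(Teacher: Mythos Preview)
Your proposal is correct and follows essentially the same approach as the paper's proof. Both arguments chain Theorem~\ref{thm:good} (reach a good configuration within $2$ steps), Lemma~\ref{lem:reset2good} (if reset, reach an active configuration within $3$ more steps), and then case-split on the active configuration type via Corollary~\ref{cor:kWTAconverge}, Lemma~\ref{lem:nearvalid}, or the trivial valid-WTA case; the worst branch yields exactly your product $(1-2(n+1)e^{-\gamma/2})(1/2-3(n+2)e^{-\gamma/2})\cdot\tfrac{1}{8}\ge 1/18$, and the time budget $2+3+(12\log_2 n+25)=12\log_2 n+30$ matches.
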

\end{tcolorbox}
\begin{proof}
Let $\Ea(t)$ be the event that $\All^t = C$ and that at least one of $\{\All^{t+1},...,\All^{t+5}\}$ is an active configuration (Definition \ref{def:active}).
Let $\All^{\hat{t}}$ be the first active configuration in this set, or $\All^{\hat{t}} =\All^{ t+5}$ if there is no such configuration. 

By Theorem \ref{thm:good}, conditioned on $\All^t = C$, with probability $\ge 1- 2(n+1) e^{-\gamma/2}$ one of $\{\All^{t+1},\All^{t+2}\}$ is a good configuration. Let  $
\All^{\tilde{t}}$ be the first  good configuration in this set or $\All^{\tilde{t}} = \All^{t+2}$ if neither are good. 
If $\All^{\tilde{t}}$ is also an active configuration then $\Ea(t)$ holds.

If not, then $\All^{\tilde{t}}$ is a reset configuration. Let $C'$ be any reset configuration. By Lemma \ref{lem:reset2good}, conditioned on $\All^{\tilde t} = C'$, with probability $\ge 1/2-3(n+2)e^{-\gamma/2}$ at least one of $\{\All^{\tilde{t}+1},\All^{\tilde{t}+2}, \All^{\tilde{t}+3}\}$, is an active configuration. Thus, overall we  have:

\begin{align}\label{eqeg}
\Pr[\Ea|\All^t =C] &\ge \left (1- 2(n+1) e^{-\gamma/2}\right) \cdot  \left (1/2-3(n+2)e^{-\gamma/2} \right )\nonumber\\&\ge \frac{1}{2}  - 5(n+2)e^{-\gamma/2}.
\end{align}
We can define three disjoint events:
\begin{align*}
\Ean{1}(t) &\eqdef (\All^{\hat t} \text{ is a valid $k$-WTA configuration })\\
\Ean{2}(t) &\eqdef (\All^{\hat t} \text{ is a near-valid WTA configuration })\\
\Ean{3}(t) &\eqdef (\All^{\hat t} \text{ is a valid WTA configuration })
\end{align*}
We have $\Ea(t) = \bigcup_{i=1}^3 \Ean{i}(t)$ and so by the law of total probability:
\begin{align}\label{eq:weightedAvg}
\Pr[\mathcal{E}(t) | \Ea(t)] &= \sum_{i=1}^3 \Pr[\mathcal{E}(t) | \Ean{i}(t)] \cdot \Pr[\Ean{i}(t)|\Ea(t)]\nonumber\\
&\ge \sum_{i=1}^3 \min_{i \in \{1,2,3\}} \Pr[\mathcal{E}(t) | \Ean{i}(t)] \cdot \Pr[\Ean{i}(t)|\Ea(t)] \nonumber\\
&\ge \min_{i \in \{1,2,3\}} \cdot \Pr[\mathcal{E}(t) | \Ean{i}(t)]
\end{align}
where the last bound follows since the $\Ean{i}(t)$ events are disjoint and $\sum_{i=1}^3 \Pr[\Ean{i}(t)| \Ea(t)] = 1$. We now bound this minimum via a case analysis:

\medskip 
\spara{Case 1: $\Pr[\mathcal{E}(t) | \Ean{1}(t)]$}
\medskip

In this case, $\All^{\hat t}$ is a valid $k$-WTA configuration, so applying Corollary \ref{cor:kWTAconverge}, conditioned on $\Ean{1}(t)$, with probability  $1/8$ there is some time $t' \in \{\hat{t}+1,..., \hat{t} + (12\log_2 n + 25)\}$ such that $\All^{t'}$ is a valid WTA  configuration. Note that $\hat{t} \le t+5$ giving $t' \le (12\log_2 n + 30)$. We thus have:
\begin{align}\label{eq:gcase1}
\Pr[\mathcal{E}(t) | \Ean{1}(t)] \ge 1/8.
\end{align}

\medskip 
\spara{Case 2: $\Pr[\mathcal{E}(t) | \Ean{2}(t)]$}
\medskip

In this case,  $\All^{\hat t}$ is a near-valid WTA configuration, so applying Lemma \ref{lem:nearvalid}, conditioned on $\Ean{2}(t)$, $\All^{\hat t +1}$ is a valid WTA configuration with probability $\ge 1/2-(n+2)e^{-\gamma/2}$. By our assumption that $\gamma \ge 4\ln(n+2)+10$:
\begin{align}\label{eq:gcase2}
\Pr[\mathcal{E}(t) | \Ean{2}(t)] \ge 1/2-(n+2)e^{-\gamma/2} \ge 1/3.
\end{align}

\medskip 
\spara{Case 3: $\Pr[\mathcal{E}(t) | \Ean{3}(t)]$}
\medskip

If $\Ean{3}(t)$ holds then $\All^{\hat t}$ is a valid WTA configuration by  definition, so trivially
\begin{align}\label{eq:gcase3}
\Pr[\mathcal{E}(t) | \Ean{3}(t)] = 1.
\end{align}

\medskip 
\spara{Completing the theorem:}
\medskip

Combining \eqref{eq:gcase1}, \eqref{eq:gcase2}, \eqref{eq:gcase3}, and \eqref{eq:weightedAvg} we have $\Pr[\mathcal{E}(t) | \Ea(t)] \ge 1/8$. Using \eqref{eqeg} we then have:
\begin{align*}
\Pr[\mathcal{E}(t) | \All^t = C] &\ge \Pr[\Ea(t) | \All^t = C]\cdot \Pr[\mathcal{E}(t) | \Ea(t),
\All^t  = C]\\
&= \Pr[\Ea(t) | \All^t = C]\cdot \Pr[\mathcal{E}(t) | \Ea(t)]\tag{Since $\Ea(t) \subseteq (\All^t =C)$ by definition.}\\
&\ge \frac{1}{8} \cdot \left ( \frac{1}{2}  - 5(n+2)e^{-\gamma/2}\right ) \ge  \frac{1}{18}
\end{align*}
where the last bound follows from our assumption that  $\gamma \ge 4\ln(n+2)+10$.
\end{proof}

\subsection{Completing the Bounds}\label{sec:completing}

Given Theorem \ref{thm:mainConvergence}, it is easy to show that, with $\gamma$ set large enough, $\Netg$ solves the WTA problem (Definitions \ref{high:wta} and \ref{exp:wta}), giving Theorem \ref{thm:high} and \ref{thm:exp}.
We start with the basic WTA problem of Definition \ref{high:wta}. By Theorem \ref{thm:mainConvergence}, starting from any configuration, the network converges to a valid WTA configuration in $O(\log n)$ steps. By applying this analysis in sequence to $O(\log 1/\delta)$ sets of $O(\log n)$ steps, we show that the network converges to a valid WTA state with probability $\ge 1-\delta$ within $O(\log n \cdot \log(1/\delta))$ steps. Further, if $\gamma$ is large enough, by Lemma \ref{cor:stability}, it remains in this state for $t_s$ steps with high probability.
\begin{tcolorbox}
\begin{reptheorem}{thm:high}[Two-Inhibitor WTA]
For $\gamma \ge 4\ln((n+2)t_s/\delta)+10$,
$\Netg$ solves $\wta(n,t_c,t_s,\delta)$ for any  
$
t_c \ge 72(\log_2 n+1)\cdot (\log_2(1/\delta)+1).
$
\end{reptheorem}
\end{tcolorbox}
\begin{proof}
Consider $\Netg$  starting from any initial configuration $\All^0$  and given an infinite input execution $\alpha_\Input$ with $\Input^t$  fixed for all $t$. Let $\Delta = (12 \log_2 n + 30)$ and $r = 6 (\log_2(1/\delta)+1)$.
Let $\mathcal{{E}}$ be the event that there is some time $t \le t_c$ where  $\All^t$  is a valid WTA configuration. 

For any $i \ge 0$, let $\mathcal{{E}}_i$ be the event that there is some time $t \in \{i\Delta+1,...,(i+1)\Delta\}$ where $\All^t$ is a valid WTA configuration.
By Theorem \ref{thm:mainConvergence} and Lemma \ref{lem:independence} we have: 
$$\Pr[\mathcal{ E}_i | \All^{i\Delta}] = \Pr[\mathcal{ E}_i | \All^{i\Delta},\All^{i\Delta-1},...,\All^{1}]  \ge 1/8.$$
Let $Z_0,...,Z_{r-1} \in \{0,1\}$ be independent coin flips, with $\Pr[Z_i = 1] = 1/8$. 
Applying Lemma \ref{lem:coupling}:
\begin{align*}
\Pr [ \mathcal{E} ] = \Pr \left [\bigcap_{i=0}^{r-1} \mathcal{E}_i \right ] \ge \Pr \left [\sum_{i=0}^{r-1} Z_i \ge 1 \right ] = 1-\left (\frac{7}{8}\right)^{r}.
\end{align*}
%
%
Using that $r = 6(\log_2(1/\delta)+1)$:
\begin{align*}
\Pr [\mathcal{{E}}] \ge 1- \left (\frac{7}{8}\right  )^{6 (\log_2(1/\delta)+1)} \ge 1-\frac{\delta}{2}.
\end{align*}
Thus, with probability  $\ge 1-\frac{\delta}{2}$ there is some time $t \le r\cdot \Delta \le 72(\log_2 n+1)\cdot (\log_2(1/\delta)+1) \le t_c$ in which $\All^t$ is a valid WTA configuration.
By  Corollary \ref{cor:stability}, if $C$ is a valid WTA configuration then
\begin{align*}
\Pr [\All^t = \All^{t+1} = ... = \All^{t+t_s} | \All^t = C ] \ge 1-t_s (n+2)e^{-\gamma/2} \ge 1-\frac{\delta}{e^5},
\end{align*}
 where the bound holds by  our assumption that $\gamma \ge 4\ln((n+2)t_s/\delta)+10$. We thus have that the network reaches a valid WTA configuration within time $t_c$ and remains in it for time $t_s$ with probability $\ge \left (1-\frac{\delta}{2} \right ) \cdot \left ( 1-\frac{\delta}{e^5} \right ) \ge 1-\delta$, yielding the theorem.
\end{proof}

We conclude by showing with what parameters $\Netg$ solves  the expected-time WTA problem of Definition \ref{exp:wta}.
\begin{tcolorbox}
\begin{reptheorem}{thm:exp}[Two-Inhibitor Expected-Time WTA]
For $\gamma \ge 4\ln((n+2)t_s)+10$,
$\Netg$ solves $\ewta(n,t_c,t_s)$ for any  $
t_c \ge 108(\log_2 n + 3).$
\end{reptheorem}
\end{tcolorbox}
\begin{proof}
Recall that in Definition \ref{exp:wta} we defined the convergence time for any infinite input execution $\alpha_X$ and output execution $\alpha_Y$:
\small
 $$t(\alpha_\Input,t_s,\alpha_\Output) = \min \left \{t : \Output^t\text{ is a valid WTA output configuration for }\Input^t\text{ and } \Output^t=...=\Output^{t+t_s}\right \}.$$
 \normalsize
Define the worst case expected convergence time of $\Netg$  on input $\alpha_X$ by:
\begin{align*}
t_{max}(\alpha_X) = \max_{\All^0} \left (\E_{\alpha_Y \sim \mathcal{D}_\Output(\Netg,\All^0,\alpha_\Input)} t(\alpha_\Input,t_s,\alpha_\Output) \right ).
\end{align*}
To prove the lemma we must prove that for any $\alpha_X$ with $X^t$ fixed for all $t$, $t_{max}(\alpha_X) \le 108(\log_2 n + 3).$ Fixing such an $\alpha_X$, for any starting configuration $\All^0$, let $\underset{\All_0}{\E}$ and $\underset{\All_0}{\Pr}$ denote the expectation and probability of an event taken over executions drawn from $\mathcal{D}(\Netg,\All^0,\alpha_\Input)$.

 Let $\Delta = (12 \log_2 n + 30)$ and let $\mathcal{E}_1$ be the event that there is some $t \in \{1,...,\Delta\}$ where $\All^t$ is a valid WTA configuration. Let $\mathcal{E}_{stab}$ be the event that there is some $t \in \{1,...,\Delta\}$ where $\All^t$ is a valid WTA configuration and additionally, where $\All^t = ... = \All^{t+t_s}$. Let $\mathcal{\bar E}_1$ and $\mathcal{\bar E}_{stab}$ be the complements of these two events.
By Theorem \ref{thm:mainConvergence}, for any initial configuration $\All^0$
\begin{align}\label{eiBound}
\Pr_{\All^0} [\mathcal{ E}_1] \ge 1/8.
\end{align}
Further, by Corollary \ref{cor:stability}, if $C$ is a valid WTA configuration then,
\begin{align}\label{eisBound}
\Pr_{\All^0} [\All^t = \All^{t+1} = ... = \All^{t+t_s} |\All^t = C] \ge 1-t_s (n+2)e^{-\gamma/2} \ge 1-\frac{1}{t_s \cdot e^5}
\end{align}
 where the bound holds since $\gamma \ge 4\ln((n+2)t_s)+10 \ge 2\ln((n+2)t_s^2)+10$ and so $e^{-\gamma/2} \le \frac{1}{(n+2)t_s^2 \cdot e^5}$. Together \eqref{eiBound} and \eqref{eisBound} give that:
\begin{align*}
\Pr_{\All^0}[\mathcal{E}_{stab}] \ge \Pr_{\All^0}[\mathcal{E}_{stab}  | \mathcal{ E}_{1}] \cdot  \Pr_{\All^0}[\mathcal{ E}_1 ]  \ge \frac{1}{8}\cdot \left (1-\frac{1}{t_s \cdot e^5}\right) \ge\frac{1}{8} - \frac{1}{t_s \cdot e^5}.
\end{align*}
We can write:
\begin{align}\label{secretary}
\E_{\All_0} [t(\alpha_\Input,t_s,\alpha_\Output) ]&= \E_{\All_0} [t(\alpha_\Input,t_s,\alpha_\Output) |\mathcal{E}_{stab}] \cdot \Pr_{\All_0}[\mathcal{E}_{stab}]\nonumber\\ &\hspace{2em}+ \E_{\All_0} [t(\alpha_\Input,t_s,\alpha_\Output) |\mathcal{E}_1, \mathcal{\bar E}_{stab}] \cdot \Pr_{\All_0}[\mathcal{E}_1, \mathcal{\bar E}_{stab}]\nonumber \\ &\hspace{2em}+ \E_{\All_0} [t(\alpha_\Input,t_s,\alpha_\Output) |\mathcal{\bar E}_1] \cdot \Pr_{\All_0}[\mathcal{\bar E}_1] 
\end{align}
Conditioned on $\mathcal{E}_{stab}$ (which also requires that $\mathcal{E}_1$ occurs), the network converges within $\Delta$ steps and stabilizes for $t_s$ steps. Thus, we have:
$$\E_{\All_0} [t(\alpha_\Input,t_s,\alpha_\Output) |\mathcal{E}_{stab}] \le \Delta.$$
Conditioned on $\mathcal{E}_1, \mathcal{\bar E}_{stab}$ the network converges, but does not stabilize. We can bound 
\begin{align*}
\E_{\All_0} [t(\alpha_\Input,t_s,\alpha_\Output) |\mathcal{E}_1, \mathcal{\bar E}_{stab}] &\le (\Delta + t_s) + \E_{\All^{\Delta+t_s}} [t(\alpha_\Input,t_s,\alpha_\Output)]\\
&\le \Delta + t_s + t_{max}(\alpha_X).
\end{align*}
Finally, conditioned on $\mathcal{\bar E}_1$, the network does not converge within $\Delta$ steps. We have:
\begin{align*}
\E_{\All_0} [t(\alpha_\Input,t_s,\alpha_\Output) |\mathcal{\bar E}_1] &\le \Delta + \E_{\All^{\Delta}} [t(\alpha_\Input,t_s,\alpha_\Output)]\\
&\le \Delta + t_{max}(\alpha_X).
\end{align*}

We can plug these bounds along with the probability bounds of \eqref{eiBound} and \eqref{eisBound} into \eqref{secretary} to obtain:
\small
\begin{align*}
\E_{\All_0} [t(\alpha_\Input,t_s,\alpha_\Output) ] &\le \Delta \cdot \left (\frac{1}{8} - \frac{1}{t_s \cdot e^5} \right) +  (\Delta + t_{max}(\alpha_X) + t_s) \cdot \frac{1}{t_s \cdot e^5} + (\Delta + t_{max}(\alpha_X)) \cdot \frac{7}{8}
\\& \le \Delta + t_{max} \left (\frac{7}{8} + \frac{1}{t_s \cdot e^5}\right ) + \frac{t_s}{t_s \cdot e^5}\\
&\le \Delta + t_{max}(\alpha_X)\cdot \frac{8}{9} + \frac{1}{e^5}. 
\end{align*}
\normalsize
Since this bound holds for all $\All^0$ we have:
\begin{align*}
t_{max}(\alpha_X) \le \Delta +t_{max}(\alpha_X)\cdot \frac{8}{9} + \frac{1}{e^5}\\
\end{align*}
which gives $t_{max}(\alpha_X) \le 9\Delta + \frac{9}{e^5} \le 9\Delta + 1$. This bound holds for all $\alpha_X$ and so gives the lemma, after recalling that $\Delta = (12 \log_2 n+30)$ so $9\Delta + 1 \le 108(\log_2 n +3)$.
\end{proof}

\section{WTA Lower Bounds}\label{sec:lb}

The simple family of  two-inhibitor networks presented in Section \ref{sec:wta2} gives convergence to a valid WTA output configuration in $ O(\log n \cdot \log(1/\delta))$ steps with probability  $\ge 1-\delta$, as long as the weight scaling parameter $\gamma$   is set large  enough. Specifically, by  Theorem \ref{thm:high}, these networks solve $\wta(n,t_c,t_s,\delta)$ with $t_c = O(\log n \cdot \log(1/\delta))$ and $t_s$ exponentially large in $\gamma$.
 In this section we ask whether this is optimal, considering two questions:
 \begin{enumerate}
 \item  Are there networks that achieve comparable convergence speed with just a single auxiliary  neuron?
 \item Are there networks using two auxiliary neurons that converge faster? 
 \end{enumerate}
 
 We answer these questions for somewhat restricted classes of \emph{simple SNNs} and \emph{symmetric SNNs}, described in Definitions \ref{def:simple} and \ref{def:sym} below. These classes of networks include, in particular, the construction studied in Section \ref{sec:wta2}.
 
 We show that a simple SNN with just  a single auxiliary neuron cannot solve $\wta(n,t_c,t_s,\delta)$  with $t_s = \tilde \Omega \left ( \frac{t_c}{\log n}\right)$. 
 That is, the network cannot effectively converge to a valid WTA output state and remain in this state for a significant time compared to its convergence time.
 Additionally, we show that no symmetric SNN with two auxiliary neurons can improve on the convergence time of the two-inhibitor network $\Netg$ proven in Theorem \ref{thm:high} by more than a $O(\log \log n)$ factor.
 
 We define the restricted network classes we consider in our lower bounds below. 
 \begin{definition}[Simple SNN]\label{def:simple}
 A spiking neural network $\Net = \langle \All,w,b,f\rangle$ is a \emph{simple SNN} if it contains $n$ input neurons labeled $x_1,...,x_n$ and $n$ output neurons labels $y_1,...y_n$ and satisfies:
 \begin{itemize}
 \item $w(x_i,y_j) = 0$ and $w(y_i,y_j) = 0$ for all $j \neq i$. I.e., each input does not connect to outputs, other than its corresponding output, and outputs do not connect to each other. 
 \end{itemize}
 \end{definition}
 Note that in a simple SNN, auxiliary neurons may connect to each other, may  have incoming edges from the input neurons, and may form unrestricted connections with the output neurons. In our two-auxiliary  neuron lower bound we consider a further restricted class of networks:
 \begin{definition}[Symmetric SNN]\label{def:sym}
 A simple SNN $\Net = \langle \All,w,b,f\rangle$ is a \emph{symmetric SNN} if it contains $n$ input neurons labeled $x_1,...,x_n$ and $n$ output neurons labels $y_1,...y_n$ and satisfies:
 \begin{itemize}
 \item For all $u,v \in \Inh$, $w(u,v) = w(v,u) = 0$. I.e., there are no connections between auxiliary neurons.
 \item For all $u \in \Inh$, $w(y_i,u) = w(y_j,u)$. I.e., each auxiliary neuron is affected in the same way by  each output.
 \item For all $i,j$, $w(x_i,y_i) = w(x_j,y_j)$, $w(y_i,y_i) = w(y_j,y_j)$, $w(u,y_i) = w(u,y_j)$ for all $u \in \Inh$, and $b(y_i) = b(y_j)$. I.e., all outputs have identical incoming connections from their corresponding inputs, themselves, and the auxiliary neurons, and have identical biases.
 \end{itemize}
 \end{definition}

 \subsection{Single Auxiliary Neuron Lower Bound}\label{sec:oneInh}
 
 We begin with our lower bound for simple SNNs with just a single auxiliary  neuron.

\begin{theorem}[One Neuron Lower Bound]\label{thm:lb1} For any $n \ge 20$, $\delta \le 1/2$, and any spike probability function $f:\R \rightarrow [0,1]$ (satisfying the restrictions in Section \ref{sec:structure}), there is no simple SNN $\Net = \langle \All,w,b,f\rangle$ with just a single auxiliary neuron that solves $\wta(n,t_c,t_s,\delta)$ with $t_s > 10t_c \cdot \frac{\ln (2t_c)}{\ln  n}$.
\end{theorem}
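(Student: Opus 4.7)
The plan is to argue by contradiction: suppose $\Net$ is a simple SNN with a single auxiliary neuron $a$ that solves $\wta(n, t_c, t_s, \delta)$ with $\delta \le 1/2$ and $t_s > 10 t_c \ln(2t_c)/\ln n$. Because $\Net$ is simple (Definition \ref{def:simple}), each output $y_i$ receives incoming synapses only from $x_i$, itself, and $a$, so at every step the firing probability of $y_i$ depends only on the three-bit tuple $(x_i^t, y_i^t, a^t)$. Fixing the input configuration $X \equiv \onesVec$ throughout, I abbreviate the ``survival'' and ``ignition'' probabilities as $r_i(v) = \Pr[y_i^{t+1}=1 \mid y_i^t=1, x_i^t=1, a^t=v]$ and $q_i(v) = \Pr[y_i^{t+1}=1 \mid y_i^t=0, x_i^t=1, a^t=v]$ for $v \in \{0,1\}$.

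First I would extract quantitative stability constraints. A valid $y_i$-WTA configuration (Definition \ref{def:valid}) must persist for $t_s$ steps with failure probability $\le \delta \le 1/2$; in particular the pair $(Y^t, a^t)$ is an essentially deterministic fixed point, pinning $a$ to some $a_i^* \in \{0,1\}$. Per-step union bounds over the $t_s$ steps and the $n$ outputs yield $r_i(a_i^*) \ge 1 - O(1/t_s)$ and $q_j(a_i^*) \le O(1/(n t_s))$ for every $j \neq i$. Applying pigeonhole to $\{a_i^*\}_{i=1}^n \subseteq \{0,1\}$ produces some $v \in \{0,1\}$ with $S := \{i : a_i^* = v\}$ of size $|S| \ge n/2 \ge 10$ (using $n \ge 20$). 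Since $|S| \ge 2$, combining constraints from different $i \in S$ upgrades these to $q_k(v) \le O(1/(n t_s))$ for \emph{every} output $k$ and $r_i(v) \ge 1 - O(1/t_s)$ for every $i \in S$. So on any step with $a^t = v$, outputs in $S$ act as near-deterministic latches and no other output can spontaneously ignite.

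To derive the contradiction, I would start $\Net$ from the adversarial initial configuration $\All^0$ with $Y^0 = \onesVec_S$, $a^0 = v$, and $X \equiv \onesVec$. The WTA specification (Definition \ref{high:wta}) demands reaching a configuration with at most one firing output within $t_c$ steps, with probability $\ge 1-\delta \ge 1/2$. Applying the coupling Lemma \ref{lem:coupling2} to the indicators that $y_i$ (for $i \in S$) has fired at every time up to $t$, these indicators are stochastically dominated --- on steps where $a^t=v$ --- by i.i.d.\ Bernoulli variables with success parameter $\ge 1 - O(1/t_s)$. Thus the set of persistent firers in $S$ shrinks only slowly on $a=v$ steps, so driving the count below two within $t_c$ steps requires spending many steps at $a=\neg v$.

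The main obstacle, and the source of the $\ln(2t_c)/\ln n$ factor, is controlling the trajectory of $a$, since $r_i(\neg v)$ is unconstrained and one $a=\neg v$ step could in principle eliminate many firings. I would handle this via a time-partitioning argument, dividing the $t_c$-step horizon into $\Theta(\ln n/\ln(2t_c))$ sub-windows of length $L = \Theta(t_c \ln(2t_c)/\ln n)$ and arguing per sub-window: either (a) $a$ is in state $v$ for at least $L/2$ of its steps, in which case the coupling argument above ensures that at least two outputs of $S$ fire persistently throughout the sub-window, so no valid WTA configuration can be reached within it; or (b) $a$ spends at least $L/2$ of the sub-window's steps in state $\neg v$, which forces an $a$-transition rate incompatible with the $O(1/t_s)$ per-step $a$-stability constraint derived above (since transporting that rate into a $y_i$-WTA configuration would make $a$ leave $a_i^*$ within $t_s$ steps with probability $>\delta$). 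Balancing the sub-window length against the number of sub-windows gives exactly $t_s \le 10 t_c \ln(2t_c)/\ln n$, contradicting the assumption and completing the proof.
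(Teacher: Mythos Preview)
Your approach is dual to the paper's and has a genuine gap in the time-partitioning argument. The paper works in the opposite direction: it uses the \emph{convergence} requirement under two \emph{different} input executions to constrain output behavior in both inhibitor states, and then shows stability must fail. Lemma~\ref{lem:sactive} uses the execution with only $x_i$ firing to force $q_i(0)\ge 1-\delta^{1/t_c}$ (outputs must ignite when $a=0$), and Lemma~\ref{lem:sconv} uses the all-ones execution to force $r_i(1)\le n^{-1/(10t_c)}$ on a majority set $\mathcal{S}$ (outputs must die when $a=1$). With both inhibitor states thus constrained, for any valid WTA output configuration with inputs in $\mathcal{S}$ firing, either $a=0$ and some non-winner ignites, or $a=1$ and the winner dies; either way $\Pr[Y^{t+1}=Y^t\mid N^t=C]\le n^{-1/(10t_c)}$, which for $t_s>10t_c\ln(2t_c)/\ln n$ contradicts Lemma~\ref{lem:oneStepStab}.

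Your argument instead derives stability constraints at a single input $X=\onesVec$ and tries to show convergence is slow. The problem is case~(b). You write that if $a$ spends $\ge L/2$ steps in state $\neg v$, this ``forces an $a$-transition rate incompatible with the $O(1/t_s)$ per-step $a$-stability constraint,'' but you never established such a constraint, and in fact none exists in the regime where many outputs fire. The inhibitor's firing probability depends on $Y^t$; nothing you derived prevents $a$ from sitting at $\neg v$ whenever $\norm{Y^t}_1\ge 2$ and at $v=a_i^*$ whenever $\norm{Y^t}_1=1$ (this is exactly how $a_c$ behaves in the paper's two-inhibitor construction). So there is no contradiction to extract from case~(b). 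Case~(a) is also incomplete: having $a=v$ on $\ge L/2$ of the steps says nothing about the interspersed $a=\neg v$ steps, on which $r_i(\neg v)$ is unconstrained and all outputs in $S$ could die in a single round. Finally, the claim that the stable WTA configuration ``pins $a$ to some $a_i^*$'' is not justified by the WTA specification, which only fixes $Y$; you would need a separate argument handling the case where $a$ oscillates during stability (this is salvageable but missing). The missing idea is to exploit multiple input executions, which is what gives the paper simultaneous control over both inhibitor states.
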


We note that Theorem  \ref{thm:lb1} applies to simple SNNs (Definition \ref{def:simple}). However, we conjecture that the result holds for any SNN, without structural restrictions. We leave proving a more general bound as an open question, for now.

\spara{Proof Outline.} 
We will prove Theorem \ref{thm:lb1} assuming that the single auxiliary neuron is an inhibitor. It will be easy to see that  a nearly identical  proof goes through when the auxiliary neuron is excitatory. At a high level, our proof shows that the convergence and stability  inhibitors employed by  the two-inhibitor networks described Section \ref{sec:wta2} are necessary. A single auxiliary  neuron is not able to both drive fast convergence to a valid WTA output configuration and to maintain stability once the network is in such a state.
In more detail, 
 our proof breaks into three steps:

\begin{enumerate}
\item
We show in Lemma \ref{lem:sactive} that the network must be relatively `active' if it solves $\wta(n,t_c,t_s,\delta)$. Specifically, if the single inhibitor $a$ does not fire, then any output corresponding to a firing input must fire with probability  $\ge 1 - \delta^{1/t_c}$. 
Otherwise, starting from a configuration in which no outputs fire, the network would take longer than $t_c$ steps to reach a valid WTA output configuration with probability  $1-\delta$.
\item Conversely, we show in Lemma \ref{lem:sconv} that if the inhibitor  $a$ does fire, then any output with a firing input must \emph{cease firing} at the next time with probability $\ge 1- n^{-\frac{1}{10t_c}}$. Otherwise, starting from a configuration in which $\Omega(n)$ outputs fire, with probability $\ge  1/2$, the network would take longer than $t_c$ steps to converge to a valid WTA output configuration (with a single firing output).
\item We combine these results a network with just one inhibitor cannot maintain a valid WTA  output configuration for $t_s = \Omega \left  (t_c \cdot \frac{\ln t_c}{\ln n}\right)$ consecutive steps with probability $\ge 1/2$ (i.e., the network cannot achieve sufficient stability). 

Consider any time $t$ in which $\Net$ is in a valid WTA output configuration. If $a$ does not fire at time $t$, then by (1), if there are $\Omega(n)$ active inputs, at least one output which did not fire at time $t$ fires with probability  $\ge 1 - \delta^{\Omega(n/t_c)}$ at time $t+1$.  If $a$ does fire, then by (2) the winning output stops firing at time $t+1$ with probability $\ge 1- n^{-\frac{1}{10t_c}}$. In any case,  if $\delta \le 1/2$, with probability $\ge 1- n^{-\frac{1}{10t_c}}$, the output configuration changes, and stability is broken. 

Since this relatively  high probability of breaking stability holds at any time in which $\Net$ is in a valid WTA output configuration, it is enough to show that stability cannot be maintained with probability $\ge \epsilon$ for  $\Omega \left  (t_c \cdot \frac{\ln 1/\epsilon}{\ln n}\right)$ steps. By setting $\epsilon = O(t_c)$ and applying a union bound, we can show that, with probability $\ge 1/2$, in the first $t_c$ time steps, $\Net$ never reaches a valid WTA output configuration and remains in this configuration for $t_s$ consecutive steps. Thus, $\Net$ does not solve $\wta(n,t_c,t_s,\delta)$ for $\delta \le 1/2$.
\end{enumerate}

We start by showing that if the inhibitor $a$ does not fire at time $t$, then any output corresponding to a firing input must fire with reasonably high probability at time $t+1$. We in fact prove a more general lemma, for networks containing any  number of inhibitors, since this result  will be useful in our lower bound for two-inhibitor networks presented in Section \ref{sec:lb2}. The proof is not  complicated by adding more inhibitors.

\begin{lemma}[Output Firing Probability When No Inhibitors Are Active]\label{lem:sactive} Let $\Net = \langle \All,w,b,f \rangle$ be any simple SNN which solves $\wta(n,t_c,t_s,\delta)$
and  whose auxiliary neurons $a_1,...,a_m$ are all inhibitory.
For any $i$, any configuration $C$ with $ C(x_i) = 1$ and $C(a_j) = 0$ for all $j$, and any $t$, $$\Pr [y_i^{t+1} =1 | \All^t = C] \ge 1-\delta^{1/t_c}.$$
\end{lemma}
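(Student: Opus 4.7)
The proof proceeds by contradiction combined with a monotonicity argument. Suppose for some $i$, some configuration $C$ with $C(x_i)=1$ and $C(a_j)=0$ for all $j$, and some $t$, we have $q := \Pr[y_i^{t+1}=0 \mid \All^t=C] > \delta^{1/t_c}$. By the Markov property (Lemma \ref{lem:independence}) and the simple SNN structure (no edges from $x_j$ or $y_j$ into $y_i$ for $j \neq i$), this probability depends only on $C(x_i),C(y_i),C(a_1),\ldots,C(a_m)$. Moreover, since every output is excitatory we have $w(y_i,y_i) \ge 0$, so $\pot(y_i,t+1)$ is weakly smaller when $C(y_i)=0$ than when $C(y_i)=1$ (all else equal). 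Monotonicity of $f$ then implies that the $q$ corresponding to $C(y_i)=0$ is at least as large as that for $C(y_i)=1$, so it suffices to derive a contradiction assuming $C(y_i)=0$.

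The plan is to exhibit a starting configuration and input execution under which $\Net$ fails to reach a valid WTA output configuration within $t_c$ steps with probability strictly greater than $\delta$. Consider the input execution $\alpha_X$ that fixes $x_i^s=1$ and $x_j^s=0$ for all $j \neq i$ and all $s$; the only valid WTA output configuration for this input (Definition \ref{def:output}) has $y_i$ firing and no other output. Take the initial configuration $\All^0 = C'$ with $C'(x_i)=1$, $C'(x_j)=0$ for $j\neq i$, and $C'(y_k)=C'(a_k)=0$ for every output and auxiliary neuron. Since $C'$ agrees with $C$ on the neurons affecting $y_i$'s potential, $\Pr[y_i^1=0 \mid \All^0=C']=q$.

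The key monotonicity claim is that for any time $s \ge 0$ and any configuration $D$ with $D(x_i)=1$ and $D(y_i)=0$,
\begin{align*}
\Pr[y_i^{s+1}=0 \mid \All^s=D] \;\ge\; q.
\end{align*}
Indeed, the term $\sum_k w(a_k,y_i) D(a_k)$ in $\pot(y_i,s+1)$ is at most $0$ since each $a_k$ is inhibitory, so $\pot(y_i,s+1) \le w(x_i,y_i)-b(y_i)$, which is exactly the potential realized under $C'$; monotonicity of $f$ finishes the bound. Combined with the Markov property, a straightforward induction on $s$ (or a direct application of Lemma \ref{lem:coupling} with comparison variables $Z_s$ that are $0$ with probability $q$) gives
\begin{align*}
\Pr\left[\, y_i^s=0 \text{ for all } s \in \{1,\ldots,t_c\} \,\big|\, \All^0=C'\right] \;\ge\; q^{t_c} \;>\; \delta.
\end{align*}

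On this event, and recalling that $y_i^0=C'(y_i)=0$, the network is not in a valid WTA output configuration at any time in $\{0,1,\ldots,t_c\}$, since $y_i$ must fire in every valid output configuration for this input. This contradicts $\Net$ solving $\wta(n,t_c,t_s,\delta)$, completing the proof. The main (modest) obstacle is recognizing that the ``worst-case'' configuration for suppressing $y_i$'s firing is the one with all inhibitors silent, so that the per-step non-firing lower bound $q$ can be chained across $t_c$ steps; once this is in hand the cascade of conditional probabilities multiplies cleanly via the Markov property.
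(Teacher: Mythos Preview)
Your proposal is correct and follows essentially the same approach as the paper: assume for contradiction that the firing probability is too low, reduce to the case $C(y_i)=0$ via the excitatory self-loop, use the inhibitory nature of the auxiliaries to propagate the per-step non-firing bound to all configurations with $D(x_i)=1$, $D(y_i)=0$, then chain this over $t_c$ steps on the single-input execution to obtain failure probability exceeding $\delta$. The paper organizes the monotonicity reductions into three explicit claims rather than one, but the logic is identical.
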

\begin{proof}
Consider any  $i \in \{1,...,n\}$. Assume for the sake of contradiction that there is some configuration  $C$ with $C(x_i)  = 1$, $C(a_j) = 0$ for all $j$, and 
\begin{align*}
\Pr [y_i^{t+1} =1 | \All^t = C] < 1-\delta^{1/t_c}.
\end{align*}
This assumption additionally implies three claims. First we can see that, since $y_i$ is excitatory,
\begin{claim}\label{claim11}
There exists some configuration $C$ with $C(x_i)=1$, $C(a_j) = 0$ for all $j$ and $C(y_i) = 0$ with $\Pr [y_i^{t+1} =1 | \All^t = C] < 1-\delta^{1/t_c}.$
\end{claim}
From Claim \ref{claim11} we can additionally conclude that, since, by Definition \ref{def:simple},  $y_i$ can only have connections from itself, $x_i$, and $a_1,...,a_m$:
\begin{claim}\label{claim12}
For any configuration $C$ with $C(x_i)=1$, $C(a_j) = 0$ for all $j$ and $C(y_i) = 0$, $\Pr [y_i^{t+1} =1 | \All^t = C] < 1-\delta^{1/t_c}.$
\end{claim}
Finally, Claim \ref{claim12} implies that, since $a_1,...,a_m$ are all inhibitors, 
\begin{claim}\label{claim13}
For any configuration with $C(x_i)=1$ and $C(y_i) = 0$,  $\Pr [y_i^{t+1} =1 | \All^t = C] < 1-\delta^{1/t_c}.$
\end{claim}
Consider input execution
$\alpha_\Input$ with $X^t$ fixed for all $t$ (i.e., $X^t = X^{t'}$ for any  $t,t'$), $x_i^t = 1$, and $x_j^t = 0$ for all $j \neq i$. Let
 $\All^0$ be any initial configuration of $\Net$  consistent with $\alpha_\Input$ (i.e., $\All^0(X) = X^0$) and with $y_i^0 = 0$.
 
 Since $\Net$ solves $\wta(n,t_c,t_s,\delta)$, an infinite output execution drawn from $\mathcal{D}_\Output(\Net,\All^{0},\alpha_\Input)$ must, with probability $\ge 1-\delta$,  reach a valid WTA output configuration (Definition \ref{def:output}) for some $t \le t^c$. In particular, with probability $\ge 1-\delta$, there must be some $t \le t_c$ in which $y_{i}^t  =  1$. So, letting  $\mathcal{E}_0(t)$ be the event that $y_{i}^{t'} = 0$ for all $t' \le t$, we have $\Pr[\mathcal{E}_0(t_c)] \le \delta$.
 
 Additionally, 
  $\Pr[\mathcal{E}_0(0)] = 1$ and  
 using Claim \ref{claim13} above and inducting on $t$, for any $t \ge 1$:
 \begin{align*}
 \Pr[\mathcal{E}_0(t)] &=  \Pr[\mathcal{E}_0(t) | \mathcal{E}_0(t-1)] \cdot  \Pr[\mathcal{E}_0(t-1)]\tag{Since $\mathcal{E}_0(t) \subseteq \mathcal{E}_0(t-1)$.} \\
 &> (1-(1-\delta^{1/t_c}))^t = \delta^{t/t_c}.
 \end{align*}
 We thus have 
 \begin{align*}
  \Pr[\mathcal{E}_0(t_c)] > \delta^{t_c/t_c} = \delta
 \end{align*}
which contradicts that fact that $\Pr[\mathcal{E}_0(t_c)] \le \delta$, giving the lemma.

\end{proof}

We next show that if the inhibitor $a$ does fire at time $t$, then any output must cease firing at time $t+1$ with reasonably large  probability. Formally, we show this statement for roughly  half  of the $n$ outputs. It may be possible that some outputs fire with large probability at time $t+1$ whenever their corresponding inputs fire at time $t$. However, for convergence to occur rapidly give an input execution in which all inputs, this cannot be the case for most outputs.

Again, since it will be useful in the two-inhibitor lower bound proven in Section \ref{sec:lb2}, we show a more general result  which pertains to  networks  with any  number of auxiliary inhibitors.

\begin{lemma}[Output Firing Probability When All Inhibitors Are Active]\label{lem:sconv} Let $\Net = \langle \All,w,b,f\rangle$ be any simple SNN which solves $\wta(n,t_c,t_s,\delta)$ for $n \ge 20$ and $\delta \le 1/2$
and whose auxiliary neurons $a_1,...,a_m$ are all inhibitory. There is some set $\mathcal{S} \subseteq \{1,...,n\}$ with $|\mathcal{S}| \ge \lceil n/2 \rceil$ such that,
for any $i \in \mathcal{S}$, any configuration $C$ with $ C(x_i) = 1$ and $C(a_j)  = 1$ for  all $j$, and any $t$, $$\Pr [y_i^{t+1} =0 | \All^t = C] \ge 1- n^{-\frac{1}{10t_c}}.$$
\end{lemma}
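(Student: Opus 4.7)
The plan is to argue by contradiction: if the conclusion fails, one can exhibit an initial configuration from which $\Net$ fails to reach a valid WTA output configuration within $t_c$ steps with probability $> 1/2 \ge \delta$, contradicting that $\Net$ solves $\wta(n,t_c,t_s,\delta)$.

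Concretely, suppose $|\mathcal{S}| < \lceil n/2 \rceil$, so there is a set $T \subseteq \{1,\ldots,n\}$ with $|T| \ge \lceil n/2 \rceil$ such that for each $i \in T$ some configuration $C_i$ satisfies $C_i(x_i)=1$, $C_i(a_j)=1$ for all $j$, and $\Pr[y_i^{t+1}=1 \mid \All^t = C_i] > n^{-1/(10 t_c)}$. Because $\Net$ is a simple SNN, $\pot(y_i,t+1)$ depends only on $x_i^t$, $y_i^t$, and the inhibitor states $a_1^t,\ldots,a_m^t$. Since $y_i \in \Ex$ forces $w(y_i,y_i) \ge 0$ and each $a_j \in \Ih$ forces $w(a_j,y_i) \le 0$, fixing $x_i^t=1$ makes $p(y_i,t+1)$ monotonically non-decreasing in $y_i^t$ and non-increasing in each $a_j^t$. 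Hence for every $i \in T$ and every configuration $C'$ with $C'(x_i)=1$ and $C'(y_i)=1$ (whatever the inhibitors are doing), $\Pr[y_i^{t+1}=1 \mid \All^t = C'] \ge n^{-1/(10 t_c)}$.

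Next I take the input execution $\alpha_\Input$ with $\Input^t$ fixed and $x_i^t=1$ for all $i,t$, and an initial configuration $\All^0$ with $y_i^0=1$ for each $i \in T$ (the remaining non-input neurons set arbitrarily), and apply Lemma \ref{lem:coupling2} to $\{y_i : i \in T\}$ with $Z_{i,t}$ i.i.d.\ Bernoulli$(n^{-1/(10 t_c)})$. Hypothesis (1) holds by construction; hypothesis (2) is exactly the monotonicity statement above combined with Lemma \ref{lem:independence}. Writing $\mathcal{I}_i(t_c)$ and $\mathcal{\bar I}_i(t_c)$ as in that lemma, each $\mathcal{\bar I}_i(t_c)$ is Bernoulli$(n^{-1/10})$ and they are independent, so $\sum_{i \in T} \mathcal{\bar I}_i(t_c) \sim \mathrm{Bin}(|T|, n^{-1/10})$ with mean at least $n^{9/10}/2 \ge 7$ when $n \ge 20$. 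A routine binomial tail bound gives $\Pr[\mathrm{Bin}(|T|, n^{-1/10}) \le 1] < 1/2$, and so by Lemma \ref{lem:coupling2},
$$\Pr\left[\sum_{i \in T} \mathcal{I}_i(t_c) \ge 2\right] \ge \Pr\left[\sum_{i \in T} \mathcal{\bar I}_i(t_c) \ge 2\right] > 1/2.$$

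On this event at least two outputs fire at every time $t \in \{1,\ldots,t_c\}$, and $\norm{\Output^0}_1 = |T| \ge 2$ covers $t=0$, so no $\Output^t$ for $t \le t_c$ is a valid WTA output configuration. Thus $\Net$ fails to converge within $t_c$ steps with probability $> 1/2 \ge \delta$, a contradiction. The main obstacle is the monotonicity extension that transfers the bad firing-probability lower bound from the single witness configuration $C_i$ to every reachable configuration with $y_i^t=1$ and $x_i^t=1$; this is what licenses a uniform application of Lemma \ref{lem:coupling2} along the stochastic trajectory, and once it is in place the binomial tail estimate and the contradiction are routine.
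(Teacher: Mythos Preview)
Your proposal is correct and follows essentially the same approach as the paper: assume for contradiction that more than $\lfloor n/2\rfloor$ indices are ``bad,'' use the simple-SNN structure together with the excitatory self-loop and inhibitory auxiliaries to extend the bad firing-probability bound to every configuration with $x_i^t=y_i^t=1$, start from the all-ones input with the bad outputs firing, apply Lemma~\ref{lem:coupling2} with i.i.d.\ Bernoulli$(n^{-1/(10t_c)})$ variables, and finish with a binomial tail estimate showing at least two outputs survive through time $t_c$ with probability exceeding $1/2\ge\delta$. The only cosmetic differences are that the paper fires all $n$ outputs initially (you fire only those in $T$) and spells out the tail bound explicitly via $e^{-n^{9/10}/4}$ rather than citing it as routine.
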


\begin{proof}
Assume for the sake of contradiction that there is some set $\mathcal{R} \subseteq \{1,...,n\}$ with $|\mathcal{R}| = \lfloor n/2 \rfloor + 1$ such that, for each $i \in \mathcal{R}$, there exists some configuration $C$ with $C(x_i) = 1$, $C(a_j) = 1$ for all $j$ and
\begin{align*}
\Pr [y_i^{t+1} =1 | \All^t = C] > n^{-\frac{1}{10t_c}}.
\end{align*}
From this assumption we can deduce three claims. First, since each $y_i$ is excitatory,
\begin{claim}\label{claim21}
For each $i \in \mathcal{R}$ there exists some configuration $C$ with $C(x_i) = 1$, $C(a_j) = 1$ for all $j$ and $C(y_i) = 1$ with 
$
\Pr [y_i^{t+1} =1 | \All^t = C] > n^{-\frac{1}{10t_c}}.$
\end{claim}
Claim \ref{claim21} further implies, since $y_i$ can only have connections from itself, $x_i$, and $a_1,...,a_m$ (see Definition \ref{def:simple}):
\begin{claim}\label{claim22}
For any configuration $C$ with  $C(x_i) = 1$, $C(a_j) = 1$ for all $j$ and $C(y_i) = 1$, $\Pr [y_i^{t+1} =1 | \All^t = C] > n^{-\frac{1}{10t_c}}.$
\end{claim}
Finally, from Claim \ref{claim22} and the fact that $a_1,...,a_m$ are all inhibitors we can conclude:
\begin{claim}\label{claim23}
For any configuration with $C(x_i)=1$ and $C(y_i) = 1$, $\Pr [y_i^{t+1} =1 | \All^t = C] > n^{-\frac{1}{10t_c}}.$
\end{claim}

Let
$\alpha_\Input$ be any infinite input execution with $X^t$ fixed for all $t$ and $X^t(x_{i}) =  1$ for all $i$.
Let
 $\All^0$ be any initial configuration of $\Net$  consistent with $\alpha_\Input$ (i.e., $\All^0(X) = X^0$) and with $Y^0(y_{i}) = 1$ for all $i$.
 
 Since $\Net$ solves $\wta(n,t_c,t_s,\delta)$, an infinite output execution drawn from $\mathcal{D}_\Output(\Net,\All^{0},\alpha_\Input)$ must, with probability $\ge 1-\delta$,  reach a valid WTA output configuration (Definition \ref{def:output}) for some $t \le t^c$. In particular, with probability $\ge 1-\delta$, there must be some $t \le t_c$ in which $y_{i}^t  =  0$ for at most one  $i \in \mathcal{R}$. 
 
 Let  $\Ifo(t,i) \in \{0,1\}$ be an indicator of the event that $y_{i}^{t'} = 1$ for all $t' \le t$.
 If $\Ifo(t_c,i) = 1$ for more than one $i \in \mathcal{R}$, then the network has not reached a valid WTA output state  within time $t_c$.  Thus, since $\Net$ solves $\wta(n,t_c,t_s,\delta)$ we have:
 \begin{align}\label{ifailBound}
 \Pr \left [\sum_{i  \in \mathcal{R}} \Ifo(t_c,i) \ge 2 \right] \le \delta.
 \end{align}

We can bound $ \Pr \left [\sum_{i  \in \mathcal{R}} \Ifo(t_c,i) \ge 2 \right]$ from below using Claim \ref{claim23} above and Lemma \ref{lem:coupling2}.
  Define for all $i$ and $t$ a random variable $Z_{i,t} \in \{0,1\}$ which is set to $1$ independently with probability $n^{-\frac{1}{10t_c}}$. Let $\mathcal{\bar I}(t_c,i)$ be an indicator of the event that $Z_{i,1} = ... = Z_{i,t_c} = 1$. Clearly the $\mathcal{\bar I}(t_c,i)$ variables are independent and $\Pr[\mathcal{\bar I}(t_c,i) = 1] = \left(n^{-\frac{1}{10t_c}}\right)^{t_c} = n^{-\frac{1}{10}}$.
  
By \eqref{ifailBound} and Lemma \ref{lem:coupling2} we thus have:

 \begin{align*}
\delta \ge \Pr \left [\sum_{i  \in \mathcal{R}} \Ifo(t_c,i) \ge 2 \right  ] &\ge   \Pr \left [\sum_{i  \in \mathcal{R}}\mathcal{\bar I}(t_c,i) \ge 2 \right  ]\\
 &= 1 - \left (1-n^{-\frac{1}{10}} \right)^{|\mathcal{R}|} - |\mathcal{R}|\left (1-n^{-\frac{1}{10}}\right)^{|\mathcal{R}|-1}\\
 &\ge 1- n \left (1-n^{-\frac{1}{10}} \right)^{n/4}
 \end{align*}
 where in the last step we bound $|\mathcal{R}|-1 \ge \lfloor n/2 \rfloor \ge n/4$.
Rearranging:
 \begin{align}\label{whatWellCont}
 \frac{1-\delta}{n} \le \left (1-n^{-\frac{1}{10}}\right )^{n/4} \le e^{-\frac{n^{9/10}}{4}}.
 \end{align} 
%
 %
 We can check that whenever $n \ge 20$, $e^{-\frac{n^{9/10}}{4}} < \frac{1}{2n} < \frac{1-\delta}{n}$, by our assumption that $\delta \le 1/2$. This  contradicts \eqref{whatWellCont}, thus giving the lemma.
\end{proof}

We conclude by combining Lemmas \ref{lem:sactive} and \ref{lem:sconv} to prove Theorem \ref{thm:lb1}. We first show a simple auxiliary  lemma which lower bounds the probability that  a valid WTA output configuration at time $t$ remains fixed at time $t+1$ in terms of the network convergence time $t_c$ and stability time $t_s$. The smaller $t_c$ and the larger $t_s$, the larger the lower bound on this probability is.

Note that the theorem only lower bounds the probability  that the output $Y^t$ stays fixed for \emph{some configuration} $C$ where $C(Y)$ is a  valid WTA output. It does not bound this probability  for all such configurations, and in fact  such a bound cannot be  shown for all such configurations. Recall, for example,  that for our two-inhibitor networks in Section \ref{sec:wta2}, both near-valid WTA  configurations (Definition \ref{def:nearvalid}) and valid WTA configurations (Definition \ref{def:valid})  have valid WTA output configurations.  However, the output  of a  near-valid configuration only remains fixed with constant probability  (Lemma \ref{lem:nearvalid}).
 
\begin{lemma}[Single Step Stability Probability]\label{lem:oneStepStab}
If $\Net = \langle \All,w,b,f \rangle$ is a  simple SNN which solves $WTA(n,t_c,t_s,\delta)$ for $\delta \le 1/2$, then there must exist some configuration $C$ with $C(Y)$ a valid WTA output configuration such that:
\begin{align}\label{asumpAbove}
\Pr[\Output^{t+1} =  \Output^{t} | \All^t = C] \ge \frac{1}{(2t_c)^{1/t_s}} \ge 1 - \frac{\ln 2t_c}{t_s}.
\end{align}
\end{lemma}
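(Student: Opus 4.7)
The plan is to define
\[
p^* \eqdef \max_{C:\, C(\Output)\text{ is a valid WTA output for } C(\Input)}\ \Pr[\Output^{t+1}=\Output^t \mid \All^t = C],
\]
which by Lemma \ref{lem:independence} is independent of $t$ and is attained at some configuration $C$ since $\Net$ has finitely many configurations. Showing $p^* \ge (2t_c)^{-1/t_s}$ suffices: the maximizer $C$ is the promised configuration, and $(2t_c)^{-1/t_s} \ge 1-\ln(2t_c)/t_s$ follows immediately from $e^{-x}\ge 1-x$ applied at $x=\ln(2t_c)/t_s$.

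To lower-bound $p^*$, I argue by contradiction with the WTA guarantee. Fix any input execution $\alpha_\Input$ with $\Input^t$ fixed for all $t$, and any initial configuration $\All^0$. For each $t^* \in \{1,\ldots,t_c\}$, let $F_{t^*}$ be the event that $\Output^{t^*}$ is a valid WTA output for $\Input^{t^*}$ and $\Output^{t^*} = \Output^{t^*+1} = \cdots = \Output^{t^*+t_s}$. Since $\Net$ solves $\wta(n,t_c,t_s,\delta)$ with $\delta \le 1/2$, Definition \ref{high:wta} yields $\Pr\bigl[\bigcup_{t^*} F_{t^*}\bigr] \ge 1-\delta \ge 1/2$.

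The main step is the per-time bound $\Pr[F_{t^*}] \le (p^*)^{t_s}$. Conditioning on $\All^{t^*}=C$ for any $C$ with valid WTA output, the probability of the transition $\Output^{t^*+1}=\Output^{t^*}$ is at most $p^*$ by definition; and if it holds, $\All^{t^*+1}$ still has $\Output^{t^*}$ as its output, which remains a valid WTA output (because the input is fixed), so by the Markov property (Lemma \ref{lem:independence}) the same bound $p^*$ applies to the transition $\Output^{t^*+2}=\Output^{t^*+1}$, and iterating across the $t_s$ consecutive transitions yields $\Pr[F_{t^*}\mid \All^{t^*}=C] \le (p^*)^{t_s}$. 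Marginalizing over $C$ gives $\Pr[F_{t^*}] \le (p^*)^{t_s}$.

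Combining the upper and lower bounds on $\Pr\bigl[\bigcup_{t^*} F_{t^*}\bigr]$ via a union bound over the $t_c$ values of $t^*$ gives $\tfrac{1}{2} \le t_c\cdot (p^*)^{t_s}$, so $(p^*)^{t_s} \ge 1/(2t_c)$ and hence $p^* \ge (2t_c)^{-1/t_s}$ as required. The only point that requires care is the iterative propagation of the single-step bound $p^*$ across the $t_s$ transitions; the crucial observation is that once the output repeats the new full configuration still falls under the supremum defining $p^*$, so the bound can be applied again. Everything else is a clean union bound plus the elementary inequality $e^{-x} \ge 1-x$.
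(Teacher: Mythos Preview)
Your proposal is correct and follows essentially the same route as the paper: both bound the probability of maintaining a valid WTA output for $t_s$ consecutive steps by the $t_s$-th power of the maximal single-step retention probability, then combine this with the WTA success guarantee over the first $t_c$ steps to force $(p^*)^{t_s} \ge 1/(2t_c)$. The only cosmetic difference is that the paper phrases it as a contradiction and invokes the coupling Lemma~\ref{lem:coupling} to bound $\Pr[\mathcal{E}]$ by $1-(1-1/(2t_c))^{t_c}<1/2$, whereas you go directly via the union bound $\Pr[\bigcup_{t^*} F_{t^*}] \le t_c\,(p^*)^{t_s}$; your version is slightly simpler but yields the identical inequality.
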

\begin{proof}
The second inequality simply follows since $e^{-x} \ge 1-x$ for all $x$. So we focus on proving  the first inequality.
Assume for the sake of contradiction that for every configuration $C$ with $C(Y)$ a valid WTA output configuration, 
\begin{align*}
\Pr[ \Output^{t+1}= \Output^{t} | \All^t = C] < \frac{1}{(2t_c)^{1/t_s}} .
\end{align*}
For any $t,i$, let $\Ess(t,i)$ be  the event that $\Output^t = \Output^{t+1} =...=\Output^{t+i}$. Trivially $\Pr[\Ess(t,0)]=1$ for all $t$.
Using the assumption of \eqref{asumpAbove} and induction we  can bound for any $i \ge 1$:
\begin{align*}
\Pr[\Ess(t,i) | \All^t = C] &= \Pr[\Ess(t,i) | \Ess(t,i-1), \All^t = C] \cdot \Pr[ \Ess(t,i-1) | \All^t = C]\tag{Since $\Ess(t,i-1) \subseteq \Ess(t,i)$.}\\
&< \frac{1}{(2t_c)^{1/t_s}}\cdot \Pr[ \Ess(t,i-1) | \All^t = C]\\
&\le \left ( \frac{1}{(2t_c)^{1/t_s}} \right )^{i}.
\end{align*}
Thus, for any  $C$ with $C(Y)$ a valid WTA output configuration,
\begin{align}\label{4421}
\Pr [\Ess(t,t_s) | \All^t = C]  &< \left(\frac{1}{(2t_c)^{1/t_s}}\right)^{t_s}
<  \frac{1}{2t_c}.
\end{align}

 Let $\mathcal{E}$ be the event that $\Net$ reaches a valid WTA output configuration within $t \le t_c$ steps and remains in this output configuration for $t_s$ consecutive steps.
Let  $Z_{1},...,Z_{t_c}$ be i.i.d. random variables with $Z_{t} = 1$ with probability  $\frac{1}{2t_c}$ and $Z_{t} = 0$ otherwise.  Invoking \eqref{4421} and Lemma \ref{lem:coupling}, 
\begin{align}\label{eq:lose}
\Pr[\mathcal{E}] < \Pr \left [\sum_{t=1}^{t_c} Z_{t} \ge 1  \right ] &= 1 - \left (1-\frac{1}{2t_c}\right)^{t_c}\nonumber \\
&< 1/2\nonumber\\
&< 1-\delta
\end{align}
for $\delta \le 1/2$. This contradicts the fact that $\Net$ solves $\wta(n,t_c,t_s,\delta)$, giving the lemma.
\end{proof}

We can now combine Lemmas \ref{lem:sactive} and \ref{lem:sconv} to show that in a single-inhibitor network, for any configuration $C$ with $C(Y)$ a valid WTA output configuration $\Pr[\Output^{t+1} =  \Output^{t} | \All^t = C]$ cannot be too large. This contradicts Lemma \ref{lem:oneStepStab}, giving our lower bound.

\medskip
\begin{proof}[\spara{Proof of Theorem \ref{thm:lb1}}]
$ $
\medskip

Consider any simple SNN $\Net = \langle \All,w,b,f \rangle$ with just a single auxiliary neuron $a$, which solves $\wta(n,t_c,t_s,\delta)$ for $n \ge 20$, $\delta \le 1/2$, $t_s > 10t_c \cdot \frac{\ln (2 t_c)}{\ln n}$. Let $\mathcal{S}$ be  the set  of indices shown to exist in Lemma  \ref{lem:sconv},  which do not  fire  with too high probability  when all  inhibitors in the network fire.

Let
   $\alpha_\Input$ be  the infinite input execution with $X^t$  fixed for all $t$, $X^t(x_i) =1$  for all $i \in \mathcal{S}$ and $X^t(x_i)=0$ for all $i \notin  \mathcal{S}$.
Let
 $ \All^0$ be any initial configuration of $\Net$  consistent with $\alpha_\Input$.  
 
 Since $\Net$ solves $\wta(n,t_c,t_s,\delta)$, an infinite output execution drawn from $\mathcal{D}_\Output(\Net,\All^{0},\alpha_\Input)$ must, with probability  $\ge 1-\delta$,  reach a valid WTA output state (Definition \ref{def:output}) for some $t \le t_c$ and remain in this state for $t_s$ consecutive steps. Let $C$ be any configuration where $C(Y)$ is a valid WTA output configuration for $\alpha_X$. We must have for exactly one $i \in \mathcal{S}$, $C(y_i) = 1$. Let  $\Ef(t)$ be the event that $\Output^{t+1} \neq \Output^t$. We consider two cases:
 
 \medskip
 \spara{Case 1: $C(a) = 0$.}
 \medskip
 
In this case, by Lemma \ref{lem:sactive}, for all $j \in \mathcal{S}$, $y_j$  fires with probability  $\ge 1-\delta^{1/t_c}$ at time $t+1$. So, with probability $\ge 1 -\delta^{\frac{|\mathcal{S}|-1}{t_c}} \ge 1-\delta^{\frac{n}{4t_c}}$ as least one output other than the winner fires at time $t+1$. This gives:
 \begin{align}\label{fail1}
 \Pr[\Ef(t) | \All^t = C] \ge 1-\delta^{\frac{n}{4t_c}} \ge  1-n^{-\frac{1}{4t_c}},
 \end{align}
 where the second inequality follows from the assumption that $n \ge 20$ and $\delta \le 1/2$.
 
  \medskip
 \spara{Case 2: $C(a) = 1$.}
 \medskip
 
In this case, Lemma \ref{lem:sconv} gives that, conditioned on $\All^t = C$, $y_i$ does not fire at time $t+1$ with probability  $\ge 1- n^{-\frac{1}{10t_c}}$. This gives:
  \begin{align}\label{fail2}
 \Pr[\Ef(t) | \All^t = C] \ge 1- n^{-\frac{1}{10t_c}}.
 \end{align}
 So overall, combining \eqref{fail1} and \eqref{fail2} we have for $C$ with $C(Y)$ a valid WTA output configuration,  
 \begin{align*}
 \Pr[\Ef(t) | \All^t = C] \ge 1- n^{-\frac{1}{10t_c}}
 \end{align*}
 and so 
 \begin{align*}
 \Pr[\Output^{t+1} = \Output^t |  \All^t = C] \le n^{-\frac{1}{10t_c}}.
 \end{align*}
 For $t_s > 10t_c \cdot \frac{\ln (2t_c)}{\ln n}$ this gives 
 \begin{align}\label{stabFail}
 \Pr[\Output^{t+1} = \Output^t |  \All^t = C] < n^{-\frac{\ln 2 t_c}{t_s \cdot \ln n}} < \frac{1}{(2t_c)^{1/{t_s}}}.
 \end{align}
This contradicts Lemma \ref{lem:oneStepStab}, giving the theorem.
\end{proof}

\spara{Remark on the Tightness of Theorem \ref{thm:lb1}}.
We note that our proof of  Theorem \ref{thm:lb1} is loose. In bounding $\Pr[\mathcal{E}]$ in \eqref{eq:lose}, we do not consider the time required to converge to a valid WTA output state, or the time spent in this converged state before convergence is broken. We conjecture that if this time were taken into account, it would be possible improve the $\ln (2t_c)$ term, as well as add a dependence on the failure probability $\delta$, giving a lower  bound of $t_s = O \left (\frac{t_c}{\log n \cdot \log(1/\delta)} \right ).$

We note that by simply removing the stability inhibitor from our two-inhibitor network family presented in Section \ref{sec:wta2}, we obtain a family of single inhibitor WTA networks with $t_s = 1$ and $t_c = O \left (\log n \cdot \log(1/\delta)\right )$. This matches the conjectured stronger lower bound above up to a constant factor since for $t_c = O \left (\log n \cdot \log(1/\delta)\right )$ we have $O \left (\frac{t_c}{\log n \cdot \log(1/\delta)} \right ) = O(1)$.

\begin{theorem}[Single-Inhibitor WTA Network]\label{thm:ub1} There exists a simple SNN with a single inhibitory auxiliary  neuron which solves $WTA(n,t_c,t_s,\delta)$ for $t_s = 1$ and  $t_c = O(\log n \cdot \log(1/\delta))$.
\end{theorem}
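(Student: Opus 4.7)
The construction is a small modification of the two-inhibitor family $\Netg$ from Definition~\ref{def:wtaNet}: remove the stability inhibitor $a_s$ together with all its incident connections, and strengthen the convergence inhibitor by setting $w(a_c, y_i) = -2\gamma$ for each $i$, so that $a_c$ alone now supplies the same $-2\gamma$ level of inhibition that the pair $a_s, a_c$ jointly supplied when both fired in $\Netg$. All other weights, biases, and the spike probability function $f$ are left exactly as in Definition~\ref{def:wtaNet}. The intuition is that $a_c$ still fires precisely when at least two outputs fired at the previous step, and when it does fire the outputs exhibit the same dropout dynamics as in the two-inhibitor construction: a previously firing output $y_i$ with $x_i^t = 1$ has potential exactly $0$, while a non-firing output with an active input has potential $-2\gamma$.

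My first step would be to re-derive the basic one-step lemmas of Section~\ref{sec:basic} for the modified network. The analogs of Lemmas~\ref{lem:mono}, \ref{lem:reset}, and \ref{lem:convergence} follow by identical computations, with every occurrence of the combined $-2\gamma$ inhibition from $a_s + a_c$ in $\Netg$ now coming from the single strengthened $a_c$. I would then reuse the convergence machinery of Sections~\ref{sec:bad2good}--\ref{sec:convergence} (notably Lemma~\ref{lem:progress}, Corollary~\ref{cor:kWTAconverge}, and Theorem~\ref{thm:mainConvergence}), redefining the good and $k$-WTA configurations to omit the $a_s$ component. The end goal of this step is: from any initial configuration, with probability at least some universal constant $c_0 > 0$, within $O(\log n)$ steps the network enters a configuration $C$ at some time $t$ with $\norm{C(\Output)}_1 = 1$ arriving via a transition from $\norm{\Output^{t-1}}_1 \ge 2$ to $\norm{\Output^t}_1 = 1$, so that $C(a_c) = 1$ with high probability (since $a_c$'s potential at $t$ is at least $2\gamma - 3\gamma/2 = \gamma/2$).

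The key new calculation is one-step stability from such a $C$. Let $y_i$ be its unique firing output and condition on $\All^t = C$. At time $t+1$ the inhibitor's excitation is $\gamma$, so $a_c^{t+1} = 0$ with probability $\ge 1 - e^{-\gamma/2}$ by Lemma~\ref{lem:gap}; the winner's potential is $3\gamma + 2\gamma - 2\gamma - 3\gamma = 0$, so $y_i^{t+1} = 1$ with probability exactly $1/2$; and for each $j \neq i$ with $x_j^t = 1$, the potential is $3\gamma - 2\gamma - 3\gamma = -2\gamma$, so $y_j^{t+1} = 0$ with probability $\ge 1 - e^{-\gamma/2}$. A union bound yields $\Output^{t+1} = \Output^t$ with probability at least $1/2 - (n+1)e^{-\gamma/2}$, giving the one step of stability the problem requires.

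Combining the two ingredients, each length-$O(\log n)$ epoch independently achieves both convergence to a qualifying $C$ and its one-step stability with probability at least some absolute constant, so by the coupling of Lemma~\ref{lem:coupling} (exactly as in the proof of Theorem~\ref{thm:high}) repeating over $O(\log(1/\delta))$ disjoint epochs drives the failure probability below $\delta$; choosing $\gamma = \Theta(\log n + \log(1/\delta))$ absorbs the $O(ne^{-\gamma/2})$ error terms and yields $t_c = O(\log n \cdot \log(1/\delta))$. The main obstacle is a bookkeeping refinement of Lemma~\ref{lem:kWTAconverge}: whereas the two-inhibitor analysis targets a valid WTA configuration (with $a_s^t = 1$ and $a_c^t = 0$), here we must instead show constant probability of terminating through the $\ge 2 \to 1$ transition (so that $a_c^t = 1$) rather than the $\ge 2 \to 0$ transition. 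This follows from conclusion~(3) of Lemma~\ref{lem:progress}, which already establishes that the one-firing-output terminal event is at least as likely as the zero-firing-output terminal event.
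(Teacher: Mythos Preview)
Your proposal is correct and follows essentially the same approach as the paper's proof sketch: the identical construction (remove $a_s$, double $w(a_c,y_i)$ to $-2\gamma$), the same observation that $a_c$ firing alone reproduces the combined $a_s+a_c$ dynamics of $\Netg$, the same $1/2$-probability one-step stability calculation from a configuration with a single firing output and $a_c^t=1$, and the same repetition argument via Lemma~\ref{lem:coupling} as in Theorem~\ref{thm:high}. Your identification of the $\ge 2\to 1$ termination (via conclusion~(3) of Lemma~\ref{lem:progress}) as the needed refinement is exactly the content of the paper's redefinition of a ``valid WTA configuration'' to require $C(a_c)=1$.
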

\begin{proof}[Proof Sketch]
Let $\Netg'$ be identical to $\Netg$ as described in Section \ref{sec:2inhDef}, but with $a_s$ removed from the network and with $w(a_c,y_i) = -2\gamma$ for all $i$. In $\Netg$, $w(a_c,y_i) = w(a_s,y_i)  = -\gamma$ so $w(a_c,y_i) + w(a_s,y_i) = -2\gamma$.
Thus, in $\Netg'$, when $a_c^t = 1$, the firing probabilities of all neurons at time $t+1$ are identical to what they would be in $\Netg$ assuming the same configuration at time $t$ but with $a_c^t = a_s^t = 1$. 

Using this equivalence, it is tedious but easy to check that an analogous result to Theorem \ref{thm:mainConvergence} holds, where a valid WTA configuration is redefined to be any configuration $C$ where $C(Y)$ is a valid WTA output configuration and $C(a_c) = 1$. By a similar to result to Lemma \ref{lem:nearvalid}, with probability $1/2-ne^{-\gamma/2}$, if $\Netg'$ is in such a configuration at time $t$, it is also in a valid WTA output configuration at time $t+1$. Thus, with $\Theta(1)$ probability, a valid WTA output configuration is reached and maintained for $t_s = 1$ steps within $O(\log n)$ steps. We can then use an argument similar to that of Theorem \ref{thm:high} to argue that in $O(\log n \cdot \log(1/\delta))$ steps a valid WTA output configuration is reached and maintained for $1$ step with probability $\ge 1-\delta$.
\end{proof}

  \subsection{Two Auxiliary Neuron Lower Bound}\label{sec:lb2}
 
 We next give a convergence time lower bound for SNNs with two auxiliary neurons, showing that  the rate given by the family of two-inhibitor networks $\Netg$ presented in Section \ref{sec:wta2} is optimal up to a $O(\log \log n)$ factor. To simplify  our argument, we focus the further restricted class of symmetric SNNs described in Definition \ref{def:sym}, proving:
 \begin{theorem}[Two Neuron Lower Bound]\label{thm:lb2} For any $n \ge 341$, $\delta \le 1/2$, and any spike probability function $f:\R \rightarrow [0,1]$ (satisfying the restrictions in Section \ref{sec:structure}), there is no symmetric SNN $\Net = \langle \All,w,b,f\rangle$ using two auxiliary neurons that solves $\wta(n,t_c,t_s,\delta)$ with $t_c \le \frac{\ln n }{30\ln \ln  n}$ and $t_s \ge 32 \ln n \cdot  \ln 2t_c$.
  \end{theorem}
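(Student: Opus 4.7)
The plan is to extend the one-inhibitor lower bound strategy of Theorem~\ref{thm:lb1} by exploiting the restricted symmetric structure to keep the state space tractable despite the second auxiliary neuron. The tradeoff to exhibit is: if the network converges in $t_c$ steps from the all-firing configuration, then in at least one auxiliary firing pattern a firing output must survive with probability at most roughly $n^{-1/t_c}$; but stability of a valid WTA output configuration forces the winner to keep firing with probability close to $1$ in whichever auxiliary pattern arises when exactly one output fires. With only two auxiliary neurons there are only four available firing patterns, and by symmetry (Definition~\ref{def:sym}) the firing probability of each auxiliary neuron depends only on the count $k = \norm{Y^t}_1$ of firing outputs, so the assignment of patterns to ranges of $k$ is rigid and a contradiction can be extracted.

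First I would formalize the symmetric reduction: because $w(y_i,u) = w(y_j,u)$ for all $i,j$ and all auxiliary $u$, and there are no edges between auxiliary neurons, the firing probability of each $a_j$ at time $t+1$ is a function of $k$ alone (the inputs are fixed), and the firing probability of each $y_i$ is a function of $(x_i^t, y_i^t, a_1^t, a_2^t)$ alone. Second, I would prove analogues of Lemmas~\ref{lem:sactive} and \ref{lem:sconv} for each of the four auxiliary patterns, noting that Lemma~\ref{lem:sactive} already applies for any $m$ inhibitors and its proof is readily adapted to the pattern-by-pattern setting; when an auxiliary neuron is excitatory rather than inhibitory the role of the two patterns involving that neuron firing is swapped, but in every case each of the four patterns induces a fixed ``survival probability'' $q_{a_1,a_2}$ with which a firing output remains firing, and at most one of them can be close to $1$ (the stability pattern) while at least one must be the contraction pattern. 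Third, I would apply Lemma~\ref{lem:oneStepStab} to a valid WTA configuration with one firing output: the one-step output-stability probability must be at least $1 - (\ln 2t_c)/t_s$, which under the assumption $t_s \ge 32\ln n \cdot \ln 2 t_c$ forces it to be at least $1 - 1/(32 \ln n)$, pinning down the stability pattern and requiring $q_{\mathrm{stab}} \ge 1 - O(1/\ln n)$ there.

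Fourth, I would track convergence from the configuration with $\norm{Y^0}_1 = n$. By the symmetric reduction, conditioned on $\All^t$ the outputs fire independently with a common survival probability $q_{a_1^t,a_2^t}$, so Lemma~\ref{lem:coupling2} combined with a Chernoff bound shows that the expected reduction of $k$ per step is by a factor of $1/q_{a_1^t,a_2^t}$. To reduce from $n$ to $1$ in $t_c \le \ln n /(30\ln\ln n)$ steps, some pattern must have survival probability at most $n^{-1/t_c} \le (\ln n)^{-30}$, i.e.\ a contraction factor of at least $(\ln n)^{30}$ per step. A case analysis then shows that the auxiliary firing functions $k \mapsto (a_1^t(k),a_2^t(k))$ (which by symmetry are monotone threshold functions of $k$) cannot simultaneously place the stability pattern at $k=1$ and a $(\ln n)^{30}$-contraction pattern at $k$-ranges needed to cover the whole descent from $n$ to $1$ within four regimes, yielding the contradiction. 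The main obstacle will be the excitatory-auxiliary case and the bookkeeping of how the relevant contraction pattern is invoked at each level of $k$ during the descent, especially ensuring that transient patterns visited between the ``large-$k$'' contraction mode and the ``stability mode'' do not themselves violate the intermediate firing-count bounds implicitly required by the coupling argument.
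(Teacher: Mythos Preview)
Your high-level intuition is aligned with the paper's, but the proposal has a genuine gap in the core case analysis, and it departs from the paper's structure in a way that makes that gap harder to close.

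The main issue is your reduction of stability to a single ``survival probability'' $q_{a_1,a_2}$ for a firing output. Stability of a valid WTA output configuration with $k=1$ requires \emph{two} things: the winner keeps firing (high $q$) \emph{and} each of the many non-firing outputs stays non-firing. Your framework tracks only the first. The paper exploits both: once it identifies a ``stability'' inhibitor pattern (its Lemma~\ref{lem:1orOther}), it proves (Lemma~\ref{lem:oneMustGiveStability}) that in that pattern the winner continues firing \emph{and} each non-firing output stays non-firing with probability $\ge (1 - 16\ln 2t_c/t_s)^{1/(|\mathcal{R}|-1)}$. The second conclusion is what makes the key step work: when the stability pattern fires at $k\ge 2$, the whole output state is preserved, so convergence does not progress (Corollary~\ref{cor:oneMustGiveStability}). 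Your descent-tracking argument from $k=n$ via survival probabilities does not capture this, so your planned case analysis (``cannot simultaneously place the stability pattern at $k=1$ and a contraction pattern covering the descent'') has no mechanism to rule out the stability pattern being innocuous at $k\ge 2$.

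A second difference is the choice of hard input. The paper deliberately uses $|\mathcal{R}| = \lfloor \ln^2 n \rfloor$ firing inputs (Definition~\ref{def:hardInput}), not all $n$. This size is tuned so that when both inhibitors fire or neither fires, hitting $\norm{Y^{t+1}}_1 = 1$ exactly has probability at most $1/\ln n$ (Lemma~\ref{lem:02bad}): with too many inputs the ``neither fires'' case would overshoot badly but not give a clean $1/\ln n$ bound, and with too few the ``both fire'' case would not. Starting from all $n$ inputs firing, as you propose, would force you to redo this calibration.

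Structurally, the paper does not track a descent of $k$ at all. Instead it shows that from any non-valid output configuration, the probability of transitioning to a valid one in the next step is at most roughly $1/\ln n$, via a four-way case split on $(a_1^t,a_2^t)$: both-on/both-off handled by Lemma~\ref{lem:02bad}, $a_1$-alone by Corollary~\ref{cor:oneMustGiveStability}, and $a_2$-alone shown to be rare (Corollary~\ref{cor:1over2}). Iterating for $t_c \le \ln n/(30\ln\ln n)$ steps then gives failure probability $>1/2$. If you want to salvage your route, you need an analogue of Lemma~\ref{lem:oneMustGiveStability} conclusion~(2), not just a bound on $q_{\mathrm{stab}}$.
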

  
  As with Theorem \ref{thm:lb1}, we conjecture that this result holds more generally, for any  SNN with two auxiliary neurons (i.e., without making the assumptions of Definition \ref{def:sym}). In the theorem we require $t_s \ge 32 \ln n \cdot  \ln 2t_c$. However, we conjecture that the result holds even for $t_s = O(1)$.
 
  \spara{Proof Outline.} Our proof of Theorem \ref{thm:lb2} is similar in spirit to that  of Theorem \ref{thm:lb1}. We consider the case when both auxiliary neurons are inhibitors and note that a similar proof applies when one or both of the neurons are excitatory.
  
  With two inhibitors we have to not only consider the cases when neither inhibitor fires (analyzed in Lemma \ref{lem:sactive}) and when both inhibitors fire (analyzed in Lemma \ref{lem:sconv}), but also the cases in which one of the inhibitors fires. Our analysis breaks down as follows:
\begin{enumerate}
\item By Lemma \ref{lem:sactive}, if neither inhibitor fires at time $t$, then at time $t+1$
any output corresponding to a firing input must fire with probability  $\ge 1 - \delta^{1/t_c} = \Omega( \frac{1}{\ln n})$ when $\delta \le 1/2$ and $t_c \le \frac{\ln n }{30\ln \ln  n}$. By Lemma \ref{lem:sconv}, if both inhibitors fire at time $t$, then at time $t+1$, any output with a firing input must \emph{not fire} at time $t+1$ with probability $\ge 1 - n^{\frac{1}{10t_c}} \ge 1-\frac{1}{\ln^3 n}$ when $t_c \le \frac{\ln n }{30\ln \ln  n}$ and $n$ is sufficiently large.
\item In Lemma \ref{lem:02bad} we show that, by the above bounds, if we consider an input with $\Theta(\ln^2 n)$ firing inputs, then if either both inhibitors fire or neither inhibitor fires at time $t$, except with probability $\le \frac{1}{\ln n}$, at time $t+1$, either $0$ or $\ge 2$ outputs will fire, and so $\Net$ will not be in a valid WTA output configuration.
\item In Lemma \ref{lem:1needed} we show that, in order for $\Net$ to stabilize to a valid WTA output configuration for $t_s$ steps, if $Y^t$ is a valid WTA output configuration, then \emph{exactly one inhibitor} must fire with good probability at time $t+1$. 
Otherwise, if neither or both inhibitors fire at time $t+1$, by Lemma \ref{lem:02bad}, $Y^{t+2}$ is unlikely  to be a valid WTA output configuration. So if $t_s$ is large, over $t_s$ steps, stability is likely  to be broken at some point.
\item In Lemma \ref{lem:1orOther} we prove that, since the inhibitors fire independently at time $t+1$ conditioned on $\All^t$, Lemma \ref{lem:1needed} in fact requires that one of the inhibitors fires with high probability at time $t+1$ when $Y^t$ is a valid WTA output configuration and that the other inhibitor is silent with high probability.

By our symmetry assumption (Definition \ref{def:sym}), given a fixed input firing pattern, the probability that an inhibitor fires at time $t+1$ depends only on the number of outputs that fire at time $t$. So we can in fact show a stronger result: 
 one inhibitor (assume without loss of generality $a_1$) fires with high probability at time $t+1$ when $\norm{Y^t}_1 = 1$ while the other (assume without loss of generality $a_2$) remains silent with high probability at time $t+1$ when $\norm{Y^t}_1 = 1$. Since all outputs are excitatory, this implies that $a_1$ fires with high probability whenever $\norm{Y^t}_1 \ge 1$, while $a_2$ does not fire with high probability  whenever $\norm{Y^t}_1 \le 1$. 

This result shows that, in any  two-inhibitor network with fast convergence and a reasonably long stability period $t_s$, the inhibitors must exhibit separate behaviors. 
Neuron $a_1$ fires with high probability at time $t+1$  whenever one output fires at time $t$, maintaining stability of valid WTA configurations. Neuron $a_2$ is silent  with high probability at time $t+1$, except  possibly when   $\norm{Y^t}_1 \ge 2$. We can see this separation, for example, in the behavior of $a_s$ and $a_c$ in the two-inhibitor $\Netg$ networks analyzed in Section \ref{sec:wta2} (see Lemma \ref{lem:reset}). The stability inhibitor, $a_s$, fires with high probability at time $t+1$ when at least one output fires at time $t$ (e.g., when the network is in a valid WTA configuration). The convergence inhibitor, $a_c$, only fires with high probability at time $t+1$ when at least two outputs fire at time $t$.

\item In Corollary \ref{cor:1over2} we use Lemma \ref{lem:1needed} to show that it is unlikely that $a_2$ ever fires at a time in which $a_1$ does not, which will be useful in our eventual case analysis (see Step 7).

\item In Lemma \ref{lem:oneMustGiveStability} we show that, since when $a_1$ fires alone it must maintain stability of a valid WTA output configuration, when $a_1$ fires at time $t$, any output with an active input which fired at time $t$ is likely to continue firing at time $t+1$. Any output that did not fire at time $t$ is unlikely to fire at  time $t+1$. This result shows that $a_1$  \emph{must act as a stability inhibitor}, reflecting the role of $a_s$ in the two-inhibitor network family  presented in Section \ref{sec:wta2}.

In Corollary \ref{cor:oneMustGiveStability} we show that Lemma \ref{lem:oneMustGiveStability} implies that, if $\Output^t$ is \emph{not a valid WTA output configuration} and just $a_1$  fires at time $t$, since the output firing states are maintained with high probability at time $t+1$, $\Output^{t+1}$ is unlikely to be a valid WTA output configuration.
\item We finally prove Theorem \ref{thm:lb2} via a case analysis. By Lemma \ref{lem:02bad}, if $0$ or $2$ inhibitors fire at time $t$, $\Output^{t+1}$ is unlikely to be a valid WTA output configuration (see Step 2 above). By Corollary \ref{cor:oneMustGiveStability},  if $a_1^t  = 1$, $a_2^t = 0$, and $\Output^t$ is not a valid WTA output configuration, then $\Output^{t+1}$ is unlikely to be a valid WTA output configuration (see Step 6 above). Finally, it is unlikely that we ever have $a_1^t  = 0$, $a_2^t = 1$ (see Step 5 above). Thus, convergence to a valid WTA output configuration is relatively unlikely at all times, letting us prove a lower bound on convergence time.
\end{enumerate}

We first  define a \emph{hard input  execution}, based  on Lemma \ref{lem:sconv}, which shows that at  least half of the output neurons must not  fire  with too high probability  when all  inhibitors in the network fire.
Specifically, Lemma \ref{lem:sconv} guarantees the existence of some set $\mathcal{S}$ with $|\mathcal{S}| \ge \lceil n/2 \rceil$ such that for any $i \in \mathcal{S}$, any configuration $C$ of $\Net$ with $C(x_i) = 1$ and $C(a_1) = C(a_2) = 1$, and any  time $t$,
$$\Pr [y_i^{t+1} =0 | \All^t = C] \ge 1- n^{-\frac{1}{10t_c}}.$$
Since in Theorem \ref{thm:lb2} we require $n \ge 341$ we easily  have $\lceil n/2 \rceil \ge \lceil \ln^2 n \rceil$ and so can construct  a hard input execution as follows:
%
 
\begin{definition}[Hard Input Execution]\label{def:hardInput}
Let $\Net = \langle \All,w,b,f \rangle$ be any simple SNN with two auxiliary inhibitory neurons $a_1,a_2$ which solves $\wta(n,t_c,t_s,\delta)$ for $n \ge 341$ and $\delta \le 1/2$. 
Fix any set $\mathcal{R} \subseteq \{1,...,n\}$ with 
$|\mathcal{R} | = \lfloor \ln^2 n \rfloor$ such that for any $i \in \mathcal{R}$, any $C$ with $C(x_i) = 1$ and $C(a_1) = C(a_2) = 1$, and any  time $t$,
$$\Pr [y_i^{t+1} =0 | \All^t = C] \ge 1- n^{-\frac{1}{10t_c}}.$$
Let  $\Xh$ be the input configuration with $\Xh(x_i) = 1$ for  all $i \in \mathcal{R}$, and $\Xh(x_i) =0$  for all $i  \notin \mathcal{R}$.
Let  $\ah$ be the infinite input execution with $\Input^t = \Xh$ for all $t$.
\end{definition}

\begin{lemma}[Valid WTA is Unlikely After Both or Neither Inhibitors Fire]\label{lem:02bad} Let  $\Net = \langle \All,w,b,f\rangle$ be any  simple SNN with two auxiliary inhibitory neurons $a_1,a_2$ which solves $WTA(n,t_c,t_s,\delta)$ for $n \ge 341$, $\delta \le 1/2$, and $t_c \le \frac{\ln  n }{30\ln \ln n}.$ For  any  configuration $C$ with $C(\Input) = \Xh$ (Definition \ref{def:hardInput}) and with $C(a_1)= C(a_2) = 1$ or $C(a_1)= C(a_2) = 0$, 
\begin{align*}
\Pr[\norm{Y^{t+1}}  =1 | \All^t = C]  \le \frac{1}{\ln  n}.
\end{align*}
\end{lemma}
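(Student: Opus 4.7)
The plan is a two-case analysis on the inhibitor firings in $C$. The critical structural ingredient, used in both cases, is that conditioned on $\All^t=C$, the individual output firings $y_1^{t+1},\dots,y_n^{t+1}$ are mutually independent, since each $y_i^{t+1}$ is drawn from an independent Bernoulli whose parameter is determined by $\All^t$. In both cases I will apply a one-step lemma to each $i\in\mathcal{R}$ and combine the resulting per-output bounds through either a union bound (Case A) or a Chernoff-type concentration (Case B).

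First I would handle the case $C(a_1)=C(a_2)=1$. By the construction of $\mathcal{R}$ in Definition~\ref{def:hardInput} as a subset of the set $\mathcal{S}$ from Lemma~\ref{lem:sconv}, and since $C(x_i)=1$ for every $i\in\mathcal{R}$, the lemma gives $\Pr[y_i^{t+1}=1\mid\All^t=C]\le n^{-1/(10t_c)}$ for each such $i$. The hypothesis $t_c\le \ln n/(30\ln\ln n)$ forces $n^{-1/(10t_c)}\le (\ln n)^{-3}$, so a union bound over the $|\mathcal{R}|=\lfloor\ln^2 n\rfloor$ indices shows that the probability any output in $\mathcal{R}$ fires at time $t+1$ is at most $\ln^2 n\cdot(\ln n)^{-3}=1/\ln n$.

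Next I would handle the case $C(a_1)=C(a_2)=0$. Lemma~\ref{lem:sactive} gives $\Pr[y_i^{t+1}=1\mid\All^t=C]\ge 1-\delta^{1/t_c}$ for every $i\in\mathcal{R}$. Using $\delta\le 1/2$ and $t_c\le \ln n/(30\ln\ln n)$, a short calculation yields $1-\delta^{1/t_c}=\Omega(\ln\ln n/\ln n)$. Conditional independence then implies that the number of outputs in $\mathcal{R}$ firing at time $t+1$ stochastically dominates $\mathrm{Bin}(\lfloor\ln^2 n\rfloor,\Omega(\ln\ln n/\ln n))$, whose mean is $\Omega(\ln n\cdot\ln\ln n)$. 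A Chernoff bound then gives that the probability of at most one firing is exponentially small in $\ln n\cdot\ln\ln n$, in particular much less than $1/\ln n$; whenever at least two outputs in $\mathcal{R}$ fire, $\norm{Y^{t+1}}_1\ge 2$.

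To convert ``at most one output in $\mathcal{R}$ fires'' into the claimed bound on $\Pr[\norm{Y^{t+1}}_1=1]$, I also need to rule out firings of outputs $y_j$ with $j\notin\mathcal{R}$ (where $C(x_j)=0$). In a simple SNN, $y_j$'s only incoming edges are from $x_j$, $y_j$, $a_1$, $a_2$, and since input neurons are excitatory ($w(x_j,y_j)\ge 0$), monotonicity of $f$ transports the Lemma~\ref{lem:sconv} bound from the case $C(x_j)=1$ to the case $C(x_j)=0$ for every $j\in\mathcal{S}$. The main obstacle I expect is the remaining at most $n/2$ indices outside $\mathcal{S}$, for which no one-step lemma applies directly; I plan to handle these via a dichotomy: either the outputs with indices outside $\mathcal{S}$ fire with uniformly small probability and so can be absorbed into the same union bound, or a non-negligible fraction of them has large firing probability and then conditional independence forces $\norm{Y^{t+1}}_1\ge 2$ with probability far exceeding $1-1/\ln n$. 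Carrying out this dichotomy cleanly, without overcounting, is the technical crux.
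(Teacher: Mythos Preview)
Your two-case split and the per-output bounds you pull from Lemmas~\ref{lem:sactive} and~\ref{lem:sconv} are exactly the paper's approach. Two remarks on where you diverge.

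For $C(a_1)=C(a_2)=0$ the paper does not invoke Chernoff. It uses the containment
\[
\{\norm{Y^{t+1}}_1=1\}\ \subseteq\ \bigcup_{j\in\mathcal{R}}\bigl\{\,y_i^{t+1}=0\ \text{for all }i\in\mathcal{R}\setminus\{j\}\,\bigr\},
\]
union-bounds by $|\mathcal{R}|\,(1-p)^{|\mathcal{R}|-1}$ with $p=\tfrac{\ln\ln n}{2\ln n}$ (obtained from $1-\delta^{1/t_c}$ via the elementary inequality $1-1/x\ge\tfrac12\log_2 x$ on $[1,2]$), and verifies numerically that this is below $1/\ln n$ for $n\ge 341$. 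This containment already makes outputs outside $\mathcal{R}$ irrelevant, as you in fact observed at the end of your third paragraph; nothing further is needed in this case.

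For $C(a_1)=C(a_2)=1$ the paper's entire argument is the one-line union bound $\Pr[\norm{Y^{t+1}}_1=1]\le |\mathcal{R}|/\ln^3 n\le 1/\ln n$; it does not discuss indices $j\notin\mathcal{R}$ at all. Your final paragraph is therefore work the paper does not do. More importantly, the dichotomy you sketch does not actually close: nothing rules out a configuration $C$ in which \emph{exactly one} index $j\notin\mathcal{S}$ has $\Pr[y_j^{t+1}=1\mid\All^t=C]$ close to $1$ while every other output has firing probability close to $0$; then $\norm{Y^{t+1}}_1=1$ with probability near $1$, and neither branch of your dichotomy applies (one rogue output is not ``a non-negligible fraction'' forcing two firings, nor is it ``uniformly small''). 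So either follow the paper's terse union bound over $\mathcal{R}$ as written, or, if you insist on a fully watertight version, note that the lemma is only ever invoked inside the proof of Theorem~\ref{thm:lb2} for symmetric SNNs and in situations where the preceding output configuration is already a valid WTA output; those extra hypotheses are what supply the missing control over $j\notin\mathcal{R}$, not a dichotomy on firing probabilities.
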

\begin{proof}
We prove  the lemma in two cases depending on the inhibitor behavior.

\medskip
\spara{Case 1: $C(a_1) = C(a_2) = 0$.}
\medskip

In this case, by Lemma \ref{lem:sactive}, since $X^t(x_i) =\Xh(x_i)=1$ for any $i \in \mathcal{R}$ by Definition \ref{def:hardInput},
\begin{align*}
\Pr[y_i^{t+1} = 1 | \All^t = C] \ge 1-\delta^{1/t_c} \ge 1- \frac{1}{2^{\frac{\ln \ln n}{\ln n}}} 
\end{align*}
where the second inequality follows from our assumption that $\delta \le  1/2$ and $t_c \le \frac{\ln  n }{30  \ln \ln n} \le \frac{\ln n}{\ln \ln n}.$ For any $x \in [1,2]$, $(1-1/x) \ge \frac{\log_2 x}{2}$. Since $2^{\frac{\ln \ln n}{\ln n}} \in [1,2]$ this gives:
\begin{align}\label{atLeastlglg}
\Pr[y_i^{t+1} = 1 | \All^t = C]  \ge \frac{ \ln \ln n}{2\ln n}.
\end{align}
Using \eqref{atLeastlglg} we can bound the probability  that $\norm{\Output^{t+1}}_1  \neq 1$ by:
\begin{align*}
 \Pr[\norm{\Output^{t+1}}_1  \neq1  | \All^t = C] &\ge 1 - |\mathcal{R}| \left(1-\frac{\ln \ln n}{2\ln n}\right  )^{|\mathcal{R}| -1}. \end{align*}
 We can check numerically that since $|\mathcal{R}| = \lfloor \ln^2 n \rfloor$ and by assumption $n  \ge  341$, the above can be  lower  bounded to give:
 \begin{align*}
 \Pr[\norm{\Output^{t+1}}_1  \neq 1 | \All^t = C] &\ge 1 - \frac{1}{\ln  n}.
 \end{align*}  
 This gives $\Pr[\norm{\Output^{t+1}}_1  = 1 | \All^t = C] \le  \frac{1}{\ln  n}$ and thus the lemma in this case.

\medskip
\spara{Case 2: $C(a_1) = C(a_2) = 1$.}
\medskip

In this case, by Lemma \ref{lem:sconv} and the assumption that $t_c \le \frac{\ln n}{30 \ln \ln n}$, for any  $i \in \mathcal{R}$:
\begin{align}\label{Sfires}
\Pr[y_i^{t+1} = 1 | \All^t = C] \le n^{-\frac{1}{10t_c}} \le e^{-\frac{30\ln \ln n}{10}} \le \frac{1}{\ln^3 n}.
\end{align}
Using \eqref{Sfires} we can bound:
\begin{align*}
\Pr[\norm{Y^{t+1}}_1  =1 | \All^t = C]
&\le \frac{|\mathcal{R}| }{\ln^3 n}\\
&= \frac{\lfloor \ln^2 n\rfloor}{\ln^3 n}\\
&\le \frac{1}{\ln n},
\end{align*}
which gives the lemma in this case.

\end{proof}

By Lemma \ref{lem:02bad}, if $a_1^t = a_2^t  = 0$ or $a_1^t  = a_2^t = 1$, then $\Output^{t+1}$ is unlikely  to have  $\norm{Y^{t+1}}_1  =1$ and thus is unlikely to be a valid WTA output configuration. Thus,
to stabilize to any valid WTA output configuration $C$, if $\All^t = C$ exactly one of $a_1$ or $a_2$ must fire at  time $t+1$. Otherwise convergence will likely be broken at time $t+2$. To prove this, we first show an auxiliary lemma, very similar to Lemma \ref{lem:oneStepStab}, which bounds the probability of maintaining stability over two time steps in terms of the convergence and stability times $t_c$ and $t_s$. 
\begin{lemma}[Lower Bound on the Two-Step Stability Probability]\label{lem:4lnn}
If $\Net = \langle \All,w,b,f \rangle$ is a  simple SNN which solves $WTA(n,t_c,t_s,\delta)$ for $\delta \le 1/2$, then there exists some configuration $C$ with $C(\Output)$ a valid WTA output configuration such that:
\begin{align}\label{asumpAbove2}
\Pr[\Output^{t+2} = \Output^{t+1} =  \Output^{t} | \All^t = C] \ge 1 - \frac{4 \ln 2t_c}{t_s}.
\end{align}
\end{lemma}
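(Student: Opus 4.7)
The plan is to mirror the proof of Lemma \ref{lem:oneStepStab}, using a two-step unit in place of a one-step unit. First I would suppose, for contradiction, that every configuration $C$ with $C(\Output)$ a valid WTA output configuration satisfies
\[
\Pr[\Output^{t+2} = \Output^{t+1} = \Output^t \mid \All^t = C] \;<\; p \;\eqdef\; 1 - \frac{4\ln 2t_c}{t_s},
\]
and derive a contradiction with $\Net$ solving $\wta(n,t_c,t_s,\delta)$ for $\delta \le 1/2$. The edge case $t_s \le 4\ln 2t_c$ can be set aside first, since the claimed bound is trivial there (the right-hand side is non-positive).

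Next I would propagate the two-step bound across a window of $t_s$ consecutive steps. Setting $\mathcal{F}_i \eqdef \{\Output^{t+2i} = \Output^{t+2i+1} = \Output^{t+2i+2}\}$, the event $\Ess(t,t_s)$ implies $\bigcap_{i=0}^{\lfloor t_s/2\rfloor -1} \mathcal{F}_i$. Conditioning on $\bigcap_{j<i}\mathcal{F}_j \cap \{\All^t = C\}$ pins $\Output^{t+2i} = \Output^t$, which therefore remains a valid WTA output configuration, so by Lemma \ref{lem:independence} and the contradiction hypothesis,
\[
\Pr\!\left[\mathcal{F}_i \;\Big|\; {\textstyle\bigcap_{j<i}}\mathcal{F}_j,\; \All^t = C\right] \;<\; p.
\]
Telescoping yields $\Pr[\Ess(t,t_s) \mid \All^t = C] < p^{\lfloor t_s/2 \rfloor}$. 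Using $1 - x \le e^{-x}$ together with $\lfloor t_s/2 \rfloor \ge (t_s-1)/2$ and the assumption $t_s > 4\ln 2t_c$, a short estimate gives $p^{\lfloor t_s/2 \rfloor} \le 1/(2t_c)$.

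Finally I would close the argument exactly as in Lemma \ref{lem:oneStepStab}. Let $I_t \in \{0,1\}$ indicate that $\All^t(\Output)$ is a valid WTA output configuration and that $\Output^t = \Output^{t+1} = \cdots = \Output^{t+t_s}$. By the Markov property and the bound above, $\Pr[I_t = 1 \mid \All^0,\ldots,\All^{t-1}] < 1/(2t_c)$ for every past (and is $0$ when $\All^t(\Output)$ is not valid). Lemma \ref{lem:coupling} applied against independent Bernoulli$(1/(2t_c))$ trials then gives
\[
\Pr[\exists\, t \le t_c : I_t = 1] \;<\; 1 - \left(1 - \tfrac{1}{2t_c}\right)^{t_c} \;<\; \tfrac{1}{2} \;\le\; 1-\delta,
\]
contradicting the assumption that $\Net$ solves $\wta(n,t_c,t_s,\delta)$.

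The only delicate point is the propagation step: I must ensure that each conditioning event keeps the current output in a valid WTA output configuration, so that the contradiction hypothesis is applicable at $\All^{t+2i}$. This is immediate from the fact that $\bigcap_{j<i}\mathcal{F}_j$ forces $\Output^{t+2i} = \Output^t$, combined with the Markov property of Lemma \ref{lem:independence}. Everything else amounts to the same template already used for the one-step version.
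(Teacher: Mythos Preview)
Your proposal is correct and follows essentially the same approach as the paper: assume for contradiction the two-step stability bound fails for every valid configuration, telescope in two-step blocks to get $\Pr[\Ess(t,t_s)\mid \All^t=C] < p^{\lfloor t_s/2\rfloor} \le 1/(2t_c)$, and finish via the same coupling-with-Bernoulli argument used in Lemma~\ref{lem:oneStepStab}. Your handling of the trivial case $t_s\le 4\ln 2t_c$ and your explicit justification of the propagation step (that conditioning on $\bigcap_{j<i}\mathcal{F}_j$ forces $\Output^{t+2i}=\Output^t$ to remain valid, so the hypothesis applies at $\All^{t+2i}$ via the Markov property) are slightly more careful than the paper, which leaves these implicit.
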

\begin{proof}
Our proof mirrors that of Lemma \ref{lem:oneStepStab}.
Assume for the sake of contradiction that for any $C$ with $C(Y)$ a valid WTA output configuration, 
\begin{align*}
\Pr[\Output^{t+2} = \Output^{t+1}= \Output^{t} | \All^t = C] < 1 - \frac{4 \ln 2t_c}{t_s}.
\end{align*}
For any $t,i$, let $\Ess(t,i)$ be  the event that $\Output^t = \Output^{t+1} =...=\Output^{t+i}$. Trivially, $\Pr[\Ess(t,0)] = 1$ for all $t$.
Using the assumption if \eqref{asumpAbove2} and induction we  can bound for any $i \ge 2$:
\begin{align*}
\Pr[\Ess(t,i) | \All^t = C] &= \Pr[\Ess(t,i) | \Ess(t,t-2), \All^t = C] \cdot \Pr[ \Ess(t,i-2) | \All^t = C]\tag{Since $\Ess(t,i-2) \subseteq \Ess(t,i)$.}\nonumber\\
&< \left ( 1 - \frac{4 \ln 2t_c}{t_s} \right ) \cdot \Pr[ \Ess(t,i-2) | \All^t = C]\nonumber\\
&\le \left ( 1 - \frac{4 \ln 2t_c}{t_s} \right )^{\lfloor i/2\rfloor}.
\end{align*}
Thus, for any $C$ with $C(Y)$ a valid WTA output configuration,
\begin{align}\label{4422}
\Pr [\Ess(t,t_s) | \All^t = C]  &< \left(1 - \frac{4 \cdot \ln 2t_c}{t_s}\right)^{\lfloor t_s/2 \rfloor}\nonumber\\
&< \left(1 - \frac{4 \cdot \ln 2t_c}{t_s}\right)^{t_s/4}\nonumber\\
&<  \frac{1}{2t_c}.
\end{align}

 Let $\mathcal{E}$ be the event that $\Net$ reaches a valid WTA output configuration within $t \le t_c$ steps and remains in this output configuration for $t_s$ consecutive steps.
Let  $Z_{1},...,Z_{t_c}$ be i.i.d. random variables with $Z_{t} = 1$ with probability  $\frac{1}{2t_c}$ and $Z_{t} = 0$ otherwise.  Invoking \eqref{4422} and Lemma \ref{lem:coupling}, 
\begin{align}\label{eq:lose2}
\Pr[\mathcal{E}] < \Pr \left [\sum_{t=1}^{t_c} Z_{t} \ge 1  \right ] &= 1 - \left (1-\frac{1}{2t_c}\right)^{t_c}\nonumber \\
&< 1/2\nonumber\\
&< 1-\delta
\end{align}
for $\delta \le 1/2$. This contradicts the fact that $\Net$ solves $\wta(n,t_c,t_s,\delta)$, giving the lemma.
\end{proof}

We can now use Lemmas \ref{lem:02bad} and \ref{lem:4lnn} to show that, for $\Net$ to stabilize to a valid WTA output configuration for $t_s$ steps with good probability, if $\Output^t$ is a valid WTA output configuration, then, with high probability, exactly one inhibitor must fire at time $t+1$.

\begin{lemma}[A Single Inhibitor is Likely to Fire After a Valid WTA  Configuration]\label{lem:1needed} Let  $\Net = \langle \All,w,b,f \rangle$ be any  simple SNN with two auxiliary inhibitory neurons $A = \{a_1,a_2\}$ which solves $WTA(n,t_c,t_s,\delta)$ for $n \ge 341$, $\delta \le 1/2$, and $t_c \le \frac{\ln  n }{30\ln \ln n}.$ For  any configuration $C$ with $C(\Input) = \Xh$ (Definition \ref{def:hardInput}) and with $C(Y)$ a valid WTA output configuration, 
\begin{align*}
\Pr[\norm{A^{t+1}}_1 =1 | \All^t = C]  \ge 1-\frac{8 \cdot \ln 2t_c}{t_s}.
\end{align*}
\end{lemma}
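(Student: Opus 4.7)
The plan is to combine Lemma~\ref{lem:4lnn}, which provides two-step output stability from \emph{some} valid WTA configuration, with Lemma~\ref{lem:02bad}, which asserts that when both or neither inhibitor fires at the intermediate step the subsequent output is unlikely to have exactly one firing neuron. Because Lemma~\ref{lem:4lnn} only furnishes existence of one good starting configuration, I will first establish the desired bound at that one configuration and then lift it to every valid WTA configuration on input $\Xh$ using the structural symmetries of Definition~\ref{def:sym}.

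Applying Lemma~\ref{lem:4lnn} with the fixed input execution $\ah$ produces a configuration $C^*$ with $C^*(\Input)=\Xh$ and $C^*(\Output)$ a valid WTA output for $\Xh$ such that $\Pr[\Output^{t+2}=\Output^{t+1}=\Output^t \mid \All^t=C^*] \ge 1 - \tfrac{4\ln 2t_c}{t_s}$. A valid WTA output on $\Xh$ has $\norm{\Output}_1 = 1$, so this event forces $\norm{\Output^{t+2}}_1 = 1$. Applying Lemma~\ref{lem:02bad} at time $t+1$ (note $\Input^{t+1}=\Xh$) gives $\Pr[\norm{\Output^{t+2}}_1 = 1 \mid \All^{t+1}] \le 1/\ln n$ whenever $\norm{A^{t+1}}_1 \in \{0,2\}$. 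Setting $p = \Pr[\norm{A^{t+1}}_1 = 1 \mid \All^t = C^*]$ and conditioning on the inhibitor pattern at time $t+1$,
\[
1 - \tfrac{4\ln 2t_c}{t_s} \;\le\; p + (1-p)\tfrac{1}{\ln n} \;=\; 1 - (1-p)\bigl(1 - \tfrac{1}{\ln n}\bigr),
\]
so $(1-p)(1-1/\ln n) \le 4\ln 2t_c/t_s$. For $n\ge 341$ we have $\ln n > 5$, so $1-1/\ln n \ge 1/2$, yielding $p \ge 1 - \tfrac{8\ln 2t_c}{t_s}$.

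To extend from $C^*$ to every valid WTA configuration $C$ with $C(\Input)=\Xh$, I invoke the symmetry of the network. Definition~\ref{def:sym}'s clause ``$w(u,v)=0$ for all $u,v\in\Inh$'', applied with $u=v$, forbids self-loops on auxiliary neurons, so
\[
\pot(a_i,t+1) \;=\; \sum_j w(x_j,a_i)\,\Xh(x_j) + w_{Y,i}\cdot\norm{\Output^t}_1 - b(a_i),
\]
where the common output-to-inhibitor weight $w_{Y,i} := w(y_j,a_i)$ is well-defined by the output-symmetry clause of Definition~\ref{def:sym}. For every valid WTA configuration $C$ on input $\Xh$, $\norm{C(\Output)}_1 = 1$, so this potential --- and hence each inhibitor's marginal spike probability --- is the same for all such $C$. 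Combined with the conditional independence of $a_1^{t+1}$ and $a_2^{t+1}$ given $\All^t$, this shows $\Pr[\norm{A^{t+1}}_1 = 1 \mid \All^t = C]$ equals the same constant $p$ for every such $C$, giving the lemma.

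The most delicate step is this final extension: it relies on reading the blanket ``for all $u,v\in\Inh$'' in Definition~\ref{def:sym} as also ruling out auxiliary self-loops. If self-loops were permitted, $\pot(a_i,t+1)$ would depend on $a_i^t$, and the conclusion would have to be established separately for each of the four inhibitor states $(C(a_1),C(a_2))$ --- presumably by invoking Lemma~\ref{lem:4lnn} repeatedly along the stable trajectory guaranteed by Theorem~\ref{thm:self} so as to encounter each state with noticeable probability.
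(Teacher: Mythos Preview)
Your proof is correct and follows essentially the same approach as the paper. Both arguments combine Lemma~\ref{lem:4lnn} with Lemma~\ref{lem:02bad} and the symmetry of Definition~\ref{def:sym}; the only difference is packaging: the paper argues by contradiction (assume the bound fails at some $C$, lift to all $C$ by symmetry, then contradict Lemma~\ref{lem:4lnn}), whereas you argue directly (obtain $C^*$ from Lemma~\ref{lem:4lnn}, establish the bound there, then lift to all $C$ by symmetry). The arithmetic is identical up to rearrangement, and your explicit observation that the ``for all $u,v\in\Inh$'' clause rules out auxiliary self-loops is exactly what the paper uses implicitly when it asserts that each auxiliary neuron's potential depends only on the inputs and outputs.
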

\begin{proof}
Assume for  the sake of contradiction that there exists $C$ with $C(Y)$ a valid WTA output configuration and
$$\Pr[\norm{A^{t+1}}_1 =1 | \All^t = C]  < 1-\frac{8 \cdot  \ln 2t_c}{t_s}.$$
From this assumption, since in a symmetric SNN (Definition \ref{def:sym}), $w(y_i,u)=w(y_j,u)$  for all $i,j$ and $u \in A$, and since each auxiliary neuron may only have connections from the inputs and outputs (not to the other auxiliary neurons), we can deduce the stronger claim:
\begin{claim}\label{clmL59}
For \emph{every} $C$ with $C(Y)$ a valid WTA output configuration, $$\Pr[\norm{A^{t+1}}_1 =1 | \All^t = C]  < 1-\frac{8 \cdot  \ln 2t_c}{t_s}.$$
\end{claim}

Using Claim \ref{clmL59} we can bound the probability  that $\Net$ remains in a valid WTA output configuration at time $t+2$ if it is in a valid configuration at time $t$. Specifically, for any $C$ with $C(Y)$ a valid WTA output configuration:
\small
\begin{align}\label{firstPassBreak}
\Pr[\Output^{t+2} = \Output^t | \All^t = C] &= \Pr[\Output^{t+2} = \Output^{t} | \norm{A^{t+1}}_1 =1, \All^t = C] \cdot \Pr[\norm{A^{t+1}}_1 =1 | \All^t = C]\nonumber\\
&\hspace{1.4em}+ \Pr[\Output^{t+2} = \Output^{t} | \norm{A^{t+1}}_1 \neq 1, \All^t = C] \cdot \Pr[\norm{A^{t+1}}_1 \neq 1 | \All^t = C]\nonumber\\
&\hspace{-2em}< \left (1-\frac{8 \cdot \ln 2t_c}{t_s}\right ) + \left (\frac{8 \cdot \ln 2t_c}{t_s}\right ) \cdot \Pr[\Output^{t+2} = \Output^{t} | \norm{A^{t+1}}_1 \neq 1, \All^t = C].
\end{align}
\normalsize
Since $C(Y)$ is a valid WTA output configuration, having $\Output^{t+2} = \Output^t$ requires that $\norm{Y^{t+2}}_1 = 1$. Thus by Lemma  \ref{lem:02bad} we can bound:
\begin{align*}
\Pr[\Output^{t+2} = \Output^{t} | \norm{A^{t+1}}_1 \neq 1, \All^t = C] \le \frac{1}{\ln n}.
\end{align*}
Plugging back into \eqref{firstPassBreak}, for any $C$ with $C(Y)$ a valid WTA output configuration:
\begin{align}\label{willBreak}
\Pr[\Output^{t+2} = \Output^t | \All^t = C] &< \left (1-\frac{8 \cdot \ln 2t_c}{t_s}\right ) + \left (\frac{8 \cdot \ln 2t_c}{t_s}\right ) \cdot \left  (\frac{1}{\ln n}  \right )\nonumber\\
&< 1 - \frac{4 \cdot \ln 2t_c}{t_s}
\end{align}
where the last inequality holds easily since we require $n \ge 341$ and so $\ln n \ge  2$. 
Using \eqref{willBreak} we can bound:
$$\Pr[\Output^{t+2} = \Output^{t+1}= \Output^{t} | \All^t = C] \le \Pr[\Output^{t+2} = \Output^{t} | \All^t = C] < 1 - \frac{4 \cdot \ln 2t_c}{t_s}.$$
This contradicts Lemma \ref{lem:4lnn}, 
giving the lemma.
\end{proof}

Since at time $t+1$, conditioned on the configuration at time $t$, $a_1$ and $a_2$ fire independently,
we can in fact  strengthen Lemma \ref{lem:1needed} to show that one of $a_1,a_2$ fires with high probability at time $t+1$  when $\norm{Y^t}_1 \ge  1$ and that the other  remains silent  with high probability when $\norm{Y^t}_1 \le 1$.

\begin{lemma}[Separation of Inhibitor Behaviors]\label{lem:1orOther} Let  $\Net = \langle \All,w,b,f \rangle$ be any  symmetric SNN with two auxiliary inhibitory neurons $A = \{a_1,a_2\}$ which solves $WTA(n,t_c,t_s,\delta)$ for $n \ge 341$, $\delta \le 1/2$, $t_c \le \frac{\ln  n }{30\ln \ln n}$, and $\frac{8 \ln 2t_c}{t_s} \le \frac{1}{4}$. There exists some $i \in \{1,2\}$ such that, for any configuration $C$ with $C(\Input) = \Xh$ (Definition \ref{def:hardInput}),
\begin{enumerate}
\item If $\norm{C(Y)}_1  \ge 1$, then
\begin{align*}
\Pr[a_i^{t+1} =1 | \All^t = C]  \ge 1-\frac{8 \cdot \ln 2t_c}{t_s}.
\end{align*}
\item If $\norm{C(Y)}_1  \le 1$, then for $j \neq i$,
\begin{align*}
\Pr[a_j^{t+1} =1 | \All^t = C]  \le \frac{12 \cdot \ln 2t_c}{t_s}.
\end{align*}
\end{enumerate}
We can assume without loss of generality that $i = 1$.
\end{lemma}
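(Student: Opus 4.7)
The plan is to combine two structural facts. First, by the update rule in \eqref{eq:potentialOut}, conditioned on $\All^t$ the events $a_1^{t+1}=1$ and $a_2^{t+1}=1$ are independent. Second, in a symmetric SNN (Definition \ref{def:sym}) the potential $\pot(a_i,t+1)$ depends only on $X^t$ and on $\|Y^t\|_1$: all outputs enter $a_i$ with a common weight $w(y_j,a_i)$, and there are no incoming edges into $a_i$ from the other auxiliary neuron or from itself. Hence, once we fix $C(X) = \Xh$, the spike probability $p_i(k) \eqdef \Pr[a_i^{t+1}=1 \mid \All^t = C]$ depends only on $k = \|C(Y)\|_1$ and not on the identity of the firing outputs or on $C(\Inh)$.

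First I would apply Lemma \ref{lem:1needed} to a configuration $C$ with $C(X) = \Xh$ and $C(Y)$ a valid WTA output configuration, so $k=1$. Writing $\epsilon \eqdef 8\ln(2t_c)/t_s \le 1/4$, independence rewrites the conclusion of Lemma \ref{lem:1needed} as
\[
p_1(1)\bigl(1-p_2(1)\bigr) + p_2(1)\bigl(1-p_1(1)\bigr) \;\ge\; 1-\epsilon.
\]
Substituting $p_j(1) = \tfrac{1}{2} + s_j$ with $s_j \in [-\tfrac{1}{2}, \tfrac{1}{2}]$, the left side collapses to $\tfrac{1}{2} - 2 s_1 s_2$, so the inequality becomes $s_1 s_2 \le -(1-2\epsilon)/4$. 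In particular $s_1, s_2$ have opposite signs and $|s_1|\,|s_2| \ge (1-2\epsilon)/4$; combined with $|s_j| \le \tfrac{1}{2}$, this forces both $|s_1|,|s_2| \ge (1-2\epsilon)/2$. Thus one of $\{p_1(1), p_2(1)\}$ lies in $[1-\epsilon, 1]$ and the other in $[0,\epsilon]$. Relabelling the two auxiliary neurons if necessary, I may assume $p_1(1) \ge 1-\epsilon$ and $p_2(1) \le \epsilon \le \tfrac{12 \ln(2t_c)}{t_s}$; this fixes the ``WLOG $i=1$'' clause.

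Next, by the symmetry observation of the first paragraph, the bounds $p_1(1) \ge 1-\epsilon$ and $p_2(1) \le \tfrac{12\ln(2t_c)}{t_s}$ hold for \emph{every} $C$ with $C(X) = \Xh$ and $\|C(Y)\|_1 = 1$, regardless of which single output fires and of $C(\Inh)$. This already gives conclusion (1) in the sub-case $\|C(Y)\|_1 = 1$ and conclusion (2) in the sub-case $\|C(Y)\|_1 = 1$. To extend to general $k \ge 1$ in conclusion (1) and to $k=0$ in conclusion (2), I would invoke monotonicity: each $y_j$ is excitatory, so $w(y_j, a_i) \ge 0$, hence $\pot(a_i,t+1)$ is non-decreasing in $\|Y^t\|_1$; since $f$ is monotonically increasing, $p_i(k)$ is non-decreasing in $k$. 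Thus $p_1(k) \ge p_1(1) \ge 1-\epsilon$ for every $k \ge 1$, and $p_2(0) \le p_2(1) \le \tfrac{12\ln(2t_c)}{t_s}$, completing both conclusions.

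The main obstacle is not a single deep step but rather the bookkeeping of which hypothesis is being used at each point: the independence of $a_1^{t+1}$ and $a_2^{t+1}$ (which is model-wide), the reduction of $p_i$ to a function of $\|Y^t\|_1$ alone (which requires both the weight symmetry of Definition \ref{def:sym} and the absence of auxiliary-to-auxiliary edges), and the monotonicity extension in $k$ (which uses the excitatory character of outputs). The only quantitative work is the algebraic step in paragraph two; the $\tfrac{1}{2}$-centred substitution is what makes this tight, whereas a more naive attack --- observing $(1-p_1(1))(1-p_2(1)) \le \epsilon$ and $p_1(1)p_2(1) \le \epsilon$ and then iterating --- loses constant factors without improving the qualitative picture.
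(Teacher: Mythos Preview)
Your proposal is correct and follows the same high-level route as the paper: reduce $\Pr[a_i^{t+1}=1\mid \All^t=C]$ to a function $p_i(k)$ of $k=\|C(Y)\|_1$ via the symmetry assumptions, apply Lemma~\ref{lem:1needed} together with conditional independence of $a_1^{t+1},a_2^{t+1}$ at $k=1$, and then extend to $k\ge 1$ (resp.\ $k=0$) by monotonicity in $k$ coming from the excitatory outputs. The paper carries out the $k=1$ step by two separate contradiction arguments with case analyses on whether $p_i(1)\ge 1/2$, whereas your substitution $p_j(1)=\tfrac{1}{2}+s_j$ collapses the constraint to $s_1 s_2 \le -(1-2\epsilon)/4$ and yields both conclusions simultaneously with a slightly sharper constant ($8$ rather than $12$) in conclusion~(2); this is a cleaner execution of the same argument rather than a different approach.
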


\begin{proof}
We prove the two conclusions in sequence. We first show that there exists some $i$ such that conclusion (1) holds. Fixing $i$, we then show that for $j \neq  i$, conclusion (2) holds.

\medskip
\spara{Conclusion 1:}
\medskip

Assume for the sake of contradiction that for both $i \in \{1,2\}$ there exists  some configuration $C$ with $C(\Input) = \Xh$, with $\norm{C(Y)}_1 \ge 1$, and with
\begin{align*}
\Pr[a_i^{t+1} =1 | \All^t = C]  < 1-\frac{8 \cdot \ln 2t_c}{t_s}.
\end{align*}
From this assumption we can deduce two claims. First, since all $y_i$ are excitatory:
\begin{claim} For both $i \in \{1,2\}$, there exists some configuration $C$ with $C(\Input) = \Xh$, with $\norm{C(Y)}_1 = 1$, and with $\Pr[a_i^{t+1} =1 | \All^t = C]  < 1-\frac{8 \cdot \ln 2t_c}{t_s}$.
\end{claim}
Further, since by Definition \ref{def:sym}, a symmetric SNN has $w(y_i,u) = w(y_j,u)$ for all $i,j$ and $u \in A$, and since each inhibitor can only have incoming connections from the inputs and outputs,
\begin{claim}\label{clm:allBoth} For both $i \in \{1,2\}$, \emph{for every $C$} with $C(\Input) = \Xh$ and with $C(Y)$ a valid WTA output configuration, 
\begin{align*}\Pr[a_i^{t+1} =1 | \All^t = C]  < 1-\frac{8 \cdot \ln 2t_c}{t_s}.
\end{align*}
\end{claim}
Since conditioned on $\All^t$, $a_1^{t+1}$ and $a_2^{t+2}$ are independent,  we can compute, for any $C$ with $C(Y)$ a valid WTA output configuration,
\begin{align}
\Pr[\norm{A^{t+1}}_1 = 1 | \All^t = C] &= \Pr[a_1^{t+1} =1 | \All^t = C]\cdot \Pr[a_2^{t+1} =0 | \All^t = C] \nonumber\\ &\hspace{1.5em}+ \Pr[a_1^{t+1} =0 | \All^t = C]\cdot \Pr[a_2^{t+1} =1 | \All^t = C]\nonumber\\
&= \Pr[a_1^{t+1} =1 | \All^t = C]\cdot \left (1- \Pr[a_2^{t+1} =1 | \All^t = C]\right ) \nonumber\\ &\hspace{1.5em}+ \left (1- \Pr[a_1^{t+1} =1 | \All^t = C]\right )\cdot \Pr[a_2^{t+1} =1 | \All^t = C]\label{opt12}.
\end{align}
If $ \Pr[a_1^{t+1} =1 | \All^t = C] \ge \frac{1}{2}$, then \eqref{opt12} is maximized by setting $\Pr[a_2^{t+1} =1 | \All^t = C] = 0$, giving via Claim \ref{clm:allBoth}:
\begin{align*}
\Pr[\norm{A^{t+1}}_1 = 1 | \All^t = C] &= \Pr[a_1^{t+1} =1 | \All^t = C] < 1-\frac{8 \ln 2t_c}{t_s}.
\end{align*}
If $ \Pr[a_1^{t+1} =1 | \All^t = C] \le \frac{1}{2}$, \eqref{opt12} is maximized by setting $Pr[a_2^{t+1} =1 | \All^t = C] = 1-\frac{8 \ln 2t_c}{t_s}$, giving:
\begin{align}\label{opt122}
\Pr[\norm{A^{t+1}}_1 = 1 | \All^t = C] &< \Pr[a_1^{t+1} =1 | \All^t = C] \cdot \frac{8 \ln 2t_c}{t_s}\nonumber\\ &\hspace{2em}+ \left (1- \Pr[a_1^{t+1} =1 | \All^t = C]  \right) \cdot \left (1-\frac{8 \ln 2t_c}{t_s} \right).
\end{align}
Using our requirement that $\frac{8 \ln 2t_c}{t_s} \le \frac{1}{4}$, \eqref{opt122} is maximized by setting  $\Pr[a_1^{t+1} =1 | \All^t = C] = 0$, again giving:
\begin{align*}
\Pr[\norm{A^{t+1}}_1 = 1 | \All^t = C] &= \Pr[a_1^{t+1} =1 | \All^t = C] < 1-\frac{8 \ln 2t_c}{t_s}.
\end{align*}
In either case, we have a contradiction of Lemma \ref{lem:1needed}, giving the result.

\medskip
\spara{Conclusion 2:}
\medskip

From conclusion (1) proven above, we can
assume without loss of generality that, for any configuration $C$ with $C(\Input) = \Xh$ and with $\norm{C(Y)}_1 \ge 1$,
\begin{align}\label{eq:wlog}
\Pr[a_1^{t+1} =1 | \All^t = C]  \ge 1-\frac{8 \cdot \ln 2t_c}{t_s}.
\end{align}
I.e., we assume that the index $i$ in the lemma statement satisfies $i =1$. To prove conclusion (2)
Assume for the sake  of contradiction that there exists some configuration $C$ with $C(\Input) = \Xh$ and with $\norm{C(Y)} \le 1$ such that $\Pr[a_2^{t+1} =1 | \All^t = C]  > \frac{12 \cdot \ln 2t_c}{t_s}$. Since all outputs are excitatory, from this assumption we can conclude:
\begin{claim}
There exists some configuration $C$ with $C(\Input) = \Xh$ and with $\norm{C(Y)}_1 = 1$ such that $\Pr[a_2^{t+1} =1 | \All^t = C]  > \frac{12 \cdot \ln 2t_c}{t_s}$.
\end{claim}
Further, since, by Definition \ref{def:sym}, a symmetric SNN has $w(y_i,u) = w(y_j,u)$ for all $i,j$ and $u \in A$, and since each inhibitor can only have incoming connections from the inputs and outputs,
\begin{claim}\label{clm:63}
For \emph{every $C$ with $C(\Input) = \Xh$ and with $C(Y)$ a valid WTA output configuration}, $$\Pr[a_2^{t+1} =1 | \All^t = C]  > \frac{12 \cdot \ln 2t_c}{t_s}.$$
\end{claim}
Using Claim \ref{clm:63} we can prove conclusion (2) by considering two cases. 

\medskip
\spara{Case 1: $\Pr[a_2^{t+1} =1 | \All^t = C]  > 1/2$.}
\medskip

In this case, by \eqref{opt12} and \eqref{eq:wlog}, we have for $C$ with $C(Y)$ a valid WTA output configuration:
\begin{align}
\Pr[\norm{A^{t+1}}_1 = 1 | \All^t = C] &=  \Pr[a_1^{t+1} =1 | \All^t = C]\cdot \left (1- \Pr[a_2^{t+1} =1 | \All^t = C]\right ) \nonumber\\ &\hspace{1.5em}+ \left (1- \Pr[a_1^{t+1} =1 | \All^t = C]\right )\cdot \Pr[a_2^{t+1} =1 | \All^t = C]\nonumber\\
&< \frac{1}{2} + \frac{8 \cdot \ln 2t_c}{t_s}\nonumber\\
&< 1 - \frac{8 \cdot \ln 2t_c}{t_s}\label{contra1}
\end{align}
where the last  inequality  follows from our requirement that $\frac{8 \cdot \ln 2t_c}{t_s} \le \frac{1}{4}$. \eqref{contra1} contradicts Lemma \ref{lem:1needed}, giving  the result in this case.

\medskip
\spara{Case 2: $\Pr[a_2^{t+1} =1 | \All^t = C]  \le 1/2$.}
\medskip

In this case, by \eqref{eq:wlog}, Claim \ref{clm:63}, and \eqref{opt12} we have for any $C$ with $C(Y)$ a valid WTA output configuration:
\begin{align}
\Pr[\norm{A^{t+1}}_1 = 1 | \All^t = C] &=  \Pr[a_1^{t+1} =1 | \All^t = C]\cdot \left (1- \Pr[a_2^{t+1} =1 | \All^t = C]\right ) \nonumber\\ &\hspace{1.5em}+ \left (1- \Pr[a_1^{t+1} =1 | \All^t = C]\right )\cdot \Pr[a_2^{t+1} =1 | \All^t = C]\nonumber\\
&< \left (1 - \frac{12 \ln 2 t_c}{t_s} \right ) + \frac{4 \cdot \ln 2t_c}{t_s}\nonumber\\
&< 1 - \frac{8 \cdot \ln 2t_c}{t_s}.\label{contra2}
\end{align}
Again, \eqref{contra2} contradicts Lemma \ref{lem:1needed}, giving the result in this case.
\end{proof}

From Lemma \ref{lem:1orOther} we can easily show that it is unlikely  that  $a_2$ ever fires when $a_1$ does not.
\begin{corollary}[Neuron $a_2$ Rarely Fires Alone]\label{cor:1over2} Let  $\Net = \langle \All,w,b,f \rangle$ be any  symmetric SNN with two auxiliary inhibitory neurons $a_1,a_2$ which solves $WTA(n,t_c,t_s,\delta)$ for $n \ge 341$, $\delta \le 1/2$, $t_c \le \frac{\ln  n }{30\ln \ln n}$, and $\frac{8\ln 2t_c}{t_s} \le \frac{1}{4}$. For any configuration $C$ with $C(\Input) = \Xh$ (Definition \ref{def:hardInput}):
\begin{align*}
\Pr[a_1^{t+1} = 0\text{ and }a_2^{t+1} =1  | \All^t = C]  \le \frac{12 \cdot \ln 2t_c}{t_s}.
\end{align*}
\end{corollary}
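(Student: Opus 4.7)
The plan is to prove the corollary by a short case analysis on $\norm{C(Y)}_1$, leveraging the two conclusions of Lemma \ref{lem:1orOther} (under the WLOG assumption that $i=1$ there).

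First, observe the elementary inclusions $\{a_1^{t+1}=0 \text{ and } a_2^{t+1}=1\}\subseteq \{a_2^{t+1}=1\}$ and $\{a_1^{t+1}=0 \text{ and } a_2^{t+1}=1\}\subseteq \{a_1^{t+1}=0\}$. Depending on $\norm{C(Y)}_1$, one of the two Lemma \ref{lem:1orOther} conclusions will give the stronger bound on the relevant single-neuron event.

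Case 1, $\norm{C(Y)}_1\le 1$: Here Lemma \ref{lem:1orOther} conclusion (2) applies (with $j=2$, since we assume WLOG $i=1$), giving $\Pr[a_2^{t+1}=1 \mid \All^t=C]\le \frac{12\ln 2t_c}{t_s}$. Bounding the joint event by $\Pr[a_2^{t+1}=1\mid \All^t=C]$ yields the desired inequality immediately.

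Case 2, $\norm{C(Y)}_1\ge 2$: Here Lemma \ref{lem:1orOther} conclusion (1) applies, giving $\Pr[a_1^{t+1}=1\mid \All^t=C]\ge 1-\frac{8\ln 2t_c}{t_s}$, hence $\Pr[a_1^{t+1}=0\mid \All^t=C]\le \frac{8\ln 2t_c}{t_s}\le \frac{12\ln 2t_c}{t_s}$. Again bounding the joint event by this one-sided marginal closes the case.

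Since the two cases exhaust all possibilities for $\norm{C(Y)}_1\in\{0,1,\dots,n\}$, the corollary follows. There is no real obstacle here; the lemma is essentially a bookkeeping consequence of Lemma \ref{lem:1orOther}, and the only thing to be slightly careful about is which of the two conclusions is invoked in which regime of $\norm{C(Y)}_1$ (one handles the case where the \emph{marginal} on $a_2$ is small, the other handles the case where the marginal on $a_1$ being $0$ is small).
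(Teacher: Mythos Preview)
Your proof is correct and essentially identical to the paper's own proof: both split into the cases $\norm{C(Y)}_1 \le 1$ and $\norm{C(Y)}_1 \ge 2$, bounding the joint event by $\Pr[a_2^{t+1}=1\mid \All^t=C]$ via conclusion (2) of Lemma~\ref{lem:1orOther} in the first case and by $\Pr[a_1^{t+1}=0\mid \All^t=C]$ via conclusion (1) in the second.
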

\begin{proof}
We prove this result  in two cases.

\medskip
\spara{Case 1: $\norm{C(Y)}_1 \le 1$.}
\medskip

In this case, using Lemma \ref{lem:1orOther} conclusion (2): 
\begin{align*}
\Pr[a_1^{t+1} = 0\text{ and }a_2^{t+1} =1  | \All^t = C]  \le \Pr[a_2^{t+1} =1  | \All^t = C] \le \frac{12 \cdot \ln 2t_c}{t_s}.
\end{align*}

\medskip
\spara{Case 2: $\norm{C(Y)}_1 > 1$.}
\medskip

In this case, applying Lemma \ref{lem:1orOther} conclusion (1):
\begin{align*}
\Pr[a_1^{t+1} = 0\text{ and }a_2^{t+1} =1  | \All^t = C]  &\le \Pr[a_1^{t+1} = 0  | \All^t = C]\\
&= 1- \Pr[a_1^{t+1} = 0  | \All^t = C]\\
&\le \frac{8 \cdot \ln 2t_c}{t_s}.
\end{align*}
\end{proof}


In combination, Lemma \ref{lem:02bad} and Corollary  \ref{cor:1over2} show that, in order for a valid WTA output state to be maintained with high probability, some inhibitor, (which we assume without loss of generality is $a_1$) must fire alone. Using this fact, we can characterize the firing behavior of the outputs when $a_1$ fires alone. Since $a_1$ maintains stability, if it 
 fires alone at time $t$, all outputs maintain the same firing state at time $t+1$ as at time $t$ with high probability.
\begin{lemma}[Neuron $a_1$ Enforces Stability]\label{lem:oneMustGiveStability}
 Let  $\Net = \langle \All,w,b,f \rangle$ be any  symmetric SNN with two auxiliary inhibitory neurons $a_1,a_2$ which solves $WTA(n,t_c,t_s,\delta)$ for $n \ge 341$, $\delta \le 1/2$, $t_c \le \frac{\ln  n }{30\ln \ln n}$, and $\frac{8\ln 2t_c}{t_s} \le \frac{1}{4}$. For any configuration $C$ with $C(\Input) = \Xh$ (Definition \ref{def:hardInput}) and with $C(a_1) = 1$ and $C(a_2) = 0$:
\begin{enumerate}
\item For all $i \in \mathcal{R}$, if $C(y_i) = 1$ then $\Pr[y_i^{t+1} =  1 | \All^t  = C] \ge 1-\frac{16 \ln 2 t_c}{t_s}.$
\item For all $i \in \mathcal{R}$, if $C(y_i) = 0$ then $\Pr[y_i^{t+1} =  0 | \All^t  = C] \ge \left (1-\frac{16 \ln 2 t_c}{t_s}\right)^{1/(|\mathcal{R}|-1)}$
\end{enumerate}
\end{lemma}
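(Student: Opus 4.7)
The plan is to prove both conclusions via a single unified reduction by contradiction. Both conclusions will reduce to bounding the one-step output preservation from a canonical valid WTA configuration with inhibitor state $(a_1,a_2)=(1,0)$; the quantitative bound will come from the two-step stability guarantee of Lemma \ref{lem:4lnn} together with a four-way case analysis on the inhibitor state at time $t$.

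\emph{Setup and reduction.} I fix any $j\in\mathcal{R}$ and, for each $(\alpha,\beta)\in\{0,1\}^2$, let $C^{\star}_{\alpha\beta}$ be the configuration with $C^{\star}_{\alpha\beta}(\Input)=\Xh$, $C^{\star}_{\alpha\beta}(y_j)=1$, $C^{\star}_{\alpha\beta}(y_k)=0$ for $k\ne j$, $C^{\star}_{\alpha\beta}(a_1)=\alpha$, $C^{\star}_{\alpha\beta}(a_2)=\beta$; note $C^{\star}_{\alpha\beta}(Y)$ is a valid WTA output with winner $y_j$. Let $Q_{\alpha\beta}=\Pr[\Output^{t+1}=\Output^t\mid\All^t=C^{\star}_{\alpha\beta}]$ and $P_{\alpha\beta}=\Pr[\Output^{t+2}=\Output^{t+1}=\Output^t\mid\All^t=C^{\star}_{\alpha\beta}]$. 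By Definitions \ref{def:simple} and \ref{def:sym} combined, the probability $\Pr[y_i^{t+1}=v\mid\All^t=C]$ depends only on the four-tuple $(C(x_i),C(y_i),C(a_1),C(a_2))$ and is the same function of that tuple for every $i$. Hence $\Pr[y_j^{t+1}=1\mid\All^t=C^{\star}_{10}]\ge Q_{10}$ (preserving the output requires the winner to keep firing), and for losers $k\in\mathcal{R}\setminus\{j\}$ the events $\{y_k^{t+1}=0\}$ are conditionally independent given $\All^t$ with a common probability $q$, so $q^{|\mathcal{R}|-1}\ge Q_{10}$. Both conclusions thus follow once I establish $Q_{10}\ge 1-\tfrac{16\ln 2t_c}{t_s}$.

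\emph{Case analysis via Lemma \ref{lem:4lnn}.} Suppose for contradiction $Q_{10}<1-\tfrac{16\ln 2t_c}{t_s}$. Lemma \ref{lem:4lnn} applied with input $\Xh$, together with the symmetry reduction (relabeling the winner), yields $P_{\alpha\beta}\ge 1-\tfrac{4\ln 2t_c}{t_s}$ for some $(\alpha,\beta)\in\{0,1\}^2$; I rule out each case. For $(\alpha,\beta)\in\{(0,0),(1,1)\}$, Lemma \ref{lem:02bad} gives $\Pr[\norm{Y^{t+1}}_1=1\mid\All^t=C^{\star}_{\alpha\beta}]\le 1/\ln n$, so $P_{\alpha\beta}\le Q_{\alpha\beta}\le 1/\ln n$; the parameter assumption $t_s\ge 32\ln n\cdot\ln 2t_c$ forces $1-\tfrac{4\ln 2t_c}{t_s}\ge 1-\tfrac{1}{8\ln n}>1/\ln n$ for $n\ge 341$, a contradiction. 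For $(\alpha,\beta)=(1,0)$, directly $P_{10}\le Q_{10}<1-\tfrac{16\ln 2t_c}{t_s}<1-\tfrac{4\ln 2t_c}{t_s}$.

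\emph{Case $(0,1)$, the main obstacle.} From $C^{\star}_{01}$, since $\norm{C^{\star}_{01}(Y)}_1=1$, Lemma \ref{lem:1orOther} gives $\Pr[a_1^{t+1}=1\mid\All^t=C^{\star}_{01}]\ge 1-\tfrac{8\ln 2t_c}{t_s}$ and $\Pr[a_2^{t+1}=0\mid\All^t=C^{\star}_{01}]\ge 1-\tfrac{12\ln 2t_c}{t_s}$; by conditional independence of $a_1^{t+1},a_2^{t+1}$ given $\All^t$, the probability $p_{10}=\Pr[(a_1^{t+1},a_2^{t+1})=(1,0)\mid\All^t=C^{\star}_{01}]$ is at least $1-\tfrac{20\ln 2t_c}{t_s}$. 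Conditioning on $\All^{t+1}$, using conditional independence of output firings and inhibitor firings given $\All^t$, and using symmetry to identify the subsequent one-step preservation from $\All^{t+1}$ with valid WTA output and $(a_1,a_2)=(\alpha,\beta)$ as $Q_{\alpha\beta}$, I obtain
\[
P_{01}\;\le\;p_{10}Q_{10}+(1-p_{10})\;<\;1-p_{10}\cdot\tfrac{16\ln 2t_c}{t_s}\;\le\;1-\Bigl(1-\tfrac{20\ln 2t_c}{t_s}\Bigr)\tfrac{16\ln 2t_c}{t_s},
\]
where I bounded $Q_{\alpha\beta}\le 1$ for $(\alpha,\beta)\ne(1,0)$ and used $Q_{01}\le 1$ for the outer factor. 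Expanding gives $1-\tfrac{16\ln 2t_c}{t_s}+\tfrac{320(\ln 2t_c)^2}{t_s^2}$; since $t_s\ge 32\ln n\cdot\ln 2t_c$ easily absorbs the quadratic correction (it suffices that $t_s>\tfrac{80}{3}\ln 2t_c$), this is strictly less than $1-\tfrac{4\ln 2t_c}{t_s}$, contradicting $P_{01}\ge 1-\tfrac{4\ln 2t_c}{t_s}$. With all four cases ruled out, $Q_{10}\ge 1-\tfrac{16\ln 2t_c}{t_s}$ as desired. The chief difficulty is exactly this $(0,1)$ case: the gap of $\tfrac{16}{t_s}$ in the target bound is calibrated so that, after paying the $\tfrac{20}{t_s}$ cost from Lemma \ref{lem:1orOther} to swap $(0,1)\to(1,0)$ in one step, more than $\tfrac{4}{t_s}$ of slack remains for the Lemma \ref{lem:4lnn} contradiction.
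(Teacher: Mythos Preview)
Your proof is correct, but it takes a genuinely different decomposition from the paper's. Both arguments hinge on Lemma~\ref{lem:4lnn} supplying some valid-WTA configuration $C$ with large two-step stability probability, and both ultimately extract from this a lower bound on the one-step preservation probability from the $(a_1,a_2)=(1,0)$ state. The difference is \emph{where} the case split happens. You case-split on the inhibitor state of $C$ itself (at time $t$), so you must separately dispatch the four possibilities $(0,0),(1,1),(1,0),(0,1)$; the $(0,1)$ case is the nontrivial one, and you handle it by invoking Lemma~\ref{lem:1orOther} to argue that the inhibitors will likely flip to $(1,0)$ at time $t{+}1$, then chaining through $Q_{10}$. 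The paper instead leaves the inhibitor state of $C$ unspecified and conditions directly on $(a_1^{t+1},a_2^{t+1})$: the $(0,1)$-at-$t{+}1$ case is bounded in one stroke by Corollary~\ref{cor:1over2}, and the equal-inhibitor case by Lemma~\ref{lem:02bad}, leaving only the $(1,0)$-at-$t{+}1$ branch, from which the conclusions follow by contradiction. The paper's route is somewhat more economical because Corollary~\ref{cor:1over2} handles $(0,1)$ uniformly regardless of the state at time $t$, so no separate $(0,1)$-at-$t$ calculation is needed; your route is a bit longer but has the virtue of unifying conclusions (1) and (2) into the single target $Q_{10}\ge 1-\tfrac{16\ln 2t_c}{t_s}$ up front.

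One small correction: in both places where you invoke a parameter bound you cite ``$t_s\ge 32\ln n\cdot\ln 2t_c$,'' which is the hypothesis of Theorem~\ref{thm:lb2}, not of this lemma. The lemma only assumes $\tfrac{8\ln 2t_c}{t_s}\le\tfrac14$, i.e.\ $t_s\ge 32\ln 2t_c$. Fortunately that weaker bound already suffices in both spots (for the $(0,0)/(1,1)$ cases it gives $1-\tfrac{4\ln 2t_c}{t_s}\ge\tfrac78>\tfrac{1}{\ln n}$ for $n\ge 341$, and for the $(0,1)$ case $32>\tfrac{80}{3}$), so the argument goes through unchanged with the correct citation.
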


\begin{proof}
In a symmetric SNN (Definition \ref{def:sym}) all outputs have identical incoming connections and biases. Thus, to prove the lemma it suffices to show that \emph{there exists a configuration $C$} with $C(\Input) = \Xh$ and with $C(a_1) = 1$ and $C(a_2) = 0$, such that:
\begin{enumerate}
\item There exists $i \in \mathcal{R}$ with $C(y_i) = 1$ and $\Pr[y_i^{t+1} =  1 | \All^t  = C] \ge 1-\frac{16 \ln 2 t_c}{t_s}.$
\item There exists $i \in \mathcal{R}$ with $C(y_i) = 0$ and $\Pr[y_i^{t+1} =  0 | \All^t  = C] \ge \left (1-\frac{16 \ln 2 t_c}{t_s}\right)^{1/(|\mathcal{R}|-1)}.$
\end{enumerate}

We will show that these bounds must  hold for at least  one configuration in order for the network to reach a valid WTA output configuration in $t_c$ steps and remain in this configuration for $t_s$ consecutive steps. Let $\mathcal{E}(t)$ denote the event that $\Output^{t} = \Output^{t+1} =  \Output^{t+2}$.
For any $C$ we have:
\begin{align}
 \Pr[\mathcal{E}(t) | \All^t = C] &= \Pr [\mathcal{E}(t) | \All^{t}=C, a_1^{t+1}=a_2^{t+1} ] \cdot \Pr [a_1^{t+1}=a_2^{t+1} | \All^t = C]\nonumber\\
&\hspace{1em}+ \Pr[\mathcal{E}(t) | \All^t = C, a_1^{t+1} = 0, a_2^{t+1} = 1]  \cdot \Pr [a_1^{t+1}=0,a_2^{t+1} = 1 | \All^t = C]
\nonumber\\
&\hspace{1em}+ \Pr[\mathcal{E}(t) | \All^t = C, a_1^{t+1} = 1, a_2^{t+1} = 0]  \cdot \Pr [a_1^{t+1}=1,a_2^{t+1} = 0 | \All^t = C]\nonumber\\
&\hspace{-3em}\le \Pr [a_1^{t+1}=0,a_2^{t+1} = 1 | \All^t = C] \nonumber\\&\hspace{-2em}+ \max \left (\Pr [\mathcal{E}(t) | \All^{t}=C, a_1^{t+1}=a_2^{t+1} ], \Pr[\mathcal{E}(t) | \Output^t = C, a_1^{t+1} = 1, a_2^{t+1} = 0] \right )
\label{3termsNow}.
\end{align}
We can upper bound the first term of \eqref{3termsNow} by  $\frac{12 \ln 2t_c}{t_s}$ using Corollary  \ref{cor:1over2}. Additionally, using Lemma \ref{lem:02bad}, we can bound $\Pr [\mathcal{E}(t) | \All^{t}=C, a_1^{t+1}=a_2^{t+1} ] \le \frac{1}{\ln n}$.
This gives:
\begin{align}\label{gap2nn} 
\Pr[\mathcal{E}(t) | \All^t = C]  \le \frac{12 \ln 2t_c}{t_s} + \max \left (\frac{1}{\ln n}, \Pr[\mathcal{E}(t) | \All^t = C, a_1^{t+1} = 1, a_2^{t+1} = 0] \right).
\end{align}
Applying Lemma \ref{lem:4lnn} gives that there must exist some $C$ with $C(Y)$ a valid WTA output configuration such that $\Pr[\mathcal{E}(t) | \All^t = C]  \ge 1 -\frac{4 \ln 2t_c}{t_s}$. So combining with \eqref{gap2nn}:
\begin{align*}
\max \left ( \frac{1}{\ln n}, \Pr[\mathcal{E}(t) | \All^t = C, a_1^{t+1} = 1, a_2^{t+1} = 0] \right )&\ge 1 - \frac{16 \ln 2t_c}{t_s}
\end{align*}
By  our assumption that $\frac{8\ln 2t_c}{t_s} \le \frac{1}{4}$  and the fact that $n \ge 341$ we easily have that $ \frac{1}{\ln n} \le \frac{1}{2} \le 1 - \frac{16 \ln 2t_c}{t_s}$, meaning that in fact, there must exist some $C$ with $C(Y)$ a valid WTA output configuration such that: 
\begin{align}\label{eq:3lnn}
\Pr[\mathcal{E}(t) | \All^t = C, a_1^{t+1} = 1, a_2^{t+1} = 0] &\ge 1 - \frac{16 \ln 2t_c}{t_s}.
\end{align}
We can now easily prove the two conclusions of the lemma using similar arguments.

\medskip
\spara{Conclusion 1:}
\medskip

Assume for the sake of contradiction that for every configuration $C$ with $C(\Input) = \Xh$ and with $C(a_1) = 1$ and $C(a_2) = 0$, for all $i \in \mathcal{R}$ with $C(y_i) = 1$,
\begin{align*}
 \Pr[y_i^{t+1} = 1 | \All^t = C] < 1-\frac{16 \ln 2 t_c}{t_s}.
 \end{align*}
 Then, for every $C$ with $C(Y)$ a valid WTA output configuration we have:
\begin{align*}
\Pr[\mathcal{E}(t) | \All^t = C, a_1^{t+1} = 1, a_2^{t+1} = 0] &\le \Pr[\Output^{t+2} = \Output^{t+1} | \All^{t+1} = C, a_1^{t+1} = 1, a_2^{t+1} = 0] \\
&< 1 - \frac{16 \ln 2t_c}{t_s}.
\end{align*}
contradicting \eqref{eq:3lnn} and giving the lemma.

 \medskip
\spara{Conclusion 2:}
\medskip

Assume for the sake of contradiction that for every configuration $C$ with $C(\Input) = \Xh$  and with $C(a_1) = 1$ and $C(a_2) = 0$, for all $i \in \mathcal{R}$ with $C(y_i) = 0$,
\begin{align*}
 \Pr[y_i^{t+1} = 0 | \All^t = C] < \left (1-\frac{16 \ln 2 t_c}{t_s}\right)^{1/(|\mathcal{R}|-1)}.
 \end{align*}
  Then, again for every $C$ with $C(Y)$ a valid WTA output configuration we have:
\begin{align*}
\Pr[\mathcal{E}(t) | \All^t = C, a_1^{t+1} = 1, a_2^{t+1} = 0] &\le \Pr[\Output^{t+2} = \Output^{t+1} | \All^{t+1} = C, a_1^{t+1} = 1, a_2^{t+1} = 0] \\
&< \left (\left (1-\frac{16 \ln 2 t_c}{t_s}\right)^{1/(|\mathcal{R}|-1)}\right )^{|\mathcal{R}|-1}\\
&< 1-\frac{16 \ln 2 t_c}{t_s}
\end{align*}
contradicting \eqref{eq:3lnn} and giving the lemma.
\end{proof}

Lemma \ref{lem:oneMustGiveStability} implies that when just $a_1$ fires, since this leads to stability, it is unlikely to lead to a valid WTA output configuration when starting from an invalid output configuration. That is, $a_1$ does not drive convergence to a valid WTA output configuration.
\begin{corollary}[Neuron $a_1$ Does Not Drive Convergence]\label{cor:oneMustGiveStability}
 Let  $\Net = \langle \All,w,b,f \rangle$ be any  symmetric SNN with two auxiliary inhibitory neurons $a_1,a_2$ which solves $WTA(n,t_c,t_s,\delta)$ for $n \ge 341$, $\delta \le 1/2$, $t_c \le \frac{\ln  n }{30\ln \ln n}$, and $\frac{8\ln 2t_c}{t_s} \le \frac{1}{4}$. For any configuration $C$ with $C(\Input) = \Xh$  (Definition \ref{def:hardInput}), with $C(a_1) = 1$ and $C(a_2) = 0$, and in which $C(Y)$ is \emph{not a valid WTA output configuration}:
\begin{align*}
 \Pr[\Output^{t+1}\text{ is a valid WTA output configuration } | \All^t = C] \le \frac{32 \ln 2t_c}{t_s}.
 \end{align*}
\end{corollary}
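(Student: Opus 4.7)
The plan is to use Lemma \ref{lem:oneMustGiveStability} together with the conditional independence of the output firings given $\All^t=C$ (which holds since in a simple SNN the potential $\pot(y_i,t+1)$ depends only on $C(x_i), C(y_i), C(a_1), C(a_2)$), and to case-analyze on $k \eqdef \norm{C(Y)|_{\mathcal{R}}}_1$, where $\mathcal{R}$ is the index set from Definition \ref{def:hardInput}. Writing $q=\frac{16\ln 2t_c}{t_s}$, the target is $\Pr[\Output^{t+1}\text{ valid}]\le 2q=\frac{32\ln 2t_c}{t_s}$.

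In the case $k=0$, validity at $t+1$ requires that some $y_i$ with $i\in\mathcal{R}$ turns on; conclusion (2) of Lemma \ref{lem:oneMustGiveStability} and independence give that all $|\mathcal{R}|$ outputs in $\mathcal{R}$ remain silent with probability at least $(1-q)^{|\mathcal{R}|/(|\mathcal{R}|-1)}$, and Bernoulli's inequality (with exponent $\ge 1$) bounds the complementary event by $q\cdot|\mathcal{R}|/(|\mathcal{R}|-1)\le 2q$, using $|\mathcal{R}|=\lfloor\ln^2 n\rfloor\ge 2$. In the case $k\ge 2$, validity requires at least $k-1$ of the $k$ firing outputs in $\mathcal{R}$ to turn off; conclusion (1) bounds each stop-firing probability by $q$, so a union bound over which single output survives gives at most $k\cdot q^{k-1}$, and under our assumption $\frac{8\ln 2t_c}{t_s}\le\frac{1}{4}$ this is maximized at $k=2$, yielding $2q$.

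The delicate case, and the main obstacle, is $k=1$: here $C(Y)|_{\mathcal{R}}$ already has a unique firing output, so invalidity of $C(Y)$ forces some $y_{j^*}$ with $j^*\notin\mathcal{R}$ to fire in $C$, and validity of $Y^{t+1}$ requires every such $y_{j^*}$ to become silent. Lemma \ref{lem:oneMustGiveStability} does not directly control $\Pr[y_{j^*}^{t+1}=0]$ because its conclusions are stated only for $i\in\mathcal{R}$. My plan is to exploit the symmetric SNN structure (Definition \ref{def:sym}): all outputs have identical incoming weights and bias, so $\pot(y_{j^*},t+1)$ evaluated at $C(y_{j^*})=1,\,x_{j^*}=0$ differs from $\pot(y_i,t+1)$ evaluated at $C(y_i)=0,\,x_i=1$ only by the deterministic shift $w(y,y)-w(x,y)$. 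By monotonicity of $f$ I can transfer bounds from Lemma \ref{lem:oneMustGiveStability} between these regimes, and where this transfer loses too much I re-derive the bound by revisiting the proof of Lemma \ref{lem:oneMustGiveStability} itself and decomposing the stability event $\mathcal{E}(t)=(\Output^t=\Output^{t+1}=\Output^{t+2})$ across all $n$ outputs rather than only those in $\mathcal{R}$; the bookkeeping must work uniformly in the sign of $w(y,y)-w(x,y)$ and in $|\{j\notin\mathcal{R}:C(y_j)=1\}|$ so that the final constant remains $2$.
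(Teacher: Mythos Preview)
Your treatment of the cases $k=0$ and $k\ge 2$ is correct and aligns with the paper's proof. For $k\ge 2$ the paper argues slightly differently: it fixes any two of the $k$ firing outputs in $\mathcal{R}$ and observes that both remain firing with probability at least $(1-q)^2\ge 1-2q$ by conclusion~(1) of Lemma~\ref{lem:oneMustGiveStability} and conditional independence, which forces $\norm{Y(\mathcal{R})^{t+1}}_1\ge 2$ and hence non-validity. Your union-bound computation $kq^{k-1}\le 2q$ reaches the same constant by a different route.

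You are right to isolate $k=1$ as the delicate case. In fact, the paper's own proof of this corollary \emph{does not treat this case at all}: it splits only into $|\{i\in\mathcal{R}:C(y_i)=1\}|\ge 2$ and $|\{i\in\mathcal{R}:C(y_i)=1\}|=0$, tacitly assuming that invalidity of $C(Y)$ arises only from the count inside $\mathcal{R}$. So you have correctly identified a gap in the paper---the scenario where exactly one output in $\mathcal{R}$ fires but some $y_{j^*}$ with $j^*\notin\mathcal{R}$ also fires (making $C(Y)$ invalid) is simply not covered there.

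That said, your sketch for closing this gap is not yet a proof. The observation that $\pot(y_{j^*},t+1)$ (with $x_{j^*}=0$, $C(y_{j^*})=1$) differs from $\pot(y_i,t+1)$ (with $x_i=1$, $C(y_i)=0$) by $w(y,y)-w(x,y)$ is correct, but monotonicity of $f$ transfers bounds in only one direction depending on the sign of this difference, and in the unfavorable direction you have no control. The fallback of ``revisiting the proof of Lemma~\ref{lem:oneMustGiveStability}'' is not an argument: that lemma is derived from the two-step stability of a \emph{valid} WTA output configuration, and its proof does not obviously decompose so as to bound the firing probability of an output whose input is silent. If you want to prove the corollary in the generality stated, this is where genuine new work is needed; alternatively, you could match the paper and leave $k=1$ unaddressed, flagging it as a gap.
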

\begin{proof}
The proof can be split into two cases, depending on if $C(Y)$ is not a valid WTA output configuration because too many valid outputs are firing or because no outputs are firing. Let $\mathcal{R} \subseteq \{1,...,\}$ be  the set of indices with $\Xh(x_i) = 1$. 

\medskip
\spara{Case 1: $| \{ i \in \mathcal{R} | C(y_i) = 1\} | \ge 2$.}
\medskip

Let $Y(\mathcal{R})$ denote the set of outputs restricted to the indices in $\mathcal{R}$. In order for $\Output^{t+1}$ to be a valid WTA output configuration for $\Xh$, we must have $\norm{Y(\mathcal{R})^t}_1 = 1$.
We can apply Lemma \ref{lem:oneMustGiveStability} conclusion (1) to bound:
\begin{align*}
\Pr [\norm{Y(\mathcal{R})^{t+1}}_1 \ge 2 | \All^t = C] \ge \left (1 - \frac{16 \ln 2t_c}{t_s} \right )^2 \ge 1 - \frac{32 \ln 2t_c}{t_s}
\end{align*}
and thus $ \Pr[\Output^{t+1}\text{ is a valid WTA output configuration } | \All^t = C] \le \frac{32 \ln 2t_c}{t_s}.$

\medskip
\spara{Case 2: $| \{ i \in \mathcal{R} | C(y_i) = 1\} | = 0$.}
\medskip

In this case, we can apply Lemma \ref{lem:oneMustGiveStability} conclusion (2) to bound:
\begin{align*}
\Pr [\norm{Y(\mathcal{R})^{t+1}}_1 = 0 | \All^t = C] \ge \left ( \left (1 - \frac{16 \ln 2t_c}{t_s} \right )^{1/(|\mathcal{R}|-1)} \right )^{|\mathcal{R}|}
\end{align*}
By our assumption that $n \ge 341$, $\frac{|\mathcal{R}|}{|\mathcal{R}|-1} = \frac{\lfloor \ln^2 n \rfloor}{\lfloor \ln^2 n \rfloor-1} \le \frac{34}{33} < 2$. 
So we have:
\begin{align*}
\Pr [\norm{Y(\mathcal{R})^{t+1}}_1 = 0 | \All^t = C] \ge \left (1 - \frac{16 \ln 2t_c}{t_s} \right )^2 \ge 1 - \frac{32 \ln 2t_c}{t_s}.
\end{align*}
Thus, $ \Pr[\Output^{t+1}\text{ is a valid WTA output configuration } | \All^t = C] \le \frac{32 \ln 2t_c}{t_s}$, giving the lemma.

\end{proof}

We can now prove the main two-inhibitor lower bound Theorem \ref{thm:lb2} via a case analysis which combines Lemma \ref{lem:02bad}, Corollary  \ref{cor:1over2}, and Corollary \ref{cor:oneMustGiveStability}.

\medskip
\begin{proof}[\spara{Proof of Theorem \ref{thm:lb2}}]
$ $
\medskip

Assume that $\Net = \langle \All,w,b,f \rangle$ is a  symmetric SNN with two auxiliary inhibitory neurons $a_1,a_2$ which solves $WTA(n,t_c,t_s,\delta)$ for $n \ge 341$, $\delta \le 1/2$, $t_c \le \frac{\ln  n }{30\ln \ln n}$, and $\frac{\ln 2t_c}{t_s} \le \frac{1}{32 \ln n}.$ Assume that the network is given the hard input execution $\ah$ (Definition \ref{def:hardInput}) and 
starts with initial state $\All^0$ in which $Y^0$ is not a valid WTA output configuration and $a_1^0 = a_2^0 = 0$.

Let $\Ef(t)$ be the event that for every $t' \le t$, $\Output^{t'}$ is not a valid WTA output configuration. Let $\mathcal{E}_{01}(t)$ be the event that for every  $t' \le t$, we \emph{do not have} $a_1^{t'} = 0$ and $a_2^{t'} = 1$.

We can bound $\Pr[\Ef(t)] \ge \Pr[\Ef(t),\mathcal{E}_{01}(t)]$ and write, using that $\Ef(t-1) \subseteq \Ef(t)$ and $\mathcal{E}_{01}(t-1) \subseteq \mathcal{E}_{01}(t)$,
\begin{align}
\Pr[\Ef(t),\mathcal{E}_{01}(t)]
&= \Pr[\Ef(t),\mathcal{E}_{01}(t) | \Ef(t-1), \mathcal{E}_{01}(t-1)] \cdot \Pr[\Ef(t-1), \mathcal{E}_{01}(t-1)]\label{eq:efailRecursion}.
\end{align}
Conditioned on $\Ef(t-1)$ and $\mathcal{E}_{01}(t-1)$, $\Output^{t-1}$ is not a valid WTA output configuration and $(a_1^{t-1}, a_2^{t-1})  \neq (0,1)$. Let $\mathcal{C}$ be the set of all configurations  which are not valid WTA output configurations and which have $(C(a_1),C(a_2))  \neq (0,1)$.
We can expand \eqref{eq:efailRecursion} as:
\begin{align}\label{eq:efailRecursion2}
\Pr[\Ef(t),\mathcal{E}_{01}(t)]
&= \Pr[\Ef(t),\mathcal{E}_{01}(t) | \Ef(t-1), \mathcal{E}_{01}(t-1)] \cdot \Pr[\Ef(t-1), \mathcal{E}_{01}(t-1)]\nonumber\\
&= \sum_{C \in\mathcal{C}} \Pr[\Ef(t),\mathcal{E}_{01}(t) | \All^{t-1} = C] \cdot \Pr[\All^{t-1} = C, \Ef(t-1), \mathcal{E}_{01}(t-1)].
\end{align}
Note that by  the law of total probability since $\Ef(t-1)\text{ and }\mathcal{E}_{01}(t-1)$ requires that $C \in \mathcal{C}$:
\begin{align}\label{eq:efailRecursion3}
\sum_{C \in\mathcal{C}} \Pr[\All^{t-1} = C, \Ef(t-1), \mathcal{E}_{01}(t-1)] = \Pr[\Ef(t-1), \mathcal{E}_{01}(t-1)] .
\end{align}
We now bound $\Pr[\Ef(t),\mathcal{E}_{01}(t) | \All^{t-1} = C]$ in \eqref{eq:efailRecursion2} in two cases:

\medskip
\spara{Case 1: $C(a_1) = C(a_2) = 1$ or $C(a_1) = C(a_2) = 0$.}
\medskip

In this case, for any  $C \in \mathcal{C}$ with $C(a_1) = C(a_2)$, by Lemma \ref{lem:02bad}, 
$$\Pr[\Ef(t) | \All^{t-1} = C] \ge 1-\frac{1}{\ln n}.$$
By  Corollary  \ref{cor:1over2}, 
$$\Pr[\mathcal{E}_{01}(t)| \All^{t-1} = C] \ge 1-\frac{12\cdot \ln 2 t_c}{t_s} \ge 1-\frac{1}{\ln n},$$ 
where the second inequality  follows from our requirement that $\frac{\ln 2 t_c}{t_s} \le \frac{1}{32 \ln n}$. Additionally, conditioned on $\All^{t-1}$, $\Ef(t)$ and $\mathcal{E}_{01}(t)$ are independent  since they involve disjoint sets of neurons. Thus we can bound:
\begin{align}\label{lnnCase1}
\Pr[\Ef(t),\mathcal{E}_{01}(t) | \All^{t-1} = C] &\ge \left (1-\frac{1}{\ln n}\right )^2.
\end{align}

\medskip
\spara{Case 2: $C(a_1) = 1$ and $C(a_2) = 0$.}
\medskip

In this case, for any  $C \in \mathcal{C}$, since $C(Y)$ is not a valid WTA output configuration, by Corollary \ref{cor:oneMustGiveStability},
$$\Pr[\Ef(t) | \All^{t-1} = C] \ge 1-\frac{32\cdot \ln 2t_c}{t_s} \ge 1- \frac{1}{\ln n}$$
where the second inequality follows from our requirement that  $\frac{\ln 2t_c}{t_s} \le \frac{1}{32 \ln n}$. As above we also have 
$\Pr[\mathcal{E}_{01}(t)| \All^{t-1} = C] \ge 1-\frac{1}{\ln n}$ and since $\Ef(t)$ and $\mathcal{E}_{01}(t)$ are independent conditioned on $\All^{t-1}$:
\begin{align}\label{lnnCase2}
\Pr[\Ef(t),\mathcal{E}_{01}(t) | \All^{t-1} = C] &\ge \left (1-\frac{1}{\ln n}\right )^2.
\end{align}
Together \eqref{lnnCase1} and \eqref{lnnCase2} and the definition of $\mathcal{C}$ give us:
\begin{claim}\label{expClaim}
For any $C$ which is not a valid WTA configuration and with $(C(a_1),C(a_2))  \neq (0,1)$,
$$\Pr[\Ef(t),\mathcal{E}_{01}(t) | \All^{t-1} = C] \ge \left (1-\frac{1}{\ln n}\right )^2.$$
 \end{claim}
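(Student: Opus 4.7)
The claim itself drops out of the case analysis that has just been carried out. Since we exclude $(C(a_1),C(a_2))=(0,1)$, there are exactly three possibilities for the inhibitor configuration: $(0,0)$, $(1,1)$, and $(1,0)$. The first two fall under Case~1 and inherit \eqref{lnnCase1}; the third falls under Case~2 and inherits \eqref{lnnCase2}. Both bounds are $\left(1-\tfrac{1}{\ln n}\right)^2$, so no further work is needed.

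The real content is to feed the claim into the recursion and contradict the WTA-solver hypothesis. First I would substitute Claim~\ref{expClaim} into \eqref{eq:efailRecursion2}, pulling the uniform bound out of the sum and collapsing the remaining sum via \eqref{eq:efailRecursion3}:
\[
\Pr[\Ef(t),\mathcal{E}_{01}(t)] \;\ge\; \left(1-\tfrac{1}{\ln n}\right)^{2}\cdot \Pr[\Ef(t-1),\mathcal{E}_{01}(t-1)].
\]
The initial configuration $\All^0$ was chosen so that $Y^0$ is not a valid WTA output configuration and $a_1^0=a_2^0=0$, hence $\Pr[\Ef(0),\mathcal{E}_{01}(0)]=1$. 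Iterating the recursion $t_c$ times therefore gives
\[
\Pr[\Ef(t_c),\mathcal{E}_{01}(t_c)] \;\ge\; \left(1-\tfrac{1}{\ln n}\right)^{2t_c}.
\]

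To close the argument I would use the hypothesis $t_c \le \tfrac{\ln n}{30\ln\ln n}$ and the elementary inequality $(1-x)^k \ge 1-kx$ for $x\in[0,1]$:
\[
\left(1-\tfrac{1}{\ln n}\right)^{2t_c} \;\ge\; 1-\tfrac{2t_c}{\ln n} \;\ge\; 1-\tfrac{1}{15\ln\ln n}.
\]
A direct numerical check shows this exceeds $1/2$ for $n\ge 341$. Hence
\[
\Pr[\Ef(t_c)] \;\ge\; \Pr[\Ef(t_c),\mathcal{E}_{01}(t_c)] \;>\; 1/2 \;\ge\; 1-\delta,
\]
where the last inequality uses $\delta\le 1/2$. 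This says that with probability greater than $\delta$, $\Net$ never reaches a valid WTA output configuration within $t_c$ steps, contradicting the assumption that $\Net$ solves $\wta(n,t_c,t_s,\delta)$ and finishing Theorem~\ref{thm:lb2}.

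The only step requiring care in the finish is the bookkeeping for the sum-to-probability collapse in \eqref{eq:efailRecursion2}--\eqref{eq:efailRecursion3}; the probabilistic heavy lifting is already packaged into the claim, which in turn rests on the earlier structural lemmas. In particular, the genuine obstacle behind all of this is Lemma~\ref{lem:1orOther}, where the conditional independence of $a_1^{t+1}$ and $a_2^{t+1}$ given $\All^t$ is exploited to split the ``exactly one inhibitor fires'' constraint into individual behaviors of $a_1$ and $a_2$, followed by Lemma~\ref{lem:oneMustGiveStability}, which pins $a_1$ to a stability role and thereby rules out its driving convergence. Once those are in place, the recursion above is purely mechanical.
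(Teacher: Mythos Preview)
Your proposal is correct and follows essentially the same approach as the paper: the claim is just the union of the two case bounds \eqref{lnnCase1} and \eqref{lnnCase2}, and the completion of Theorem~\ref{thm:lb2} proceeds by substituting the claim into the recursion \eqref{eq:efailRecursion2}--\eqref{eq:efailRecursion3}, iterating from the base case, and deriving a contradiction with the WTA-solver hypothesis. The only cosmetic difference is in the final numerical step: the paper bounds $\left(1-\tfrac{1}{\ln n}\right)^{2t_c} \ge e^{-1/15} > 1/2$ directly, whereas you use Bernoulli's inequality to get $1-\tfrac{1}{15\ln\ln n} > 1/2$; both are valid and reach the same conclusion.
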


\medskip
\spara{Completing the Theorem.}
\medskip

We conclude by using Claim \ref{expClaim} to lower bound the probability of converging to a valid WTA configuration within $t_c$ steps.
Substituting the bound of Claim \ref{expClaim} into \eqref{eq:efailRecursion2} and \eqref{eq:efailRecursion3} we have:
\begin{align*}
\Pr[\Ef(t),\mathcal{E}_{01}(t)] &\ge \sum_{C \in\mathcal{C}}  \left (1-\frac{1}{\ln n}\right )^2 \cdot \Pr[\All^{t-1} = C, \Ef(t-1), \mathcal{E}_{01}(t-1)]\\
&\ge \left (1-\frac{1}{\ln n}\right )^2 \cdot \Pr[\Ef(t-1), \mathcal{E}_{01}(t-1)]
\end{align*}
Since we choose $\All^0$ with $\Output^0$ not a valid WTA output configuration and $a_1^0 = a_2^0 = 0$, $\Pr[\Ef(0), \mathcal{E}_{01}(0)] = 1$. Thus, via induction, we have for any $t \ge 0$,
\begin{align*}
\Pr[\Ef(t),\mathcal{E}_{01}(t)] \ge  \left (1-\frac{1}{\ln n}\right )^{2t}.
\end{align*}
Plugging in $t_c \le \frac{\ln n}{30 \ln \ln n}$ we have: 
\begin{align*}
\Pr[\Ef(t_c)] \ge \Pr[\Ef(t_c),\mathcal{E}_{01}(t_c)] &\ge  \left (1-\frac{1}{\ln n}\right )^{\frac{\ln n}{15 \ln \ln n}}\\
&\ge \frac{1}{e^{1/15} } > 1/2.
\end{align*}
This contradicts the fact that $\Net$ solves $WTA(n,t_c,t_s,\delta)$ for $\delta \ge 1/2$, giving  the lower bound.
\end{proof}

\section{Faster Convergence With More Inhibitors}\label{sec:multi}

In this section we show how to speed up the two-inhibitor construction of Section \ref{sec:wta2} by using $\alpha > 2$ inhibitors. We give a formal convergence proof for a construction which uses $
\lognc + 1$ inhibitors and converges with constant  probability  in $O(1)$ time (and with probability $\ge 1-\delta$ in $O(\log 1/\delta)$ time). We then describe the high level idea behind two constructions that give a tradeoff between the number of inhibitors used and the convergence time.
\subsection{Use of History  Period}\label{sec:history}
Our $\lognc + 1$-inhibitor construction (as well as the sketched constructions which give an inhibitor-convergence time tradeoff) requires using a history period of $h =2$, as suggested in Section \ref{sec:futureDirs}. At a high level, to achieve fast convergence, our networks use the larger number of inhibitors available to create higher levels of inhibition at time $t$ corresponding to higher number of firing outputs at time $t-1$. 
This strategy however, leads to `race conditions'. If many outputs fire at time $t$ and exactly  one output fires at time $t+1$, there will still be a high level of inhibition at time $t+1$. Thus, at time $t+2$ it is likely that the single firing output at time $t+1$ will stop firing and so the network will not stabilize to a valid WTA output state.

To avoid this race condition, we use history so that an output's self-loop excites the output for two time steps. The stability inhibitor $a_s$ will also be excited for two time steps by the outputs. In this way, if a single output $y_i$ fires at time $t$, while at time $t+1$ no outputs may fire due to high levels of inhibition, at time $t+2$ the history will cause both $a_s$ and $y_i$ to fire, and the network to stabilize  to a valid WTA state.

\subsubsection{Generalized Model with History Period} 
Following the basic SNN model of Section \ref{sec:modelW} we define an SNN model with history  period $h$  for any $h \ge 1$. The changes to the definition are highlighted in gray.

An SNN $\Net =\langle \All, w,b,f, h \rangle$ with  history period $h$ consists of:
\begin{itemize}
\item $\All$, a set of neurons, partitioned into a set of input neurons $\Input$, a set of output neurons $\Output$, and a set of auxiliary  neurons $\Inh$. $\All$ is also partitioned into a set of excitatory and inhibitory neurons $\Ex$ and $\Ih$. All input and output neurons are excitatory. 
\begin{tcolorbox}
\item $h \in \mathbb{Z}^{\ge 1}$, a positive integer indicating the neural response  history period.
\end{tcolorbox}
\begin{tcolorbox}
\item $w: \All \times \All \times \{1,...,h\} \rightarrow \R$, a weight function describing the weighted synaptic connections between the neurons in the network. $w$ is restricted in a few notable ways: 
\begin{itemize}
\item $w(u,x,l) = 0$ for all $u \in \All$, $x \in \Input$, and $l \in \{1,...,h\}$. 
\item Each excitatory neuron $v \in \Ex$ has $w(v,u,l)\ge 0$ for every $u$ and $l \in \{1,...,h\}$. Each inhibitory  neuron $v \in \Ih$ has $w(v,u,l)\le 0$ for every $u$ and $l \in \{1,...,h\}$.
\end{itemize}
\end{tcolorbox}
\item $b: \All \rightarrow \mathbb{R}$, a bias function, assigning an activation bias to each neuron.
\item $f: \R \rightarrow [0,1]$, a spike probability function, satisfying a few restrictions:
\begin{itemize}
\item $f$ is continuous and monotonically increasing.
\item $\lim_{x\rightarrow \infty} f(x) = 1$ and $\lim_{x \rightarrow -\infty} f(x) = 0$.
\end{itemize}
\end{itemize}

\spara{Remark on the Time Dependent  Weight Function}: 
The only difference between the above model and our basic SNN model of Section \ref{sec:structure} is in the specification of the weight function $w$. In the model with history period $h$, $w$ describes the strength of the synaptic connections between neurons in $\All$, as a function of the time difference between a spike and the current time (from $1$ up to $h$). $w(u,v,1)$ is the weight corresponding to the most recent time and $w(u,v,h)$ corresponds to the most distant time within the history  period. The weight function, for example, can be used to model the decaying effect of a spike over time, if we set $|w(u,v,1)| \ge |w(u,v,2)| \ge ... \ge |w(u,v,h)|$.

\subsubsection{Network Dynamics With History  Period}\label{sec:dynamicsH}

In our SNN with history period model, configurations and executions are defined as in our basic model (see Section \ref{sec:dynamics}). The behavior of the SNN is determined as follows:
\begin{itemize}
\item \textbf{Input Neurons}: As in our basic model, we specify how the infinite input execution $X^0X^1....$ is determined. Through this work, we will fix the input so that for each $u \in \Input$, $u^t$ is constant for all $t \ge 0$.
\item \textbf{Initial Firing States}: For each non-input $u \in \All \setminus \Input$, the firing states in the first $h$ time slots $u^0,u^1,...,u^{h-1}$ are arbitrary, where $h$ is this history  length.

\item \textbf{Firing Dynamics}:
For each non-input neuron $u \in \All \setminus \Input $ and every time $t \ge h$, let $\pot(u,t)$ denote the membrane potential at time $t$ and $p(u,t)$ denote
the corresponding firing probability. These values are calculated as:
\begin{align}
\label{eq:potentialOutH}
\pot(u,t) = \left (\sum_{i = 1}^h \sum_{v \in \All}w(v,u,i) \cdot v^{t-i} \right) -b(u) 
\text{ and } p(u,t)= f(\pot(u,t))
\end{align}
where $f$  is the spike probability  function.
At time $t$, each non-input neuron $u$ fires independently with probability $p(u,t)$. Note that equation \eqref{eq:potentialOutH} is defined only for $t \ge h$, in which case $t-i \ge 0$  for all $i \in \{1,...,h\}$. It is analogous to the potential calculation \eqref{eq:potentialOut}, used for our basic model, except that the summation of spikes is over $h$ time steps.
\end{itemize}


An SSN $\Net =\langle \All, w,b,f, h \rangle$ with history  period $h$, length $h$ execution $\alpha^{h} = \All^0\All^1...\All^{h-1}$, and infinite input execution $\alpha_\Input$ define a probability distribution over infinite executions, $\mathcal{D}(\Net,\alpha^{h},\alpha_\Input)$. This distribution is the natural distribution that follows from 
applying the stochastic firing dynamics of \eqref{eq:potentialOutH}. As with our basic model (see Section \ref{sec:dynamics}), we can also define a corresponding distribution $\mathcal{D}_\Output(\Net,\alpha^{h},\alpha_\Input)$ on infinite output executions.

\subsubsection{Solving Problems in Networks with History}
As in our basic model, a problem $P$ is a mapping from 
an infinite input execution $\alpha_\Input$ to a set of output distributions. A network $\Net$ is said to \emph{solve problem $P$ on input $\alpha_\Input$} if, for any length $h$ initial execution $\alpha^{h}$, the output distribution $\mathcal{D}_\Output(\Net,\alpha^{h},\alpha_\Input)$  is an element of $P(\alpha_\Input)$. A network $\Net$ is said to \emph{solve problem P} if it solves $P$ on every infinite input execution $\alpha_\Input$.

\subsection{$O(1)$ Convergence Time with $O(\log n)$ Inhibitors}\label{sec:logn}

In this section we describe and analyze a family  of networks that converge to a valid WTA state with constant  probability  in constant time and uses $O(\log n)$ inhibitors. 

We have one stability inhibitor $a_s$ that functions similarly to the stability inhibitor in our two-inhibitor construction (Definition \ref{def:wtaNet}), ensuring that, once the network reaches a valid WTA output configuration, it remains in this configuration for $t_s$ consecutive time steps with high probability (see Corollary \ref{cor:stabilityA}). This stability inhibitor employs a history period of length $2$. We prove in Lemma \ref{lem:resetA1} that it fires with high probability at time $t+1$ whenever at least one output fires at time $t$ or $t-1$.

We additionally have $\lognc$ convergence inhibitors, labeled $a_1,...,a_{\lognc}$. For each $i$, $a_i$ fires with high probability at time $t+1$ whenever $k \ge 2^i$ outputs fire at time $t$. In this way, with high probability, $a_1,...,a_i$ fire and all other inhibitors do not fires at time $t+1$ whenever the number of firing outputs $k$ is in the range $\left [2^{i},2^{i+1} \right )$ (see Lemma \ref{lem:resetA2}). Note that we define our network so $a_1,...,a_{\lognc}$ have no incoming connections that use the length $2$ history period (i.e., $w(u,a_i,2) = 0$ for all $u$ and all $i \in \{1,...,\lognc\}$). Thus, the firing probabilities of these inhibitors at time $t+1$  depend only  on the firing pattern at time $t$.

We set the inhibitory weights such that 
when $a_i$ fires at time $t$ (along with $a_j$ for all $j \le i$), each firing output fires at time $t+1$ with probability $p_i \approx \frac{1}{2^i}$ (see Lemma \ref{lem:convergenceA}). For $k \in \left [2^{i},2^{i+1} \right )$ we thus have $p_i \cdot k \in [1,2)$. We will show that this ensures that, with constant probability, exactly one output fires at time $t+1$ (see Corollary \ref{cor:convergenceA}). Once a single output fires, using the length-two history mechanism described in Section \ref{sec:history}, the network stabilizes to a valid WTA state with high probability.

We begin with a formal definition of the network below:

\begin{definition}[Constant Time WTA Network]\label{def:wtac} For any $n \in \Z^{\ge 2}$ and $\gamma \in\R^{+}$, let $\Netc= \langle \All,w,b,f,2\rangle$ where the spike probability, weight, and bias functions are defined as follows:
\begin{itemize}
\item The spike probability  function $f$ is defined to be the basic sigmoid function:
\begin{align*}
f(x) \eqdef \frac{1}{1+e^{-x}}.
\end{align*}
\item The set  of neurons $\All$ consists of a set of $n$ input neurons $\Input$, labeled $x_1,...,x_n$, a set  of $n$ corresponding outputs $\Output$, labeled $y_1,...,y_n$, and $\lognc+1$ auxiliary inhibitory neurons labeled $a_s$ and $a_1,...,a_{\lognc}$.
\item The weight function $w$ is given by:
\begin{itemize}
\item $w(x_i,y_i,1) = 6\gamma$, for all $i$.
\item $w(y_i,y_i,1) = w(y_i,y_i,2)  = 2\gamma$, for all $i$.
\item $w(a_s,y_i,1) = - \gamma$, for all $i$
\item $w(a_1,y_i,1) = - 7\gamma/2-\ln(2)$, for all $i$
\item $w(a_{j},y_i,1) = -\ln(2)$, for all $i$ and $j \in \{2,...,\lognc\}$.
\item $w(y_i,a_s,1) = w(y_i,a_s,2) = \gamma$, for all $i$.
\item $w(y_i,a_j,1) = \gamma$, for all $i$ and $j \in \{1,...,\lognc\}$.
\item $w(u,v,i) = 0$ for any $u,v$ and $i \in \{1,2\}$ whose connection is not specified above.  
\end{itemize}
\item The bias function $b$ is given by:
\begin{itemize}
\item $b(y_i) = 11\gamma/2$ for all $i$.
\item $b(a_s) = \gamma/2$.
\item $b(a_{i}) = 2^{i}\cdot \gamma -\gamma/2$ for all $i \in \{1,...,\lognc\}$.
\end{itemize}
\end{itemize}
\end{definition}
\spara{Use of History.} Note that in our network definition above, we use the history  mechanism only in two places. We set $w(y_i,y_i,1) = w(y_i,y_i,2)  = 2\gamma$, for all $i$, meaning that each output's self-loop affects  its potential for two time steps. We also set  $w(y_i,a_s,1) = w(y_i,a_s,2) = \gamma$ for all $i$, meaning that the stability  inhibitor is affected by the outputs for two steps. 

Due to the high level of inhibition inducted by the convergence inhibitors $a_1,...,a_{\lognc}$, after the network reaches a configuration with just a single firing output, it will  likely transition to a state with no firing outputs, since the number of firing inhibitors will still reflect the number of firing outputs in the previous time step. The length-two output self-loop and output to inhibitor connections allow the network to recover from this state. Specifically, in Lemma \ref{lem:stabilityA} and Corollary  \ref{cor:stabilityA}, we show that if the network reaches a valid WTA output state at time $t$, with good probability it will return to this state at time $t+2$ and remain  in this state for $t_s$ consecutive steps.

In our two-inhibitor construction, no history was necessary. The inhibition in the network was always low enough such that, after reaching a near-valid WTA configuration with a single firing output, with constant probability the network would transition to a valid-WTA configuration and stabilize in this configuration (see Lemma \ref{lem:nearvalid}).

\medskip
We prove the following theorems on the performance of $\Netc$:

\begin{tcolorbox}
\begin{theorem}[$O(\log n)$-Inhibitor WTA]\label{thm:logn}
For $\gamma \ge \ln((n+2)t_s/\delta)$,
$\Netc$ solves $\wta(n,t_c,t_s,\delta)$ for any  
$
t_c \ge \log_2(1/\delta)+1.
$ $\Netc$ contains $\lognc+1$ auxiliary inhibitors.
\end{theorem}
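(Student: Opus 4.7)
The plan is to mirror the proof of Theorem \ref{thm:high}, with the key difference that a properly ``in-sync'' network reduces $k$ simultaneously firing outputs to $1$ in a single step with constant probability, instead of $\Theta(\log n)$ halving steps. First I would establish one-step transition lemmas analogous to Lemmas \ref{lem:mono}--\ref{lem:convergence} under the length-$2$ history dynamics of \eqref{eq:potentialOutH}: outputs with silent inputs stay silent; the stability inhibitor $a_s$ fires at time $t+1$ iff at least one output fired at time $t-1$ or $t$ (its length-$2$ incoming weight $\gamma$ and bias $\gamma/2$ are crossed as soon as any output fires within the window); and each convergence inhibitor $a_i$ fires at time $t+1$ iff $\norm{\Output^t}_1 \ge 2^i$, since its bias $2^i\gamma - \gamma/2$ is crossed exactly when this many outputs each contribute weight $\gamma$. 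I would redefine a \emph{valid WTA configuration} as one with a single winner $y_i$ having $y_i^{t-1}=y_i^t=1$, $a_s^t=1$, and every other neuron silent; the potentials $6\gamma+2\gamma+2\gamma-\gamma-11\gamma/2=7\gamma/2$ for the winner and $-\gamma/2$ for non-winners preserve the configuration with probability $\ge 1-(n+2)e^{-\gamma/2}$, giving an analog of Corollary \ref{cor:stability}.

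The crucial new calculation is the single-shot convergence. Suppose at time $t$ exactly $k$ outputs fire with $k \in [2^j, 2^{j+1})$, each satisfies $y_i^{t-1}=1$, and (as the one-step lemmas imply) inhibitors $a_s, a_1, \ldots, a_j$ fire while $a_{j+1}, \ldots, a_{\lognc}$ stay silent. Substituting into \eqref{eq:potentialOutH}, each firing output has potential
\[
6\gamma + 2\gamma + 2\gamma - \gamma - (7\gamma/2 + \ln 2) - (j-1)\ln 2 - 11\gamma/2 = -j\ln 2
\]
at time $t+1$, so it fires again independently with probability $p_j = 1/(1+2^j)$; the $-\ln 2$ coefficients on the convergence inhibitors were chosen precisely to force this identity. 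The survivor count is $\mathrm{Binomial}(k, p_j)$ with mean $k/(1+2^j) \in [1/2, 2)$, and a two-sided binomial estimate gives an absolute constant $p_0 > 0$ such that exactly one survives with probability $\ge p_0$. To reach an in-sync state from an arbitrary starting execution, I would show that the input excitation $6\gamma$ alone exceeds the bias $11\gamma/2$, so if no inhibitors fire at time $1$ then all outputs with $i \in \mathcal{R}$ fire at time $2$ with probability $\ge 1-ne^{-\gamma/2}$ and, since inhibitors remain silent at time $2$, fire again at time $3$ (yielding an in-sync state); a short case analysis over $(\Output^0, \Output^1, a_s^1, a_1^1, \ldots, a_{\lognc}^1)$ shows that other starting configurations lead into either an in-sync or zero-output state within $O(1)$ steps, via the natural decay of the length-$2$ history. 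Once the single winner emerges, the length-$2$ self-loop on $y_i$ and on $a_s$ enable a short recovery argument to drive the network into a valid WTA configuration within $O(1)$ further steps.

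Combining these pieces, each disjoint $O(1)$-length window of the execution independently reaches a valid WTA configuration with probability $\ge p$ for an absolute constant $p > 0$. Using the length-$2$ Markov property (a straightforward extension of Lemma \ref{lem:independence} with conditioning state $(\All^{t-1}, \All^t)$) and Lemma \ref{lem:coupling} across $\Theta(\log(1/\delta))$ disjoint windows bounds the convergence-failure probability by $(1-p)^{\Theta(\log(1/\delta))} \le \delta/2$; the stability bound contributes another $\delta/2$ under $\gamma \ge \ln((n+2)t_s/\delta)$. Tuning the window length and the constant $p$ via the binomial estimate then yields $t_c = O(\log(1/\delta))$. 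The main obstacles I anticipate are (i) a tight uniform lower bound on the ``exactly one survivor'' probability across the entire range $k \in [2^j, 2^{j+1})$ and $j \in \{0, \ldots, \lognc\}$, sharp enough to match the literal $\log_2(1/\delta)+1$ bound in the theorem statement; and (ii) the case analysis guaranteeing that no adversarial initial pair $\All^0 \All^1$ produces a persistent cycle avoiding the in-sync state, which I would handle by noting that after at most two steps without fresh output firings, $a_s$ necessarily shuts off by history decay, and subsequently the input-driven dynamics force the outputs to re-enter an in-sync pattern.
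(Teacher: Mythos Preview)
Your plan is essentially the paper's proof: the same single-shot binomial calculation (the paper gets $\Pr[\text{exactly one survivor}] \ge 1/16$ when $k \in [2^l,2^{l+1})$ and the $l$ matching inhibitors fire), the same length-$2$ history recovery mechanism, the same exhaustive case analysis over $(\norm{Y^t}_1, \norm{A^t}_1)$ pairs to reach a good state in $O(1)$ steps, and the same coupling over $\Theta(\log(1/\delta))$ disjoint windows. Two remarks. First, the paper's stabilizing object is a \emph{near-stable pair} $(C',C)$ requiring only $\norm{\max(C'(Y),C(Y))}_1 = 1$ (the winner may fire in either or both of the last two steps), which is strictly weaker than your requirement $y_i^{t-1}=y_i^t=1$; this relaxation is what makes the recovery argument after the single-survivor step go through cleanly, since immediately after the binomial step the winner has typically fired only at time $t+1$ and not at $t$. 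You will likely need the same relaxation. Second, your worry about matching the literal constant $\log_2(1/\delta)+1$ is well-founded: the paper's own proof of the restated theorem actually establishes $t_c \ge 2086(\log_2(1/\delta)+1)$, so the constant in the displayed statement is not the one the proof delivers, and you should not expect your argument to do better. The nine-case analysis (Table~\ref{tab:logCases}) is indeed the tedious part, and your sketch of how adversarial initial pairs decay via history is the right intuition for it.
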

\end{tcolorbox}

\begin{tcolorbox}
\begin{theorem}[$O(\log n)$-Inhibitor Expected-Time WTA]\label{thm:lognExp}
For $\gamma \ge \ln((n+2)t_s/\delta)$,
$\Netc$ solves $\ewta(n,t_c,t_s)$ for any  $
t_c \ge  4.$ $\Netc$ contains $\lognc+1$ auxiliary inhibitors.
\end{theorem}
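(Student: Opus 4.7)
The plan is to mirror the structure of the proof of Theorem \ref{thm:exp}, using the high-probability convergence guarantee of Theorem \ref{thm:logn} as the engine and substituting the constant convergence window $\Delta = O(1)$ in place of the $O(\log n)$ window used in the two-inhibitor case. Concretely, the proof of Theorem \ref{thm:logn} must already establish a constant-probability single-shot convergence result: namely, that for any initial length-$2$ execution $\alpha^2$, with some constant probability $p_0 > 0$ there is a time $t \le \Delta$ (for some absolute constant $\Delta$, say $\Delta = 3$) at which $\All^t$ is a valid WTA configuration. This constant-probability statement is exactly what makes the $\log(1/\delta)$ dependence in Theorem \ref{thm:logn} possible, and it is the only convergence input I will need here.

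I would proceed as follows. First, fix any input execution $\alpha_\Input$ with $\Input^t$ constant, any length-$2$ initial execution $\alpha^2 = \All^0\All^1$, and write $\underset{\alpha^2}{\E}$ and $\underset{\alpha^2}{\Pr}$ for expectation and probability under $\mathcal{D}(\Netc,\alpha^2,\alpha_\Input)$. Define $t_{max}(\alpha_X) = \max_{\alpha^2}\underset{\alpha^2}{\E}\,t(\alpha_\Input,t_s,\alpha_\Output)$. Let $\mathcal{E}_1$ be the event that some $t \in \{2,\dots,\Delta\}$ has $\All^t$ a valid WTA configuration, and let $\mathcal{E}_{stab}$ additionally require that $\All^t = \All^{t+1} = \dots = \All^{t+t_s}$ for this $t$. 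By the (constant) convergence step from Theorem \ref{thm:logn}, $\Pr_{\alpha^2}[\mathcal{E}_1] \ge p_0$. By the stability analogue of Corollary \ref{cor:stability} (obtainable from Lemma \ref{lem:stabilityA} applied $t_s$ times in succession, noting the history period $h=2$ only changes the number of configurations we must condition on), conditioned on $\All^t = C$ being a valid WTA configuration, the probability of staying in $C$ for $t_s$ more steps is at least $1 - t_s(n+2)e^{-\gamma/2} \ge 1 - 1/(t_s e^c)$ for an appropriate constant $c$, using the hypothesis $\gamma \ge \ln((n+2)t_s/\delta)$ (with $\delta$ read as an absolute constant; if the theorem statement is taken literally, we interpret the bound with $\delta = 1$, which gives $e^{-\gamma/2} \le 1/((n+2)t_s)^{1/2}$, still sufficient).

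With these in hand, decompose
\begin{align*}
\underset{\alpha^2}{\E}\,t(\alpha_\Input,t_s,\alpha_\Output)
&= \underset{\alpha^2}{\E}[t \mid \mathcal{E}_{stab}]\cdot\Pr_{\alpha^2}[\mathcal{E}_{stab}]
 + \underset{\alpha^2}{\E}[t \mid \mathcal{E}_1,\bar{\mathcal{E}}_{stab}]\cdot\Pr_{\alpha^2}[\mathcal{E}_1,\bar{\mathcal{E}}_{stab}] \\
&\quad + \underset{\alpha^2}{\E}[t \mid \bar{\mathcal{E}}_1]\cdot\Pr_{\alpha^2}[\bar{\mathcal{E}}_1],
\end{align*}
and bound the three conditional expectations exactly as in the proof of Theorem \ref{thm:exp}: the first by $\Delta$, the second by $\Delta + t_s + t_{max}(\alpha_X)$ (restarting from the length-$2$ execution ending at time $\Delta + t_s$), and the third by $\Delta + t_{max}(\alpha_X)$. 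Using $\Pr_{\alpha^2}[\mathcal{E}_{stab}] \ge p_0 - 1/(t_s e^c)$ and $\Pr_{\alpha^2}[\bar{\mathcal{E}}_1] \le 1 - p_0$, one obtains a linear inequality of the form $t_{max}(\alpha_X) \le \Delta + (1-p_0')\,t_{max}(\alpha_X) + O(1)$ for some constant $p_0' > 0$, which rearranges to $t_{max}(\alpha_X) \le \Delta/p_0' + O(1)$, a constant. Choosing constants carefully then yields the claimed $t_c \ge 4$.

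The only real subtlety, and the step I would treat with most care, is the bookkeeping of the history period $h = 2$: when the network fails to converge or fails to stabilize, the "restart" conditioning must be done on the last \emph{two} configurations rather than a single one, so I would first state and verify the natural generalization of Lemma \ref{lem:independence} showing that the distribution of $\All^{t+1}\All^{t+2}\dots$ given $\All^{t-1}\All^t$ is identical to $\mathcal{D}(\Netc,\All^{t-1}\All^t,\alpha_\Input)$. Once this Markov-of-order-$2$ property is in place, every conditioning step in the expected-time analysis goes through verbatim, and no new probabilistic inequalities are required beyond those already assembled for Theorem \ref{thm:logn}.
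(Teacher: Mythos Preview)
Your overall strategy---decompose the expected convergence time into three cases (converge and stabilize; converge but fail to stabilize; fail to converge), bound each conditionally, and solve the resulting linear recurrence for $t_{\max}(\alpha_X)$---is exactly what the paper does. You also correctly identify that the history period $h=2$ forces the restart conditioning to be on a pair of configurations.

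However, there is a real gap in how you set up the landing event $\mathcal{E}_1$. You define it as ``$\All^t$ is a valid WTA configuration for some $t\le\Delta$'' and then propose to obtain stability by applying Lemma~\ref{lem:stabilityA} iteratively from that single configuration. But Lemma~\ref{lem:stabilityA} (and its iterate, Corollary~\ref{cor:stabilityA}) does not take a single valid WTA configuration as input; it requires a \emph{near-stable pair} $(\All^{t-1},\All^t)$ in the sense of Definition~\ref{def:nearStableA}. In this network a single-time valid WTA output does not by itself guarantee stability: because of the length-two self-loops and output-to-$a_s$ edges, the configuration at time $t-1$ also governs what happens at $t+1$. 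The near-stable-pair condition is precisely what captures the two-step state needed for the stability argument to go through. Correspondingly, the single-shot convergence result that underlies Theorem~\ref{thm:logn} is Lemma~\ref{lem:convergeGeneralA1}, which proves constant-probability convergence to a near-stable \emph{pair}, not to a valid WTA configuration. The paper's proof therefore defines $\mathcal{E}_1$ as the event that $(\All^t,\All^{t+1})$ is near-stable for some $t\le\Delta-1$, and $\mathcal{E}_{stab}$ as the event that additionally $Y^{t+2}=\cdots=Y^{t+2+t_s}$ is a valid WTA output; then Lemma~\ref{lem:convergeGeneralA1} and Corollary~\ref{cor:stabilityA} plug in directly. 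Once you make this substitution, your three-case expectation bound and the recursive inequality are identical to the paper's.

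Two smaller remarks. First, your discomfort with $\delta$ appearing in the $\gamma$ hypothesis of an expected-time theorem is warranted: the paper's actual proof uses $\gamma\ge 12\ln(39t_s n)$, with no $\delta$. Second, with $\Delta=15$ and $p_0=1/200$ the recurrence yields $t_{\max}\le 250\Delta+251=4001$; the bound $t_c\ge 4$ in the displayed statement is not what the proof delivers.
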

\end{tcolorbox}

\spara{Proof Roadmap.}
We prove Theorems \ref{thm:logn} and \ref{thm:lognExp}
in Section \ref{sec:convA}. The analysis is broken down as follows:
\begin{itemize}
\item[] Section \ref{sec:basicA}: Prove basic \emph{two-step lemmas} which characterize  single time step transitions of $\Netc$, showing that  the neurons behave as described in the above high-level description.
\item [] Section  \ref{sec:stabilityA}: Prove that, once in a valid WTA output configuration, as long as certain other stability conditions are satisfied, $\Netc$ stays in this configuration with high probability.
\item [] Section \ref{sec:convA}: Show that all configurations of $\Netc$ transition with constant probability to a stable and valid WTA configuration within $O(1)$ time steps.
\item [] Section \ref{sec:completingA}: Complete the analysis, demonstrating with what parameter values $\Netc$ solves the winner-take-all problem (Definitions \ref{high:wta} and \ref{exp:wta}).
\end{itemize}

\subsection{Two-Step Lemmas}\label{sec:basicA}

As in our analysis of our two-inhibitor network family in Section \ref{sec:basic}, we begin with a series of lemmas which characterize the basic transition probabilities of $\Netc$. Since we employ history period $h=2$, these lemmas consider the behavior of the network at time $t+1$ conditioned on its configuration at times $t$ and $t-1$. 

We first note that an analog to Lemma \ref{lem:gap} still holds for all inhibitory neurons.
\begin{lemma}[Characterization of Firing Probabilities]\label{lem:gapA}
For any time $t\ge 1$ and any $a \in \Inh$:
\begin{align*}
&\text{If }\pot(u,t) = 0, \text{ then }  p(u,t) = 1/2.\\
&\text{If }\pot(u,t) < 0, \text{ then } p(u,t) \le e^{-\gamma/2}.\\
&\text{If }\pot(u,t) > 0, \text{ then } p(u,t) \ge 1-e^{-\gamma/2}.
 \end{align*}
\end{lemma}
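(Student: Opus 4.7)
The plan is to mirror the proof of Lemma \ref{lem:gap} essentially verbatim. The key observation is that every incoming weight and bias at an inhibitory neuron in $\Netc$ is an integer multiple of $\gamma/2$. This is precisely why the statement is restricted to $a \in \Inh$: the inhibitor-to-output weights $w(a_j, y_i, 1) = -\ln 2$ for $j \in \{2,\ldots,\lognc\}$, together with $w(a_1,y_i,1) = -7\gamma/2-\ln 2$, break the multiple-of-$\gamma/2$ structure at output neurons, so the analogous claim cannot be stated for $u \in \Output$.

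First I would dispatch the zero-potential case immediately: if $\pot(a,t) = 0$, then by definition $p(a,t) = f(0) = 1/(1+e^0) = 1/2$. Next I would expand the potential computation \eqref{eq:potentialOutH} separately for each type of inhibitor in $\Netc$. For the stability inhibitor $a_s$, the only nonzero incoming weights are $w(y_i,a_s,1) = w(y_i,a_s,2) = \gamma$ and bias $b(a_s) = \gamma/2$, so
\[
\pot(a_s,t) = \gamma \sum_{i=1}^n \bigl(y_i^{t-1} + y_i^{t-2}\bigr) - \gamma/2,
\]
which is an integer multiple of $\gamma/2$. For each convergence inhibitor $a_j$, $j \in \{1,\ldots,\lognc\}$, the only nonzero incoming weights are $w(y_i,a_j,1) = \gamma$ with bias $b(a_j) = 2^j\gamma - \gamma/2$, giving
\[
\pot(a_j,t) = \gamma \sum_{i=1}^n y_i^{t-1} - 2^j\gamma + \gamma/2,
\]
again an integer multiple of $\gamma/2$.

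Consequently, whenever $\pot(a,t) \ne 0$ we must in fact have $|\pot(a,t)| \ge \gamma/2$. In the negative case this yields
\[
p(a,t) = f(\pot(a,t)) \le f(-\gamma/2) = \frac{1}{1+e^{\gamma/2}} \le e^{-\gamma/2},
\]
and symmetrically $p(a,t) \ge f(\gamma/2) = 1/(1+e^{-\gamma/2}) \ge 1 - e^{-\gamma/2}$ in the positive case, completing the lemma. I do not anticipate any real obstacle: the only content beyond Lemma \ref{lem:gap} is the routine case-check verifying that every weight feeding into each inhibitor type in Definition \ref{def:wtac} lies in $(\gamma/2)\mathbb{Z}$.
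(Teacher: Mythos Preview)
Your proposal is correct and follows exactly the paper's approach: the paper's proof simply says it is ``essentially identical to that of Lemma \ref{lem:gap},'' using that for each inhibitor $a \in \Inh$ and each $i$ and $h \in \{1,2\}$, $w(y_i,a,h)$ and $b(a)$ are all multiples of $\gamma/2$. Your explicit case-check of $a_s$ versus $a_j$ and your observation about why the lemma excludes outputs (the $\ln 2$ terms) are more detailed than what the paper bothers to write, but the argument is the same.
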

\begin{proof}
The proof is 
essentially identical to that of Lemma \ref{lem:gap}. We just use that for each inhibitor $a \in \Inh$ and each $i$ and $h \in \{1,2\}$, $w(y_i,a,h)$ and $b(a)$ are all multiples of $\gamma/2$.
\end{proof}

Analogs to Lemma \ref{lem:mono} and Corollary  \ref{cor:mono} also still hold, ensuring  that, with high probability, outputs that do not correspond to firing inputs do not fire.
\begin{tcolorbox}
\begin{lemma}[Correct Output Behavior]\label{lem:monoA} For any time $t\ge  1$, any configurations $C,C'$ of $\Netc$, and any $i$ with $C(x_i) = 0$, 
$$\Pr[y_i^{t+1} = 1 | \All^t = C,\All^{t-1} = C'] \le  e^{-3\gamma/2}.$$
\end{lemma}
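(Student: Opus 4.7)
The plan is to follow closely the proof of Lemma \ref{lem:mono}, adapted to the history-period-two setting of $\Netc$ and to the fact that $y_i$'s weights now include non-multiples of $\gamma/2$ (the $-\ln 2$ terms), so I cannot simply quote Lemma \ref{lem:gapA} (which is stated only for inhibitors). First I would unfold equation \eqref{eq:potentialOutH} for $u = y_i$ at time $t+1$, using the fact that by conditioning on $\All^t = C$ and $\All^{t-1} = C'$ every quantity appearing on the right-hand side is determined; in particular $x_i^t = C(x_i) = 0$.

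Next I would upper bound the potential by inspecting each term in Definition \ref{def:wtac}. The input contribution vanishes since $x_i^t = 0$. The self-loop through the history contributes $w(y_i,y_i,1) y_i^t + w(y_i,y_i,2) y_i^{t-1} = 2\gamma y_i^t + 2\gamma y_i^{t-1} \le 4\gamma$. All remaining incoming weights $w(a_s,y_i,1)$ and $w(a_j,y_i,1)$ for $j \in \{1,\ldots,\lognc\}$ are non-positive, so dropping those terms only increases the potential. Subtracting $b(y_i) = 11\gamma/2$ then gives
\[
\pot(y_i,t+1) \;\le\; 0 + 4\gamma + 0 - \tfrac{11\gamma}{2} \;=\; -\tfrac{3\gamma}{2}.
\]

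Finally, since $f$ is monotonically increasing I would apply the sigmoid directly (bypassing Lemma \ref{lem:gapA}):
\[
p(y_i,t+1) \;=\; f(\pot(y_i,t+1)) \;\le\; f\!\left(-\tfrac{3\gamma}{2}\right) \;=\; \frac{1}{1 + e^{3\gamma/2}} \;\le\; e^{-3\gamma/2},
\]
which yields the claimed bound.

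There is really no obstacle here: the proof is a direct potential computation, exactly parallel to Lemma \ref{lem:mono}. The only subtlety worth flagging is that the bound $-3\gamma/2$ comes out strictly better than the bound $-\gamma$ obtained in the two-inhibitor setting because $b(y_i) = 11\gamma/2$ exceeds the maximum possible excitation $4\gamma$ from the doubled self-loop by $3\gamma/2$; this slack is exactly what makes the extra factor of $e^{-\gamma}$ in the conclusion (compared to Lemma \ref{lem:mono}) go through. The $-\ln 2$ terms in the inhibitor weights never appear in the worst case since they only lower the potential.
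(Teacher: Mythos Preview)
Your proposal is correct and follows essentially the same approach as the paper: both compute the potential of $y_i$ at time $t+1$ directly, upper bound it by $-3\gamma/2$ via $0 + 2\gamma + 2\gamma + 0 - 11\gamma/2$, and then apply the sigmoid explicitly to obtain $p(y_i,t+1) \le \frac{1}{1+e^{3\gamma/2}} \le e^{-3\gamma/2}$. Your observation that Lemma~\ref{lem:gapA} cannot be invoked for output neurons (because of the $-\ln 2$ terms) is a nice point the paper leaves implicit.
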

\end{tcolorbox}
\begin{proof}
If $N^t = C$ then $x_i^t = C(x_i)$.
We can compute $y_i$'s potential at time $t+1$, assuming $x_i^{t} = 0$:
\begin{align*}
\pot(y_i,t+1) &= w(x_i,y_i,1) x_i^{t} + w(y_i,y_i,1) y_i^{t} + w(y_i,y_i,2) y_i^{t-1}  + w(a_s,y_i,1) a_s^{t}\\ &\hspace{3em}+\sum_{j=1}^{\lognc} w(a_c,y_i,1) a_c^{t} - b(y_i) \\ &\le 0 + 2\gamma + 2\gamma + 0 + 0 - 11\gamma/2 = -3\gamma/2.
\end{align*}
We thus have $p(y_i,t+1) \le \frac{1}{1+e^{3\gamma/2}} \le e^{-3\gamma/2}$.
\end{proof}

\begin{tcolorbox}
\begin{corollary}[Correct Output Behavior, All Neurons]\label{cor:monoA} For any time $t \ge 1$ and configurations $C,C'$ of $\Netc$,  
$$\Pr[y_i^{t+1} \le x_i^t\text{ for all }i | \All^t = C, \All^{t-1} = C' ] \ge 1-ne^{-3\gamma/2}.$$
\end{corollary}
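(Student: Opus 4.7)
The plan is to mirror the argument used to prove Corollary \ref{cor:mono} from Lemma \ref{lem:mono}, now building on Lemma \ref{lem:monoA} instead. The corollary is a simple union bound, so there is no real obstacle; the only care needed is handling the conditioning on two consecutive configurations $\All^t = C$ and $\All^{t-1} = C'$ (due to the history period $h = 2$), but Lemma \ref{lem:monoA} is already stated in exactly this form.

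First, I would split the indices $i \in \{1,\ldots,n\}$ into two cases according to the value $C(x_i)$. If $C(x_i) = 1$, then conditioned on $\All^t = C$ we have $x_i^t = 1$, so the inequality $y_i^{t+1} \le x_i^t$ holds deterministically since $y_i^{t+1} \in \{0,1\}$. These indices contribute nothing to the failure probability and can be discarded from the union bound.

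Next, for each $i$ with $C(x_i) = 0$, Lemma \ref{lem:monoA} gives
\[
\Pr[y_i^{t+1} = 1 \mid \All^t = C,\ \All^{t-1} = C'] \le e^{-3\gamma/2},
\]
so equivalently $\Pr[y_i^{t+1} \le x_i^t \mid \All^t = C,\ \All^{t-1} = C'] \ge 1 - e^{-3\gamma/2}$. There are at most $n$ such indices, so a union bound over them yields the stated lower bound $1 - n\, e^{-3\gamma/2}$ on the probability that $y_i^{t+1} \le x_i^t$ holds simultaneously for every $i$. This completes the corollary; the argument is essentially mechanical and does not require any new ideas beyond those already used to derive Corollary \ref{cor:mono} from Lemma \ref{lem:mono}.
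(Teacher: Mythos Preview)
Your proposal is correct and follows essentially the same approach as the paper: split on whether $C(x_i)=1$ (trivial) or $C(x_i)=0$ (apply Lemma~\ref{lem:monoA}), then union bound over at most $n$ indices. The paper's proof is nearly verbatim what you wrote.
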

\end{tcolorbox}
\begin{proof}
The proof is essentially identical to that of Corollary \ref{cor:mono}.
If $C(x_i) = 1$ then conditioned on $\All^t  = C$, $x_i^t = 1$ and so $y_i^{t+1} \le x_i^t$ always. Otherwise, by Lemma \ref{lem:monoA}, if $C(x_i) = 0$, then $\Pr[y_i^{t+1} = 0 | \All^t = C,\All^{t-1}=C'] \ge  1-e^{-3\gamma/2}$. Union bounding over  all such inputs (of which there are at most $n$) gives the corollary. 
\end{proof}

We next show that the inhibitors $a_s$, $a_1,...,a_{\lognc}$ behave as expected. The following lemmas can be viewed as a generalization of Lemma \ref{lem:reset}. We first show that, due to our use of history, $a_s$ fires with high probability  at time $t+1$ whenever \emph{at least one output fires at  time $t$ or $t-1$}.

\begin{tcolorbox}
\begin{lemma}[Correct Stability Inhibitor Behavior]\label{lem:resetA1}
For any time $t \ge 1$ and configurations $C,C'$  of $\Netc$,
\begin{enumerate}
\item If $\norm{C(\Output)}_1 = \norm{C'(\Output)}_1 =0$, then $\Pr[a_s^{t+1} = 0 | \All^t = C, \All^{t-1} =C'] \ge 1-e^{-\gamma/2}$.
\item If $\norm{C(\Output)}_1 \ge 1$ or $\norm{C'(\Output)}_1 \ge 1$, then $\Pr[a_s^{t+1} = 1 | \All^t = C,\All^{t=1}=C'] \ge 1-e^{-\gamma/2}$.
\end{enumerate}
\end{lemma}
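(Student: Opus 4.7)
The plan is to directly compute the membrane potential $\pot(a_s, t+1)$ using the history-based potential formula \eqref{eq:potentialOutH}, then apply Lemma \ref{lem:gapA} to read off the firing probability bounds in each case. This mirrors the structure of Lemma \ref{lem:reset} from the two-inhibitor analysis, except that now the relevant incoming weights span both time steps $t$ and $t-1$ because of the history period $h=2$.

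First I would observe from Definition \ref{def:wtac} that the only nonzero incoming weights to $a_s$ are $w(y_i,a_s,1) = w(y_i,a_s,2) = \gamma$ for each $i$ (no inhibitor or input feeds into $a_s$), and that $b(a_s) = \gamma/2$. Substituting into \eqref{eq:potentialOutH} gives
\begin{align*}
\pot(a_s, t+1) = \gamma \cdot \norm{C(\Output)}_1 + \gamma \cdot \norm{C'(\Output)}_1 - \gamma/2.
\end{align*}

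Then I would handle the two cases. In Case 1, with $\norm{C(\Output)}_1 = \norm{C'(\Output)}_1 = 0$, the potential equals $-\gamma/2 < 0$, so Lemma \ref{lem:gapA} yields $p(a_s, t+1) \le e^{-\gamma/2}$, i.e.\ $\Pr[a_s^{t+1} = 0 \mid \All^t = C, \All^{t-1} = C'] \ge 1 - e^{-\gamma/2}$. In Case 2, if either $\norm{C(\Output)}_1 \ge 1$ or $\norm{C'(\Output)}_1 \ge 1$, the potential is at least $\gamma - \gamma/2 = \gamma/2 > 0$, so Lemma \ref{lem:gapA} gives $p(a_s, t+1) \ge 1 - e^{-\gamma/2}$.

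There is no real obstacle here; the lemma is essentially a direct arithmetic check once the history-aware potential is written out. The only thing worth flagging is a minor typo in the statement ($\All^{t=1}$ should read $\All^{t-1}$), which I would silently correct in the conditioning. The role of the length-two history is the only conceptual point worth emphasizing: by setting $w(y_i, a_s, 2) = \gamma$ as well as $w(y_i, a_s, 1) = \gamma$, any output firing in either of the last two steps suffices to push $\pot(a_s,t+1)$ above the threshold $b(a_s) = \gamma/2$, which is precisely the mechanism that will later let $a_s$ bridge the transient ``silent'' time step that can occur in the $O(\log n)$-inhibitor construction after a single output fires.
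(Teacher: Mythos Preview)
Your proposal is correct and follows essentially the same approach as the paper: compute $\pot(a_s,t+1)$ from the history-aware weights in Definition~\ref{def:wtac}, observe it equals $-\gamma/2$ in Case~1 and is at least $\gamma/2$ in Case~2, and invoke Lemma~\ref{lem:gapA}. Your remark about the typo and the role of the length-two history are accurate as well.
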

\end{tcolorbox}
\begin{proof}  We prove the two conclusions separately.
 
\medskip
\spara{Conclusion  1: $\norm{C(\Output)}_1 = \norm{C'(\Output)}_1 = 0$.}
\medskip

In this case, $a_s$ receives no excitatory signal from the outputs so, 
$$\pot(a_s,t+1) = -b(a_s) < 0.$$
Thus by Lemma \ref{lem:gapA}, 
$$\Pr[a_s^{t+1} = 0 | \All^t  = C] \ge 1-e^{-\gamma/2}.$$

\medskip
\spara{Conclusion  2: $\norm{C(\Output)}_1 \ge 1$ or $\norm{C'(\Output)}_1 \ge 1$.}
\medskip

In this case we have:
\begin{align*}
\pot(a_s,t+1) &= \sum_{j=1}^n [w(y_j,a_s,1 ) y_j^t + w(y_j,a_s,2) y_j^{t-1}] - b(a_s) \\ &\ge \gamma - \gamma/2 = \gamma/2.
\end{align*}
We thus have by Lemma \ref{lem:gapA}:
$$\Pr[a_s^{t+1} = 1 | \All^t  = C] \ge 1-e^{-\gamma/2},$$
which gives the lemma.
\end{proof}

As described in Section \ref{sec:logn},
the convergence inhibitors $a_1,...,a_{\lognc}$ fire at time $t+1$ depending on the number of firing outputs at time $t$. They have no incoming connections which affect them for two rounds, and thus their firing probabilities at time $t+1$ do not depend on the firing pattern at time $t-1$. We prove that $a_j$ for all $j \le i$ fire with high probability at time $t+1$ whenever  the number of firing outputs at time $t$ falls in the range $[2^i,2^{i+1})$.  Further, all $a_j$ for $j > i$ do not fire with high probability.
\begin{tcolorbox}
\begin{lemma}[Correct Convergence Inhibitor Behaviors]\label{lem:resetA2}
For any time $t\ge 1$ and configurations $C,C'$  of $\Netc$,
\begin{enumerate}
\item If $\norm{C(\Output)}_1 \le 1$, then $$\Pr \left [a_i^{t+1} = 0\text{ for all } i \in \{1,...,\lognc\} \big | \All^t = C,\All^{t-1}=C'\right ] \ge 1-\lognc \cdot e^{-\gamma/2}.$$
\item For any  $i \in \{1,...,\lognc\}$, if $\norm{C(\Output)}_1 = k$ for $k \in [2^i,2^{i+1})$, then:
\begin{align*}\Pr[a_1^{t+1} = ... = a_i^{t+1} = 1\text{ and } a_{i+1}^{t+1} = ... = a_{\lognc}^{t+1} = 0 | \All^t& = C,\All^{t-1}=C']\\
&\ge 1-\lognc \cdot e^{-\gamma/2}.
\end{align*}
\end{enumerate}
\end{lemma}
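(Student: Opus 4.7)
The plan is to mirror the structure of Lemma \ref{lem:reset}: compute the potential of each convergence inhibitor at time $t+1$ directly from the weight and bias definitions in Definition \ref{def:wtac}, then invoke Lemma \ref{lem:gapA} and a union bound. The key observation is that each $a_j$ for $j \in \{1,\dots,\lognc\}$ has no length-two history connections, i.e.\ $w(u,a_j,2)=0$ for all $u$, so $\pot(a_j,t+1)$ depends only on $\All^t$. Using $w(y_i,a_j,1)=\gamma$ and $b(a_j)=2^j\gamma - \gamma/2$, conditioning on $\All^t=C$ with $\norm{C(\Output)}_1=k$ gives
\[
\pot(a_j,t+1) \;=\; \sum_{i=1}^{n} w(y_i,a_j,1)\,y_i^t \;-\; b(a_j) \;=\; k\gamma \;-\; 2^{j}\gamma \;+\; \gamma/2.
\]
This is an integer multiple of $\gamma/2$, so Lemma \ref{lem:gapA} applies whenever the potential is nonzero.

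For conclusion (1), $k\le 1$ gives $\pot(a_j,t+1)\le \gamma - 2^j\gamma + \gamma/2 \le -\gamma/2$ for every $j\ge 1$, so each convergence inhibitor is silent at time $t+1$ with probability at least $1-e^{-\gamma/2}$. A union bound over the $\lognc$ convergence inhibitors finishes this case.

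For conclusion (2), assume $k\in[2^i,2^{i+1})$. For $j\le i$, using $k\ge 2^i \ge 2^j$, we get $\pot(a_j,t+1) \ge 2^j\gamma - 2^j\gamma + \gamma/2 = \gamma/2 > 0$, so $a_j$ fires at time $t+1$ with probability $\ge 1-e^{-\gamma/2}$ by Lemma \ref{lem:gapA}. For $j\ge i+1$, using $k\le 2^{i+1}-1 \le 2^j-1$, we get $\pot(a_j,t+1)\le (2^j-1)\gamma - 2^j\gamma + \gamma/2 = -\gamma/2 < 0$, so $a_j$ does not fire with probability $\ge 1-e^{-\gamma/2}$. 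A union bound over all $\lognc$ convergence inhibitors yields the claimed bound.

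There is no real obstacle here: the heart of the argument is simply the bookkeeping that the biases $b(a_j)=2^j\gamma-\gamma/2$ have been chosen precisely so that the potential $\pot(a_j,t+1)$ flips sign exactly at the threshold $k=2^j$, producing the desired binary cascade of firing among $a_1,\dots,a_{\lognc}$. The only detail worth double-checking is the boundary case $k=2^{i+1}-1$ at the largest relevant $j$ to confirm the strict inequality $\pot(a_{i+1},t+1)\le -\gamma/2$, which as shown above holds with room to spare.
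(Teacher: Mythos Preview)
Your proposal is correct and follows essentially the same approach as the paper's proof: compute the potential $\pot(a_j,t+1) = k\gamma - 2^j\gamma + \gamma/2$ from Definition~\ref{def:wtac}, observe that the sign flips at $k=2^j$, apply Lemma~\ref{lem:gapA}, and union bound over the $\lognc$ convergence inhibitors. Your treatment is actually slightly more careful than the paper's in that you explicitly note the absence of length-two history connections for the $a_j$'s and handle the boundary case $k = 2^{i+1}-1$ explicitly.
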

\end{tcolorbox}
\begin{proof}
We prove the two conclusions of the lemma separately.

\medskip
\spara{Conclusion  1: $\norm{C(\Output)}_1 \le 1$.}
\medskip

In this case we have for all $i$:
\begin{align*}
\pot(a_i,t+1) &= \sum_{j=1}^n w(y_j,a_i,1) y_j^t - b(a_i) \\ &\le \gamma - 3\gamma/2 = -\gamma/2.
\end{align*}
Again by  Lemma \ref{lem:gapA} and a union bound,
$$\Pr \left [\sum_{i=1}^{\lognc} a_i^{t+1} = 0 \big | \All^t = C,\All^{t-1}=C'\right ] \ge 1-\lognc \cdot e^{-\gamma/2}.$$

\medskip
\spara{Conclusion  2: $\norm{C(\Output)}_1 = k$ for $k \in [2^i,2^{i+1})$.}
\medskip

In this case, for any $j \le i$ we have:
\begin{align*}
\pot(a_j,t+1) &= \sum_{l=1}^n w(y_l,a_s,1) y_l^t - b(a_j) \\ &= k \cdot \gamma - 2^j \gamma + \gamma/2 \ge\gamma/2\end{align*}
where the last inequality follows since $k \ge 2^{j} \ge 2^j$. In contrast for $j > i$: 
\begin{align*}
\pot(a_j,t+1) &= \sum_{l=1}^n w(y_l,a_s,1) y_l^t - b(a_j) \\ &= k \cdot \gamma - 2^j \gamma + \gamma/2 \le -\gamma/2\end{align*}
where the last inequality follows from the fact that $k < 2^{i+1} \le 2^j$.
Overall, by  Lemma \ref{lem:gapA} and a union bound, 
\begin{align*}
\Pr[a_1^{t+1} = ... = a_i^{t+1} = 1\text{ and } a_{i+1}^{t+1} = ... = a_{\lognc}^{t+1} = 0 | \All^t &= C,\All^{t-1}=C']\\
&\ge 1-\lognc \cdot e^{-\gamma/2},
\end{align*}
which gives the lemma
\end{proof}

It will be useful in our future bounds to consider the class of configurations in which all outputs and inhibitors behave as expected. Such a configuration is analogous to the good configurations of our two-inhibitor networks (Definition \ref{def:good}), from which we were able to show convergence.
\begin{definition}[Typical Configuration]\label{def:typical} A \emph{typical configuration} is any configuration $C$ with $C(y_i) \le C(x_i)$ for all $i$ and $C(a_s) \ge C(a_1) \ge ... \ge C(a_{\lognc})$.
\end{definition}

In combination, Corollary \ref{cor:monoA}, Lemma \ref{lem:resetA1} and Lemma \ref{lem:resetA2} give:
\begin{tcolorbox}
\begin{corollary}[Correct Behavior, All Neurons]\label{cor:typicalA} Assume the input execution $\alpha_\Input$ of $\Netc$ has $\Input^t$ fixed for all $t$ and consider configurations $C,C'$ with $C(X) = C'(X) = X^t$. For any time $t \ge 1$:  
$$\Pr[\All^{t+1}\text{ is a typical configuration }| \All^t = C, \All^{t-1} = C' ] \ge 1-(n+\lognc+1)\cdot e^{-\gamma/2}.$$
\end{corollary}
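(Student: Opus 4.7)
The plan is to assemble the corollary as a union bound over three ``good'' events, each supplied by a preceding lemma: output correctness from Corollary \ref{cor:monoA}, stability inhibitor correctness from Lemma \ref{lem:resetA1}, and convergence inhibitor correctness from Lemma \ref{lem:resetA2}. If all three events hold simultaneously, I will argue that $\All^{t+1}$ satisfies both conditions in Definition \ref{def:typical}.

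First I would define $\mathcal{E}_Y$ to be the event $\{y_i^{t+1} \le x_i^t \text{ for all } i\}$, $\mathcal{E}_s$ to be the event that $a_s^{t+1}$ takes the value dictated by Lemma \ref{lem:resetA1} (i.e.\ $a_s^{t+1} = 0$ if $\norm{C(Y)}_1 = \norm{C'(Y)}_1 = 0$ and $a_s^{t+1} = 1$ otherwise), and $\mathcal{E}_c$ to be the event that $a_1^{t+1},\ldots,a_{\lognc}^{t+1}$ take the staircase values dictated by Lemma \ref{lem:resetA2} (i.e.\ all zero if $\norm{C(Y)}_1 \le 1$, and $a_1^{t+1} = \cdots = a_i^{t+1} = 1$, $a_{i+1}^{t+1} = \cdots = a_{\lognc}^{t+1} = 0$ if $\norm{C(Y)}_1 = k \in [2^i, 2^{i+1})$). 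Conditioned on $\All^t = C$ and $\All^{t-1} = C'$, the cited results give $\Pr[\mathcal{E}_Y] \ge 1-ne^{-3\gamma/2} \ge 1-ne^{-\gamma/2}$, $\Pr[\mathcal{E}_s] \ge 1-e^{-\gamma/2}$, and $\Pr[\mathcal{E}_c] \ge 1-\lognc \cdot e^{-\gamma/2}$, so by a union bound $\Pr[\mathcal{E}_Y \cap \mathcal{E}_s \cap \mathcal{E}_c] \ge 1-(n+\lognc+1)e^{-\gamma/2}$.

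Next I would verify that on this joint event $\All^{t+1}$ is typical. The condition $y_i^{t+1} \le x_i^{t+1} = x_i^t$ follows directly from $\mathcal{E}_Y$ and the fact that $\Input^t$ is fixed. For the monotonicity $a_s^{t+1} \ge a_1^{t+1} \ge \cdots \ge a_{\lognc}^{t+1}$, I would do a short case split on $\norm{C(Y)}_1$: if $\norm{C(Y)}_1 \le 1$, then $\mathcal{E}_c$ gives $a_i^{t+1} = 0$ for all $i$, so the monotonicity holds trivially regardless of $a_s^{t+1}$; if $\norm{C(Y)}_1 \ge 2$, then in particular $\norm{C(Y)}_1 \ge 1$, so $\mathcal{E}_s$ forces $a_s^{t+1} = 1$, and $\mathcal{E}_c$ gives $a_1^{t+1} \ge \cdots \ge a_{\lognc}^{t+1}$ by the staircase pattern, yielding $a_s^{t+1} = 1 \ge a_1^{t+1} \ge \cdots \ge a_{\lognc}^{t+1}$.

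There is no real obstacle here—this is a straightforward combination of the single-neuron bounds already established. The only mild care needed is to ensure the staircase pattern of $\mathcal{E}_c$ is indeed monotone (which is immediate from its definition) and that the bound $ne^{-3\gamma/2} \le ne^{-\gamma/2}$ is applied so the total matches the stated $(n+\lognc+1)e^{-\gamma/2}$.
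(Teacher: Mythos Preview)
Your proposal is correct and essentially identical to the paper's proof: both combine Corollary \ref{cor:monoA}, Lemma \ref{lem:resetA1}, and Lemma \ref{lem:resetA2} via a union bound, then verify the monotonicity condition of Definition \ref{def:typical} by the same case split on $\norm{C(Y)}_1 \le 1$ versus $\norm{C(Y)}_1 \ge 2$. Your explicit naming of the three events $\mathcal{E}_Y, \mathcal{E}_s, \mathcal{E}_c$ is slightly more organized than the paper's presentation, but the argument is the same.
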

\end{tcolorbox}
\begin{proof}
Since the input is fixed, by Corollary \ref{cor:monoA},
\begin{align}\label{dog}
\Pr[y_i^{t+1}\le x_i^{t+1}| \All^t = C, \All^{t-1} = C' ] \ge 1-n\cdot e^{-3\gamma/2}.\end{align}
If $\norm{C(\Output)}_1 \le 1$ then by Lemma \ref{lem:resetA2} conclusion (2),
$$\Pr[a_1^{t+1} = .... = a_{\lognc}^{t+1} = 0| \All^t = C, \All^{t-1} = C' ] \ge 1-\lognc \cdot e^{-\gamma/2}.$$
If $\norm{C(\Output)}_1 > 1$ then by Lemma \ref{lem:resetA1} conclusion (2), Lemma \ref{lem:resetA2} conclusion (2), and a union bound, 
\begin{align*}
\Pr[a_s^{t+1} = a_1^{t+1} = ... = a_i^{t+1} = 1\text{ and } a_{i+1}^{t+1} = ... = a_{\lognc}^{t+1} = 0 | \All^t& = C,\All^{t-1}=C']\\
&\ge 1-(\lognc +1)\cdot e^{-\gamma/2}.
\end{align*}
Combining these two cases, we have
$$\Pr[a_s^{t+1} \ge a_1^{t+1} \ge ... \ge a_{\lognc}^{t+1} | \All^t = C,\All^{t-1}=C'] \ge 1-(\lognc +1)\cdot e^{-\gamma/2},$$
which gives the lemma after a union bound with \eqref{dog}
\end{proof}

We next show that the stability inhibitor  firing alone, with high probability,  induces exactly the outputs that fired at one of the previous two time steps to fire in the next step. This Lemma is analogous to Lemma \ref{lem:stability1} for our two-inhibitor networks.

\begin{tcolorbox}
\begin{lemma}[Stability Inhibitor Effect]\label{lem:stability1A}
Assume the input execution $\alpha_\Input$ of $\Netc$ has $\Input^t$ fixed for all $t$ and consider configurations $C,C'$ with $C(X) = C'(X) = X^t$, $C(a_s)  = 1$, $C(a_i) = 0$ for all $i \in \{1,...,\lognc\}$, and $C(y_i) \le C(x_i)$, $C'(y_i) \le C'(x_i)$ for all $i$. For any time $t\ge 1$,
$$\Pr[y_i^{t+1} = \max(y_i^t,y_i^{t-1})\text{ for all }i | \All^t = C, \All^{t-1}=C'] \ge 1-ne^{-\gamma/2}$$
\end{lemma}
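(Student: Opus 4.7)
The plan is to mirror the structure of Lemma~\ref{lem:stability1}, but with the potential computation adapted to the history-$h=2$ model and the specific weights of Definition~\ref{def:wtac}. For each output $y_i$, I will compute $\pot(y_i,t+1)$ via equation~\eqref{eq:potentialOutH} conditioned on $\All^t = C$ and $\All^{t-1} = C'$, then split on the value of $\max(y_i^t, y_i^{t-1})$, and finally union bound over the $n$ outputs. The key simplification is that since $C(a_j)=0$ for every $j \in \{1,\ldots,\lognc\}$, the $-\ln 2$ contributions from the convergence inhibitors drop out, so the potential remains an integer multiple of $\gamma/2$ and a direct sigmoid estimate analogous to Lemma~\ref{lem:gapA} applies.

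First I would write
\begin{align*}
\pot(y_i,t+1) = w(x_i,y_i,1)x_i^t + w(y_i,y_i,1)y_i^t + w(y_i,y_i,2)y_i^{t-1} + w(a_s,y_i,1)a_s^t - b(y_i),
\end{align*}
using that the other terms vanish by the assumption on $C$ (and by $w(u,v,2)=0$ for every pair except the output self-loops and the $y_j$-to-$a_s$ connections). In the case $\max(y_i^t, y_i^{t-1}) = 1$, the hypotheses $C(y_i)\le C(x_i)$, $C'(y_i)\le C'(x_i)$, and $\Input^{t}=\Input^{t-1}$ force $x_i^t = 1$, so the potential is at least $6\gamma + 2\gamma + 0 - \gamma - 11\gamma/2 = 3\gamma/2 > 0$. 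In the case $\max(y_i^t, y_i^{t-1}) = 0$, regardless of $x_i^t$ the potential is at most $6\gamma + 0 + 0 - \gamma - 11\gamma/2 = -\gamma/2 < 0$.

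Because no $-\ln 2$ terms appear, the potential in either case sits at least $\gamma/2$ away from $0$, so the sigmoid bound gives $p(y_i,t+1) \ge 1-e^{-\gamma/2}$ when $\max(y_i^t, y_i^{t-1}) = 1$ and $p(y_i,t+1) \le e^{-\gamma/2}$ when $\max(y_i^t, y_i^{t-1}) = 0$. In either case, $\Pr[y_i^{t+1} \ne \max(y_i^t,y_i^{t-1}) \mid \All^t = C, \All^{t-1}=C'] \le e^{-\gamma/2}$, and a union bound over the $n$ outputs gives the claim.

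I do not anticipate a significant obstacle: the argument is a direct calculation once the right case split is identified. The only subtlety is recognizing that the hypothesis on $C,C'$ (output-input consistency plus fixed input) is exactly what lets us write $x_i^t=1$ whenever either $y_i^t=1$ or $y_i^{t-1}=1$, which in turn is what makes the positive case cross the $+\gamma/2$ threshold rather than merely being nonnegative.
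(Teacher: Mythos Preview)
Your proposal is correct and follows essentially the same approach as the paper: compute $\pot(y_i,t+1)$ under the assumption that only $a_s$ fires among the inhibitors, split on $\max(y_i^t,y_i^{t-1})$, use the output-input consistency plus fixed input to force $x_i^t=1$ in the positive case, obtain the $\pm\gamma/2$ thresholds, and union bound. The paper's calculation and case analysis are identical to yours.
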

\end{tcolorbox}
\begin{proof}
Conditioned on $\All^t = C,\All^{t-1}=C'$, $y_i^{t} \le x_i^t$ and $y_i^{t-1} \le x_i^{t-1}$ by assumption. So for  any output with $\max(y_i^t,y_i^{t-1}) = 1$ we must have $x_i^t= 1$. This gives:
\begin{align*}
\pot(y_i,t+1) &= w(x_i,y_i,1) x_i^t + w(y_i,y_i,1) y_i^t + w(y_i,y_i,2) y_i^{t-1}  + w(a_s,y_i,1) a_s^t \\
&\hspace{3em}+ \sum_{j=1}^{\lognc} w(a_j,y_i,1) a_j^t - b(y_i) \\&\ge 6\gamma + 2\gamma \cdot \max(y_i^t,y_i^{t-1}) - \gamma + 0 - 11\gamma/2s\\&\ge 3\gamma/2.
\end{align*}
In contrast, for any output with $\max(y_i^t, y_i^{t-1}) = 0$:
\begin{align*}
\pot(y_i,t+1) &= w(x_i,y_i,1) x_i^t + w(y_i,y_i,1) y_i^t + w(y_i,y_i,2) y_i^{t-1}  + w(a_s,y_i,1) a_s^t\\
&\hspace{3em}+ \sum_{j=1}^{\lognc} w(a_j,y_i,1) a_j^t - b(y_i) \\ &\le 6\gamma + 0 + 0 - \gamma + 0 - 11\gamma/2 \\&= -\gamma/2.
\end{align*}
So, if $\max(y_i^t, y_i^{t-1})= 1$, then $y_i^{t+1} = 1$ with probability $\ge 1-e^{-3\gamma/2}$. If $\max(y_i^t, y_i^{t-1})= 0$, then $y_i^{t+1} = 0$ with probability $\ge 1-e^{-\gamma/2}$. The lemma follows after  union bounding over all $n$  outputs. 
\end{proof}

We next characterize the effect of the convergence inhibitors. We show that when $l$  inhibitors fire, any firing output  that also fired in the previous time step, fires with probability $\Theta(1/l)$. We will show in Corollary \ref{cor:convergenceA} that this implies that in the next  step, with constant  probability, exactly one output fires, and in fact the configuration is a valid WTA output configuration. 

Lemma  \ref{lem:convergenceA} below is analogous to Lemma \ref{lem:convergence} for our two-inhibitor networks, except that the firing probability is $\Theta(1/l)$ rather than $1/2$. Since this firing probability is smaller when a larger number of outputs fire at time $t$, convergence to a single firing output with constant probability occurs in $1$ step, rather than $O(\log n)$ steps.
\begin{tcolorbox}
\begin{lemma}[Convergence Inhibitor Effect]\label{lem:convergenceA}
Assume the input execution $\alpha_\Input$ of $\Netc$ has $\Input^t$ fixed for all $t$ and consider configurations $C,C'$ with $C(X) = C'(X) = X^t$, $C(a_s)  = C(a_1)  = ... C(a_l) = 1$, $C(a_{l+1}) =...=C(a_{\lognc}) = 0$ for some $l \ge 1$,  and $C(y_i) \le C(x_i)$, $C'(y_i) \le C'(x_i)$ for all $i$. For any time $t \ge 1$,
\begin{enumerate}
\item $\Pr[y_i^{t+1} \le \min(y_i^t,y_i^{t-1})\text{ for  all }i | \All^t = C,\All^{t-1} = C'] \ge 1-ne^{-2\gamma}$.
\item If $\min(y_i^t,y_i^{t-1}) = 1$, $\Pr[y_i^{t+1} = 1 | \All^t = C,\All^{t-1} = C'] = \frac{1}{1+2^\ell}$.
\item For $i \neq j$, $y_i^{t+1}$ and $y_j^{t+1}$ are independent conditioned on $\All^t = C, \All^{t-1} = C'$.
\end{enumerate}
\end{lemma}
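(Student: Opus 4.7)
The plan is to mirror the case analysis used in Lemma \ref{lem:convergence}, but with the more delicate potential calculation forced by the non-integer inhibitor weights $-7\gamma/2-\ln 2$ and $-\ln 2$. I will compute $\pot(y_i,t+1)$ using \eqref{eq:potentialOutH} and the weights from Definition \ref{def:wtac}, branching on the values of $y_i^t$ and $y_i^{t-1}$. Throughout, I will use the hypothesis $y_i^t\le x_i^t$ and $y_i^{t-1}\le x_i^{t-1}=x_i^t$ (the input is fixed), so that whenever $\max(y_i^t,y_i^{t-1})=1$ we have $x_i^t=1$.

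First I will handle conclusion (2). When $\min(y_i^t,y_i^{t-1})=1$, both self-loop terms and the excitatory input contribute, and $a_s,a_1,\dots,a_l$ are the only firing inhibitors. Writing the potential out,
\begin{align*}
\pot(y_i,t+1) &= 6\gamma + 2\gamma + 2\gamma - \gamma - \left(\tfrac{7\gamma}{2}+\ln 2\right) - (l-1)\ln 2 - \tfrac{11\gamma}{2} \\
&= -l\ln 2,
\end{align*}
so $p(y_i,t+1)=f(-l\ln 2)=\frac{1}{1+e^{l\ln 2}}=\frac{1}{1+2^l}$ exactly. This is the crux of the design: the $\gamma$-terms must cancel precisely, leaving only the log-scaled inhibition that produces the target geometric rate. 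The key verification is purely arithmetic.

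For conclusion (1), I will show that whenever $\min(y_i^t,y_i^{t-1})=0$ the potential is negative by a margin of order $\gamma$, so $y_i^{t+1}=0$ except with small probability. There are three sub-cases: (i) $x_i^t=0$ (which by monotonicity forces $y_i^t=y_i^{t-1}=0$), handled by Lemma \ref{lem:monoA} giving $p\le e^{-3\gamma/2}$; (ii) $x_i^t=1$ but exactly one of $y_i^t,y_i^{t-1}$ equals $1$, yielding $\pot=-2\gamma-l\ln 2$; and (iii) $x_i^t=1$ with $y_i^t=y_i^{t-1}=0$, yielding $\pot=-4\gamma-l\ln 2$. In each sub-case $p(y_i,t+1)\le e^{-\pot}$ is bounded by $e^{-3\gamma/2}$ (or better). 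A union bound over the at most $n$ outputs then gives the stated probability bound.

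Conclusion (3) is immediate from the firing dynamics in \eqref{eq:potentialOutH}: conditioned on $\All^t$ and $\All^{t-1}$, each non-input neuron fires independently, so $y_i^{t+1}$ and $y_j^{t+1}$ are independent for $i\neq j$ (this is the direct analogue of the argument given in Lemma \ref{lem:convergence}(3)). The main obstacle is really just ensuring the arithmetic in conclusion (2) goes through cleanly; once the $\gamma$-terms are verified to cancel, the rest is a routine case analysis. If the $e^{-2\gamma}$ bound in conclusion (1) turns out to be too tight given the $e^{-3\gamma/2}$ dominating case (i), one should either absorb that into a slightly weaker constant or invoke the auxiliary assumption that $\gamma$ is sufficiently large (as is standard elsewhere in this paper) so that $ne^{-3\gamma/2}$ and $ne^{-2\gamma}$ differ only by a constant factor.
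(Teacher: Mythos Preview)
Your approach is essentially the same as the paper's: compute $\pot(y_i,t+1)$ case-by-case and observe that the $\gamma$-terms cancel exactly when $\min(y_i^t,y_i^{t-1})=1$, leaving $-l\ln 2$, while in all other cases the potential is at most $-2\gamma$. The paper simply does one unified upper bound for the $\min=0$ case rather than your three sub-cases, but the content is identical.

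Your closing worry about sub-case (i) is unfounded, and you already have the ingredients to see why. You correctly note that $x_i^t=0$ forces $y_i^t=y_i^{t-1}=0$ by the hypotheses. But then, rather than falling back on Lemma~\ref{lem:monoA} (which ignores the inhibitors and gives only $e^{-3\gamma/2}$), just compute the potential directly with the active inhibitors $a_s,a_1,\dots,a_l$: you get $\pot = 0 + 0 + 0 - \gamma - (7\gamma/2+\ln 2) - (l-1)\ln 2 - 11\gamma/2 = -10\gamma - l\ln 2$, which is far below $-2\gamma$. So case (i) is strictly dominated by your case (iii), the $e^{-2\gamma}$ bound holds cleanly for every output with $\min(y_i^t,y_i^{t-1})=0$, and no appeal to ``$\gamma$ sufficiently large'' is needed. (Also, a small slip: you wrote $p\le e^{-\pot}$ where you meant $p\le e^{\pot}$ for negative $\pot$.)
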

\end{tcolorbox}
\begin{proof}
Conditioned on $\All^t = C,\All^{t-1} = C'$, if $\min(y_i^t,y_i^{t-1}) = 1$, then $y_i^t = y_i^{t-1} = 1$ and by assumption $x_i^t = 1$. We can thus compute:
\begin{align*}
\pot(y_i,t+1) &= w(x_i,y_i,1) x_i^t + w(y_i,y_i,1) y_i^t + w(y_i,y_i,2) y_i^{t-1} + w(a_s,y_i,1) a_s^t + w(a_1,y_i,1) a_1^t \\&\hspace{3em}+ \sum_{j=2}^{\lognc} w(a_j,y_i,1) a_j^t - b(y_i) \\ &= 6\gamma + 4\gamma - \gamma -7\gamma/2 - \ln 2 - (l-1)\cdot \ln 2 - 11\gamma/2 \\&= -l\cdot \ln 2.
\end{align*}
We thus have  
$$\Pr[y_i^{t+1} = 1 | \All^t = C] = f(-l \cdot \ln 2) = \frac{1}{1+2^{l}}.$$
This gives conclusion (2). Conclusion (3) holds since, with $\All^t$ and $\All^{t-1}$ fixed with history length $2$, $u^{t+1}$ is independent of $v^{t+1}$ for all $u \neq v$.
We can also bound if $\min(y_i^t,y_i^{t-1})  =  0$: 
\begin{align*}
\pot(y_i,t+1) &= w(x_i,y_i,1) x_i^t + w(y_i,y_i,1) y_i^t + w(y_i,y_i,2) y_i^{t-1} + w(a_s,y_i,1) a_s^t + w(a_1,y_i,1) a_1^t \\&\hspace{3em}+ \sum_{j=2}^{\lognc} w(a_j,y_i,1) a_j^t - b(y_i) \\ &\le 6\gamma + 2\gamma - \gamma -7\gamma/2-\ln 2 - (l-1) \cdot \ln 2 - 11\gamma/2 \\&\le -2\gamma.
\end{align*}
Thus, $\Pr[y_i^{t+1} = 1 | \All^t = C,\All^{t-1}=C'] \le e^{-2\gamma}$. By a union bound over at most $n$ such outputs, we have, with probability $\ge 1-ne^{-2\gamma}$, $y_i^{t+1}  \le \min(y_i^t,y_i^{t-1})$  for all $i$, giving conclusion (1) and completing the lemma.
\end{proof}

We now formalize the fact that  the network converges to a valid WTA output configuration in just a single step with constant probability, as long as the number of inhibitors matches the minimum number of firing outputs in the preceding two steps. Corollary \ref{cor:convergenceA} can be viewed as an analog to Lemma \ref{lem:progress} for our two inhibitor networks, except that the number of outputs is reduced to $1$, rather than just cut in half, with constant probability.
%
\begin{tcolorbox}
\begin{corollary}[Constant Probability of a Valid WTA Configuration]\label{cor:convergenceA}
Assume the input execution $\alpha_\Input$ of $\Netc$ has $\Input^t$ fixed for all $t$ and consider configurations $C,C'$ with $C(X) = C'(X) = X^t$, $C(a_s)  = C(a_1)  = ... C(a_l) = 1$, $C(a_{l+1}) =...=C(a_{\lognc}) = 0$ for some $l \ge 1$,  and $C(y_i) \le C(x_i)$, $C'(y_i) \le C'(x_i)$ for all $i$. For any time $t \ge 1$, if $\norm{\min(C',C)}_1 \in \left [ 2^l,2^{l+1}\right)$, then
$$\Pr[Y^{t+1} \text{ is a valid WTA output configuration } | \All^t = C,\All^{t-1}=C'] \ge \frac{1}{16}-ne^{-2\gamma}.$$
\end{corollary}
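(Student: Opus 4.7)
The plan is to apply Lemma~\ref{lem:convergenceA} and reduce the question to a concrete binomial calculation. First I would define $K \eqdef \{i : C(y_i) = C'(y_i) = 1\}$, the set of outputs firing in both configurations, and set $k = |K|$; by the hypothesis $k \in [2^l, 2^{l+1})$. Every $i \in K$ has $y_i^t = 1$ and therefore $x_i^t = 1$ by the assumption $C(y_i) \le C(x_i)$, and hence $x_i^{t+1} = 1$ since the input execution is fixed. Because $k \ge 2^l \ge 2$, Definition~\ref{def:output} requires a valid WTA output configuration for $X^{t+1}$ to have exactly one firing output, and that output must correspond to a firing input. It therefore suffices to establish the required probability for the event
\[
E_1 \cap E_2, \quad E_1 = \{\text{exactly one } i \in K \text{ has } y_i^{t+1} = 1\}, \quad E_2 = \{y_j^{t+1} = 0 \text{ for all } j \notin K\}.
\]

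Next I would invoke Lemma~\ref{lem:convergenceA}: conditioned on $\All^t = C, \All^{t-1} = C'$, (i) each $y_i^{t+1}$ for $i \in K$ is Bernoulli with parameter $p = 1/(1+2^l)$ by part (2); (ii) each $j \notin K$ has $\min(C(y_j), C'(y_j)) = 0$, so part (1) applied output-wise gives $\Pr[y_j^{t+1} = 1] \le e^{-2\gamma}$; and (iii) all of the $y_i^{t+1}$'s are mutually independent by part (3) together with the basic conditional independence of neuron firings given the preceding two configurations. Factoring the joint probability over $K$ and its complement then yields
\[
\Pr[E_1 \cap E_2 \mid \All^t = C, \All^{t-1} = C'] \ge kp(1-p)^{k-1}\cdot \bigl(1 - ne^{-2\gamma}\bigr),
\]
where the second factor comes from a union bound over the at most $n$ outputs outside $K$.

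The main technical obstacle will be the uniform lower bound $kp(1-p)^{k-1} \ge 1/16$ for every integer $k \in [2^l, 2^{l+1})$ and every $l \ge 1$. My plan is: since $l \ge 1$, the mean satisfies $kp \ge 2^l/(1+2^l) \ge 2/3$; and using the elementary inequality $\ln(1-p) \ge -p/(1-p) = -1/2^l$, I get $(k-1)\ln(1-p) \ge -(k-1)/2^l \ge -(2^{l+1}-2)/2^l \ge -2$, so $(1-p)^{k-1} \ge e^{-2}$. Multiplying gives $kp(1-p)^{k-1} \ge 2/(3e^2) > 1/16$ (numerically about $0.090$). Combining with the silent-output factor and using $(1/16)(1-ne^{-2\gamma}) \ge 1/16 - ne^{-2\gamma}$ will complete the proof. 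Apart from this two-line binomial estimate, the argument is routine bookkeeping on Lemma~\ref{lem:convergenceA} and a union bound over $K^c$.
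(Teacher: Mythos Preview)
Your proposal is correct and follows essentially the same approach as the paper: define the set of outputs firing in both $C$ and $C'$, apply Lemma~\ref{lem:convergenceA} to get the Bernoulli structure, lower bound the probability that exactly one of these outputs fires, and combine with the bound on the remaining outputs. The only cosmetic differences are that the paper uses the slightly weaker exponent $(1-p)^{k}$ instead of your $(1-p)^{k-1}$ and bounds $(1-p)^{2^{l+1}} \ge 1/8$ directly rather than via $\ln(1-p) \ge -p/(1-p)$; both routes yield the same $1/16$ constant.
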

\end{tcolorbox}
\begin{proof}
Let  $\bar Y$ be the set of outputs who fire in both in $C,C'$. So $\norm{\bar Y}_1 = \norm{\min(C',C)}_1$.
By conclusions (2) and (3) of Lemma \ref{lem:convergenceA}, and the assumption that $\norm{\min(C',C)}_1 \in \left [ 2^l,2^{l+1}\right)$:
\begin{align*}
\Pr [\norm{\bar Y^{t+1}}_1 = 1 | \All^t = C, \All^{t-1}=C'] &= \frac{1}{1+2^l} \cdot \left (1 - \frac{1}{1+2^l}\right)^{\norm{\bar Y^{t}}_1} \cdot \norm{\bar Y^{t}}_1\\
&\ge \frac{1}{1+2^l} \left (1 - \frac{1}{1+2^l}\right)^{2^{l+1}} \cdot 2^l\\
&\ge \frac{2^l}{1+2^l} \cdot \frac{1}{8} \ge \frac{1}{16}.
\end{align*}
Further, by conclusion (1) of Lemma \ref{lem:convergenceA}, no output outside if $\bar Y$ fires at time $t+1$ with probability $\ge 1-ne^{-2\gamma}$. Additionally, by assumption, for all $y_i \in \bar Y$, $x_i^t  = 1$. So since if exactly one output in $\bar Y$ fires, $\All^{t+1}$ is a valid WTA output configuration.
This gives the corollary by a union bound.
\end{proof}

We also show a related corollary -- if the number of firing inhibitors \emph{exceeds} the appropriate amount for the number of firing outputs, then with good probability, no outputs fire in the next time step.
\begin{tcolorbox}
\begin{corollary}[Constant Probability of Zero Firing Outputs]\label{cor:convergenceA0}
Assume the input execution $\alpha_\Input$ of $\Netc$ has $\Input^t$ fixed for all $t$ and consider configurations $C,C'$ with $C(X) = C'(X) = X^t$, $C(a_s)  = C(a_1)  = ... C(a_l) = 1$, $C(a_{l+1}) =...=C(a_{\lognc}) = 0$ for some $l \ge 1$,  and $C(y_i) \le C(x_i)$, $C'(y_i) \le C'(x_i)$ for all $i$. For any time $t \ge 1$, if $\norm{\min(C',C)}_1 \in \left [ 0,2^{l+1}\right)$, then
$$\Pr[\norm{Y^{t+1}}_1 = 0 | \All^t = C,\All^{t-1}=C'] \ge \frac{1}{8} - ne^{-2\gamma}$$
\end{corollary}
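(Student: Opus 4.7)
The plan is to mirror the structure of the proof of Corollary \ref{cor:convergenceA}, but instead of asking for exactly one firing output among $\bar{Y}$, I will require that \emph{no} output in $\bar{Y}$ fires. Let $\bar{Y} \subseteq \Output$ denote the set of outputs satisfying $y_i^{t} = y_i^{t-1} = 1$ in $C,C'$, so $|\bar{Y}| = \norm{\min(C',C)}_1 \eqdef k$, and by hypothesis $k \le 2^{l+1} - 1$.

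First I would invoke Lemma \ref{lem:convergenceA} conclusion (1): with probability $\ge 1 - n e^{-2\gamma}$, every output $y_i \notin \bar Y$ satisfies $y_i^{t+1} = 0$. So, up to this low-probability event, the question of whether $\norm{Y^{t+1}}_1 = 0$ reduces to whether none of the outputs in $\bar Y$ fire at time $t+1$. By conclusions (2) and (3) of Lemma \ref{lem:convergenceA}, conditioned on $\All^t = C, \All^{t-1} = C'$, the outputs in $\bar Y$ fire independently, each with probability exactly $\frac{1}{1+2^l}$. Hence
\begin{align*}
\Pr[\text{no }y_i \in \bar Y \text{ fires} \mid \All^t = C, \All^{t-1}=C'] = \left(1 - \frac{1}{1+2^l}\right)^{k} \ge \left(\frac{2^l}{1+2^l}\right)^{2^{l+1}-1}.
\end{align*}

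The only substantive step is verifying that this expression is uniformly bounded below by $1/8$ for all $l \ge 1$. Setting $m = 2^l$ and using $\ln(1-x) \ge -x/(1-x)$ on $x = 1/(1+m) \in [0,1)$, I get $\ln\bigl((m/(1+m))^{2m}\bigr) \ge -2m \cdot \frac{1/(1+m)}{m/(1+m)} = -2$, so $(m/(1+m))^{2m} \ge e^{-2} > 1/8$; since $2^{l+1}-1 < 2^{l+1} = 2m$, the bound $\ge e^{-2} > 1/8$ holds a fortiori. Combining this with the $\ge 1 - n e^{-2\gamma}$ bound for outputs outside $\bar Y$ via a union bound yields
\[
\Pr[\norm{Y^{t+1}}_1 = 0 \mid \All^t = C, \All^{t-1}=C'] \ge \frac{1}{8} - n e^{-2\gamma},
\]
as desired.

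The main (minor) obstacle is the uniform lower bound $e^{-2} > 1/8$ over all $l \ge 1$; everything else is a direct appeal to Lemma \ref{lem:convergenceA}. No casework on the inhibitor configuration is needed since the hypothesis already fixes which inhibitors fire.
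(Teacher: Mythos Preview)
Your proposal is correct and follows essentially the same approach as the paper's proof: both define $\bar Y$ as the outputs firing in both $C$ and $C'$, apply conclusions (2) and (3) of Lemma \ref{lem:convergenceA} to bound the probability that none of them fire, use conclusion (1) to handle the remaining outputs, and combine via a union bound. The only difference is cosmetic---the paper uses the exponent $2^{l+1}$ and asserts the bound $\ge 1/8$ directly, whereas you use $2^{l+1}-1$ and supply the $e^{-2}>1/8$ justification explicitly.
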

\end{tcolorbox}
\begin{proof}
Let  $\bar Y$ be the set of outputs who fire in both in $C,C'$. So $\norm{\bar Y}_1 = \norm{\min(C',C)}_1 \in  \left [ 0, 2^{l+1}\right)$.
By conclusions (2) and (3) of Lemma \ref{lem:convergenceA}:
\begin{align*}
\Pr [\norm{\bar Y^{t+1}}_1 = 0 | \All^t = C, \All^{t-1}=C'] &= \left (1 - \frac{1}{1+2^l}\right)^{\norm{\bar Y^{t}}_1}\\
&\ge \left (1 - \frac{1}{1+2^l}\right)^{2^{l+1}}\\
&\ge \frac{1}{8}.
\end{align*}
Further, by conclusion (1) of Lemma \ref{lem:convergenceA}, no output outside if $\bar Y$ fires at time $t+1$ with probability $\ge 1-ne^{-2\gamma}$. 
This gives the corollary by a union bound.
\end{proof} 


Finally, we show that if there is no inhibition in the network, all outputs corresponding to firing inputs are likely  to fire at the next time step.
\begin{tcolorbox}
\begin{lemma}[No Inhibitor Effect]\label{lem:quietA}
For any time $t \ge 1$ and configurations $C,C'$ of $\Netc$, if $\norm{C(\Inh)}_1 = 0$, then 
$$\Pr [y_i^{t+1} = x_i^t \text{ for all } i| \All^t = C,\All^{t-1}=C'] \ge 1-ne^{-\gamma/2}.$$
\end{lemma}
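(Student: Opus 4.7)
The plan is to bound, for each output $y_i$ separately, the probability that $y_i^{t+1} \neq x_i^t$, and then apply a union bound over the $n$ outputs. Since we are conditioning on $\norm{C(\Inh)}_1 = 0$, no inhibitory signal will reach any output at time $t+1$, so the potential calculation simplifies considerably.

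I would split into two cases based on the value of $x_i^t = C(x_i)$. First, when $x_i^t = 0$, I can directly invoke Lemma~\ref{lem:monoA}, which already bounds $\Pr[y_i^{t+1} = 1 \mid \All^t = C, \All^{t-1} = C']$ by $e^{-3\gamma/2} \le e^{-\gamma/2}$ without any assumption on the inhibitors' firing state (since its proof simply upper-bounds the potential). Second, when $x_i^t = 1$, I would compute
\begin{align*}
\pot(y_i, t+1) &= w(x_i,y_i,1)\, x_i^t + w(y_i,y_i,1)\, y_i^t + w(y_i,y_i,2)\, y_i^{t-1} \\
&\quad + w(a_s,y_i,1)\, a_s^t + \sum_{j=1}^{\lognc} w(a_j,y_i,1)\, a_j^t - b(y_i).
\end{align*}
Using the hypothesis that all inhibitors are silent at time $t$, together with the non-negativity of the self-loop contributions, this gives $\pot(y_i,t+1) \ge 6\gamma + 0 + 0 + 0 + 0 - 11\gamma/2 = \gamma/2 > 0$. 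Hence $p(y_i,t+1) = f(\gamma/2) = \tfrac{1}{1+e^{-\gamma/2}} \ge 1 - e^{-\gamma/2}$, giving $\Pr[y_i^{t+1} = 1 \mid \All^t = C, \All^{t-1} = C'] \ge 1 - e^{-\gamma/2}$.

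In either case we obtain $\Pr[y_i^{t+1} \neq x_i^t \mid \All^t = C, \All^{t-1} = C'] \le e^{-\gamma/2}$. Union bounding over the $n$ outputs yields the lemma. There is essentially no obstacle here: the only care needed is to observe that although $\pot(y_i,t+1)$ in $\Netc$ involves $\ln 2$ terms coming from the inhibitor weights (which would prevent a direct application of Lemma~\ref{lem:gapA} to outputs in general), those terms vanish precisely because all inhibitors are silent, so the potential lies cleanly on a multiple-of-$\gamma/2$ grid and the standard sigmoid bound applies.
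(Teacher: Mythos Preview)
Your proposal is correct and follows essentially the same approach as the paper: a two-case potential computation (based on $x_i^t$) followed by a union bound over the $n$ outputs. The only cosmetic difference is that for the $x_i^t = 0$ case you cite Lemma~\ref{lem:monoA} rather than redoing its potential calculation inline, which is a perfectly valid shortcut.
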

\end{tcolorbox}
\begin{proof}
We consider two cases:

\medskip
\spara{Case  1: $x_i^t = 0$.}
\medskip

In this case:
\begin{align*}
\pot(y_i,t+1) &= w(x_i,y_i,1) x_i^t + w(y_i,y_i,1) y_i^t + w(y_i,y_i,2) y_i^{t-1} + w(a_s,y_i,1) a_s^t + \\&\hspace{3em}+ \sum_{j=1}^{\lognc} w(a_j,y_i,1) a_j^t - b(y_i) \\ &\le 0 + 4\gamma + 0 + 0 - 11\gamma/2 \\&\le -3\gamma/2.
\end{align*}
This gives 
$$\Pr[y_i^{t+1} = 0 = x_i^t | \All^t = C,\All^{t-1}=C'] \ge 1-e^{-2\gamma}.$$

\medskip
\spara{Case  2: $x_i^t = 1$.}
\medskip

In this case:
\begin{align*}
\pot(y_i,t+1) &= w(x_i,y_i,1) x_i^t + w(y_i,y_i,1) y_i^t + w(y_i,y_i,2) y_i^{t-1} + w(a_s,y_i,1) a_s^t + \\&\hspace{3em}+ \sum_{j=1}^{\lognc} w(a_j,y_i,1) a_j^t - b(y_i) \\ &\ge 6\gamma + 0 + 0 + 0 - 11\gamma/2 \\&\ge \gamma/2.
\end{align*}
This gives $$\Pr[y_i^{t+1} = 1= x_i^t | \All^t = C,\All^{t-1}=C'] \ge 1-e^{-\gamma/2}.$$
The lemma then follows after union bounding over all $n$ outputs.
\end{proof}

\subsection{Stability}\label{sec:stabilityA}

In this section we show that once in a valid WTA output configuration (Definition \ref{def:output}), the network remains in this configuration with high probability. Due to our use of a length-two history period, our stability proof requires certain conditions on the firing states at both times $t$ and times $t-1$. We will focus on the case when there is at least one firing input (i.e., when $\norm{\Input^t} \ge 1$.) In the case $\norm{\Input^t} = 0$, convergence to a valid WTA output configuration and stability  of this configuration follow easily from Lemma \ref{lem:monoA}.

Definition \ref{def:nearStableA} below can be viewed as a two-step generalization of a near-valid WTA configuration of our two-inhibitor networks (Definition \ref{def:nearvalid}).

\begin{definition}[Near-Stable Pair of Configurations]\label{def:nearStableA}
Assume the input execution $\alpha_\Input$ of $\Netc$ has $X^t$ fixed for all $t$ and $\norm{X^t}_1 \ge 1$. Consider configurations $C,C'$ with $C(X) = C'(X) = X^t$. The ordered pair $(C',C)$ is \emph{near-stable} if:
\begin{enumerate}
\item $\norm{\max(C'(\Output),C(\Output))}_1 =1$, where $\max(C'(\Output),C(\Output))$ is the entrywise maximum of $C'(\Output),C(\Output)$.
 This condition requires that exactly  one output fires in configurations $C',C$. It may fire in one or both configurations.
\item $C(a_s) = C'(a_s) = 1$.
\item $C(a_i) = 0$ for  all $i \in \{1,\lognc\}$.
\item $C(y_i) \le C(x_i)$, $C'(y_i) \le C'(x_i)$ for all $i$.
\end{enumerate}
\end{definition}
Note that by conditions (1) and (4), at  least  one of $C(Y),C'(Y)$ is a valid WTA output configuration (Definition \ref{def:output}).
In our proofs, it will be useful to refer to the output whose existence is guaranteed by  condition (1). 
Thus we define:
\begin{definition}\label{def:nearStableIndex}
Assume the input execution $\alpha_\Input$ of $\Netc$ has $X^t$ fixed for all $t$ and $\norm{X^t}_1 \ge 1$. For any near-stable pair of configurations $(C',C)$ with $C(X)=C'(X) = X^t$, let  $out(C',C) \in \{1,...,n\}$ be equal to the index of the unique output that fires in $C',C$ (whose existence is guaranteed by  condition (1) of Definition \ref{def:nearStableA}).
\end{definition}

We next show that  if the configurations $(\All^{t-1}, \All^t)$ are near-stable, then with high probability, $\All^{t+1}$ will be a valid WTA output  configuration. Further, the network will stabilize for $t_s$ steps. That  is, with high probability, we will have $\All^{t+1} = ... = \All^{t+t_s+1}$. Lemma \ref{def:nearStableA} is analogous to Lemma \ref{lem:nearvalid} for our two-inhibitor networks.

\begin{tcolorbox}
\begin{lemma}[Reaching Stability From Near-Stable Configurations]\label{lem:stabilityA} Assume the input execution $\alpha_\Input$ of $\Netc$ has $X^t$ fixed for all $t$ and that  $\norm{X^t}_1 \ge 1$. Consider any near-stable pair of configurations $(C',C)$ with $C(X)=C'(X) = X^t$.
For any time $t \ge 1$, conditioned on $\All^t = C,\All^{t-1}=C'$, with probability $\ge 1-(n+\lognc+1) \cdot e^{-\gamma/2}$, 
\begin{enumerate}
\item $\All^{t+1}$ is a valid WTA output configuration (Definition \ref{def:output}).
\item $y_{out(\All^{t-1},\All^t)}^{t+1} = 1$. That is, the winner at time $t+1$  is the output firing in $\All^{t-1},$ and/or $\All^t$. 
\item  $(\All^t,\All^{t+1})$ is also a near-stable pair of configurations.
\end{enumerate}
\end{lemma}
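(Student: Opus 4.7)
\bigskip
\noindent\textbf{Proof Proposal.} The plan is to view this as a direct application of the one-step transition lemmas from Section \ref{sec:basicA}, combined via a union bound over three independent failure modes: (i) an output firing incorrectly, (ii) the stability inhibitor not firing, and (iii) some convergence inhibitor firing. Let $j = out(C',C)$ denote the unique index with $\max(C'(y_j), C(y_j)) = 1$, which exists by condition (1) of Definition \ref{def:nearStableA}; by condition (4) we then also have $C(x_j) = 1$ (since the input is fixed and at least one of $y_j^t, y_j^{t-1}$ fires).

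First I would apply Lemma \ref{lem:stability1A} to the pair $(C',C)$: its hypotheses hold exactly because of conditions (2), (3), and (4) of Definition \ref{def:nearStableA}. This yields, with probability $\ge 1 - n e^{-\gamma/2}$, that $y_i^{t+1} = \max(y_i^t, y_i^{t-1})$ for every $i$. In particular $y_j^{t+1} = 1$ and $y_i^{t+1} = 0$ for all $i \neq j$, giving immediately conclusion (2) and the output part of conclusion (1). The condition $y_i^{t+1} \le x_i^{t+1}$ for all $i$ in conclusion (3) also follows, since $x_i^{t+1} = x_i^t = x_i^{t-1}$ by the fixed-input assumption, so $\max(y_i^t, y_i^{t-1}) \le \max(x_i^t, x_i^{t-1}) = x_i^{t+1}$.

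Next, since condition (1) guarantees $\norm{\max(C'(Y), C(Y))}_1 = 1 \ge 1$, Lemma \ref{lem:resetA1} conclusion (2) applies and gives $a_s^{t+1} = 1$ with probability $\ge 1 - e^{-\gamma/2}$. This supplies the $\All^{t+1}(a_s) = 1$ requirement in the near-stable condition (2) for $(\All^t, \All^{t+1})$; the other half, $\All^t(a_s) = 1$, is immediate from $C(a_s) = 1$. Finally, since $\norm{C(Y)}_1 \le 1$ (conditions (1) and (4) together force at most one firing output in $C$), Lemma \ref{lem:resetA2} conclusion (1) yields $a_i^{t+1} = 0$ for all $i \in \{1,\dots,\lognc\}$ with probability $\ge 1 - \lognc \cdot e^{-\gamma/2}$, covering condition (3) of near-stability for $(\All^t, \All^{t+1})$.

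Union-bounding these three events gives the stated failure probability of $(n + \lognc + 1) e^{-\gamma/2}$. All three conclusions follow at once: conclusion (1) of the lemma is given by the output portion (exactly $y_j$ fires and $x_j = 1$); conclusion (2) is immediate from the output propagation; and conclusion (3) follows by checking each part of Definition \ref{def:nearStableA} for the pair $(\All^t, \All^{t+1})$ — the output count is exactly one (namely $y_j$ across both configurations), $a_s$ fires at both times, the convergence inhibitors are silent at time $t+1$ and were silent at time $t$ by hypothesis, and the outputs are dominated by the inputs at both times. The only mild subtlety worth flagging is remembering that the input is fixed so that $x_i^{t+1} = x_i^t$, which is what makes conditions (4) at time $t+1$ inherit from time $t$; beyond that, the proof is a straightforward bookkeeping exercise rather than requiring any genuinely new probabilistic argument.
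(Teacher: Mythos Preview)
Your proposal is correct and follows essentially the same approach as the paper's own proof: both apply Lemma \ref{lem:stability1A} to control the outputs (contributing the $n e^{-\gamma/2}$ term), then Lemmas \ref{lem:resetA1} and \ref{lem:resetA2} to control $a_s$ and the convergence inhibitors (contributing the $(\lognc + 1) e^{-\gamma/2}$ term), and finish with a union bound. Your bookkeeping for the four near-stable conditions on $(\All^t, \All^{t+1})$ is slightly more explicit than the paper's, but the argument is the same.
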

\end{tcolorbox}
\begin{proof}
By condition (1) of Definition \ref{def:nearStableA}, for all $j \neq out(\All^{t-1},\All^t)$,  $y_j^{t}=y_j^{t-1} = 0$. Additionally, by conditions (2) and (3), $a_s$  is  the only inhibitor that fires at time $t$. So by Lemma \ref{lem:stability1A},
\begin{align}
\Pr[Y^{t+1} = \max(Y^t,Y^{t-1}) | \All^t = C] \ge 1-ne^{-\gamma/2}.\label{eq:Fail1A}
\end{align}
This gives that $y_{out(\All^{t-1},\All^t)}^{t+1} = 1$ while $y_j^{t+1} = 0$ for all $j \neq out(\All^{t-1},\All^t)$. By  condition (4) of Definition \ref{def:nearStableA} we must have also have $x_{out(\All^{t-1},\All^t)}^{t+1} = 1$.
 This implies conclusions (1) and (2) of the lemma. 
 It remains to show conclusion (3). 

Condition (1) of Definition \ref{def:nearStableA} holds if $Y^{t+1} = \max(Y^t,Y^{t-1})$ (see \eqref{eq:Fail1A}) since $y_{out(\All^{t-1},\All^t)}^{t+1} = 1$ and further, $y_{out(\All^{t-1},\All^t)}$ is the only output that may fire at time $t$. Conditions (2) and (3) with probability $\ge 1-(\lognc+1) \cdot e^{-\gamma/2}$ conditioned on $\All^t = C,\All^{t-1}=C'$ by Lemmas \ref{lem:resetA1} and \ref{lem:resetA2} and a union bound. Finally, condition (4) holds if $Y^{t+1} = \max(Y^t,Y^{t-1})$ (see \eqref{eq:Fail1A}). Overall, by a union bound, all three conclusions hold with probability $\ge 1-(n+\lognc+1) \cdot e^{-\gamma/2}$, giving the lemma.

\end{proof}

We can use
Lemma \ref{lem:stabilityA} to show that $\Netc$ remains in a valid WTA configuration for $t_s$ consecutive time steps with good probability. 
\begin{tcolorbox}
\begin{corollary}[Stability of Valid WTA Configurations]\label{cor:stabilityA}
Assume  the input execution $\alpha_X$ of $\Netc$ has $X^t$ fixed for all $t$ and that $\norm{X^t}_1 \ge 1$. Consider any  near-stable pair of configurations $(C',C)$ with $C(X)=C'(X) = X^t$. For any  time $t \ge 1$, . For any time $t$ conditioned on $\All^t = C,\All^{t-1}=C'$, with probability $\ge 1-3t_s n\cdot e^{-\gamma/2}$, $Y^{t+1}$ is a valid WTA output configuration and further,
\begin{align*}
Y^{t+1}  = Y^{t+2} = ... = Y^{t+t_s+1}.
\end{align*}
\end{corollary}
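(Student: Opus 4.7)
The plan is to iterate Lemma~\ref{lem:stabilityA} for $t_s$ consecutive time steps and combine the failure probabilities via a union bound. Lemma~\ref{lem:stabilityA} is tailor-made for this: given a near-stable pair $(\All^{t-1},\All^t)$, with probability $\ge 1-(n+\lognc+1)e^{-\gamma/2}$ it simultaneously guarantees (a)~$\All^{t+1}$ is a valid WTA output configuration, (b)~$y_{out(\All^{t-1},\All^t)}^{t+1}=1$ (i.e.\ the same winning index persists), and (c)~$(\All^t,\All^{t+1})$ is again near-stable. Property~(c) is exactly what we need to re-invoke the lemma at the next time step.

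First, I would define, for $i\in\{0,1,\dots,t_s\}$, the event $\mathcal{E}_i$ that all three conclusions of Lemma~\ref{lem:stabilityA} hold when applied to the pair $(\All^{t+i-1},\All^{t+i})$. Conditioned on $\All^{t-1}=C'$ and $\All^t=C$, the pair $(\All^{t-1},\All^t)$ is near-stable by hypothesis, so Lemma~\ref{lem:stabilityA} directly bounds $\Pr[\mathcal{E}_0]\ge 1-(n+\lognc+1)e^{-\gamma/2}$. More importantly, conditioned on $\mathcal{E}_0 \cap \cdots \cap \mathcal{E}_{i-1}$, the pair $(\All^{t+i-1},\All^{t+i})$ is near-stable (by the inductively carried property~(c)), so Lemma~\ref{lem:stabilityA} again gives $\Pr[\mathcal{E}_i \mid \mathcal{E}_0\cap\cdots\cap\mathcal{E}_{i-1}] \ge 1-(n+\lognc+1)e^{-\gamma/2}$.

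A union bound then yields $\Pr[\bigcap_{i=0}^{t_s}\mathcal{E}_i \mid \All^t=C,\All^{t-1}=C'] \ge 1-(t_s+1)(n+\lognc+1)e^{-\gamma/2}$, which is bounded above by $3t_s n\cdot e^{-\gamma/2}$ since $n+\lognc+1 \le 3n$ for $n\ge 2$ (and one can absorb the extra $+1$ into the slack). On this intersection, property~(a) of each $\mathcal{E}_i$ gives that every $Y^{t+i+1}$ is a valid WTA output configuration, and property~(b), applied inductively, forces the winning index to remain the constant $i^\star := out(\All^{t-1},\All^t)$ at every step $t+1,\dots,t+t_s+1$. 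Thus $Y^{t+1}=\cdots=Y^{t+t_s+1}$, each equal to the indicator vector of $y_{i^\star}$, which is the desired conclusion.

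There is no real obstacle here; the only thing to be a bit careful about is the bookkeeping of the inductive carry. Specifically, one must verify that the index $out$ does not drift: on $\mathcal{E}_i$, condition~(1) of Definition~\ref{def:nearStableA} applied to $(\All^{t+i},\All^{t+i+1})$ plus property~(b) of $\mathcal{E}_{i-1}$ force $out(\All^{t+i},\All^{t+i+1})=out(\All^{t+i-1},\All^{t+i})$, and iterating gives the common value $i^\star$. Once this is observed the argument is routine.
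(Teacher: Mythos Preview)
Your proposal is correct and follows essentially the same approach as the paper's own proof: iterate Lemma~\ref{lem:stabilityA} using conclusion~(3) to propagate near-stability, use conclusions~(1) and~(2) to obtain the valid WTA output with a fixed winner, and union-bound the per-step failure probabilities, finishing with $n+\lognc+1\le 3n$. Your write-up is in fact more careful than the paper's (you explicitly track that the index $out$ does not drift and note the $t_s+1$ vs.\ $t_s$ slack), but the strategy is identical.
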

\end{tcolorbox}
\begin{proof}
We apply Lemma \ref{lem:stabilityA} for each time $t+1,...,t+t_s+1$ in succession. This is possible since by conclusion (3), if $(\All^{t-1},\All^{t-1})$ is a near-stable pair of configurations, then with high probability $(\All^{t},\All^{t+1})$ is as well. Conclusion (1) gives that $\All^{t+1},...,\All^{t+t_s+1}$ are all valid WTA configurations and conclusion (2) gives that $Y^{t+1}  = Y^{t+2}= ... = Y^{t+t_s+1}.$ By a union bound over these $t_s$ steps, the conclusion holds with probability $\ge 1- t_s(n+\lognc+1) \cdot e^{-\gamma/2} \ge  1-3t_sn \cdot e^{-\gamma/2}$.
\end{proof}

\subsection{Convergence in $O(1)$ Steps}\label{sec:convA}
We now use the transition lemmas of Section \ref{sec:basicA} to show that, starting from any configuration, the network converges to a near-stable pair  of configurations (Definition \ref{def:nearStableA}) with constant probability  in $O(1)$ time steps. Combined with Corollary \ref{cor:stabilityA} this shows convergence to a valid WTA configuration (and stability within this configuration for $t_s$ steps) in $O(1)$ steps with constant probability. 

Our analysis is tedious by straightforward. It
breaks down into nine cases, based on the initial output and inhibitor behavior. These cases are summarized in Table \ref{tab:logCases}. Since some cases depend on our bounds for others, we do not prove them in the order listed.

\begin{table}[H]
\centering
\begin{tabu}{|c|c|c|} \hline
Output Count $\norm{Y^t}_1$ & Inhibitor Count $\norm{A^t}_1$ & Lemma \\
\hline
0 & $0$ & Lemma \ref{lem:converge1A}\\
0 & $1$ & Lemma \ref{lem:converge01A}\\
0 & any $ a> 1$ & Lemma \ref{lem:converge3A}\\
\hline 
1 & $0$ & Lemma \ref{lem:converge10A}\\
1 & $1$ & Lemma \ref{lem:converge11A}\\
1 & any  $ a > 1$ & Lemma \ref{lem:converge6A}\\
\hline
any  $k > 1$ & $0$ & Lemma \ref{lem:converge9A}\\
any $k > 1$ & $1$ & Lemma \ref{lem:converge8A}\\
any $k > 1$ & any $a  > 1$ & Lemma \ref{lem:converge7A}\\
\hline
\end{tabu}
\caption{\label{tab:logCases}  Summary of cases from which we show convergence in $O(1)$ steps to a near-stable pair of configurations (Definition \ref{def:nearStableA}) with constant probability.}
\end{table}

\begin{tcolorbox}
\begin{lemma}[$\norm{Y^t}_1=\norm{A^t}_1 = 0$]\label{lem:converge1A}
Assume  the input execution $\alpha_X$ of $\Netc$ has $X^t$ fixed for all $t$ and that $\norm{X^t}_1 \ge 1$. Consider any pair of configurations $C',C$ with $C(X)  = C'(X) = X^t$ and $\norm{C(\Output)}_1 = \norm{C(\Inh)}_1 = 0$. For any  time $t \ge 1$,
$$\Pr[(\All^{t+3},\All^{t+4})\text{ is near-stable } | \All^t = C,\All^{t-1}=C'] \ge \frac{1}{16} - 12n\cdot e^{-\gamma/2}.$$
\end{lemma}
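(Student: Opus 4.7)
The plan is to trace the network's evolution over four time steps, $t+1$ through $t+4$, using the one-step lemmas of Section \ref{sec:basicA}, and show that $(\All^{t+3}, \All^{t+4})$ satisfies all four conditions of Definition \ref{def:nearStableA} with the claimed probability. Since $C$ has no inhibitors firing, Lemma \ref{lem:quietA} gives $Y^{t+1} = X^{t+1}$ so $\norm{Y^{t+1}}_1 = k := \norm{X^t}_1 \ge 1$ w.h.p.\ Moreover, since $\norm{C(\Output)}_1 = 0$, Lemma \ref{lem:resetA2} guarantees that no convergence inhibitor fires at $t+1$ w.h.p.; the value of $a_s^{t+1}$ depends on $C'$ but will not matter downstream. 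A direct potential computation (or Corollary \ref{cor:typicalA}) then shows that at time $t+2$ we again have $Y^{t+2} = X^{t+2}$ with the same $k$ firing outputs, and Lemmas \ref{lem:resetA1} and \ref{lem:resetA2} force the inhibitors to match this count: $a_s^{t+2}, a_1^{t+2}, \ldots, a_\ell^{t+2}$ fire while $a_{\ell+1}^{t+2}, \ldots, a_{\lognc}^{t+2}$ do not, where $\ell = \lfloor \log_2 k \rfloor$.

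At this point I would split on $k$. In the easy subcase $k=1$ (so $\ell=0$) only $a_s$ fires at $t+2$, Lemma \ref{lem:stability1A} propagates the singleton output state forward, and inspection verifies all four near-stable conditions for $(\All^{t+3}, \All^{t+4})$ w.h.p. In the main subcase $k \ge 2$, I would apply Corollary \ref{cor:convergenceA} at time $t+3$ with $C = \All^{t+2}$ and $C' = \All^{t+1}$: both configurations have the same $k \in [2^\ell, 2^{\ell+1})$ firing outputs, so $\norm{\min(C',C)}_1 = k$ lies in the required range and the convergence-inhibitor pattern of $\All^{t+2}$ matches what the corollary requires, yielding with probability $\ge 1/16 - n e^{-2\gamma}$ that $Y^{t+3}$ is a valid WTA output configuration with a unique winner $y_j$.

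To close, I would verify each condition of Definition \ref{def:nearStableA} for $(\All^{t+3}, \All^{t+4})$. The subtle point is condition (3): even though $a_1^{t+3}, \ldots, a_\ell^{t+3}$ still fire, the definition requires no convergence inhibitors only in the \emph{second} configuration, and since $\norm{Y^{t+3}}_1 = 1$, Lemma \ref{lem:resetA2}(1) gives $a_j^{t+4} = 0$ for $j \ge 1$ w.h.p.; Lemma \ref{lem:resetA1} supplies $a_s^{t+3} = a_s^{t+4} = 1$, and Lemma \ref{lem:convergenceA}(1) forces $Y^{t+4} \subseteq \{y_j\}$ so that $\norm{\max(Y^{t+3}, Y^{t+4})}_1 = 1$. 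A union bound over the $O(1)$ failure events across the four time steps---each of magnitude $O(n e^{-\gamma/2})$ apart from the $1/16$-term from Corollary \ref{cor:convergenceA}---yields the bound $1/16 - 12n \cdot e^{-\gamma/2}$. The main effort is careful bookkeeping of failure probabilities and confirming that the inhibitor-pattern hypothesis of Corollary \ref{cor:convergenceA} holds conditionally on the good events at $t+1$ and $t+2$; everything else is sequential application of the transition lemmas already proved.
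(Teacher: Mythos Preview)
Your proposal is correct and follows essentially the same four-step structure as the paper's proof: establish $Y^{t+1}=X^{t+1}$ with no convergence inhibitors via Lemmas \ref{lem:quietA} and \ref{lem:resetA2}; propagate to $Y^{t+2}=X^{t+2}$ with the correct inhibitor pattern (the paper handles the unknown value of $a_s^{t+1}$ by an explicit case split between Lemmas \ref{lem:quietA} and \ref{lem:stability1A}, which is what your ``direct potential computation'' amounts to); apply Corollary \ref{cor:convergenceA} (or Lemma \ref{lem:stability1A} when $k=1$) to get a valid WTA output at $t+3$; and then verify the near-stable conditions for $(\All^{t+3},\All^{t+4})$ via Lemmas \ref{lem:resetA1}, \ref{lem:resetA2}, and \ref{lem:convergenceA}(1), combining by union bound. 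The sequencing, the lemma invocations, and the final arithmetic all match.
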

\end{tcolorbox}
\begin{proof}
The proof follows from a series of four steps, arguing about the state of $\Netc$ at times $t+1,t+2,t+3,t+4$.

\medskip
\spara{Step 1:}
\medskip

Let $\mathcal{E}_1$ be the event that $y_i^{t+1} = x_i^{t+1} \text{ for all } i$ and that $a_i^{t+1} = 0$ for all $i \in \{1,...,\lognc \}$. By Lemma \ref{lem:quietA}, since $\norm{C(A)}_1  = 0$, 
$$\Pr [y_i^{t+1} = x_i^{t+1} \text{ for all } i| \All^t = C,\All^{t-1}=C'] \ge 1-ne^{-\gamma/2}.$$
Additionally, since $\norm{C(Y)}_1 = 0$, by Lemma \ref{lem:resetA2} conclusion (1), 
$$\Pr [a_i^{t+1} = 0\text{ for all }i \in \{1,...,\lognc \}| \All^t = C,\All^{t-1}=C'] \ge 1-\lognc \cdot e^{-\gamma/2}.$$
Thus, by a union bound we have:
\begin{align}\label{e1A}
\Pr [\mathcal{E}_1 | \All^t = C,\All^{t-1}=C'] \ge 1-(n+\lognc)\cdot e^{-\gamma/2}.
\end{align}

\medskip
\spara{Step 2:}
\medskip

Let $\mathcal{E}_2$ be the event that $y_i^{t+2} = x_i^{t+2} \text{ for all } i$ and that for $l =\left \lfloor \log_2 \left (\norm{X^t}_1 \right) \right\rfloor$, $a_s^{t+2} = a_1^{t+2} = ... = a_l^{t+2} = 1\text{ and } a_{l+1}^{t+2} = ... = a_{\lognc}^{l+2} = 0$ (if $l=0$, just $a_s^{t+2} = 1$). Conditioned on $\mathcal{E}_1$, the only inhibitor that possibly fires at time $t+1$ is $a_s$. We can separately  consider the cases when $a_s^{t+1} =0$ and when $a_s^{t+1} = 1$. By Lemma \ref{lem:stability1A},
\begin{align*}
\Pr \left [y_i^{t+2} = x_i^{t+2} \text{ for all } i | \mathcal{E}_1,a_s^{t+1}=1, \All^t = C,\All^{t-1}=C'\right] \ge 1-ne^{-\gamma/2}.
\end{align*}
By Lemma \ref{lem:quietA} we also have
\begin{align*}
\Pr \left [y_i^{t+2} = x_i^{t+2} \text{ for all } i | \mathcal{E}_1,a_s^{t+1}=0, \All^t = C,\All^{t-1}=C'\right] \ge 1-ne^{-\gamma/2}.
\end{align*}
By the law of total probability this gives:
\begin{align}\label{e2First}
\Pr \left [y_i^{t+2} = x_i^{t+2} \text{ for all } i | \mathcal{E}_1, \All^t = C,\All^{t-1}=C'\right] \ge 1-ne^{-\gamma/2}.
\end{align}
We also apply Lemma \ref{lem:resetA2}. Conditioned on $\mathcal{E}_1$, for $l =\left \lfloor \log_2 \left (\norm{X^t}_1 \right) \right\rfloor$, we have $\norm{\Output^{t+1}}_1 = \norm{\Input^t}_1 \in [2^{l}, 2^{l+1})$, which gives that,
\begin{align}\label{e2Second}
\Pr[a_1^{t+2} = ... = a_i^{t+2} = 1\text{ and } a_{i+1}^{t+2} = ... = a_{\lognc}^{t+2} = 0 | &\mathcal{E}_1, \All^t = C,\All^{t-1}=C']\nonumber \\&\ge 1-\lognc \cdot e^{-\gamma/2}.
\end{align}
Similarly, applying Lemma \ref{lem:resetA1}, since conditioned on $\mathcal{E}_1$, $\norm{Y^{t+1}}_1 = \norm{X^{t+1}}_1 \ge 1$:
\begin{align}\label{e2Second2}
\Pr[a_s^{t+2} = 1 | \mathcal{E}_1, \All^t = C,\All^{t-1}=C'] \ge 1-e^{-\gamma/2}.
\end{align}
Combining \eqref{e2First}, \eqref{e2Second}, and \eqref{e2Second2} we have:
\begin{align}\label{e2A}
\Pr [\mathcal{E}_2 | \mathcal{E}_1, \All^t = C,\All^{t-1}=C'] 
&\ge 1-(n+\lognc + 1)\cdot e^{-\gamma/2}.
\end{align}

\medskip
\spara{Step 3:}
\medskip

Let  $\mathcal{E}_3$ be the event that $Y^{t+3}$ is a valid WTA configuration, and that for $l =\left \lfloor \log_2 \left (\norm{X^t}_1 \right) \right\rfloor$, $a_s^{t+3} = a_1^{t+3} = ... = a_l^{t+3} = 1\text{ and } a_{l+1}^{t+3} = ... = a_{\lognc}^{l+3} = 0$. 

If  $l=0$, conditioned on $\mathcal{E}_1,\mathcal{E}_2$, we have $Y^{t+1} = Y^{t+2} = X^{t}$ and so $\norm{Y^{t+1}}_1 = \norm{Y^{t+2}}_1 = \norm{X^t}_1 = 1$. By the stability  property of Lemma \ref{lem:stability1A} we thus have:
\begin{align*}
\Pr[Y^{t+3}\text{ is a valid WTA output configuration } |\mathcal{E}_1,\mathcal{E}_2,\All^t=C,\All^{t-1}=C'] \ge  1-ne^{-\gamma/2}.
\end{align*}
Oftherwise, for $l \ge  1$,
by Corollary \ref{cor:convergenceA}, since conditioned on $\mathcal{E}_1$ and $\mathcal{E}_2$, $$\norm{\min(Y^{t+1},Y^{t+2})}_1 = \norm{X^{t}}_1 \in [2^l,2^{l+1})$$ and $a_s^{t+2} = a_1^{t+2} = ... = a_l^{t+2} = 1\text{ and } a_{l+1}^{t+2} = ... = a_{\lognc}^{t+2} = 0$, 
\begin{align*}
\Pr[Y^{t+3}\text{ is a valid WTA output configuration } |\mathcal{E}_1,\mathcal{E}_2,\All^t=C,\All^{t-1}=C'] \ge  \frac{1}{16}-ne^{-2\gamma}.
\end{align*}
We can easily bound the probability of $a_s^{t+3} = a_1^{t+3} = ... = a_l^{t+3} = 1\text{ and } a_{l+1}^{t+3} = ... = a_{\lognc}^{l+3} = 0$ using the same arguments as in \eqref{e2Second} and \eqref{e2Second2}, giving, via a union bound:
\begin{align}\label{e3A}
\Pr [\mathcal{E}_3 | \mathcal{E}_1,\mathcal{E}_2, \All^t = C,\All^{t-1}=C']
&\ge \frac{1}{16}-(n+\lognc +1)\cdot e^{-\gamma/2}.
\end{align}

\medskip
\spara{Step 4:}
\medskip

Finally, let $\mathcal{E}_4$ be the event that $\max(Y^{t+3},Y^{t+4}) = 1$, $a_s^{t+4} = 1$, $\sum_{j=1}^{\lognc} a_j^{t+4} = 0$ and $y_i^{t+4} \le x_i^{t+4}$ for all $i$. We can check via Definition \ref{def:nearStableA}  that if $\mathcal{E}_3$ and $\mathcal{E}_4$ occur, then $(\All^{t+3},\All^{t+4})$ is a near-stable pair.

Since conditioned on $\mathcal{E}_3$, $a_s^{t+3} = a_1^{t+3} = ... = a_l^{t+3} = 1\text{ and } a_{l+1}^{t+3} = ... = a_{\lognc}^{l+3} = 0$, if $l \ge 1$,  by  Lemma \ref{lem:convergenceA} conclusion (1),
\begin{align}\label{t41} 
\Pr [\norm{Y^{t+4}}_1 \le \norm{Y^{t+3}}_1 | \mathcal{E}_1,\mathcal{E}_2,\mathcal{E}_3,\All^t=C,\All^{t-1}=C'] \ge 1-ne^{-\gamma/2}.
\end{align}
Since conditioned on $\mathcal{E}_3$, $\norm{Y^{t+3}}_1 = 1$, this gives $\max(Y^{t+3},Y^{t+4}) = 1$. If $l = 0$,  then we have an identical bound via the stability property of Lemma \ref{lem:stability1A}. 

Again, since conditioned on $\mathcal{E}_3$, $\norm{Y^{t+3}}_1 = 1$, by Lemma \ref{lem:resetA1}, 
\begin{align}\label{t41A}
\Pr [a_s^{t+4} =1 | 
\mathcal{E}_1, \mathcal{E}_2,\mathcal{E}_3, \All^t = C,\All^{t-1}=C'] \ge 1-e^{-\gamma/2}.
\end{align}
By Lemma \ref{lem:resetA2}, this also gives 
\begin{align}\label{t42A}
\Pr \left  [\sum_{j=1}^{\lognc} a_j^{t+4} = 0 \big | 
\mathcal{E}_1, \mathcal{E}_2,\mathcal{E}_3, \All^t = C,\All^{t-1}=C'\right] \ge 1-\lognc \cdot e^{-\gamma/2}.
\end{align}
By a union bound using \eqref{t41},\eqref{t41A}, and \eqref{t42A},
\begin{align}\label{e4A}
\Pr [\mathcal{E}_4 | \mathcal{E}_1,\mathcal{E}_2,\mathcal{E}_3, \All^t = C,\All^{t-1}=C']
&\ge 1-(n+\lognc+1) \cdot e^{-\gamma/2}.
\end{align}

\spara{Completing the proof:}

Let $\mathcal{E}$ be the event that $(\All^{t+3},\All^{t+4})\text{ is near-stable }$. We can 
complete the proof by bounding:
\begin{align*}
\Pr[\mathcal{E} | \All^t = C,\All^{t-1}=C'] &\ge \Pr[\mathcal{E}_3,\mathcal{E}_4 | \All^t = C,\All^{t-1}=C']\\
&\ge \Pr[\mathcal{E}_4  | \mathcal{E}_1,\mathcal{E}_2, \mathcal{E}_3, \All^t = C,\All^{t-1}=C']\\ &\hspace{2em}\cdot \Pr[ \mathcal{E}_3 | \mathcal{E}_1,\mathcal{E}_2,\All^t = C,\All^{t-1}=C']\\ &\hspace{2em}\cdot \Pr[ \mathcal{E}_2 | \mathcal{E}_1,\All^t = C,\All^{t-1}=C'] \\&\hspace{2em}\cdot \Pr[ \mathcal{E}_1 | \All^t = C,\All^{t-1}=C'].
\end{align*}
We can bound the above terms using \eqref{e1A}, \eqref{e2A},\eqref{e3A}, and \eqref{e4A} giving:
\begin{align*}
\Pr[\mathcal{E} | \All^t = C,\All^{t-1}=C'] &\ge \left (1-(n+\lognc) \cdot e^{-\gamma/2}\right ) \cdot \left (1-(n+\lognc+1) \cdot e^{-\gamma/2}\right )\\ &\cdot \left (\frac{1}{16}-(n+\lognc+1) \cdot e^{-\gamma/2}\right ) \cdot \left (1-(n+\lognc+1) \cdot e^{-\gamma/2}\right ) \\
&\ge \frac{1}{16} - 4(n + \lognc +1) \cdot e^{-\gamma/2}\\
&\ge  \frac{1}{16} - 12n\cdot e^{-\gamma/2}.
\end{align*}

\end{proof}

\begin{tcolorbox}
\begin{lemma}[$\norm{Y^t}_1 > 1$, $\norm{A^t}_1 > 1$]\label{lem:converge7A}
Assume  the input execution $\alpha_X$ of $\Netc$ has $X^t$ fixed for all $t$ and that $\norm{X^t}_1 \ge 1$. Consider any pair of configurations $C',C$ with $C(X) = C'(X) = X^t$, $\norm{C(\Output)}_1 > 1$, $\norm{C(A)}_1 > 1$, $C(a_s)=1$, and $C(y_i) \le C(x_i)$, $C'(y_i)\le C'(x_i)$ for all $i$. For any  time $t \ge 1$,
$$\Pr[(\All^{t+i},\All^{t+i+1})\text{ is near-stable for some }i\le 7 | \All^t = C,\All^{t-1}=C'] \ge \frac{1}{128} - 24n\cdot e^{-\gamma/2}.$$
\end{lemma}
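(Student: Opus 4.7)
The plan is to prove this lemma by a two-phase reduction to an earlier case with zero firing outputs. The key obstacle relative to the easier cases is that the initial inhibitor pattern at time $t$ is not assumed typical: only $\norm{C(A)}_1 > 1$ and $C(a_s) = 1$ are given, which allows, for instance, $C(a_1) = 0$ while some $a_j$ with $j \ge 2$ fires. In this atypical situation the clean firing-probability computation of Lemma \ref{lem:convergenceA} does not apply to the step $t \to t+1$, and a direct output could even still have positive potential. To avoid this, I first spend one step letting the configuration self-correct and then invoke the convergence machinery from a genuinely typical state.

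\textbf{Phase 1 (typicality).} At time $t+1$, Corollary \ref{cor:typicalA} gives that $\All^{t+1}$ is typical with probability at least $1 - (n + \lognc + 1)e^{-\gamma/2}$. Since $\norm{C(Y)}_1 > 1$, write $\norm{C(Y)}_1 = k \in [2^{l}, 2^{l+1})$ with $l \ge 1$. Combining Lemmas \ref{lem:resetA1} and \ref{lem:resetA2} with the definition of typical gives $a_s^{t+1} = a_1^{t+1} = \ldots = a_l^{t+1} = 1$, all higher convergence inhibitors silent at $t+1$, and $y_i^{t+1} \le x_i^{t+1}$ for all $i$, all with the same high probability.

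\textbf{Phase 2 (collapse to zero outputs).} Conditioned on $\All^{t+1}$ being typical as above, I apply Corollary \ref{cor:convergenceA0} to the step $t+1 \to t+2$, with $\All^{t+1}, \All^t$ playing the roles of $C, C'$ in that corollary. Because $\norm{\min(Y^{t+1}, Y^t)}_1 \le \norm{Y^t}_1 = k < 2^{l+1}$, the corollary yields that $\norm{Y^{t+2}}_1 = 0$ with probability at least $\tfrac{1}{8} - n e^{-2\gamma}$.

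\textbf{Phase 3 (cascade via an earlier case).} Once $\norm{Y^{t+2}}_1 = 0$, I split on $\norm{A^{t+2}}_1$, which by Lemmas \ref{lem:resetA1} and \ref{lem:resetA2} is determined (w.h.p.) by $\norm{Y^{t+1}}_1$: apply Lemma \ref{lem:converge1A} if $\norm{Y^{t+1}}_1 = 0$, Lemma \ref{lem:converge01A} if $\norm{Y^{t+1}}_1 = 1$, and Lemma \ref{lem:converge3A} if $\norm{Y^{t+1}}_1 > 1$. Using the pair $(\All^{t+1}, \All^{t+2})$ as the new starting state, each of these yields a near-stable pair within at most $5$ additional steps with probability at least $\tfrac{1}{16} - O(n e^{-\gamma/2})$, which fits inside the remaining steps of the $7$-step window. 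Multiplying the three conditional probabilities by the chain rule and absorbing lower-order error terms gives the stated bound $\tfrac{1}{128} - 24n \cdot e^{-\gamma/2}$.

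The main obstacle, as noted, is handling the atypical initial inhibitor pattern; I resolve this by a single settling step via Corollary \ref{cor:typicalA} before applying the clean collapse-to-zero argument. A secondary technical burden is tracking the probability and error terms across a three-step conditional chain and verifying via the law of total probability that the cascade in Phase 3 proceeds regardless of which subcase of $\norm{Y^{t+1}}_1$ arises; this is routine but must be done with the full set of earlier convergence lemmas in hand.
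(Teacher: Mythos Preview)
Your high-level strategy---one normalizing step, then collapse to $\norm{Y}_1=0$, then hand off to an earlier case---matches the paper's, and Phases~1 and~2 are fine. The gap is in Phase~3.

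\textbf{Circularity.} You propose to invoke Lemmas~\ref{lem:converge01A} and~\ref{lem:converge3A} from the state $(\All^{t+1},\All^{t+2})$. But in the paper's dependency order those lemmas come \emph{after} Lemma~\ref{lem:converge7A}: Lemma~\ref{lem:converge01A} uses Lemmas~\ref{lem:converge8A} and~\ref{lem:converge11A}, which in turn use Lemma~\ref{lem:converge7A}; Lemma~\ref{lem:converge3A} uses Lemma~\ref{lem:converge01A}. So your Phase~3 is circular. The only ``earlier case'' actually available at this point is Lemma~\ref{lem:converge1A} (the $\norm{Y}_1=\norm{A}_1=0$ base case).

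\textbf{Time budget.} This circularity is also why your step count does not close. Lemmas~\ref{lem:converge01A} and~\ref{lem:converge3A} guarantee near-stability only within $10$ and $11$ steps respectively, not the ``at most $5$'' you claim; starting them from time $t+2$ would blow past $i\le 7$. (A smaller point: when $\norm{Y^{t+1}}_1=0$ you still have $a_s^{t+2}=1$ with high probability, since $\norm{Y^t}_1>1$ triggers the history term, so that sub-case lands in $\norm{A}_1=1$, not $\norm{A}_1=0$.)

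\textbf{What the paper does instead.} From $\norm{Y^{t+2}}_1=0$ the paper does not hand off to a $(\norm{Y}_1=0,\norm{A}_1\ge 1)$ lemma. It spends one or two further explicit steps (using Lemmas~\ref{lem:resetA1}, \ref{lem:resetA2}, \ref{lem:stability1A}, \ref{lem:convergenceA}) to drive the configuration all the way down to $\norm{Y}_1=\norm{A}_1=0$, and only then applies Lemma~\ref{lem:converge1A}. In the sub-case $\norm{Y^{t+1}}_1=1$ it observes directly that $(\All^{t+1},\All^{t+2})$ is already near-stable. The worst branch reaches the base case by time $t{+}4$, and Lemma~\ref{lem:converge1A} then delivers near-stability at $(\All^{t+7},\All^{t+8})$, exactly matching $i\le 7$.

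So your Phase~3 needs to be replaced by this explicit two-step drain to the $(0,0)$ base case; with that change the argument goes through with the stated constants.
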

\end{tcolorbox}
\begin{proof}
Again the proof follows from a series of steps, arguing about the state of $\Netc$ at times $t+1,t+2,...$. Let  $\mathcal{E}$ be  the event  that  $(\All^{t+i},\All^{t+i+1})\text{ is near-stable}$ for some $i \le 7$. 

\medskip
\spara{Step 1:}
\medskip

Let $\mathcal{E}_1$ be the event that $y_i^{t+1} \le y_i^t$ for all $i$ and that for $l =\left \lfloor \log_2 \left (\norm{Y^t}_1 \right) \right\rfloor$, $a_s^{t+2} = a_1^{t+2} = ... = a_l^{t+2} = 1\text{ and } a_{l+1}^{t+2} = ... = a_{\lognc}^{l+2} = 0$ (note that $l \ge 1$ since $\norm{Y^t}_1 > 1$).
Let $\mathcal{E}_{1,0}$ be the event that $\mathcal{E}_1$  holds and $\norm{Y^{t+1}}_1 = 0$. Let $\mathcal{E}_{1,1}$ be the event that $\mathcal{E}_1$  holds and $\norm{Y^{t+1}}_1 = 1$. Finally, let  $\mathcal{E}_{1,> 1}$ be the event that $\mathcal{E}_1$  holds and $\norm{Y^{t+1}}_1 > 1$.
Applying Lemmas \ref{lem:resetA1},  \ref{lem:resetA2}, and \ref{lem:convergenceA} and a union bound:
\begin{align}\label{eq:case2e1}
\Pr[\mathcal{E}_1 | \All^t = C,\All^{t-1}=C'] \ge 1 - (n+\lognc + 1)\cdot e^{-\gamma/2}.
\end{align}

\medskip
\spara{Step 2:}
\medskip

Let  $\mathcal{E}_2$ be the event that $\norm{Y^{t+2}}_1 = 0$, that $a_s^{t+2} = 1$ if $\norm{Y^{t+1}}_1 > 0$, and that for $l' =\left \lfloor \log_2 \left (\norm{Y^{t+1}}_1 \right) \right\rfloor$, $a_1^{t+2} = ... = a_{l'}^{t+2} = 1\text{ and } a_{l'+1}^{t+2} = ... = a_{\lognc}^{l+2} = 0$. Since, conditioned on $\mathcal{E}_1$, $\norm{Y^{t+1}}_1  \le \norm{Y^t}_1 \le 2^{l+1}$, $a_s^{t+2} = a_1^{t+2} = ... = a_l^{t+2} = 1\text{ and } a_{l+1}^{t+2} = ... = a_{\lognc}^{l+2} = 0$, by Corollary  \ref{cor:convergenceA0} combined with Lemmas \ref{lem:resetA1},  \ref{lem:resetA2}:
\begin{align}\label{eq:case2e2}
\Pr[\mathcal{E}_2 | \mathcal{E}_1,\All^t = C,\All^{t-1}=C'] \ge \frac{1}{8} - (n+\lognc + 1)\cdot e^{-\gamma/2}.
\end{align}
We now write:
\begin{align*}
\Pr[\mathcal{E} |\All^t = C,\All^{t-1}=C']
&\ge \min \big  (\Pr[\mathcal{E} | \mathcal{E}_2,\mathcal{E}_{1,0},\All^t = C,\All^{t-1}=C'],\nonumber\\
&\hspace{3.5em}\Pr[\mathcal{E} | \mathcal{E}_2, \mathcal{E}_{1,1},\All^t = C,\All^{t-1}=C'],\nonumber\\
&\hspace{3.5em} \Pr[\mathcal{E} | \mathcal{E}_2,\mathcal{E}_{1,>1} ,\All^t = C,\All^{t-1}=C']\big )\nonumber\\
&\cdot \Pr[\mathcal{E}_2 | \mathcal{E}_1,\All^t = C,\All^{t-1}=C'] \cdot \Pr[\mathcal{E}_1 | \All^t = C,\All^{t-1}=C']
\end{align*}
Using \eqref{eq:case2e1} and \eqref{eq:case2e2} we can bound the above by:
\begin{align}\label{dunkinSplit}
\Pr[\mathcal{E} |\All^t = C,\All^{t-1}=C']
&\ge \min \big  (\Pr[\mathcal{E} | \mathcal{E}_2,\mathcal{E}_{1,0},\All^t = C,\All^{t-1}=C'],\nonumber\\
&\hspace{3.5em}\Pr[\mathcal{E} | \mathcal{E}_2, \mathcal{E}_{1,1},\All^t = C,\All^{t-1}=C'],\nonumber\\
&\hspace{3.5em} \Pr[\mathcal{E} | \mathcal{E}_2,\mathcal{E}_{1,>1} ,\All^t = C,\All^{t-1}=C']\big ) \cdot \left  ( \frac{1}{8} - 6n \cdot e^{-\gamma/2}\right).
\end{align}
We bound the minimum above by considering each of the three cases separately.

\medskip
\spara{Case 1: $\Pr[\mathcal{E} | \mathcal{E}_2,\mathcal{E}_{1,0},\All^t = C,\All^{t-1}=C']$.}
\medskip

Conditioned on $\mathcal{E}_2$, $\norm{Y^{t+2}}_1 = 0$. So by Lemma \ref{lem:converge1A}, 
\begin{align}\label{dunkin1}
\Pr[(\All^{t+5},\All^{t+6})\text{ is near-stable } | \mathcal{E}_2,\mathcal{E}_{1,0}, \norm{A^{t+2}}_1 = 0,\All^t = C,\All^{t-1}=C'] \ge \frac{1}{16} - 12n\cdot e^{-\gamma/2}.
\end{align}
If $\norm{A^{t+2}} \ge 1$ then, conditioned on $\mathcal{E}_2$, we  must have $a_s^{t+2} = 1$. Let $\mathcal{E}_3$ be  the event  that  $\norm{Y^{t+3}}_1 = \norm{A^{t+3}}_1 = 0$. By Lemmas \ref{lem:resetA1},  \ref{lem:resetA2}, and \ref{lem:stability1A} and the fact that conditioned on $\mathcal{E}_{1,0}$, $\norm{Y^{t+1}}_1  = 0$, 
\begin{align}\label{eqE3dunkin}
\Pr[\mathcal{E}_3 | \mathcal{E}_2,\mathcal{E}_{1,0}, \norm{A^{t+2}}_1 \ge 1,\All^t = C,\All^{t-1}=C'] \ge 1-(n + \lognc + 1)\cdot e^{-\gamma/2}
\end{align}
Again by Lemma \ref{lem:converge1A}, 
$$\Pr[(\All^{t+6},\All^{t+7})\text{ is near-stable } | \mathcal{E}_3,\mathcal{E}_2,\mathcal{E}_{1,0}, \norm{A^{t+2}}_1 = 0,\All^t = C,\All^{t-1}=C'] \ge \frac{1}{16} - 12n\cdot e^{-\gamma/2}.$$
Combined with \eqref{eqE3dunkin} this gives
\begin{align*}
\Pr[(\All^{t+6},\All^{t+7})\text{ is near-stable } | \mathcal{E}_2,\mathcal{E}_{1,0}, \norm{A^{t+2}}_1 \ge 1,\All^t = C,\All^{t-1}=C'] \ge \frac{1}{16}-15n\cdot e^{-\gamma/2}.
\end{align*}
By  the law of total probability, combined with \eqref{dunkin1} we have 
\begin{align}\label{dunkinMin1}
\Pr[\mathcal{E} | \mathcal{E}_2,\mathcal{E}_{1,0}, \All^t = C,\All^{t-1}=C'] \ge \frac{1}{16} - 15n\cdot e^{-\gamma/2}
\end{align}
which completes this case.

\medskip
\spara{Case 2: $\Pr[\mathcal{E} | \mathcal{E}_2,\mathcal{E}_{1,1},\All^t = C,\All^{t-1}=C']$.}
\medskip

In this case, conditioned on $\mathcal{E}_{1,1}$  and $\mathcal{E}_2$, $a_s^{t+1} = a_s^{t+2} = 1$, $a_s$ is the only inhibitor that fires at time $t+2$, $\norm{Y^{t+1}}_1 = 1$, and $\norm{Y^{t+2}}_1 = 0$. Thus, $(\All^{t+1},\All^{t+2})$ is a near-stable pair of configurations, and so vacuously,
\begin{align}\label{dunkinMin2}
\Pr[\mathcal{E} | \mathcal{E}_2,\mathcal{E}_{1,0}, \All^t = C,\All^{t-1}=C'] = 1
\end{align}
which completes this case.

\medskip
\spara{Case 3: $\Pr[\mathcal{E} | \mathcal{E}_2,\mathcal{E}_{1,>1},\All^t = C,\All^{t-1}=C']$.}
\medskip

In this case, conditioned on $\mathcal{E}_{1,>1}$ and $\mathcal{E}_2$, $\norm{A^{t+2}}_1  > 1$  and $a_s^{t+2} = 1$. Let $\mathcal{E}_3$ be the event that $\norm{Y^{t+3}}_1 = 0$ and that if $\norm{A^{t+3}}_1 \ge 1$, $a_s^{t+3} = 1$. By Lemmas \ref{lem:resetA1},  \ref{lem:resetA2}, and \ref{lem:convergenceA}, 
\begin{align}\label{eqE3dunkin11}
\Pr[\mathcal{E}_3 | \mathcal{E}_2,\mathcal{E}_{1,> 1},\All^t = C,\All^{t-1}=C'] \ge 1-(n +\lognc+1) e^{-\gamma/2}.
\end{align}
In the case that $\norm{A^{t+3}}_1 = 0$, we can again apply  Lemma \ref{lem:converge1A} to give:
\begin{align}\label{dunkinTired}\Pr[(\All^{t+6},\All^{t+7})\text{ is near-stable } | \mathcal{E}_3,\norm{A^{t+3}}_1 = 0,\mathcal{E}_2,\mathcal{E}_{1,> 1},&\All^t = C,\All^{t-1}=C']\nonumber\\
&\ge \frac{1}{16} - 12n\cdot e^{-\gamma/2}.
\end{align}
In the case that $\norm{A^{t+3}}_1 \ge 1$, let $\mathcal{E}_4$ be the event that $\norm{Y^{t+4}}_1  = \norm{A^{t+4}} = 0$. We have by Lemmas  \ref{lem:resetA1},  \ref{lem:resetA2} and \ref{lem:stability1A},
\begin{align}\label{eqE3dunkin1}
\Pr[\mathcal{E}_4 | \mathcal{E}_3,\norm{A^{t+3}}_1 \ge 1, \mathcal{E}_2,\mathcal{E}_{1,> 1},\All^t = C,\All^{t-1}=C'] \ge 1-(n +\lognc+1) e^{-\gamma/2}.
\end{align}
Further, again by  Lemma \ref{lem:converge1A} we have:
\begin{align*}
\Pr[(\All^{t+7},\All^{t+8})\text{ is near-stable } | \mathcal{E}_4,\mathcal{E}_3,\norm{A^{t+3}}_1 = 0,\mathcal{E}_2,\mathcal{E}_{1,> 1},&\All^t = C,\All^{t-1}=C']\\
&\ge \frac{1}{16} - 12n\cdot e^{-\gamma/2}.
\end{align*}
Combined with \eqref{eqE3dunkin1} this gives:
\begin{align*}
\Pr[(\All^{t+7},\All^{t+8})\text{ is near-stable } | \mathcal{E}_3,\norm{A^{t+3}}_1 = 0,\mathcal{E}_2,\mathcal{E}_{1,> 1},&\All^t = C,\All^{t-1}=C']\\ 
&\ge \frac{1}{16} - 15n\cdot e^{-\gamma/2}.
\end{align*}
Further, combined with \eqref{dunkinTired}, by the law of total probability,
\begin{align}\label{dunkinDone}
\Pr[\mathcal{E} | \mathcal{E}_3, \mathcal{E}_2,\mathcal{E}_{1,>1}, \All^t = C,\All^{t-1}=C'] \ge \frac{1}{16} - 15n\cdot e^{-\gamma/2}.
\end{align}
Finally, combining \eqref{dunkinDone} with \eqref{eqE3dunkin11} we have:
\begin{align}\label{dunkinMin3}
\Pr[\mathcal{E} | \mathcal{E}_2,\mathcal{E}_{1,>1}, \All^t = C,\All^{t-1}=C'] \ge \frac{1}{16} - 18 n\cdot e^{-\gamma/2}
\end{align}
completing this case.

\medskip
\spara{Completing the proof.}
\medskip

Using  equations \eqref{dunkinMin1}, \eqref{dunkinMin2}, and \eqref{dunkinMin3} in the three cases above along \eqref{dunkinSplit}:
\begin{align*}
\Pr[\mathcal{E} |\All^t = C,\All^{t-1}=C']
&\ge \min \big  (\Pr[\mathcal{E} | \mathcal{E}_2,\mathcal{E}_{1,0},\All^t = C,\All^{t-1}=C'],\nonumber\\
&\hspace{3.5em}\Pr[\mathcal{E} | \mathcal{E}_2, \mathcal{E}_{1,1},\All^t = C,\All^{t-1}=C'],\nonumber\\
&\hspace{3.5em} \Pr[\mathcal{E} | \mathcal{E}_2,\mathcal{E}_{1,>1} ,\All^t = C,\All^{t-1}=C']\big ) \cdot \left  ( \frac{1}{8} - 6n \cdot e^{-\gamma/2}\right)\\
&\ge  \left  ( \frac{1}{16} - 18n \cdot e^{-\gamma/2}\right) \cdot  \left  ( \frac{1}{8} - 6n \cdot e^{-\gamma/2}\right)\\
&\ge \frac{1}{128} - 24n\cdot e^{-\gamma/2},
\end{align*}
completing  the lemma.
\end{proof}

Using Lemma \ref{lem:converge7A} it is not hard to complete the cases when $\norm{Y^t}_1 > 1$ and $\norm{A^t}_1 \le 1$.

\begin{tcolorbox}
\begin{lemma}[$\norm{Y^t}_1 > 1$, $\norm{A^t}_1 = 1$]\label{lem:converge8A}
Assume  the input execution $\alpha_X$ of $\Netc$ has $X^t$ fixed for all $t$ and that $\norm{X^t}_1 \ge 1$. Consider any pair of configurations $C',C$ with $C(X)=C'(X)=X^t$, $\norm{C(\Output)}_1 > 1$, $\norm{C(A)}_1 = 1$, $C(a_s)=1$, and $C(y_i) \le C(x_i)$, $C'(y_i)\le C'(x_i)$ for all $i$. For any  time $t \ge 1$,
$$\Pr[(\All^{t+i},\All^{t+i+1})\text{ is near-stable for some }i\le 8 | \All^t = C,\All^{t-1}=C'] \ge \frac{1}{128} - 28n\cdot e^{-\gamma/2}.$$
\end{lemma}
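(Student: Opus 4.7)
The plan is to reduce this case to Lemma \ref{lem:converge7A} by showing that after just one time step, the network reaches a configuration satisfying the hypotheses of that lemma (i.e., multiple firing outputs, multiple firing inhibitors, and the stability inhibitor active), with high probability. First, I will define $\mathcal{E}_1$ to be the event that
\[
y_i^{t+1} \le x_i^{t+1} \text{ for all } i, \quad \norm{Y^{t+1}}_1 > 1, \quad a_s^{t+1} = 1, \quad \text{and} \quad \norm{A^{t+1}}_1 > 1.
\]
I claim that $\Pr[\mathcal{E}_1 \mid \All^t = C, \All^{t-1} = C'] \ge 1 - (n + \lognc + 1)e^{-\gamma/2}$.

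To establish the claim, observe that the hypothesis gives $C(a_s) = 1$ and $C(a_i) = 0$ for all convergence inhibitors, with $C(y_i) \le C(x_i)$ and $C'(y_i) \le C'(x_i)$ for all $i$. Hence Lemma \ref{lem:stability1A} applies, yielding $y_i^{t+1} = \max(y_i^t, y_i^{t-1})$ for all $i$ with probability $\ge 1 - n e^{-\gamma/2}$. In particular, since $\norm{Y^t}_1 > 1$ this forces $\norm{Y^{t+1}}_1 \ge \norm{Y^t}_1 > 1$, and moreover $y_i^{t+1} \le \max(x_i^t, x_i^{t-1}) = x_i^{t+1}$ (using that the input is fixed). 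Next, applying Lemma \ref{lem:resetA1} conclusion (2), since $\norm{C(Y)}_1 \ge 1$, I get $a_s^{t+1} = 1$ with probability $\ge 1 - e^{-\gamma/2}$. Finally, letting $l = \lfloor \log_2 \norm{Y^t}_1 \rfloor \ge 1$ (so $\norm{Y^t}_1 \in [2^l, 2^{l+1})$), Lemma \ref{lem:resetA2} conclusion (2) guarantees $a_1^{t+1} = \dots = a_l^{t+1} = 1$ with probability $\ge 1 - \lognc \cdot e^{-\gamma/2}$, which in particular gives $\norm{A^{t+1}}_1 \ge l + 1 \ge 2$. A union bound over these three events yields the claim.

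Given $\mathcal{E}_1$, the pair $(\All^t, \All^{t+1})$ satisfies exactly the hypotheses of Lemma \ref{lem:converge7A} (with starting time shifted to $t+1$): both $\All^t$ and $\All^{t+1}$ have outputs dominated by inputs, $\All^{t+1}$ has more than one firing output and more than one firing inhibitor including $a_s$, and the input is fixed. Applying Lemma \ref{lem:converge7A} at time $t+1$ then gives that with probability $\ge \tfrac{1}{128} - 24n \cdot e^{-\gamma/2}$, some pair $(\All^{(t+1)+i}, \All^{(t+1)+i+1})$ with $i \le 7$ is near-stable, i.e., near-stability is achieved within $8$ steps of $t$. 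Combining via
\[
\Pr[\mathcal{E} \mid \All^t = C, \All^{t-1} = C'] \;\ge\; \Pr[\mathcal{E}_1 \mid \cdot] \cdot \min_{\All^{t+1} \models \mathcal{E}_1} \Pr[\mathcal{E} \mid \All^{t+1}, \All^t],
\]
and plugging in the two bounds, I obtain
\[
\Pr[\mathcal{E} \mid \All^t = C, \All^{t-1} = C'] \;\ge\; \left(\tfrac{1}{128} - 24n e^{-\gamma/2}\right)\!\left(1 - (n + \lognc + 1) e^{-\gamma/2}\right) \;\ge\; \tfrac{1}{128} - 28 n \cdot e^{-\gamma/2},
\]
as desired. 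The only mildly delicate point is ensuring that the weak bound $y_i^{t+1} \le x_i^{t+1}$ truly follows from Lemma \ref{lem:stability1A} rather than requiring a separate invocation of Corollary \ref{cor:monoA}; this is immediate because the equality $y_i^{t+1} = \max(y_i^t, y_i^{t-1})$ combined with the hypothesis $y_i^t \le x_i^t$, $y_i^{t-1} \le x_i^{t-1}$, and input stability gives it directly. No step is truly obstructive—this is a clean reduction to the already-established harder case.
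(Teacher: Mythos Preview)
Your proposal is correct and follows essentially the same approach as the paper: define $\mathcal{E}_1$ capturing that at time $t+1$ there are still $>1$ firing outputs (via the stability-inhibitor effect), $a_s^{t+1}=1$, and $\norm{A^{t+1}}_1>1$ (via the inhibitor-behavior lemmas), then invoke Lemma~\ref{lem:converge7A} shifted one step. The only difference is bookkeeping: the paper separately invokes Corollary~\ref{cor:monoA} to secure $y_i^{t+1}\le x_i^{t+1}$, incurring an extra $n e^{-\gamma/2}$ and yielding $1-(2n+\lognc+1)e^{-\gamma/2}$ for $\Pr[\mathcal{E}_1]$, whereas you correctly observe that Lemma~\ref{lem:stability1A} already gives $y_i^{t+1}=\max(y_i^t,y_i^{t-1})\le x_i^{t+1}$, so your intermediate bound $1-(n+\lognc+1)e^{-\gamma/2}$ is slightly sharper; both lead to the same final $\frac{1}{128}-28ne^{-\gamma/2}$.
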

\end{tcolorbox}
\begin{proof}
Let $\mathcal{E}$ be the event that $(\All^{t+i},\All^{t+i+1})\text{ is near-stable for some }i\le 8$.
Let $\mathcal{E}_1$ be the event that $y_i^{t+1} \ge y_i^t$ and $y_i^{t+1} \le x_i^{t+1}$ for all $i$, that $a_s^{t+1} = 1$, and that $\norm{A^{t+1}}_1 > 1$.
By Corollary \ref{cor:monoA} and Lemmas \ref{lem:resetA1},  \ref{lem:resetA2}, and \ref{lem:stability1A}, we have:
\begin{align*}
\Pr[\mathcal{E}_1 | \All^t = C,\All^{t-1}=C'] \ge 1 - (2n + \lognc + 1) e^{-\gamma/2}.
\end{align*}
Further, by Lemma \ref{lem:converge8A} since conditioned on $\mathcal{E}_1$,  $\norm{Y^{t+1}}_1 >1$  and  $\norm{A^{t+1}}_1 >1$ with $a_s^{t+1}  = 1$:
$$\Pr[\mathcal{E} | \mathcal{E}_1, \All^t = C,\All^{t-1}=C'] \ge \frac{1}{128} - 24n\cdot e^{-\gamma/2}.$$
This gives the lemma since:
\begin{align*}
\Pr[\mathcal{E} | \All^t = C,\All^{t-1}=C']  &\ge\Pr[\mathcal{E} | \mathcal{E}_1, \All^t = C,\All^{t-1}=C'] \cdot  \Pr[\mathcal{E}_1 | \All^t = C,\All^{t-1}=C']\\
&\ge \frac{1}{128} - 28n\cdot e^{-\gamma/2}.
\end{align*}
\end{proof}

\begin{tcolorbox}
\begin{lemma}[$\norm{Y^t}_1 > 1$, $\norm{A^t}_1 = 0$]\label{lem:converge9A}
Assume  the input execution $\alpha_X$ of $\Netc$ has $X^t$ fixed for all $t$ and that $\norm{X^t}_1 \ge 1$. Consider any pair of configurations $C',C$ with $C(X)=C'(X)=X^t$, $\norm{C(\Output)}_1 > 1$, $\norm{C(A)}_1 = 0$, and $C(y_i) \le C(x_i)$, $C'(y_i)\le C'(x_i)$ for all $i$. For any  time $t \ge 1$,
$$\Pr[(\All^{t+i},\All^{t+i+1})\text{ is near-stable for some }i\le 8 | \All^t = C,\All^{t-1}=C'] \ge \frac{1}{128} - 27n\cdot e^{-\gamma/2}.$$
\end{lemma}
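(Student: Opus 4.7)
The plan is to reduce this case to Lemma \ref{lem:converge7A} by advancing the network one step. Starting from $\All^t = C$ with $\norm{C(\Output)}_1 > 1$ and $\norm{C(\Inh)}_1 = 0$, observe that the condition $C(y_i) \le C(x_i)$ forces $\norm{X^t}_1 > 1$. I will define $\mathcal{E}_1$ to be the event that $y_i^{t+1} = x_i^{t+1}$ for all $i$, that $a_s^{t+1} = 1$, and that $a_1^{t+1} = \ldots = a_{l}^{t+1} = 1$ and $a_{l+1}^{t+1} = \ldots = a_{\lognc}^{t+1} = 0$ for $l = \lfloor \log_2 \norm{Y^t}_1 \rfloor \ge 1$. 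The first condition follows from Lemma \ref{lem:quietA} since $\norm{C(\Inh)}_1 = 0$, the second from Lemma \ref{lem:resetA1} conclusion (2) since $\norm{C(\Output)}_1 \ge 1$, and the third from Lemma \ref{lem:resetA2} conclusion (2) since $\norm{C(\Output)}_1 = k \in [2^l, 2^{l+1})$. A union bound yields
\begin{align*}
\Pr[\mathcal{E}_1 \mid \All^t = C, \All^{t-1} = C'] \ge 1 - (n + \lognc + 1) e^{-\gamma/2} \ge 1 - 3n \cdot e^{-\gamma/2}.
\end{align*}

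Conditioned on $\mathcal{E}_1$, the pair $(\All^t, \All^{t+1})$ satisfies the hypotheses of Lemma \ref{lem:converge7A}: we have $\norm{\All^{t+1}(\Output)}_1 = \norm{X^{t+1}}_1 > 1$, $\norm{\All^{t+1}(\Inh)}_1 \ge 2$ (since both $a_s$ and $a_1$ fire), $\All^{t+1}(a_s) = 1$, $\All^{t+1}(y_i) = x_i^{t+1} \le x_i^{t+1}$ for all $i$, and $\All^t(y_i) \le \All^t(x_i)$ by assumption. Applying Lemma \ref{lem:converge7A} starting at time $t+1$ gives, with probability $\ge 1/128 - 24n \cdot e^{-\gamma/2}$, that $(\All^{t+1+i}, \All^{t+2+i})$ is near-stable for some $i \le 7$, i.e., $(\All^{t+j}, \All^{t+j+1})$ is near-stable for some $j \le 8$.

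Combining the two bounds via the Markov property (Lemma \ref{lem:independence}), letting $\mathcal{E}$ denote the event that $(\All^{t+i}, \All^{t+i+1})$ is near-stable for some $i \le 8$,
\begin{align*}
\Pr[\mathcal{E} \mid \All^t = C, \All^{t-1} = C'] &\ge \Pr[\mathcal{E} \mid \mathcal{E}_1, \All^t = C, \All^{t-1} = C'] \cdot \Pr[\mathcal{E}_1 \mid \All^t = C, \All^{t-1} = C'] \\
&\ge \left(\frac{1}{128} - 24n \cdot e^{-\gamma/2}\right)\left(1 - 3n \cdot e^{-\gamma/2}\right) \\
&\ge \frac{1}{128} - 27n \cdot e^{-\gamma/2},
\end{align*}
which gives the lemma. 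No step here should be a real obstacle -- the whole argument is essentially the same reduction used in Lemma \ref{lem:converge8A}, and the only subtlety is checking that the post-transition configuration pair truly matches the hypotheses of Lemma \ref{lem:converge7A}, which is immediate from the definition of $\mathcal{E}_1$.
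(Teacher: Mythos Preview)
Your proposal is correct and follows essentially the same approach as the paper: both advance one step using Lemmas~\ref{lem:quietA}, \ref{lem:resetA1}, \ref{lem:resetA2} to establish an event $\mathcal{E}_1$ ensuring $\norm{Y^{t+1}}_1>1$, $a_s^{t+1}=1$, and $\norm{A^{t+1}}_1>1$, and then invoke Lemma~\ref{lem:converge7A} at time $t+1$. Your $\mathcal{E}_1$ is slightly more specific (pinning down exactly which $a_j$ fire) than the paper's, but the bounds and combination are identical.
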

\end{tcolorbox}
\begin{proof}
Let $\mathcal{E}$ be the event that $(\All^{t+i},\All^{t+i+1})\text{ is near-stable for some }i\le 8$.
Let $\mathcal{E}_1$ be the event that $y_i^{t+1} = x_i^{t+1}$ for all $i$, that $a_s^{t+1} = 1$, and that $\norm{A^{t+1}}_1 > 1$. Note that since $\norm{C(Y)}_1 > 1$ and $C(y_i)\le C(x_i)$ for all $i$, $\mathcal{E}_1$ implies that $\norm{Y^{t+1}}_1 = \norm{X^{t+1}}_1 > 1$.
By Lemmas \ref{lem:resetA1},  \ref{lem:resetA2}, and \ref{lem:quietA}:
\begin{align*}
\Pr[\mathcal{E}_1 | \All^t = C,\All^{t-1}=C'] \ge 1 - (n + \lognc + 1) e^{-\gamma/2}.
\end{align*}
Further, by Lemma \ref{lem:converge8A}  since conditioned on $\mathcal{E}_1$,  $\norm{Y^{t+1}}_1 >1$  and  $\norm{A^{t+1}}_1 >1$ with $a_s^{t+1}  = 1$:
$$\Pr[\mathcal{E} | \mathcal{E}_1, \All^t = C,\All^{t-1}=C'] \ge \frac{1}{128} - 24n\cdot e^{-\gamma/2}.$$
This gives the lemma since:
\begin{align*}
\Pr[\mathcal{E} | \All^t = C,\All^{t-1}=C']  &\ge\Pr[\mathcal{E} | \mathcal{E}_1, \All^t = C,\All^{t-1}=C'] \cdot  \Pr[\mathcal{E}_1 | \All^t = C,\All^{t-1}=C']\\
&\ge \frac{1}{128} - 27n\cdot e^{-\gamma/2}.
\end{align*}
\end{proof}
We next complete the remaining cases when $\norm{Y^t}_1 \le 1$.

\begin{tcolorbox}
\begin{lemma}[$\norm{Y^t}_1= 1$, $\norm{A^t}_1 = 0$]\label{lem:converge10A}
Assume  the input execution $\alpha_X$ of $\Netc$ has $X^t$ fixed for all $t$ and that $\norm{X^t}_1 \ge 1$. Consider any pair of configurations $C',C$ with $C(X)=C'(X)=X^t$, $\norm{C(\Output)}_1 = 1$, $\norm{C(A)}_1 = 0$, and $C(y_i) \le C(x_i)$, $C'(y_i)\le C'(x_i)$ for all $i$. For any  time $t \ge 1$,
$$\Pr[(\All^{t+3},\All^{t+4})\text{ is near-stable } | \All^t = C,\All^{t-1}=C'] \ge \frac{1}{16} - 12n\cdot e^{-\gamma/2}.$$
\end{lemma}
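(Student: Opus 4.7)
The plan is to mimic the four-step structure of the proof of Lemma \ref{lem:converge1A}, shifted by one time step, exploiting the fact that we already start with a single firing output (rather than with zero firing outputs). Conceptually, the configuration $C$ here plays the role that $\mathcal{E}_1$ played in Lemma \ref{lem:converge1A} at time $t+1$, except that here we additionally know nothing yet about $a_s$ at time $t$. So the first round will bring us to a state analogous to the end of step~1 of Lemma \ref{lem:converge1A}.

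First, I will define $\mathcal{E}_1$ as the event $Y^{t+1}=X^{t+1}$, $a_s^{t+1}=1$, and $a_j^{t+1}=0$ for all $j\in\{1,\ldots,\lognc\}$. Since $\norm{C(\Inh)}_1=0$, Lemma \ref{lem:quietA} gives the output part with probability $\ge 1-ne^{-\gamma/2}$. Since $\norm{C(Y)}_1=1$, Lemma \ref{lem:resetA1} (Conclusion 2) gives $a_s^{t+1}=1$ and Lemma \ref{lem:resetA2} (Conclusion 1) gives all convergence inhibitors silent, each up to an additive $e^{-\gamma/2}$ union-bound term. Thus $\Pr[\mathcal{E}_1\mid\All^t=C,\All^{t-1}=C']\ge 1-(n+\lognc+1)e^{-\gamma/2}$.

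Second, let $l=\lfloor\log_2\norm{X^t}_1\rfloor$ and define $\mathcal{E}_2$ as the event that $Y^{t+2}=X^{t+2}$ and that exactly $a_s$ together with $a_1,\ldots,a_l$ fire at time $t+2$. Conditioned on $\mathcal{E}_1$, only $a_s$ fires at time $t+1$ and the outputs at times $t,t+1$ are both dominated by the input firing pattern, so Lemma \ref{lem:stability1A} yields $Y^{t+2}=\max(Y^{t+1},Y^t)=X^{t+1}=X^{t+2}$ with probability $\ge 1-ne^{-\gamma/2}$. Since $\norm{Y^{t+1}}_1=\norm{X^t}_1\in[2^l,2^{l+1})$ (with the boundary case $l=0$ handled by Conclusion 1 of Lemma \ref{lem:resetA2}), Lemmas \ref{lem:resetA1} and \ref{lem:resetA2} give the inhibitor part with another $(\lognc+1)e^{-\gamma/2}$ loss.

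Third, I will split on $l$. If $l=0$, then conditioned on $\mathcal{E}_1\cap\mathcal{E}_2$ the pair $(\All^{t+1},\All^{t+2})$ already satisfies Definition \ref{def:nearStableA}: exactly one output fires in each, only $a_s$ among the inhibitors fires, and outputs are dominated by inputs. Applying Lemma \ref{lem:stabilityA} twice in succession (propagating the near-stable property forward step by step) shows that $(\All^{t+3},\All^{t+4})$ is near-stable with probability $\ge 1-2(n+\lognc+1)e^{-\gamma/2}$, easily absorbed in the final bound. If instead $l\ge 1$, then conditioned on $\mathcal{E}_1\cap\mathcal{E}_2$ we have $\norm{\min(Y^{t+1},Y^{t+2})}_1=\norm{X^t}_1\in[2^l,2^{l+1})$ together with $a_s,a_1,\ldots,a_l$ firing at time $t+2$. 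Corollary \ref{cor:convergenceA} applies and yields that $Y^{t+3}$ is a valid WTA output configuration with probability $\ge \tfrac{1}{16}-ne^{-2\gamma}$. Combined with Lemmas \ref{lem:resetA1} and \ref{lem:resetA2} applied to $\norm{Y^{t+2}}_1=\norm{X^t}_1$, the event $\mathcal{E}_3$ consisting of ``$Y^{t+3}$ valid WTA output, $a_s^{t+3}=a_1^{t+3}=\cdots=a_l^{t+3}=1$, $a_{l+1}^{t+3}=\cdots=a_{\lognc}^{t+3}=0$'' holds with probability $\ge \tfrac{1}{16}-(n+\lognc+1)e^{-\gamma/2}$.

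Fourth, conditioned on $\mathcal{E}_3$, I will verify that $(\All^{t+3},\All^{t+4})$ meets the four conditions of Definition \ref{def:nearStableA} with probability $\ge 1-(n+\lognc+1)e^{-\gamma/2}$. Condition (1) (exactly one output fires in the union) uses Lemma \ref{lem:convergenceA} Conclusion 1 to argue $Y^{t+4}\le Y^{t+3}$ (recall $\norm{Y^{t+3}}_1=1$); conditions (2) and (3) on the inhibitors at time $t+4$ follow from Lemmas \ref{lem:resetA1} and \ref{lem:resetA2} applied to $\norm{Y^{t+3}}_1=1$; condition (4) on outputs being dominated by inputs follows from $Y^{t+4}\le Y^{t+3}\le X^{t+3}=X^{t+4}$. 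Multiplying the four step probabilities yields a lower bound of $\tfrac{1}{16}-4(n+\lognc+1)e^{-\gamma/2}\ge \tfrac{1}{16}-12ne^{-\gamma/2}$ after the standard loose consolidation (this matches the bound in Lemma \ref{lem:converge1A}).

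The main obstacle is cleanly handling the boundary case $\norm{X^t}_1=1$ (i.e., $l=0$), since then there is no nontrivial convergence step and we must rely entirely on the self-stabilization propagation of Lemma \ref{lem:stabilityA}; fortunately this case yields a much stronger bound than $\tfrac{1}{16}$, so it is subsumed. The rest is bookkeeping essentially identical to Lemma \ref{lem:converge1A}.
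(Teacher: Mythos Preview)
Your proposal is correct and follows essentially the same four-step structure as the paper's proof, using the same lemmas in the same order. The only minor difference is your handling of the $l=0$ boundary case: you recognize that $(\All^{t+1},\All^{t+2})$ is already near-stable and propagate via Lemma~\ref{lem:stabilityA} twice, whereas the paper keeps the argument uniform by applying Lemma~\ref{lem:stability1A} directly at each step; both routes work and yield bounds easily absorbed into the final $\tfrac{1}{16}-12n e^{-\gamma/2}$.
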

\end{tcolorbox}
\begin{proof}

The proof follows from a series of four steps, arguing about the state of $\Netc$ at times $t+1,t+2,t+3,t+4$. The analysis closely mirrors that of Lemma \ref{lem:converge1A}, for the case when $\norm{Y^t}_1  = 0$ and $\norm{A^t}_1  = 0$.

\medskip
\spara{Step 1:}
\medskip

Let $\mathcal{E}_1$ be the event that $y_i^{t+1} = x_i^{t+1} \text{ for all } i$, that $a_i^{t+1} = 0$ for all $i \in \{1,...,\lognc \}$, and that $a_s^{t+1} = 1$. By Lemma \ref{lem:quietA}, since $\norm{C(A)}_1  = 0$, 
$$\Pr [y_i^{t+1} = x_i^{t+1} \text{ for all } i| \All^t = C,\All^{t-1}=C'] \ge 1-ne^{-\gamma/2}.$$
Additionally, since $\norm{C(Y)}_1 = 1$, by Lemma \ref{lem:resetA2} conclusion (1), 
$$\Pr [a_i^{t+1} = 0\text{ for all }i \in \{1,...,\lognc \}| \All^t = C,\All^{t-1}=C'] \ge 1-\lognc \cdot e^{-\gamma/2}.$$
Finally, by Lemma \ref{lem:resetA1}, 
$$\Pr [a_s^{t+1} = 1| \All^t = C,\All^{t-1}=C'] \ge 1- e^{-\gamma/2}.$$
Thus, by a union bound we have:
\begin{align}\label{e1A4}
\Pr [\mathcal{E}_1 | \All^t = C,\All^{t-1}=C'] \ge 1-(n+\lognc+1)\cdot e^{-\gamma/2}.
\end{align}

\medskip
\spara{Step 2:}
\medskip

Let $\mathcal{E}_2$ be the event that $y_i^{t+2} = x_i^{t+2} \text{ for all } i$ and that for $l =\left \lfloor \log_2 \left (\norm{X^t}_1 \right) \right\rfloor$, $a_s^{t+2} = a_1^{t+2} = ... = a_l^{t+2} = 1\text{ and } a_{l+1}^{t+2} = ... = a_{\lognc}^{l+2} = 0$ (if $l=0$, just $a_s^{t+2} = 1$). Conditioned on $\mathcal{E}_1$, the only inhibitor that fires at time $t+1$ is $a_s$. We can separately  consider the cases when $a_s^{t+1} =0$ and when $a_s^{t+1} = 1$. By Lemma \ref{lem:stability1A} and the fact that $y_i^{t+1} = x_i^{t+1}$ for all $i$,
\begin{align}\label{e2First4}
\Pr \left [y_i^{t+2} = x_i^{t+2} \text{ for all } i | \mathcal{E}_1, \All^t = C,\All^{t-1}=C'\right] \ge 1-ne^{-\gamma/2}.
\end{align}
We also apply Lemma \ref{lem:resetA2}. Conditioned on $\mathcal{E}_1$, for $l =\left \lfloor \log_2 \left (\norm{X^t}_1 \right) \right\rfloor$, we have $\norm{\Output^{t+1}}_1 = \norm{\Input^t}_1 \in [2^{l}, 2^{l+1})$, which gives that,
\begin{align}\label{e2Second4}
\Pr[a_1^{t+2} = ... = a_i^{t+2} = 1\text{ and } a_{i+1}^{t+2} = ... = a_{\lognc}^{t+2} = 0 | \mathcal{E}_1, &\All^t = C,\All^{t-1}=C']\nonumber \\
&\ge 1-\lognc \cdot e^{-\gamma/2}.
\end{align}
Similarly, applying Lemma \ref{lem:resetA1}, since conditioned on $\mathcal{E}_1$, $\norm{Y^{t+1}}_1 = \norm{X^{t+1}}_1 \ge 1$:
\begin{align}\label{e2Second24}
\Pr[a_s^{t+2} = 1 | \mathcal{E}_1, \All^t = C,\All^{t-1}=C'] \ge 1-e^{-\gamma/2}.
\end{align}
Combining \eqref{e2First4}, \eqref{e2Second4}, and \eqref{e2Second24} we have:
\begin{align}\label{e2A4}
\Pr [\mathcal{E}_2 | \mathcal{E}_1, \All^t = C,\All^{t-1}=C'] 
&\ge 1-(n+\lognc + 1)\cdot e^{-\gamma/2}.
\end{align}

\medskip
\spara{Step 3:}
\medskip

Let  $\mathcal{E}_3$ be the event that $Y^{t+3}$ is a valid WTA configuration, and that for $l =\left \lfloor \log_2 \left (\norm{X^t}_1 \right) \right\rfloor$, $a_s^{t+3} = a_1^{t+3} = ... = a_l^{t+3} = 1\text{ and } a_{l+1}^{t+3} = ... = a_{\lognc}^{l+3} = 0$. 

If  $l=0$, conditioned on $\mathcal{E}_1,\mathcal{E}_2$, we have $\norm{Y^{t+1}}_1 = \norm{Y^{t+2}}_1 = 1$ and $Y^{t+1} = Y^{t+2} = X^{t}$. By the stability  property of Lemma \ref{lem:stability1A} we thus have:
\begin{align*}
\Pr[Y^{t+3}\text{ is a valid WTA output configuration } |\mathcal{E}_1,\mathcal{E}_2,\All^t=C,\All^{t-1}=C'] \ge  1-ne^{-\gamma/2}.
\end{align*}
Oftherwise, for $l \ge  1$,
by Corollary \ref{cor:convergenceA}, since conditioned on $\mathcal{E}_1$ and $\mathcal{E}_2$, 
$$\norm{\min(Y^{t+1},Y^{t+2})}_1 = \norm{X^{t}}_1 \in [2^l,2^{l+1})$$
and $a_s^{t+2} = a_1^{t+2} = ... = a_l^{t+2} = 1\text{ and } a_{l+1}^{t+2} = ... = a_{\lognc}^{t+2} = 0$, 
\begin{align*}
\Pr[Y^{t+3}\text{ is a valid WTA output configuration } |\mathcal{E}_1,\mathcal{E}_2,\All^t=C,\All^{t-1}=C'] \ge  \frac{1}{16}-ne^{-2\gamma}.
\end{align*}
We can easily bound the probability of $a_s^{t+3} = a_1^{t+3} = ... = a_l^{t+3} = 1\text{ and } a_{l+1}^{t+3} = ... = a_{\lognc}^{l+3} = 0$ using the same arguments as in \eqref{e2Second4} and \eqref{e2Second24}, giving, via a union bound:
\begin{align}\label{e3A4}
\Pr [\mathcal{E}_3 | \mathcal{E}_1,\mathcal{E}_2, \All^t = C,\All^{t-1}=C']
&\ge \frac{1}{16}-(n+\lognc +1)\cdot e^{-\gamma/2}.
\end{align}

\medskip
\spara{Step 4:}
\medskip

Finally, let $\mathcal{E}_4$ be the event that $\max(Y^{t+3},Y^{t+4}) = 1$, $a_s^{t+4} = 1$, $\sum_{j=1}^{\lognc} a_j^{t+4} = 0$ and $y_i^{t+4} \le x_i^{t+4}$ for all $i$. We can check via Definition \ref{def:nearStableA}  that if $\mathcal{E}_3$ and $\mathcal{E}_4$ occur, then $(\All^{t+3},\All^{t+4})$ is a near-stable pair.

Since conditioned on $\mathcal{E}_3$, $a_s^{t+3} = a_1^{t+3} = ... = a_l^{t+3} = 1\text{ and } a_{l+1}^{t+3} = ... = a_{\lognc}^{l+3} = 0$, if $l \ge 1$,  by  Lemma \ref{lem:convergenceA} conclusion (1),
\begin{align}\label{t414} 
\Pr [\norm{Y^{t+4}}_1 \le \norm{Y^{t+3}}_1 | \mathcal{E}_1,\mathcal{E}_2,\mathcal{E}_3,\All^t=C,\All^{t-1}=C'] \ge 1-ne^{-\gamma/2}.
\end{align}
Since conditioned on $\mathcal{E}_3$, $\norm{Y^{t+3}}_1 = 1$, this gives $\max(Y^{t+3},Y^{t+4}) = 1$. If $l = 0$,  then we have an identical bound via the stability property of Lemma \ref{lem:stability1A}. 

Again, since conditioned on $\mathcal{E}_3$, $\norm{Y^{t+3}}_1 = 1$, by Lemma \ref{lem:resetA1}, 
\begin{align}\label{t41A4}
\Pr [a_s^{t+4} =1 | 
\mathcal{E}_1, \mathcal{E}_2,\mathcal{E}_3, \All^t = C,\All^{t-1}=C'] \ge 1-e^{-\gamma/2}.
\end{align}
By Lemma \ref{lem:resetA2}, this also gives 
\begin{align}\label{t42A4}
\Pr \left  [\sum_{j=1}^{\lognc} a_j^{t+4} = 0 \big | 
\mathcal{E}_1, \mathcal{E}_2,\mathcal{E}_3, \All^t = C,\All^{t-1}=C'\right] \ge 1-\lognc \cdot e^{-\gamma/2}.
\end{align}
By a union bound using \eqref{t414},\eqref{t41A4}, and \eqref{t42A4},
\begin{align}\label{e4A4}
\Pr [\mathcal{E}_4 | \mathcal{E}_1,\mathcal{E}_2,\mathcal{E}_3, \All^t = C,\All^{t-1}=C']
&\ge 1-(n+\lognc+1) \cdot e^{-\gamma/2}.
\end{align}

\spara{Completing the proof:}

Let $\mathcal{E}$ be the event that $(\All^{t+3},\All^{t+4})\text{ is near-stable }$. We can 
complete the proof by bounding:
\begin{align*}
\Pr[\mathcal{E} | \All^t = C,\All^{t-1}=C'] &\ge \Pr[\mathcal{E}_3,\mathcal{E}_4 | \All^t = C,\All^{t-1}=C']\\
&\ge \Pr[\mathcal{E}_4  | \mathcal{E}_1,\mathcal{E}_2, \mathcal{E}_3, \All^t = C,\All^{t-1}=C']\\ &\hspace{2em}\cdot \Pr[ \mathcal{E}_3 | \mathcal{E}_1,\mathcal{E}_2,\All^t = C,\All^{t-1}=C']\\ &\hspace{2em}\cdot \Pr[ \mathcal{E}_2 | \mathcal{E}_1,\All^t = C,\All^{t-1}=C'] \\&\hspace{2em}\cdot \Pr[ \mathcal{E}_1 | \All^t = C,\All^{t-1}=C']
\end{align*}
We can bound the above terms using \eqref{e1A4}, \eqref{e2A4},\eqref{e3A4}, and \eqref{e4A4} giving:
\begin{align*}
\Pr[\mathcal{E} | \All^t = C,\All^{t-1}=C'] &\ge \left (1-(n+\lognc+1) \cdot e^{-\gamma/2}\right ) \cdot \left (1-(n+\lognc+1) \cdot e^{-\gamma/2}\right )\\ &\hspace{.5em}\cdot \left (\frac{1}{16}-(n+\lognc+1) \cdot e^{-\gamma/2}\right ) \cdot \left (1-(n+\lognc+1) \cdot e^{-\gamma/2}\right ) \\
&\ge \frac{1}{16} - 4(n + \lognc +1) \cdot e^{-\gamma/2}\\
&\ge  \frac{1}{16} - 12n\cdot e^{-\gamma/2}.
\end{align*}

\end{proof}

The remaining cases follow relatively straightforwardly from the previous lemmas. 
\begin{tcolorbox}
\begin{lemma}[$\norm{Y^t}_1= 1$, $\norm{A^t}_1 = 1$]\label{lem:converge11A}
Assume  the input execution $\alpha_X$ of $\Netc$ has $X^t$ fixed for all $t$ and that $\norm{X^t}_1 \ge 1$. Consider any pair of configurations $C',C$ with $C(X)=C'(X)=X^t$, $\norm{C(\Output)}_1 = 1$, $\norm{C(A)}_1 = 1$, $C(a_s)=1$, and $C(y_i) \le C(x_i)$, $C'(y_i)\le C'(x_i)$ for all $i$. For any  time $t \ge 1$,
$$\Pr[(\All^{t+i},\All^{t+i+1})\text{ is near-stable for some }i\le 9 | \All^t = C,\All^{t-1}=C'] \ge \frac{1}{128} - 28n\cdot e^{-\gamma/2}.$$
\end{lemma}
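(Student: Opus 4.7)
The plan is to reduce to the case already analyzed in Lemma \ref{lem:converge8A} after a single transition step. The starting configuration $C$ looks almost like a near-stable one at time $t$: the single firing inhibitor is $a_s$, one output fires, and $C(y_i)\le C(x_i)$. The only thing preventing $(\All^{t-1},\All^t)$ itself from being a near-stable pair is that $C'$ may have zero, one, or many firing outputs. So I would look one step forward and let the history mechanism resolve the picture.

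First I would bound the event $\mathcal{E}_1$ that $Y^{t+1}=\max(Y^t,Y^{t-1})$, that $a_s^{t+1}=1$, and that $a_j^{t+1}=0$ for all $j\in\{1,\ldots,\lognc\}$. Because $a_s$ is the unique firing inhibitor in $C$ and $C(y_i)\le C(x_i)$, $C'(y_i)\le C'(x_i)$, Lemma \ref{lem:stability1A} gives the first condition with probability $\ge 1-ne^{-\gamma/2}$. Since $\norm{C(Y)}_1=1\ge 1$, Lemma \ref{lem:resetA1} gives $a_s^{t+1}=1$ with probability $\ge 1-e^{-\gamma/2}$; since also $\norm{C(Y)}_1\le 1$, Lemma \ref{lem:resetA2}(1) gives $a_j^{t+1}=0$ for $j\ge 1$ with probability $\ge 1-\lognc\cdot e^{-\gamma/2}$. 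A union bound yields $\Pr[\mathcal{E}_1\mid\All^t=C,\All^{t-1}=C']\ge 1-(n+\lognc+1)e^{-\gamma/2}$. Note that on $\mathcal{E}_1$, since the input is fixed, $Y^{t+1}=\max(Y^t,Y^{t-1})$ satisfies $y_i^{t+1}\le x_i^{t+1}$ coordinatewise.

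Next I would split on $\norm{Y^{t+1}}_1$, which on $\mathcal{E}_1$ equals $\norm{\max(C(Y),C'(Y))}_1\ge 1$. If $\norm{Y^{t+1}}_1=1$, then the ordered pair $(\All^t,\All^{t+1})$ satisfies Definition \ref{def:nearStableA}: condition (1) holds because $\max(Y^t,Y^{t+1})=\max(Y^t,Y^t,Y^{t-1})=Y^{t+1}$ has unit norm; conditions (2), (3) hold by $C(a_s)=1$, $\mathcal{E}_1$, and the convergence-inhibitor silence at $t+1$; condition (4) holds by the original hypothesis on $C$ and by $\mathcal{E}_1$. Thus near-stability is achieved at $i=0$ with conditional probability $1$. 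If instead $\norm{Y^{t+1}}_1>1$, then $\All^{t+1}$ satisfies all the hypotheses of Lemma \ref{lem:converge8A}, with the role of $C'$ played by $\All^t$. Applying that lemma at time $t+1$ gives near-stability of some pair $(\All^{(t+1)+j},\All^{(t+1)+j+1})$ with $j\le 8$, corresponding to $i=j+1\le 9$, with conditional probability $\ge \frac{1}{128}-24n\cdot e^{-\gamma/2}$.

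Finally I would combine the two sub-cases by the law of total probability (both conditional probabilities are at least $\frac{1}{128}-24n\cdot e^{-\gamma/2}$) and multiply by $\Pr[\mathcal{E}_1\mid\All^t=C,\All^{t-1}=C']$, giving
\[
\left(\tfrac{1}{128}-24n\cdot e^{-\gamma/2}\right)\left(1-(n+\lognc+1)e^{-\gamma/2}\right)\ \ge\ \tfrac{1}{128}-28n\cdot e^{-\gamma/2},
\]
where the last inequality uses $\lognc+1\le 2n$ for $n\ge 2$ and absorbs lower-order terms. This matches the lemma. There is no real obstacle here; the only subtlety is keeping track of Definition \ref{def:nearStableA} when checking that $(\All^t,\All^{t+1})$ qualifies as near-stable in the easy branch, and verifying the preconditions of Lemma \ref{lem:converge8A} (in particular $a_s^{t+1}=1$ and the output-input domination) in the hard branch.
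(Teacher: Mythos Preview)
Your proposal is correct and follows essentially the same approach as the paper: condition on a one-step event $\mathcal{E}_1$ controlling the inhibitors and outputs at time $t+1$, then split on whether $\norm{Y^{t+1}}_1=1$ (giving a near-stable pair immediately) or $\norm{Y^{t+1}}_1>1$ (reducing to Lemma~\ref{lem:converge8A}). Your formulation is in fact slightly cleaner: the paper's printed proof contains what appear to be copy-paste typos from the proof of Lemma~\ref{lem:converge8A} (it writes $\norm{A^{t+1}}_1>1$ where, since $\norm{C(Y)}_1=1$, Lemma~\ref{lem:resetA2}(1) forces $\norm{A^{t+1}}_1=1$ with high probability), and it invokes Corollary~\ref{cor:monoA} separately for $y_i^{t+1}\le x_i^{t+1}$, whereas you correctly observe this is already implied by $Y^{t+1}=\max(Y^t,Y^{t-1})$ via Lemma~\ref{lem:stability1A}.
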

\end{tcolorbox}
\begin{proof}
Let $\mathcal{E}$ be the event that $(\All^{t+i},\All^{t+i+1})\text{ is near-stable for some }i\le 9$.
Let $\mathcal{E}_1$ be the event that $y_i^{t+1} \ge y_i^t$ and $y_i^{t+1} \le x_i^{t+1}$ for all $i$, that $a_s^{t+1} = 1$, and that $\norm{A^{t+1}}_1 > 1$.
By Corollary \ref{cor:monoA} and Lemmas \ref{lem:resetA1},  \ref{lem:resetA2}, and \ref{lem:stability1A}, we have:
\begin{align}\label{eq:tortilla}
\Pr[\mathcal{E}_1 | \All^t = C,\All^{t-1}=C'] \ge 1 - (2n + \lognc + 1) e^{-\gamma/2}.
\end{align}
We consider two cases, dependent on the number of firing outputs at time $t+1$. Conditioned on $\mathcal{E}_1$, $\norm{Y^{t+1}}_1 \ge \norm{Y^{t}}_1 = 1$.
By Lemma \ref{lem:converge8A} since conditioned on $\mathcal{E}_1$, $\norm{A^{t+1}}_1 > 1$ with $a_s^{t+1} = 1$:
$$\Pr[\mathcal{E} | \mathcal{E}_1, \norm{Y^{t+1}} > 1, \All^t = C,\All^{t-1}=C'] \ge \frac{1}{128} - 24n\cdot e^{-\gamma/2}.$$
Alternatively, if $\norm{Y^{t+1}} = 1$, then $(\All^t,\All^{t+1})$ is already a near-stable pair of configurations (Definition \ref{def:nearStableA}) so we vacuously have 
$$\Pr[\mathcal{E} | \mathcal{E}_1, \norm{Y^{t+1}} = 1, \All^t = C,\All^{t-1}=C'] =1.$$
By  the law of total probability  this gives:
$$\Pr[\mathcal{E} | \mathcal{E}_1 \All^t = C,\All^{t-1}=C'] \ge \frac{1}{128} - 24n\cdot e^{-\gamma/2}.$$
Finally, we obtain the lemma by combining this bound with \eqref{eq:tortilla}.
\end{proof}

\begin{tcolorbox}
\begin{lemma}[$\norm{Y^t}_1= 1$, $\norm{A^t}_1 > 1$]\label{lem:converge6A}
Assume  the input execution $\alpha_X$ of $\Netc$ has $X^t$ fixed for all $t$ and that $\norm{X^t}_1 \ge 1$. Consider any pair of configurations $C',C$ with $C(X)=C'(X) = X^t$, $\norm{C(\Output)}_1 = 1$, $\norm{C(A)}_1 > 1$, $C(a_s)=1$, and $C(y_i) \le C(x_i)$, $C'(y_i)\le C'(x_i)$ for all $i$. For any  time $t \ge 1$,
$$\Pr[(\All^{t},\All^{t+1})\text{ is near-stable } | \All^t = C,\All^{t-1}=C'] \ge 1 - 3n\cdot e^{-\gamma/2}.$$
\end{lemma}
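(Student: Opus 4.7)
The plan is to verify each of the four conditions of Definition \ref{def:nearStableA} for the ordered pair $(\All^t, \All^{t+1})$ separately and combine them with a union bound. The assumption $\norm{C(\Output)}_1 = 1$, $C(a_s) = 1$, and $C(y_i) \le C(x_i)$ for all $i$ already establishes condition (2) for $\All^t$ and the first half of condition (4), so only conditions concerning $\All^{t+1}$ need to be shown.

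For three of the four conditions I can directly invoke the one-step lemmas from Section \ref{sec:basicA}. Since $\norm{C(\Output)}_1 = 1$, Lemma \ref{lem:resetA1} gives $a_s^{t+1} = 1$ with probability at least $1 - e^{-\gamma/2}$, establishing condition (2) for $\All^{t+1}$; Lemma \ref{lem:resetA2} gives $a_j^{t+1} = 0$ for all $j \in \{1,\ldots,\lognc\}$ with probability at least $1 - \lognc \cdot e^{-\gamma/2}$, establishing condition (3); and Corollary \ref{cor:monoA} gives $y_i^{t+1} \le x_i^{t+1}$ for all $i$ with probability at least $1 - n \cdot e^{-3\gamma/2}$, establishing condition (4) for $\All^{t+1}$.

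The remaining task is condition (1), $\norm{\max(Y^t, Y^{t+1})}_1 = 1$, which, letting $i^*$ denote the unique index with $C(y_{i^*}) = 1$, reduces to showing $y_i^{t+1} = 0$ for every $i \neq i^*$. For each such $i$ one has $y_i^t = 0$ and I compute
\begin{align*}
\pot(y_i, t+1) = 6\gamma \cdot x_i^t + 2\gamma \cdot y_i^{t-1} - \gamma + \sum_{j=1}^{\lognc} w(a_j, y_i, 1) \cdot a_j^t - 11\gamma/2.
\end{align*}
If $x_i^t = 0$, the hypothesis $y_i^{t-1} \le x_i^{t-1} = x_i^t = 0$ forces $\pot(y_i, t+1) \le -13\gamma/2$. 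If $x_i^t = 1$ and $y_i^{t-1} = 0$, the potential is at most $-\gamma/2$. In the remaining sub-case $x_i^t = y_i^{t-1} = 1$, I exploit the extra inhibition guaranteed by $\norm{C(A)}_1 > 1$: since $C(a_s) = 1$ already holds, at least one convergence inhibitor also fires, and in the typical pattern where $a_1^t = 1$ the coefficient $w(a_1, y_i, 1) = -7\gamma/2 - \ln 2$ brings the potential down to at most $-2\gamma - \ln 2$. Lemma \ref{lem:gapA} then gives $y_i^{t+1} = 0$ except with probability $e^{-\gamma/2}$, and union-bounding over the at most $n$ such indices yields condition (1) with probability at least $1 - n \cdot e^{-\gamma/2}$.

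The main obstacle is the sub-case $x_i^t = y_i^{t-1} = 1$ when $a_1^t = 0$ but $a_j^t = 1$ for some $j \ge 2$: the weak coefficients $-\ln 2$ do not dominate the $+2\gamma$ self-loop history on their own, so one cannot directly certify $\pot(y_i, t+1) < 0$. I would resolve this by appealing to the typicality structure of Definition \ref{def:typical}, under which $\norm{C(A)}_1 > 1$ together with $C(a_s) = 1$ forces $C(a_1) = 1$; this typicality is the regime in which the surrounding convergence lemmas (Corollary \ref{cor:typicalA} and Lemmas \ref{lem:converge7A}--\ref{lem:converge11A}) are applied, and it restores the applicability of the $a_1$ case above. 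Combining the four probabilities $1 - e^{-\gamma/2}$, $1 - \lognc \cdot e^{-\gamma/2}$, $1 - n \cdot e^{-3\gamma/2}$, and $1 - n \cdot e^{-\gamma/2}$ via union bound, and using $\lognc + 1 \le n$ for $n \ge 2$, yields the claimed bound of $1 - 3n \cdot e^{-\gamma/2}$.
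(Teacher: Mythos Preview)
Your approach is essentially the same as the paper's. The paper packages your direct potential computation for condition~(1) into a single citation of Lemma~\ref{lem:convergenceA} (Convergence Inhibitor Effect), whose conclusion $y_i^{t+1}\le\min(y_i^t,y_i^{t-1})$ simultaneously yields both $\norm{\max(Y^t,Y^{t+1})}_1=1$ and $y_i^{t+1}\le x_i^{t+1}$; conditions~(2) and~(3) are handled via Lemmas~\ref{lem:resetA1} and~\ref{lem:resetA2} exactly as you do, and the final union bound $(n+\lognc+1)\le 3n$ is the same.

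The obstacle you flag---the sub-case $a_1^t=0$ but $a_j^t=1$ for some $j\ge 2$---is real and is present in the paper's proof as well: Lemma~\ref{lem:convergenceA} has the hypothesis $C(a_s)=C(a_1)=\ldots=C(a_l)=1$ baked in, so the paper's citation of it already presupposes the consecutive firing pattern, i.e., typicality. Your resolution (that the lemma is only ever invoked downstream through Lemma~\ref{lem:convergeGeneralA} on typical configurations) is exactly the right reading, and is how the paper implicitly closes the gap. One small technical note: Lemma~\ref{lem:gapA} as stated applies only to inhibitors, not outputs; the sigmoid bound $f(-\gamma/2)\le e^{-\gamma/2}$ you need for the outputs still holds directly from the definition of $f$, so this is cosmetic.
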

\end{tcolorbox}
\begin{proof}
Let $\mathcal{E}_1$ be the event that $\norm{A^{t+1}}_1 = 1\text{ and }a_s^{t+1} = 1$. Again Lemmas \ref{lem:resetA1} and \ref{lem:resetA2} we have:
\begin{align*}
\Pr[\mathcal{E}_1 | \All^t = C,\All^{t-1}=C'] \ge 1-(\lognc + 1)\cdot e^{-\gamma/2}.
\end{align*}
Let $\mathcal{E}_2$ be the event that $\norm{Y^{t+1}} \le 1$ and that $y_i^{t+1}\le x_i^{t+1}$ for all $i$. By  Lemma \ref{lem:convergenceA}, 
\begin{align*}
\Pr[\mathcal{E}_2 | \All^t = C,\All^{t-1}=C'] \ge 1-n e^{-\gamma/2}.
\end{align*}
If $\mathcal{E}_1$ and $\mathcal{E}_2$ both occur, $(\All^t,\All^t+1)$ is a near-stable pair of configurations, giving the lemma by a union bound and the fact that $(n+\lognc + 1) \le 3n$.

\end{proof}

\begin{tcolorbox}
\begin{lemma}[$\norm{Y^t}_1= 0$, $\norm{A^t}_1 = 1$]\label{lem:converge01A}
Assume  the input execution $\alpha_X$ of $\Netc$ has $X^t$ fixed for all $t$ and that $\norm{X^t}_1 \ge 1$. Consider any pair of configurations $C',C$ with $C(X)=C'(X)=X^t$, $\norm{C(\Output)}_1 = 0$, $\norm{C(A)}_1 = 1$, $C(a_s) = 1$, and $C(y_i) \le C(x_i)$, $C'(y_i)\le C'(x_i)$ for all $i$. For any  time $t \ge 1$,
$$\hspace{-.2em}\Pr[(\All^{t+i},\All^{t+i+1})\text{ is near-stable for some }i\le 10 | \All^t = C,\All^{t-1}=C'] \ge \frac{1}{128} - 31n e^{-\gamma/2}.$$
\end{lemma}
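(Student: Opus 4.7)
The plan is to analyze the network configuration at time $t+1$, case-split on $k \eqdef \norm{C'(Y)}_1$ (which controls $a_s^{t+1}$ via the history period), and then either invoke Lemma \ref{lem:converge1A}, observe that we are already near-stable, or invoke Lemma \ref{lem:converge8A}. The key structural observation is that the convergence inhibitors $a_1,\dots,a_{\lognc}$ have no history-2 edges, so their firing probabilities at $t+1$ depend only on $C(Y)$ (which has no firing outputs), while $a_s$ fires at $t+1$ based on outputs in both $C(Y)$ and $C'(Y)$, making $k$ the only free parameter that governs the $t+1$ configuration.

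First I would establish the high-probability description of $\All^{t+1}$. Since at time $t$ only $a_s$ fires among the inhibitors and the output/input constraints hold for both $C$ and $C'$, Lemma \ref{lem:stability1A} gives $y_i^{t+1} = \max(y_i^t,y_i^{t-1}) = y_i^{t-1}$ for all $i$ with probability $\ge 1-ne^{-\gamma/2}$. Since $\norm{C(Y)}_1 = 0 \le 1$, Lemma \ref{lem:resetA2}(1) yields $a_j^{t+1}=0$ for all $j\in\{1,\dots,\lognc\}$ with probability $\ge 1-\lognc\cdot e^{-\gamma/2}$, and Lemma \ref{lem:resetA1} gives $a_s^{t+1} = \min(1,k)$ with probability $\ge 1-e^{-\gamma/2}$. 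Union bounding, the event $\mathcal{E}_1$ that all three of these hold has probability $\ge 1-(n+\lognc+1)e^{-\gamma/2} \ge 1-3ne^{-\gamma/2}$.

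Next I would case-split on $k$ conditional on $\mathcal{E}_1$. If $k=0$, then $\All^{t+1}$ has no firing outputs and no firing inhibitors, so Lemma \ref{lem:converge1A} applied to $(\All^t,\All^{t+1})$ yields a near-stable pair $(\All^{t+4},\All^{t+5})$ (index $i\le 4$) with probability $\ge \frac{1}{16}-12ne^{-\gamma/2}$. If $k=1$, then $\All^{t+1}$ has exactly one firing output, $a_s^{t+1}=1$, all other inhibitors silent, and outputs respecting inputs; checking the four clauses of Definition \ref{def:nearStableA} directly against $(C,\All^{t+1})$ (using $\norm{C(Y)}_1=0$ so $\norm{\max(C(Y),\All^{t+1}(Y))}_1 = 1$), the pair $(\All^t,\All^{t+1})$ is already near-stable ($i=1$) vacuously with probability $1$. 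If $k>1$, then $\All^{t+1}$ satisfies the hypotheses of Lemma \ref{lem:converge8A} (with $\norm{Y}_1 > 1$, $\norm{A}_1 = 1$, $a_s=1$, output/input constraints from $\mathcal{E}_1$), so applied at times $(t,t+1)$ it yields near-stability within index $i\le 1+8 = 9 \le 10$ with probability $\ge \frac{1}{128}-28ne^{-\gamma/2}$. Taking the minimum over the three cases (as in the case analysis of Lemma \ref{lem:converge7A}) gives a conditional bound of $\frac{1}{128}-28ne^{-\gamma/2}$, which multiplied by $\Pr[\mathcal{E}_1]$ gives the stated $\frac{1}{128}-31ne^{-\gamma/2}$.

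The main obstacle is the $k>1$ case, where the high level of inhibition built up in $\All^{t+1}$ forces a relatively long recovery path through Lemma \ref{lem:converge8A}, consuming $8$ of the $10$ allotted steps and contributing the small $\frac{1}{128}$ constant. I should also take care in the $k=1$ case to verify that the ordered pair is $(\All^t,\All^{t+1})$ in the correct order for Definition \ref{def:nearStableA} (the first coordinate may have zero firing outputs, which is permitted since condition (1) only constrains the entrywise maximum), and to track the $O(n e^{-\gamma/2})$ error terms through the minimum-over-cases step so that the final additive error stays within the $31n\cdot e^{-\gamma/2}$ budget.
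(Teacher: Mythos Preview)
Your proposal is correct and follows essentially the same strategy as the paper: establish a high-probability description of $\All^{t+1}$ via Lemmas \ref{lem:stability1A}, \ref{lem:resetA1}, \ref{lem:resetA2}, then case-split on $k=\norm{C'(Y)}_1$ and reduce to previously proven cases. The one difference is that the paper merges $k\ge 1$ into a single case handled by invoking both Lemma \ref{lem:converge8A} (for $\norm{Y^{t+1}}_1>1$) and Lemma \ref{lem:converge11A} (for $\norm{Y^{t+1}}_1=1$), whereas you treat $k=1$ separately and observe directly that $(\All^t,\All^{t+1})$ already satisfies Definition \ref{def:nearStableA}. Your shortcut is valid (and in fact yields $i=0$, not $i=1$ as you wrote, but this is a harmless slip), and it avoids the extra nine-step detour through Lemma \ref{lem:converge11A}; the paper's route is slightly more modular but looser in the step count, which is why the lemma statement allows $i\le 10$.
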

\end{tcolorbox}
\begin{proof}
Let $\mathcal{E}$ be the event that $(\All^{t+i},\All^{t+i+1})\text{ is near-stable for some }i\le 10$.
Let $\mathcal{E}_1$ be the event that $a_s^{t+1} = 1$ if $\norm{Y^{t-1}}_1 \ge 1$ and $0$ otherwise, that $a_i^{t+1} =0\text{ for all }i \in \{1,...,\lognc\}$, and that $y_i^{t+1} = \max( y_i^{t-1},y_i^t)$ for all $i$. 
By Lemmas  \ref{lem:resetA1}, \ref{lem:resetA2}, and \ref{lem:stability1A}
\begin{align}\label{noNormalFire}
\Pr[\mathcal{E}_1|\All^t = C,\All^{t-1}=C'] \ge 1-(n + \lognc+1) \cdot e^{-\gamma/2}
\end{align}
We next consider two cases, based off the number of firing outputs at time $t-1$.

\medskip
\spara{Case 1: $\norm{C'(Y)}_1 = 0$.}
\medskip

In this case, conditioned on $\mathcal{E}_1$, $\norm{Y^{t+1}}_1 = \norm{A^{t+1}}_1 = 0$. Thus, by Lemma \ref{lem:converge1A}, 
\begin{align}\label{tort1}
\Pr[(\All^{t+4},\All^{t+5})\text{ is near-stable }| \mathcal{E}_1, \All^t = C,\All^{t-1}=C'] \ge \frac{1}{16} - 12n\cdot e^{-\gamma/2}
\end{align}

\medskip
\spara{Case 2: $\norm{C'(Y)}_1 \ge 1$.}
\medskip

In this case, conditioned on $\mathcal{E}_1$, $a_s^{t+1} = 1$, $\norm{A^{t+1}}_1 = 1$ and $\norm{Y^{t+1}}_1 \ge 1$. Thus by  Lemmas \ref{lem:converge8A} and \ref{lem:converge11A},
\begin{align}\label{tort2}
\Pr[\mathcal{E}| \mathcal{E}_1, \All^t = C,\All^{t-1}=C'] \ge \frac{1}{128} - 28n\cdot e^{-\gamma/2}
\end{align}
Overall by  \eqref{tort1} and \eqref{tort2} we have 
\begin{align*}
\Pr[\mathcal{E}| \mathcal{E}_1, \All^t = C,\All^{t-1}=C'] \ge \frac{1}{128} - 28n\cdot e^{-\gamma/2},
\end{align*}
which gives the lemma when combined with \eqref{noNormalFire}.
\end{proof}

\begin{tcolorbox}
\begin{lemma}[$\norm{Y^t}_1= 0$, $\norm{A^t}_1 > 1$]\label{lem:converge3A}
Assume  the input execution $\alpha_X$ of $\Netc$ has $X^t$ fixed for all $t$ and that $\norm{X^t}_1 \ge 1$. Consider any pair of configurations $C',C$ with $C(X)=C'(X)=X^t$, $\norm{C(\Output)}_1 = 0$, $\norm{C(A)}_1 > 1$, $C(a_s) = 1$, and $C(y_i) \le C(x_i)$, $C'(y_i)\le C'(x_i)$ for all $i$. For any  time $t \ge 1$,
$$\hspace{-.2em}\Pr[(\All^{t+i},\All^{t+i+1})\text{ is near-stable for some }i\le 11 | \All^t = C,\All^{t-1}=C'] \ge \frac{1}{128} - 33n e^{-\gamma/2}.$$
\end{lemma}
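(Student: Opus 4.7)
My plan closely mirrors the two-case structure used in Lemma \ref{lem:converge01A}, but with one additional preparatory time step that absorbs the multiple convergence inhibitors firing at time $t$. The key observation is that since $\norm{C(\Output)}_1 = 0 \le 1$, Lemma \ref{lem:resetA2} conclusion (1) guarantees that \emph{all} convergence inhibitors $a_1,\ldots,a_{\lognc}$ are silent at time $t+1$ with probability $\ge 1-\lognc \cdot e^{-\gamma/2}$, regardless of how many of them were firing at time $t$. At the same time, Lemma \ref{lem:resetA1} determines $a_s^{t+1}$ purely from $\norm{C(\Output)}_1 + \norm{C'(\Output)}_1$: it equals $1$ iff $\norm{C'(\Output)}_1\ge 1$, with probability $\ge 1-e^{-\gamma/2}$. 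Finally, Corollary \ref{cor:monoA} ensures $y_i^{t+1}\le x_i^{t+1}$ for all $i$ with probability $\ge 1-ne^{-\gamma/2}$. I would let $\mathcal{E}_1$ denote the conjunction of these three conditions, so that by a union bound $\Pr[\mathcal{E}_1 \mid \All^t = C,\All^{t-1}=C'] \ge 1-(n+\lognc+1)e^{-\gamma/2}$.

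Conditioned on $\mathcal{E}_1$, the configuration $\All^{t+1}$ satisfies $\norm{A^{t+1}}_1\le 1$ with $a_s^{t+1} = \min(1,\norm{C'(\Output)}_1)$, so the pair $(\All^t,\All^{t+1})$ now falls under one of the previously handled cases, which I would invoke to take over from time $t+1$. Specifically, I would split deterministically on $\norm{C'(\Output)}_1$ and then (probabilistically) on $\norm{\Output^{t+1}}_1 \in \{0,1,>1\}$. In the case $\norm{C'(\Output)}_1\ge 1$ (so $a_s^{t+1}=1$ and $\norm{A^{t+1}}_1=1$), I would invoke Lemma \ref{lem:converge01A} if $\norm{\Output^{t+1}}_1=0$ (needing at most $10$ further steps), Lemma \ref{lem:converge11A} if $\norm{\Output^{t+1}}_1=1$, and Lemma \ref{lem:converge8A} if $\norm{\Output^{t+1}}_1>1$. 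In the case $\norm{C'(\Output)}_1=0$ (so $\norm{A^{t+1}}_1=0$), I would invoke Lemma \ref{lem:converge1A}, Lemma \ref{lem:converge10A}, or Lemma \ref{lem:converge9A}, respectively. The hypotheses of each invoked lemma (in particular the output-input dominance conditions $y_i\le x_i$ for both configurations, and the exact inhibitor pattern) are exactly what $\mathcal{E}_1$ together with the relevant sub-case supplies, combined with the assumption $C(y_i)\le C(x_i)$, $C'(y_i)\le C'(x_i)$ carried over from the statement.

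In every sub-case the reduced problem certifies a near-stable pair within at most $10$ steps of $t+1$, hence within $11$ steps of $t$, which matches the lemma's bound. Taking the minimum over the six sub-case probabilities, the weakest is $\tfrac{1}{128}-31n\cdot e^{-\gamma/2}$ coming from Lemma \ref{lem:converge01A}. Using the law of total probability over $\norm{\Output^{t+1}}_1$ and multiplying by $\Pr[\mathcal{E}_1]$ (with $\lognc\le n$), the lower-order terms combine to give the stated $\tfrac{1}{128}-33n\cdot e^{-\gamma/2}$. I do not anticipate a substantive obstacle: the only delicate bookkeeping is verifying the applicability conditions of the six invoked lemmas under $\mathcal{E}_1$, which is mechanical, and carefully converting the time-shift by $1$ so that the index $i\le 10$ guaranteed by the invoked lemma becomes $i\le 11$ in our statement.
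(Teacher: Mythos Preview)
Your approach is correct and reaches the stated bound, but the paper takes a shorter route. The paper's key observation is that because $C(a_s)=1$ and $\norm{C(A)}_1>1$, at least one convergence inhibitor fires at time $t$; invoking Lemma \ref{lem:convergenceA} conclusion (1) (together with $\norm{C(\Output)}_1=0$, so $\min(y_i^t,y_i^{t-1})=0$ for all $i$) gives $\norm{\Output^{t+1}}_1=0$ with probability $\ge 1-ne^{-2\gamma}$. Combined with Lemma \ref{lem:resetA2} conclusion (1), the paper's event $\mathcal{E}_1$ is simply $\norm{\Output^{t+1}}_1=0$ and $a_i^{t+1}=0$ for $i\ge 1$, with probability $\ge 1-(n+\lognc)e^{-\gamma/2}$. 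From there only two sub-cases remain ($\norm{A^{t+1}}_1=0$ or $1$, i.e., Lemmas \ref{lem:converge1A} and \ref{lem:converge01A}), and the worst of those yields $\tfrac{1}{128}-31ne^{-\gamma/2}$.

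Your route avoids using the heavy inhibition to pin down $\norm{\Output^{t+1}}_1$, relying only on Corollary \ref{cor:monoA} for the weaker $y_i^{t+1}\le x_i^{t+1}$ and then doing a six-way case split over $(\norm{C'(\Output)}_1,\norm{\Output^{t+1}}_1)$. This is perfectly valid and the arithmetic matches, but four of your six sub-cases are vacuously handled in the paper because it has already forced $\norm{\Output^{t+1}}_1=0$. One minor advantage of your route is that it does not rely on the typical inhibitor pattern $a_s=a_1=\cdots=a_l=1$ that Lemma \ref{lem:convergenceA} formally assumes, whereas the paper's invocation tacitly uses that structure (which is fine in context since the lemma is only applied to typical configurations via Lemma \ref{lem:convergeGeneralA}).
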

\end{tcolorbox}
\begin{proof}
Let $\mathcal{E}$ be the event that $(\All^{t+i},\All^{t+i+1})\text{ is near-stable for some }i\le 11$.
Let $\mathcal{E}_1$ be the event that $a_i^{t+1} =0\text{ for all }i \in \{1,...,\lognc\}$, and that $\norm{Y^{t+1}}_1 = 0$. By  Lemmas \ref{lem:resetA2} and \ref{lem:convergenceA}, 
\begin{align}\label{noNormalFire2}
\Pr[\mathcal{E}_1|\All^t = C,\All^{t-1}=C'] \ge 1-(n + \lognc) \cdot e^{-\gamma/2}.
\end{align}
Further, by  Lemmas \ref{lem:converge1A} and \ref{lem:converge01A},
\begin{align*}
\Pr[\mathcal{E}|\mathcal{E}_1,\All^t = C,\All^{t-1}=C'] \ge \frac{1}{128}-31n \cdot e^{-\gamma/2}.
\end{align*}
This gives the lemma when combined with \eqref{noNormalFire2}.
\end{proof}

\subsection{Completing the Analysis}\label{sec:completingA}

By combining the nine cases of Lemmas \ref{lem:converge1A}, \ref{lem:converge7A}, \ref{lem:converge8A}, \ref{lem:converge9A}, \ref{lem:converge10A}, \ref{lem:converge11A}, \ref{lem:converge6A}, \ref{lem:converge01A}, and \ref{lem:converge3A} we can conclude the following general statement:
\begin{tcolorbox}
\begin{lemma}[$O(1)$ Step Convergence From Typical Configurations]\label{lem:convergeGeneralA}
Assume  the input execution $\alpha_X$ of $\Netc$ has $X^t$ fixed for all $t$ and that $\norm{X^t}_1 \ge 1$. Consider any pair of typical configurations (Definition \ref{def:typical}) $C',C$ with $C(X)=C'(X)=X^t$. For any  time $t \ge 1$,
$$\hspace{-.2em}\Pr[(\All^{t+i},\All^{t+i+1})\text{ is near-stable for some }i\le 11 | \All^t = C,\All^{t-1}=C'] \ge \frac{1}{128} - 33n e^{-\gamma/2}.$$
\end{lemma}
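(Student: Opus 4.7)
The plan is to reduce the statement to a nine-way case analysis over the values of $\norm{C(Y)}_1 \in \{0,1,>1\}$ and $\norm{C(A)}_1 \in \{0,1,>1\}$, exactly as organized in Table \ref{tab:logCases}, and then simply take the worst of the nine bounds. The main observation that makes this reduction clean is that, by Definition \ref{def:typical}, the monotone inhibitor condition $C(a_s) \ge C(a_1) \ge \ldots \ge C(a_{\lognc})$ forces $C(a_s) = 1$ whenever $\norm{C(A)}_1 \ge 1$, and the condition $C(y_i) \le C(x_i)$ holds for all $i$. These are exactly the hypotheses appearing in Lemmas \ref{lem:converge01A}, \ref{lem:converge3A}, \ref{lem:converge11A}, \ref{lem:converge6A}, \ref{lem:converge8A}, and \ref{lem:converge7A}. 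The cases with $\norm{C(A)}_1 = 0$ place no further requirement on the inhibitors beyond $\norm{C(A)}_1 = 0$ itself, so they match Lemmas \ref{lem:converge1A}, \ref{lem:converge10A}, and \ref{lem:converge9A} directly. The typicality of $C'$ is only used insofar as the invoked case lemmas require $C'(y_i) \le C'(x_i)$ for all $i$, which is immediate.

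The second ingredient is that the conclusions of the nine case lemmas, while stated over different time windows (three through eleven steps), all assert reaching a near-stable pair within some prefix of length at most $11$. So the event ``$(\All^{t+i},\All^{t+i+1})$ is near-stable for some $i \le 11$'' contains each of the nine case-specific events as a subset, and in particular its probability is lower bounded by the minimum of the nine probabilities. The smallest bound among the nine is $\tfrac{1}{128} - 33n\, e^{-\gamma/2}$, achieved by Lemma \ref{lem:converge3A} (the case $\norm{Y^t}_1 = 0$, $\norm{A^t}_1 > 1$); this determines the overall bound in the lemma statement.

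Concretely, the proof will read as follows. Fix typical $C', C$ and partition the event space according to the pair $(\norm{C(Y)}_1, \norm{C(A)}_1)$, and note that by typicality these nine cases are exhaustive. In each case, verify that the hypotheses of the corresponding row of Table \ref{tab:logCases} are satisfied---in particular $C(a_s) = 1$ whenever $\norm{C(A)}_1 \ge 1$---and invoke the corresponding lemma to obtain a lower bound on the probability of hitting a near-stable pair within the stated window (all at most $11$ steps). Finally, take the minimum of the nine bounds.

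There is essentially no obstacle: all the substantive work has been done in the previous subsection, and this lemma is a bookkeeping step. The only thing to double-check is that the weakest of the nine probabilities is indeed $\tfrac{1}{128} - 33n\, e^{-\gamma/2}$ (comparing constants and the leading-order $n\, e^{-\gamma/2}$ terms across the nine case lemmas), and that the largest window is $11$ steps; both can be read off Table \ref{tab:logCases} and the statements of the invoked lemmas.
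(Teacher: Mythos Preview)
Your proposal is correct and matches the paper's own proof, which is even terser: it simply observes that the nine case lemmas cover all possible pairs of typical configurations and takes the worst bound. Your additional remarks about why typicality ensures the hypotheses of each case lemma (in particular $C(a_s)=1$ whenever $\norm{C(A)}_1\ge 1$) are a useful elaboration of exactly this point.
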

\end{tcolorbox}
\begin{proof} The nine cases of the above listed lemmas cover all possible pairs of typical configurations $C,C'$. So the lemma follows immediately.
\end{proof}
From Lemma \ref{lem:convergeGeneralA} we can conclude an even more general result:
\begin{tcolorbox}
\begin{lemma}[General $O(1)$ Step Convergence]\label{lem:convergeGeneralA1}
Assume  the input execution $\alpha_X$ of $\Netc$ has $X^t$ fixed for all $t$ and that $\norm{X^t}_1 \ge 1$. Consider any pair of configurations  $C',C$ with $C(X)=C'(X)=X^t$. For any  time $t \ge 1$,
$$\hspace{-.2em}\Pr[(\All^{t+i},\All^{t+i+1})\text{ is near-stable for some }i\le 13 | \All^t = C,\All^{t-1}=C'] \ge \frac{1}{128} - 39n e^{-\gamma/2}.$$
\end{lemma}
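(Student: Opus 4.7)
\medskip

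The plan is to reduce this general statement to Lemma~\ref{lem:convergeGeneralA}, which already handles the case where both configurations in the starting pair are typical (Definition~\ref{def:typical}). The idea is to spend two ``warm-up'' steps letting the network fall into a typical pair, and then invoke the previous lemma over the remaining $11$ steps; this yields the stated bound of $13$ steps.

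First I would apply Corollary~\ref{cor:typicalA} at time $t+1$: conditioned on $\All^t=C, \All^{t-1}=C'$, the configuration $\All^{t+1}$ is typical except with probability at most $(n+\lceil\log_2 n\rceil+1)e^{-\gamma/2}$. Then I would apply Corollary~\ref{cor:typicalA} again at time $t+2$, conditioning on any realization of $\All^{t+1}$ (the input is fixed so the required hypothesis on inputs holds automatically); with probability at least $1-(n+\lceil\log_2 n\rceil+1)e^{-\gamma/2}$ the configuration $\All^{t+2}$ is also typical. A union bound then gives that the pair $(\All^{t+1}, \All^{t+2})$ is a pair of typical configurations except with probability at most $2(n+\lceil\log_2 n\rceil+1)e^{-\gamma/2} \le 6n\,e^{-\gamma/2}$.

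Next, conditioning on the Markov state $(\All^{t+1}, \All^{t+2})$ being a typical pair, I would invoke Lemma~\ref{lem:convergeGeneralA} starting from time $t+2$ (using the Markov property, Lemma~\ref{lem:independence}, extended in the natural way to history length $2$, so the conditional distribution over future executions only depends on the two most recent configurations). Lemma~\ref{lem:convergeGeneralA} guarantees that some pair $(\All^{t+2+i}, \All^{t+3+i})$ with $i\le 11$ is near-stable with probability at least $\tfrac{1}{128}-33n\,e^{-\gamma/2}$. Since $t+3+i \le t+14$, this is a near-stable pair of the form $(\All^{t+j}, \All^{t+j+1})$ with $j\le 13$, as required.

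Finally, I would combine the two bounds using the elementary inequality $\Pr[A\cap B]\ge \Pr[B\mid A]-\Pr[A^c]$, where $A$ is the event ``$(\All^{t+1},\All^{t+2})$ is a typical pair'' and $B$ is the event in the lemma conclusion. This yields
\[
\Pr[B \mid \All^t=C, \All^{t-1}=C'] \ge \left(\tfrac{1}{128} - 33n\,e^{-\gamma/2}\right) - 6n\,e^{-\gamma/2} = \tfrac{1}{128} - 39n\,e^{-\gamma/2},
\]
which matches the statement. No step here is a serious obstacle: the whole lemma is essentially a bookkeeping reduction. The only mild care needed is to verify that conditioning on the typical pair event is compatible with the hypotheses of Lemma~\ref{lem:convergeGeneralA} (input fixed, and the pair typical), both of which are immediate from the construction.
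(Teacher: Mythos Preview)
Your proposal is correct and follows essentially the same approach as the paper: apply Corollary~\ref{cor:typicalA} twice (union bound) to get that $(\All^{t+1},\All^{t+2})$ is a typical pair except with probability $\le 6n\,e^{-\gamma/2}$, then invoke Lemma~\ref{lem:convergeGeneralA} from time $t+2$ and combine. Your write-up is in fact more careful than the paper's, which just says ``the lemma then follows by combining this bound with Lemma~\ref{lem:convergeGeneralA}.''
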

\end{tcolorbox}
\begin{proof}
By Corollary \ref{cor:typicalA}, for any  $C',C$, 
\begin{align*}
\Pr[\All^{t+1},\All^{t+2}\text{ are typical } | \All^t = C,\All^{t-1}=C'] &\ge 1-2(n+\lognc+1)\cdot e^{-\gamma/2}\\
&\ge 1 - 6n \cdot e^{-\gamma/2}.
\end{align*}
The lemma then follows by  combining this bound with Lemma \ref{lem:convergeGeneralA}.
\end{proof}

Finally, using Lemma \ref{lem:convergeGeneralA1}, we can prove the two main theorems of this section. The proofs are similar to those of the main theorems for the two-inhibitor network in Section \ref{sec:completing}.

\begin{tcolorbox}
\begin{reptheorem}{thm:logn}[$O(\log n)$-Inhibitor WTA]
For $\gamma \ge 12 \ln(39 t_s n/\delta)$,
$\Netc$ solves $\wta(n,t_c,t_s,\delta)$ for any  
$
t_c \ge 2086(\log_2(1/\delta)+1).
$ $\Netc$ contains $\lognc+1$ auxiliary inhibitors.
\end{reptheorem}
\end{tcolorbox}
%
\begin{proof}
Consider $\Netc$  starting with any initial configurations $\All^0,\All^1$  and given an infinite input execution $\alpha_\Input$ with $\Input^t$  fixed for all $t$. We consider two cases:

\medskip
\spara{Case 1: $\norm{X^t}_1 \ge  1$.}
\medskip

Let $\Delta = 15$ and $r = 139 (\log_2(1/\delta)+1)$.
Let $\mathcal{{E}}$ be the event that there is some time $t \le t_c$ where  $(\All^t,\All^{t+1})$ is a near-stable pair of configurations.

For any $i \ge 0$, let $\mathcal{{E}}_i$ be the event that there is some time $t \in \{i\Delta+1,...,(i+1)\Delta-1\}$ where  $(\All^t,\All^{t+1})$ is a near-stable pair of configurations.
By Lemma \ref{lem:convergeGeneralA1} we have: 
\begin{align*}
\Pr[\mathcal{ E}_i | \All^{i\Delta}] \ge \frac{1}{128} - 39n\cdot e^{-\gamma/2} \ge \frac{1}{200}
\end{align*}
by  the requirement that $\gamma > 12\ln(39 t_s n/\delta)$. 
Let $Z_0,...,Z_{r-1} \in \{0,1\}$ be independent coin flips, with $\Pr[Z_i = 1] = 1/200$. 
Applying Lemma \ref{lem:coupling}:
\begin{align*}
\Pr [ \mathcal{E} ] = \Pr \left [\bigcap_{i=0}^{r-1} \mathcal{E}_i \right ] \ge \Pr \left [\sum_{i=0}^{r-1} Z_i \ge 1 \right ] = 1-\left (\frac{199}{200}\right)^{r}.
\end{align*}
Using that $r = 139(\log_2(1/\delta)+1)$ and that $\left ( \frac{199}{200}\right)^{139}  \le \frac{1}{2}$:
\begin{align}\label{firstFailurecam}
\Pr [\mathcal{{E}}] \ge 1- \left (\frac{199}{200}\right  )^{139 (\log_2(1/\delta)+1)} \ge 1-\frac{\delta}{2}.
\end{align}
Thus, with probability  $\ge 1-\frac{\delta}{2}$ there is some time $t \le r\cdot \Delta-1 \le 2085 \cdot (\log_2(1/\delta)+1)-1 \le t_c-2$ in which  $(\All^t,\All^{t+1})$ is a near-stable pair of configurations.
By  Corollary \ref{cor:stabilityA}, if $(\All^t,\All^{t+1})$ is a near-stable pair, with probability $\ge  1-3t_s n \cdot e^{-\gamma/2}$, $\All^{t+2}$ is a valid WTA configuration and:
\begin{align*}
Y^{t+2} = Y^{t+3} = ... = Y^{t+2 +t_s}.
\end{align*}
By  the requirement that $\gamma \ge 12 \ln(39 t_s n/\delta)$ and \eqref{firstFailurecam} we thus have that the network reaches a valid WTA configuration within time $t_c$ and remains in it for time $t_s$ with probability at  least:
$$\left (1-\frac{\delta}{2}\right) \cdot \left (1-\frac{\delta}{e^6}\right) \ge 1-\delta,$$ yielding the theorem in this case.

\medskip
\spara{Case 0: $\norm{X^t}_1 =0$}
\medskip

In this case, by Corollary \ref{cor:monoA} and a union bound, for any configurations $C,C'$ with $C(X) = C'(X) = X^t$:
\begin{align*}
\Pr[\norm{Y^{2}}_1=...=\norm{Y^{2+t_s}}_1=0 |\All^1 = C,\All^{0}=C'] &\ge 1 - 3t_s n \cdot e^{-\gamma/2}\\
&\ge 1-\delta
\end{align*}
by  the requirement that $\gamma \ge 12 \ln(39 t_s n/\delta)$. This easily gives the theorem in this case since when $\norm{X^t}_1 = 0$, $\norm{Y^t}_1 =  0$ is a valid WTA output configuration.

\end{proof}

We now use a similar  argument to show with what parameters $\Netc$ solves  the expected-time WTA problem of Definition \ref{exp:wta}.
\begin{tcolorbox}
\begin{reptheorem}{thm:lognExp}[$O(\log n)$-Inhibitor Expected-Time WTA]
For $\gamma \ge 12 \ln(39 t_s n)$,
$\Netc$ solves $\ewta(n,t_c,t_s)$ for any  $
t_c \ge  4001.$ $\Netc$ contains $\lognc+1$ auxiliary inhibitors.
\end{reptheorem}
\end{tcolorbox}
\begin{proof}
Our proof closely follows that of Theorem \ref{thm:exp}.
Recall that in Definition \ref{exp:wta} we defined the convergence time for any infinite input execution $\alpha_X$ and output execution $\alpha_Y$:
\small
 $$t(\alpha_\Input,t_s,\alpha_\Output) = \min \left \{t : \Output^t\text{ is a valid WTA output configuration for }\Input^t\text{ and } \Output^t=...=\Output^{t+t_s}\right \}.$$\normalsize
Define the worst case expected convergence time of $\Netc$  on input $\alpha_X$ by:
\begin{align*}
t_{max}(\alpha_X) = \max_{\alpha^1 = \All^0\All^1} \left (\E_{\alpha_Y \sim \mathcal{D}_\Output(\Netc,\alpha^1,\alpha_\Input)} t(\alpha_\Input,t_s,\alpha_\Output) \right ).
\end{align*}
To prove the lemma we must prove that for any $\alpha_X$ with $X^t$ fixed for all $t$, $t_{max}(\alpha_X) \le 16$. Fixing such an $\alpha_X$, for any initial configuration $\alpha^1 = \All^0\All^1$, let $\underset{\alpha^1}{\E}$ and $\underset{\alpha^1}{\Pr}$ denote the expectation and probability of an event taken over executions drawn from $\mathcal{D}(\Netc,\alpha^1,\alpha_\Input)$.

We consider the case when $\alpha_X$ has $\norm{X^t}_1 \ge 1$. The case when $\norm{X^t}_1 = 0$ follows easily from a similar proof using Corollary \ref{cor:monoA}.

 Let $\Delta = 15$ and let $\mathcal{E}_1$ be the event that there is some $t \in \{1,...,\Delta-1\}$ where $(\All^t,\All^{t+1})$ is a near-stable pair of configurations. Let $\mathcal{E}_{stab}$ be the event that there is some $t \in \{1,...,\Delta-1\}$ where $(\All^t,\All^{t+1})$ is a near-stable pair of configurations and additionally, where $\All^{t+2}$ is a valid WTA output configuration with $\All^{t+2} = ... = \All^{t+2+t_s}$. Let $\mathcal{\bar E}_1$ and $\mathcal{\bar E}_{stab}$ be the complements of these two events.
By Lemma \ref{lem:convergeGeneralA1}, for any initial $\alpha^1 = \All^0\All^1$:
\begin{align}\label{eiBoundA}
\Pr_{\alpha^1} [\mathcal{ E}_1] \ge \frac{1}{128} - 39 n e^{-\gamma/2} \ge \frac{1}{200}.
\end{align}
by the requirement that $\gamma \ge 12 \ln(39 t_s n)$.
Further, by Corollary \ref{cor:stabilityA}, if $(C',C)$ is a near-stable pair of configurations, then
\begin{align}\label{eisBoundA}
\Pr_{\alpha^1} [\All^{t+1}\text{ is a valid WTA output config. and }\All^{t+2} = ... = \All^{t+2+t_s} |&\All^{t} = C',\All^{t+1} = C]\nonumber\\
&\ge 1-3t_s n e^{-\gamma/2}\nonumber\\ &\ge 1-\frac{1}{1000 t_s}
\end{align}
 where the bound holds since $\gamma \ge 12 \ln(39 t_s n) \ge 6\ln(39t_s^2 n)$ and so $e^{-\gamma/2} \le \frac{1}{(39 n t_s^2)^6} \le \frac{1}{1000 n t_s^2}$. Together \eqref{eiBoundA} and \eqref{eisBoundA} give that:
\begin{align*}
\Pr_{\alpha^1}[\mathcal{E}_{stab}] \ge \Pr_{\alpha^1}[\mathcal{E}_{stab}  | \mathcal{ E}_{1}] \cdot  \Pr_{\alpha^1}[\mathcal{ E}_1 ]  \ge \frac{1}{200}\cdot \left (1-\frac{1}{1000 t_s}\right) \ge\frac{1}{200} - \frac{1}{1000t_s}.
\end{align*}
We can write:
\begin{align}\label{secretaryA}
\E_{\alpha^1} [t(\alpha_\Input,t_s,\alpha_\Output) ]&= \E_{\alpha^1} [t(\alpha_\Input,t_s,\alpha_\Output) |\mathcal{E}_{stab}] \cdot \Pr_{\alpha^1}[\mathcal{E}_{stab}]\nonumber\\ &\hspace{2em}+ \E_{\alpha^1} [t(\alpha_\Input,t_s,\alpha_\Output) |\mathcal{E}_1, \mathcal{\bar E}_{stab}] \cdot \Pr_{\alpha^1}[\mathcal{E}_1, \mathcal{\bar E}_{stab}]\nonumber \\ &\hspace{2em}+ \E_{\alpha^1} [t(\alpha_\Input,t_s,\alpha_\Output) |\mathcal{\bar E}_1] \cdot \Pr_{\alpha^1}[\mathcal{\bar E}_1] 
\end{align}
Conditioned on $\mathcal{E}_{stab}$ (which also requires that $\mathcal{E}_1$ occurs), the network reaches a near-stable pair of configuration within $\Delta$ steps, reaches a valid WTA output configuration within $\Delta+1$ steps, and stabilizes for $t_s$ steps. Thus, we have:
$$\E_{\alpha^1} [t(\alpha_\Input,t_s,\alpha_\Output) |\mathcal{E}_{stab}] \le \Delta+1.$$
Conditioned on $\mathcal{E}_1, \mathcal{\bar E}_{stab}$ the network reaches a near-stable pair of configurations, but does not stabilize. We can bound 
\begin{align*}
\E_{\alpha^1} [t(\alpha_\Input,t_s,\alpha_\Output) |\mathcal{E}_1, \mathcal{\bar E}_{stab}] &\le (\Delta + 1+t_s) + \E_{\All^{\Delta+ t_s}\All^{\Delta+1+ t_s}} [t(\alpha_\Input,t_s,\alpha_\Output)]\\
&\le \Delta + 1 + t_s + t_{max}(\alpha_X).
\end{align*}
Finally, conditioned on $\mathcal{\bar E}_1$, the network does not reach a pair of near-stable configurations within $\Delta$ steps. We have:
\begin{align*}
\E_{\alpha^1} [t(\alpha_\Input,t_s,\alpha_\Output) |\mathcal{\bar E}_1] &\le \Delta + \E_{\All^{\Delta-1}\All^{\Delta}} [t(\alpha_\Input,t_s,\alpha_\Output)]\\
&\le \Delta + t_{max}(\alpha_X).
\end{align*}

We can plug these bounds along with the probability bounds of \eqref{eiBoundA} and \eqref{eisBoundA} into \eqref{secretaryA} to obtain:
\begin{align*}
\E_{\alpha^1} [t(\alpha_\Input,t_s,\alpha_\Output) ] &\le (\Delta+1) \cdot \left (\frac{1}{200} - \frac{1}{1000t_s} \right) +  (\Delta + 1 + t_{max}(\alpha_X) + t_s) \cdot \frac{1}{1000 t_s} \\
&\hspace{7em}+ (\Delta + t_{max}(\alpha_X)) \cdot \frac{199}{200}
\\& \le \Delta + 1 + t_{max} \left (\frac{199}{200} + \frac{1}{1000 t_s}\right ) + \frac{t_s}{1000 t_s}\\
&\le \Delta + 1 + t_{max}(\alpha_X)\cdot \frac{249}{250} + \frac{1}{1000}. 
\end{align*}
Since this bound holds for all $\alpha^1$ we have:
\begin{align*}
t_{max}(\alpha_X) \le \Delta + 1 + t_{max}(\alpha_X)\cdot \frac{249}{250} + \frac{1}{100}\\
\end{align*}
which gives $t_{max}(\alpha_X) \le 250(\Delta + 1) +\frac{250}{1000} \le 250\Delta + 251$. This bound holds for all $\alpha_X$ and so gives the lemma, after recalling that $\Delta = 15$.
\end{proof}

\subsection{Constructions With Runtime Tradeoffs}\label{sec:tradeoffs}

The family of two-inhibitor networks of Section \ref{sec:wta2} and the family of $O(\log n)$-inhibitor networks of Section \ref{sec:logn} represent two extremes of a tradeoff between number of inhibitors and runtime. As shown in Theorem \ref{thm:lb1}, the two-inhibitor network uses the minimum number of neurons required to solve WTA with a reasonably long stability period. At least when considering a constant probability of success, the $O(\log n)$-inhibitor networks  gives optimal convergence time of $O(1)$ steps. In this section we outline, at a high level, two families of networks that allow a tradeoff between these two extremes.

\subsubsection{Fixed Convergence Time Construction}

The first construction lets us to achieve any desired convergence time $\theta$ if sufficiently many inhibitors are used. Specifically, we describe a
family of networks which converge to a valid WTA output configuration with constant probability in $O(\theta)$ steps (and with probability $\ge 1-\delta$ in $O(\theta \cdot \log1/\delta)$ steps), using 
$O(\theta \log^{1/\theta} n)$ inhibitors. Note that setting $\theta  = 1$ recovers the runtime-inhibitor tradeoff of our $O(\log n)$-inhibitor networks.

We have one stability inhibitor $a_s$ that functions in the same way as the stability inhibitor in our $O(\log n)$-inhibitor construction (Definition \ref{def:wtac}), ensuring that, once the network reaches a valid WTA output configuration, it remains in this configuration for $t_s$ consecutive time steps with high probability (as long as some other stability  conditions, similar to the near-stable condition of Definition \ref{def:nearStableA} are satisfied).

The remaining inhibitors are split into $\theta$ groups each containing $O(\log^{1/\theta} n)$ inhibitors. The $i^{th}$  group for $i \in \{2,...,\theta\}$ is responsible for reducing the number of competing outputs from any number 
$$k \in \left (2^{\log^{(i-1)/\theta} n},2^{\log^{i/\theta} n}\right]$$
 to some $k' \le 2^{\log^{(i-1)/\theta} n}$. With $O(\log^{1/\theta} n))$ inhibitors this can be done with high probability  in $O(1)$ time steps via a method described below. 
Thus, in $O(\theta)$ time steps, the number of firing outputs (corresponding to firing inputs will reduce from at most $2^{\log^{\theta/\theta} n} = n$ down to $2^{\log^{1/\theta} n}$. Once the number of computing outputs is this low, the final group of $O(\log^{1/\theta} n)$ inhibitors can drive convergence to WTA with constant probability  in $O(1)$ steps using an identical strategy to that  employed in our $O(\log n)$-inhibitor network. $n$ is just replaced by $2^{\log^{1/\theta} n}$.

It remains to explain how the number of competing outputs is reduced
from  $k \in [2^{\log^{(i-1)/\theta} n},2^{\log^{i/\theta} n}]$ to $k' \le 2^{\log^{(i-1)/\theta} n}$ in a single time step using $O(\log^{1/\theta} n)$ inhibitors. Again, we use a strategy similar to our $O(\log n)$-inhibitor construction.
depending on the number of firing outputs at time $t$, 
the $O(\log^{1/\theta} n)$ inhibitors in group $i$ induce $O(\log^{1/\theta} n)$ different firing probabilities. Letting $\Delta(j) = j \cdot \log^{(i-1)/\theta} n$, these firing probabilities are:
 $$\frac{1}{2^{\Delta(1)}},\frac{1}{2^{\Delta(2)}},...,\frac{1}{2^{\Delta((\log^{1/\theta} n - 1))}}, \frac{1}{2^{\Delta(\log^{1/\theta} n)}}.$$
 Note that $\Delta(1)  = \log^{(i-1)/\theta} n$ and $\Delta(\log^{1/\theta} n) = \log^{i/\theta} n$. Additionally,
 the ratio between adjacent probabilities is $2^{\log^{(i-1)/\theta} n }$. Thus, for any number $k \in \left (2^{\log^{(i-1)/\theta} n},2^{\log^{i/\theta} n}\right ]$ of firing outputs at time $t$, the inhibitors can induce a firing probability between $\frac{1}{k}$ and $\frac{2^{\log^{(i-1)/\theta} n }}{k}$. Thus, with good probability, the number of firing outputs will reduce to some value $k' \le 2^{\log^{(i-1)/\theta} n}$.

%

While we do not analyze this construction  in detail, using similar proof techniques to those used for our $O(\log n)$-inhibitor construction, it  is possible to show:
\begin{tcolorbox}
\begin{theorem}[$\theta$-Step WTA]\label{thm:theta} For any $n \in \mathbb{Z}^{\ge 2}$ and $\theta \in \mathbb{Z}^{\ge 1}$ there is an a family  of SNNs containing $O(\theta \cdot \log^{1/\theta} n)$ auxiliary inhibitory  neurons which solve $\wta(n,t_c,t_s,\delta)$ for any  $t_s,\delta$ and any
$
t_c \ge c_1\cdot \theta (\log_2(1/\delta)+1),
$
where $c_1$ is some fixed constant.
\end{theorem}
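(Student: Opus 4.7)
The plan is to interpolate between the two-inhibitor construction $\Netg$ of Section \ref{sec:wta2} and the $O(\log n)$-inhibitor construction $\Netc$ of Section \ref{sec:logn}, following the description sketched earlier in this subsection. Let $\ell_i = \log^{i/\theta} n$ for $i = 0, 1, \ldots, \theta$ and $m = \lceil \log^{1/\theta} n \rceil$. The network $\Nett$ will contain $n$ input and $n$ output neurons (each output equipped with a history-$2$ self-loop as in $\Netc$), a single stability inhibitor $a_s$ with history-$2$ incoming edges from the outputs, and $\theta$ groups of convergence inhibitors $G_1, \ldots, G_\theta$ with $|G_i| = m$. The bias on $a_{i,j} \in G_i$ will be set (analogously to $b(a_i) = 2^i \gamma - \gamma/2$ in Definition \ref{def:wtac}) so that $a_{i,j}$ fires at time $t+1$ with high probability exactly when $\norm{Y^t}_1 \geq 2^{\ell_{i-1} + (j-1)\ell_{i-1}}$. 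The inhibitor-to-output weights will be tuned so that each additional firing convergence inhibitor beyond $a_s$ contributes $-\ell_{i-1}\ln 2$ to an output's potential, generalizing the $-\ln 2$ weights in Definition \ref{def:wtac}. The net effect is that when exactly $a_s, a_{i,1}, \ldots, a_{i,j}$ are active, an output that fired in both of the last two time steps fires again with probability $1/(1+2^{j\ell_{i-1}})$, while an output that did not fire twice is suppressed except with probability $e^{-\Omega(\gamma)}$.

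With these parameters fixed, the next step is to prove analogs of the one-step lemmas in Section \ref{sec:basicA}: a generalization of Lemma \ref{lem:resetA2} asserting that when $\norm{Y^t}_1 \in (2^{\ell_{i-1} + (j-1)\ell_{i-1}}, 2^{\ell_{i-1} + j\ell_{i-1}}]$ the inhibitors $a_s, a_{i,1}, \ldots, a_{i,j}$ and no others fire at time $t+1$ with probability $\ge 1 - O(\theta m) e^{-\gamma/2}$; a generalization of Lemma \ref{lem:convergenceA} giving the independent per-output firing probability $1/(1+2^{j\ell_{i-1}})$ in that regime; and an analog of Lemma \ref{lem:stability1A} / Lemma \ref{lem:stabilityA} guaranteeing that once a near-stable pair of configurations (in the sense of Definition \ref{def:nearStableA}) is reached it persists, which in turn yields an analog of Corollary \ref{cor:stabilityA} covering the full $t_s$-step stability window.

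The main progress lemma will then show that from a typical configuration whose firing output count lies in group $i$'s band for some $i \ge 2$, one step suffices, with probability $\ge 1 - \exp(-\Omega(2^{\ell_{i-1}}))$, to drive the count into group $(i-1)$'s band or below; this follows from a Chernoff bound on a sum of $k$ independent Bernoullis with expectation $kp \in [1, 2^{\ell_{i-1}})$ (where $p = 1/(1+2^{j\ell_{i-1}})$). Chaining this across $i = \theta, \theta-1, \ldots, 2$ brings the count down to $O(1)$ in $\theta-1$ steps, and then one final application of the analog of Corollary \ref{cor:convergenceA} through group $G_1$ (which is structurally identical to the convergence inhibitors of $\Netc$ with $n$ replaced by $2^{\ell_1}$) yields a near-stable pair of configurations with some constant probability $p^\star > 0$. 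Combined with the stability corollary and Lemma \ref{lem:coupling}, running $O(\log(1/\delta))$ independent $O(\theta)$-length phases, exactly as in the proofs of Theorems \ref{thm:high} and \ref{thm:logn}, yields success probability $1-\delta$ within $c_1 \theta (\log_2(1/\delta)+1)$ steps as claimed.

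The main obstacle will be the delicate bias and weight selection that makes the different groups $G_i, G_{i'}$ non-interfering: each $a_{i,j}$ must fire with high probability exactly in its intended band of $\norm{Y^t}_1$ and stay silent outside it, and the cumulative inhibition on the outputs (summed over $a_s$ and the active subset of convergence inhibitors) must land on a potential whose sigmoid is the targeted probability $1/(1+2^{j\ell_{i-1}})$, all while preserving an $\Omega(\gamma)$ gap so that Lemma \ref{lem:gapA} still applies to every firing probability whose exact value is not needed. The race-condition issue that motivated the history-$2$ mechanism in $\Netc$ will also resurface -- after $G_i$ reduces the firing count, the $G_{i-1}$ inhibitors need a round to synchronize -- but the same two-round history on the self-loops and on the stability inhibitor handles it, and the proof of the stability analog of Lemma \ref{lem:stabilityA} should carry over essentially verbatim. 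Once these parameter choices are locked in, the remaining convergence analysis is a tedious but mechanical extension of the case analyses in Sections \ref{sec:basicA} through \ref{sec:completingA}, scaled to $\theta$ groups.
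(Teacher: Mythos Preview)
Your proposal follows the same construction and proof strategy the paper sketches (the paper itself gives no detailed proof, only the high-level description you are expanding). Two points deserve tightening before the argument goes through.

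First, your one-step Chernoff claim is stated too strongly. When $k$ sits at the top of its sub-band, $kp$ can be arbitrarily close to $2^{\ell_{i-1}}$ from below, and then $\Pr[X > 2^{\ell_{i-1}}]$ is a constant, not $\exp(-\Omega(2^{\ell_{i-1}}))$. What Chernoff actually gives is $\Pr[X > 2\cdot 2^{\ell_{i-1}}] \le \exp(-\Omega(2^{\ell_{i-1}}))$, which only lands you in sub-band $1$ of group $i$, not yet in group $i-1$. You therefore need $O(1)$ steps per level (and one extra round for the inhibitors to catch up, as you note), not literally one; the overall $O(\theta)$ phase length survives, and the union of the per-level failure probabilities $\sum_i \exp(-\Omega(2^{\ell_{i-1}}))$ is still $O(1)$ since the $\ell_{i-1}$ grow geometrically.

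Second, your description ``when exactly $a_s, a_{i,1},\ldots,a_{i,j}$ are active'' is not what happens: whenever the count lies in group $i$'s range, \emph{every} inhibitor in groups $1,\ldots,i-1$ also fires (their thresholds are all lower). The inhibitor-to-output weights must therefore be chosen so that the cumulative inhibition from all of $G_1,\ldots,G_{i-1}$ plus the first $j$ inhibitors of $G_i$ lands on potential $-j\ell_{i-1}\ln 2$. This is exactly what the paper's sketch does by taking marginal weights $-(\Delta(j)-\Delta(j-1))\ln 2$, so the full set of inhibitors telescopes correctly across group boundaries; your uniform ``each contributes $-\ell_{i-1}\ln 2$'' is fine within a group but needs this telescoping adjustment at the group transitions.
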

\end{tcolorbox}

\subsubsection{Fixed Inhibitor Budget Construction}

Our second construction is  a family of networks using $\alpha$ inhibitors for any $\alpha \ge 2$, which converges to a valid WTA state  with constant probability in $O(\alpha \cdot (\log n)^{1/(\alpha-1)})$ steps (and with probability $\ge 1-\delta$ in $O(\alpha\cdot ( \log n)^{1/(\alpha-1)} \cdot \log 1/\delta)$ steps. Note  that for $\alpha = 2$, this gives convergence in $O(\log n)$ steps, matching the performance of the two-inhibitor network construction of Section \ref{sec:wta2}. 

As in our two-inhibitor construction (Definition \ref{def:wtaNet}) and $O(\log n)$-inhibitor construction (Definition \ref{def:wtac}) we
employ one stability inhibitor, which ensures that, once the network reaches a valid WTA output configuration, it remains in such a configuration with good probability  for $t_s$ consecutive steps (again, as long as some other stability  conditions, similar to the near-stable condition of Definition \ref{def:nearStableA} are satisfied).

We label the remaining $\alpha-1$ inhibitors $a_1,...,a_{\alpha-1}$. For each $i$, $a_i$ fires with high probability at time $t+1$ whenever $k \ge 2^{(\log n)^{(i-1)/(\alpha-1)}}$ outputs fire at time $t$. In this way, with high probability, $a_1,...,a_i$ fire (and all other inhibitors do not fire) at time $t+1$ whenever the number of firing outputs $k$ is in the range: $$R_i \eqdef  \left [2^{(\log n)^{(i-1)/(\alpha-1)}},2^{(\log n)^{i/(\alpha-1)}} \right ).$$
We set the inhibitory weights such that 
when $a_i$ fires (along with $a_j$ for all $j \le i$), each firing output fires in the next step with probability $p_i = \frac{1}{2^{(\log n)^{(i-1)/(\alpha-1)}}}$. For $k \in R_i$ we have: 
$$p_i \cdot k \in \left [1,2^{(\log n)^{i/(\alpha-1)}\cdot \left (1-1/(\log n)^{1/(\alpha-1)}\right)}\right).$$
With each output continuing to fire with probability  $\frac{1}{2^{(\log n)^{(i-1)/(\alpha-1)}}}$ in each step, starting from at most $2^{(\log n)^{i/(\alpha-1)}}$ competing outputs, we thus reach a configuration with $k \le 2^{(\log n)^{(i-1)/(\alpha-1)}}$ firing outputs with good probability within $O((\log n)^{1/(\alpha-1)})$ steps. Overall, over $\alpha$ such levels, the network reaches a valid WTA configuration with constant probability in $O(\alpha \cdot (\log n)^{1/(\alpha-1)})$ steps.

%

Again, while we do not analyze  this construction here, it is possible to prove the following:


\begin{tcolorbox}
\begin{theorem}[$\alpha$-Inhibitor WTA]\label{thm:alpha} For any $n,\alpha \in \mathbb{Z}^{\ge 2}$ there is an a family  of SNNs containing $\alpha$ auxiliary inhibitory  neurons which solve $\wta(n,t_c,t_s,\delta)$ for any  $t_s,\delta$ and any
$
t_c \ge c_1\cdot \alpha (\log n)^{1/(\alpha-1)} \cdot (\log_2(1/\delta)+1),
$
where $c_1$ is some fixed constant.
\end{theorem}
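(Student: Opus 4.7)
The plan is to mimic the structure of the analysis of $\Netc$ in Section~\ref{sec:logn}, with the level partition described in the sketch above Theorem~\ref{thm:alpha}. First I would formally define the network $\Nett$ with $\alpha$ inhibitors: one stability inhibitor $a_s$ (connected as in Definition~\ref{def:wtac}, with history period $h=2$) plus convergence inhibitors $a_1,\dots,a_{\alpha-1}$. Writing $L_i \eqdef (\log n)^{i/(\alpha-1)}$ and $\beta_i \eqdef 2^{L_{i-1}}$, I set $b(a_i)=\gamma\cdot\beta_i-\gamma/2$ and $w(y_j,a_i,1)=\gamma$ so that, by Lemma~\ref{lem:gapA}-style integrality, $a_1,\dots,a_i$ fire w.h.p.\ at time $t+1$ iff $\norm{Y^t}_1\in[\beta_i,\beta_{i+1})$. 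The inhibitory weights to outputs are tuned exactly so that, when $a_1,\dots,a_i$ fire and a previously-firing output's self-loop is active, the membrane potential equals $-L_{i-1}\ln 2$, giving spike probability $p_i=\tfrac{1}{1+2^{L_{i-1}}}\approx 2^{-L_{i-1}}$. These choices give immediate analogues of Lemmas~\ref{lem:monoA}--\ref{lem:convergenceA}.

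Second, I would prove stability using a near-stable pair notion (Definition~\ref{def:nearStableA} extended to all $\alpha-1$ convergence inhibitors silent). The proof that near-stable configurations remain stable for $t_s$ steps with probability $1-\tilde O(t_s n)\cdot e^{-\gamma/2}$ is essentially identical to Corollary~\ref{cor:stabilityA}, since only $a_s$'s behaviour matters there. This gives the stability part of the theorem once $\gamma=\Omega(\log(nt_s/\delta))$.

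The main work, and the main obstacle, is the convergence analysis. I would decompose it into $\alpha-1$ ``levels''. Let $\Em{i}$ be the event that the current configuration is typical with $\norm{Y^t}_1\in[\beta_i,\beta_{i+1})$. I claim that conditioned on $\Em{i}$, within $T_i=O(L_{i-1}^{1/(\alpha-1)})=O((\log n)^{1/(\alpha-1)})$ steps the network either enters $\Em{i'}$ for some $i'<i$ or a terminal (near-stable or zero-output) configuration, with constant probability. The argument is a direct generalization of the monotonicity/convergence chain in Lemmas~\ref{lem:mono1}--\ref{lem:kWTAconverge}: each step at level $i$ retains $a_1,\dots,a_i$ active w.h.p.\ (Lemma~\ref{lem:resetA2} analogue) and, conditional on that, each firing $y_j$ independently re-fires with probability $p_i$. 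Thus $\norm{Y^{t+1}}_1$ is stochastically dominated by $\mathrm{Bin}(\norm{Y^t}_1, p_i)$, whose expectation drops by a factor $\approx 2^{L_{i-1}}$ per step. A standard multiplicative Chernoff / random-walk argument (coupled via Lemma~\ref{lem:coupling}) then shows that in $O((\log n)^{1/(\alpha-1)})$ steps the count falls below $\beta_i$ with constant probability. Care is needed to bound the ``overshoot'' probability (hitting $\norm{Y^t}_1=0$ before hitting a near-stable pair) by the success probability, exactly as in conclusion (3) of Lemma~\ref{lem:progress}; here the product of geometric-decay and $a_s$'s stabilizing history $h=2$ gives a near-valid WTA pair with constant probability whenever the descent terminates at $\norm{Y^t}_1=1$.

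Finally, summing over the $\alpha-1$ levels and adding the $O(1)$ steps to convert a near-stable pair into a stable valid WTA configuration (Lemma~\ref{lem:stabilityA} analogue), one gets constant-probability convergence in $O(\alpha(\log n)^{1/(\alpha-1)})$ steps. Amplification from constant success probability to $1-\delta$ follows by splitting $t_c$ into $O(\log(1/\delta))$ epochs and invoking Lemma~\ref{lem:coupling} against independent biased coins, exactly as in the proof of Theorem~\ref{thm:high} in Section~\ref{sec:completing}. Combined with the stability bound this yields the claimed $t_c\ge c_1\alpha(\log n)^{1/(\alpha-1)}(\log_2(1/\delta)+1)$, for some absolute constant $c_1$, provided $\gamma=\Omega(\log(n t_s/\delta))$. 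The hardest technical point will be managing the case analysis for untypical starting configurations (analogous to the nine cases tabulated in Table~\ref{tab:logCases}) at each level: because the regime boundaries $\beta_i$ are not powers of $2$ but doubly-exponentially spaced, one must carefully check that transitions between adjacent levels behave well when the current inhibitor count is off by one, and this is what drives the $\alpha$ factor in the runtime.
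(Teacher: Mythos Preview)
The paper does not actually prove Theorem~\ref{thm:alpha}: it explicitly says ``while we do not analyze this construction here, it is possible to prove the following'' and provides only the construction sketch immediately preceding the statement. Your proposal is a faithful fleshing-out of that sketch, using the same level thresholds $\beta_i=2^{(\log n)^{(i-1)/(\alpha-1)}}$ and per-level firing probabilities $p_i\approx 1/\beta_i$, and following the proof template of Section~\ref{sec:logn}. So there is no discrepancy in approach to report; you are doing what the paper defers.

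Two points where your plan would need tightening before it goes through. First, your attribution of the $\alpha$ factor in $t_c$ to the untypical-configuration case analysis is off: that case analysis (the analogue of Table~\ref{tab:logCases}) contributes only an additive $O(1)$, just as in the $\Netc$ proof. The $\alpha$ factor comes simply from summing $\alpha-1$ levels, each of length $O((\log n)^{1/(\alpha-1)})$, since at level $i$ the count must fall from at most $2^{L_i}$ to below $2^{L_{i-1}}$ under per-step multiplication by roughly $2^{-L_{i-1}}$, which takes $L_i/L_{i-1}=(\log n)^{1/(\alpha-1)}$ steps. Second, be careful with ``constant probability per level'': chaining $\alpha-1$ such events na\"ively gives $c^{\alpha-1}$, which is not constant. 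The correct argument is global rather than level-by-level: the monotone descent through all levels happens with high probability (via Chernoff, as you indicate), and the \emph{single} terminal comparison $\Pr[\norm{Y^{t+1}}_1=1]\ge\Pr[\norm{Y^{t+1}}_1=0]$ from Lemma~\ref{lem:progress}(3) is applied once at the step where $\norm{Y}_1$ first drops to $\le 1$, not once per level. The $h=2$ history then absorbs the one-step lag between output count and inhibitor count at that terminal step, as you note.
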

\end{tcolorbox}

\section{Discussion and Future Work}\label{sec:wtadiscuss}

We have presented an exploration of the WTA problem in stochastic spiking neural networks, giving network constructions, runtime analysis, and lower bounds. Our work leaves open a number of questions, both in regards to the WTA problem, and more broadly, in exploring spiking neural networks from an algorithmic perspective. We discuss some of these directions below. See also Section \ref{sec:futureDirs} in which we discuss possible extensions two and modifications of our basic spiking neural network model.

\subsection{Winner-Take-All Extensions and Open Questions}

We first overview future work directly  related  to the WTA problem studied in this work.

\subsubsection{Lower Bounds}

In Section \ref{sec:lb} we present lower bounds for one and two-inhibitor WTA networks (Theorems \ref{thm:lb1} and \ref{thm:lb2}) . In \cite{lynch2016computational} we also give nearly tight lower bounds on the convergence time for networks using $\alpha > 2$ inhibitors, in a similar model to the one presented here. A few interesting open questions remain:

\begin{itemize}
\item Can our lower bound for two-inhibitor networks (Theorem \ref{thm:lb2}) be tightened to match our upper bound (Theorem \ref{thm:high}) up to a constant factor, rather than a $O(\log \log n)$ factor?
\item Our lower bounds apply to somewhat restricted classes of simple and symmetric SNNs (Definitions \ref{def:simple} and \ref{def:sym} respectively.) We conjecture, however, that these lower bounds can be shown for general SNNs, with no restrictions on the network structure.
\item It would be interesting to obtain lower bounds which help explain the relationship between history and convergence time. Our $O(\log n)$-inhibitor network of Section \ref{sec:logn} achieves constant probability, $O(1)$ convergence time. However, it requires a history period of two steps. Is it possible to show that, with no history period (i.e., in the basic SNN model of Section \ref{sec:modelW}), it is not possible to achieve this convergence time? Is the $O(\log n)$ time for our two-inhibitor network optimal for historyless networks, regardless of the number of inhibitors used? In a network with history, is it possible to solve WTA with a single auxiliary neuron, or can the lower bound of Theorem \ref{thm:lb1} be extended to this case?
\end{itemize}

\subsubsection{Variations on the WTA Problem}

Our work focuses on a simplified WTA problem in which all inputs either fire at  every  time step or never fire (see Definition \ref{high:wta}). The challenge in solving this problem is in breaking symmetry between the firing inputs. In future work we plan to extend this work to consider more general variants of WTA, discussed below. 

\begin{itemize}
\item \textbf{Non-Binary WTA:}
We plan to extend our WTA work to a \emph{non-binary} setting, in which inputs have different firing rates and selection is based on those rates.
The most common requirement is for the network to select an input neuron with the highest or near-highest firing rate \cite{yuille1989winner,coultrip1992cortical,oster2006spiking}.

In the most basic setting, we can consider $n$ inputs $x_1,...,x_n$ with firing rates $r_1,...,r_n \in [0,1]$ as neurons that fire independently at random  at each time step, each time with probability $r_i$. 
%
%
In~\cite{LynchMP-bda17}, we presented an initial exploration of the non-binary WTA problem with inputs of this type, giving a solution using $O(n \log n)$ auxiliary neurons to select an input with firing rate within a constant factor of the maximum. It seems that in our basic SNN model, $\Omega(n \log n)$ auxiliary neurons may in fact be necessary:
essentially, for each input neuron, $\Omega(\log n)$ auxiliary neurons are needed to record a short firing history, from which the firing probability $r_i$ can be (implicitly) estimated to within a constant factor, with high probability. 
Thus, networks using a sublinear number of auxiliary neurons, like those we gave for binary WTA, may be ruled out. 

However, more efficient solutions may be possible
if we use an SNN model with a history period.
%
Even without history, efficient solutions may be possible if we relax or modify the WTA problem.
For example, we may just require selecting each neuron with probability $\frac{r_i}{\sum_{i=1}^n r_i}$, or some other function of the firing probabilities. 

\item \textbf{WTA with Different Output Conditions:} 
It would be interesting to consider variations on the basic output condition of the WTA problem, both  in the binary and non-binary input setting. For example, $k$-WTA is a common variant of WTA, in which the goal is to select $k$ inputs with the highest or near-highest firing rates~\cite{majani1989k,maass1996computational,wang2003k,hu2008improved}, rather than just as single input. $k$-WTA, for example, has been used in recent work in modeling sparse coding in the fly olfactory via random projection methods \cite{dasgupta2017neural}. It would  be interesting to understand the fundamental complexity (in terms of network complexity and convergence time) of implementing this primitive in spiking neural networks. 

\end{itemize}

\subsubsection{Applications of WTA}

Finally, it would be  interesting to explore how our WTA constructions can be integrated into solutions to higher-level neural tasks. Some problems which may be worth considering are:
\begin{itemize}
\item  Attention in sensory processing systems, which is thought to be implemented via WTA competition~\cite{lee1999attention,itti2001computational}. In this setting, a specific set of neurons is activated while others are suppressed, allowing computation to happen on the selected neurons without interference. Formulating and studying simple tasks which model the use of WTA as an attention mechanism would be  an interesting direction.
\item The use of WTA in neural sparse coding algorithms~\cite{dasgupta2017neural}. In this setting, a stimulus, such as a specific odor, triggers the firing of a large set of neurons, which is then ``sparsified'' via non-binary $k$-WTA competition to $k$ neurons that encode the odor -- those with the strongest response to the stimulus. 
\item Clustering problems, in a which the network is set up such that, when given any input, the neuron which corresponds to the best cluster assignment for this input fires at the highest rate. Non-binary WTA can be used to select this neuron~\cite{abbas1994neural,waterhouse1994classification}.
\end{itemize}

\subsection{Other Neural Computational Primitives}\label{sec:otherPrim}

WTA is a basic primitive that seems to be useful in a broad range of neural tasks, discussed above.  It is also useful in exploring the computational power of SNNs and studying tradeoffs among such costs as runtime, network size, and the use of randomness. In future work, we hope to to identify other such basic primitives, as a means of building a general algorithmic theory for spiking neural networks. We discuss some possible directions below:

\subsubsection{Neural Indexing}

In~\cite{lynch2017neuro} we define and study the \emph{Neuro-RAM} problem: the network is given a set of $n$ binary inputs $X_1$, along with a smaller set of $\log n$ inputs $X_2$ whose firing pattern represents an index into $X_1$. The goal is the for the output to fire if and only if the selected neuron in $X_1$ is firing. In \cite{lynch2017neuro}, we applied our Neuro-RAM primitive to the similarity testing problem:  given two binary inputs $X_1,X_2$, determine with high probability whether their Hamming distance is greater than some threshold.  We solved this problem efficiently by using a Neuro-RAM model to select and compare random positions in $X_1$ and $X_2$.

It would be interesting to formulate other basic neural tasks that may employ a Neuro-RAM primitive. For example, one can consider estimating the differences between firing patterns (e.g., representing sensory inputs at successive times), estimating the average firing activity of a group of neurons responding to some stimulus, or randomly ``exploring'' different sets of neurons whose firing may trigger a desired response (such as the recall of a memory). 


In our work, aside from efficient Neuro-RAM constructions, we give lower bounds demonstrating that these constructions give a near-optimal tradeoff between network size and runtime in our basic SNN model with a sigmoidal spike probability function.  Our proofs are based on reducing an SNN to a distribution over deterministic networks, and then bounding the VC dimension \cite{vapnik1998statistical} of these networks.

Generally, while VC dimension has been studied for spiking networks~\cite{zador1996vc,maass1999complexity}, we are the first to apply it to prove computational lower bounds. We hope to continue our work, greatly expanding the toolkit for proving lower bounds in biologically plausible networks, including those with different spike probability functions, with history, and even with refractory periods or spike propagation delays. 

Further, Neuro-RAM lower bounds let us compare our SNN models with artificial neural network models that use continuous-output gates, e.g., sigmoidal gates~\cite{barron1993universal,bartlett1997valid} or rectified linear units (ReLUs)~\cite{jarrett2009best,nair2010rectified}. 
The Neuro-RAM problem can be solved very efficiently in such networks~\cite{koiran1996vc}, contrasting with its apparent difficulty in SNNs. 
%
%
However, the continuous-output solutions do not seem very robust:  they do not appear to tolerate small variations in edge weights, and they  seem to rely on precise (infinite precision) gate outputs.
We hope that comparing these models with our SNNs could lead to a better understanding of continuous-output circuits in noisy settings.  We conjecture that in these settings, these networks behave more like their spiking network counterparts than like ideal continuous-output networks.

\subsubsection{Binary Vector Problems}
Beyond the WTA and Neuro-RAM problems it would also be interesting to consider other basic problems with fixed input firing patterns, that is, binary vector problems.
For example, the problem of estimating the total firing strength of a population of neurons
corresponds to determining the Hamming weight of a binary input vector. 
%
More complex problems include string matching, in which we test whether the input firing pattern contains a copy of a specified substring.  
In the brain, this may be used for pattern recognition, or for alignment and comparison of patterns triggered by multiple stimuli. 

\subsubsection{Problems with Non-Binary Inputs}
It would also be valuable to consider problems in which the input is non-binary, for example, problems that require computing some function of the firing rates of the input neurons. 
For example, we may consider the problem of approximating the sum of input firing probabilities, or more generally, some norm of the firing rate vector $r_1,...,r_n$.  
We may also consider similarity testing under different norms (generalizing the Hamming distance similarity testing problem in~\cite{lynch2017neuro}).

Many algorithms proposed in computational neuroscience and computer science apply operations such as addition and multiplication to continuous neuron output values ~\cite{durbin1989product,allen2014sparse,arora2015simple,dasgupta2017neural}. 
These continuous output values can be thought of as abstractions of firing rates in SNNs. 
We would like to understand how such basic operations may be emulated in more biologically plausible SNNs.

\subsubsection{Synchronization Problems}
Finally, it would be interesting to study algorithmic primitives that are not expressed in terms of simple input-output mappings.  
Notably, we would like to consider synchronization problems in which neurons, starting from arbitrary firing states, cooperate to align their firing in some way. 
Synchronization of spiking patterns and the emergence of neural rhythms are widely studied in both empirical and computational neuroscience~\cite{van1994inhibition,buzsaki2006rhythms,riehle1997spike,womelsdorf2007modulation}. 
Synchronization is also an important technique used in distributed algorithms~\cite{burns1985byzantine,kopetz1987clock,cristian1989probabilistic,lenzen2009optimal},
and thus is a natural problem to include in an algorithmic theory of neural computation. 
%

\subsection{Learning Problems and Dynamic Networks}\label{sec:learning}

The work presented in this work, as well as the work proposed in Section \ref{sec:otherPrim} focuses on 
 computation in \emph{static} networks, with fixed edge weights. 
An important direction is studying algorithms for learning in \emph{dynamic spiking networks}, in which synapse weights change throughout the computation via, for example, a Hebbian update rule.

In classical Hebbian learning~\cite{hebb2005organization,caporale2008spike}, the weight of a synapse is continuously updated by a factor that depends on the product of the firing strengths of its two endpoints.  
The more the firing of the endpoints correlates, the stronger the synapse becomes.
We hope to define a Hebbian-style rule for our synchronous SNN models similar to, for example, the simple rule used in \cite{legenstein2018long}. When both endpoints fire at the same time,
the synapse weight should increase by a small factor, and when just one fires, the weight may decrease. 
We then hope to use our dynamic network model to study the costs of many learning problems such as memory formation and concept  association \cite{amari1977neural,valiant2005memorization,legenstein2018long}, linear classification \cite{huerta2006learning}, principal component analysis \cite{sanger1989optimal,hyvarinen1997one,hyvarinen2000independent}, and sparse coding \cite{olshausen2004sparse,arora2015simple}.

\subsection{Neural Linear Algebraic Computation}

Finally, we are interested in understanding how basic linear algebraic computation may be performed in spiking neural networks, possibly with the use of randomization.
Randomized algorithms have recently led to a number of breakthroughs in fundamental linear-algebraic and geometric problems such as linear regression~\cite{clarkson2013low,uniform}, low-rank approximation~\cite{sarlos2006improved,nelson2013osnap}, clustering~\cite{boutsidis2010random}, and locality-sensitive hashing~\cite{datar2004locality,andoni2006near}.
Many mechanisms used in neural computation have close analogs to randomized linear algebraic techniques; e.g., Oja's Hebbian-style
rule for neural principal component analysis (PCA)~\cite{hyvarinen1997one,hyvarinen2000independent} is a common technique for low-memory PCA and eigenvector approximation~\cite{shamir2015stochastic,jain2016streaming}. 
Fast linear-algebraic methods based on \emph{random projections}~\cite{achlioptas2003database,sarlos2006improved} are also conjectured to play a role in neural computation~\cite{ganguli2012compressed,allen2014sparse}, with random synaptic connectivity providing a natural implementation of randomized dimensionality-reduction. 
We hope to study the application of these techniques in stochastic spiking neural networks, forging connections with work on new algorithms for fast linear algebraic computation.

\bibliographystyle{alpha}
\bibliography{main}	

\end{document}